\theoremstyle{plain}
\newtheorem{theo}{Theorem}
\newtheorem{prop}[theo]{Proposition}
\newtheorem{cor}[theo]{Corollary}
\newtheorem{lemma}[theo]{Lemma}
\newtheorem{hyp}{Assumption}
\theoremstyle{definition}
\newtheorem{rem}[theo]{Remark}
\let\OLDthebibliography\thebibliography
\renewcommand\thebibliography[1]{
	\OLDthebibliography{#1}
	\setlength{\parskip}{0pt}
	\setlength{\itemsep}{0pt plus 0.3ex}
}
\newcommand{\intervalle}[4]{\mathopen{#1}#2
	\mathclose{}\mathpunct{},#3
	\mathclose{#4}}
\newcommand{\intff}[2]{\intervalle{[}{#1}{#2}{]}}
\newcommand{\intof}[2]{\intervalle{(}{#1}{#2}{]}}
\newcommand{\intfo}[2]{\intervalle{[}{#1}{#2}{)}}
\newcommand{\intoo}[2]{\intervalle{(}{#1}{#2}{)}}
\newcommand{\petito}[1]{o\mathopen{}\left(#1\right)}
\newcommand{\grandO}[1]{O\mathopen{}\left(#1\right)}
\newcommand{\abs}[1]{\left\lvert#1\right\rvert}
\newcommand{\absL}[1]{\left\lvert#1\right\rvert}
\newcommand{\norme}[1]{\left\lVert#1\right\rVert}
\newcommand{\normeL}[1]{\left\lVert#1\right\rVert}
\newcommand{\pdtsc}[2]{\left\langle#1,#2\right\rangle}
\newcommand{\lper}{L_\mathrm{per}}
\newcommand{\lk}{L_\mathbf{k}}
\newcommand{\lscr}{\mathscr{L}}
\newcommand{\enstq}[2]{\left\{#1\mathrel{}\middle|\mathrel{}#2\right\}}
\newcommand{\enstqbis}[2]{\left\{#1\mathrel{~}\middle|\mathrel{~}#2\right\}}
\newcommand{\restreinta}{\mathclose{}|\mathopen{}}
\newcommand{\et}{\quad \text{and} \quad}
\newcommand{\ou}{\quad \text{where} \quad}
\newcommand{\ensemblenombre}[1]{\mathbb{#1}}
\newcommand{\N}{\ensemblenombre{N}}
\newcommand{\Z}{\ensemblenombre{Z}}
\newcommand{\R}{\ensemblenombre{R}}
\newcommand{\C}{\ensemblenombre{C}}
\newcommand{\pt}{\, .}
\newcommand{\trm}{\underline{\tr}}
\newcommand{\limit}[1]{\quad\underset{#1}{\longrightarrow}\quad}
\newcommand{\diff}{\mathop{}\mathopen{}\mathrm{d}}
\DeclareMathOperator{\tr}{Tr}
\DeclareMathOperator{\spn}{span}
\DeclareMathOperator{\id}{id}
\DeclareMathOperator{\sign}{sign}
\DeclareMathOperator{\dist}{d}
\DeclareMathOperator{\supp}{supp}
\renewcommand\thesubsection{\arabic{section}.\arabic{subsection}}
\renewcommand\thesubsubsection{\arabic{section}.\arabic{subsection}.\arabic{subsubsection}}
\titleformat{\subsection}[runin]{\normalfont\bfseries}{\thesubsection}{.5em}{}[.]
\titleformat{\subsubsection}[runin]{\normalfont\bfseries}{\thesubsubsection}{.5em}{}[.]
\title{Dirac cones for a mean-field model of graphene}
\author{Jean Cazalis\thanks{CNRS and CEREMADE, University of Paris-Dauphine, PSL University, 75016 Paris, France; email: \href{mailto:cazalis@ceremade.dauphine.fr}{cazalis@ceremade.dauphine.fr}}}
\date{\today}
\begin{document}

	\maketitle
	
	\begin{abstract}
		In this article, we show that, in the dissociation regime and under a non-degeneracy assumption, the reduced Hartree-Fock theory of graphene presents Dirac points at the vertices of the first Brillouin zone and that the Fermi level is exactly at the coincidence point of the cones. For this purpose, we first consider a general Schrödinger operator $H=-\Delta+V_L$ acting on $L^2(\R^2)$ with a potential $V_L$ which is assumed to be periodic with respect to some lattice with length scale $L$. Under some assumptions which covers periodic reduced Hartree-Fock theory, we show that, in the limit $L\to\infty$, the low-lying spectral bands of $H_L$ are given to leading order by the tight-binding model. For the hexagonal lattice of graphene, the latter presents singularities at the vertices of the Brillouin zone. In addition, the shape of the Bloch bands is so that the Fermi level is exactly on the cones.
	\end{abstract}
	
	\tableofcontents
	
	\section{Introduction}
	
	In this article, we study the spectral properties of a two-dimensional periodic Schrödinger operator $H=-\Delta +V_L$ acting on $L^2(\R^2)$. We assume that the potential $V_L$ comes from a many-sites lattice 
	\begin{align*}
	\lscr_L^\mathbf{R} \coloneqq L\left(\lscr + \mathbf{R}\right) = \enstq{L(\mathbf{u+r})}{\mathbf{u}\in\lscr ,~ \mathbf{r\in R}} \, ,
	\end{align*}
	where $\lscr_L \coloneqq L\lscr\subset\R^2$ is a two-dimensional Bravais lattice with length scale $L>0$ and $\mathbf{R}\subset\R^2$ is a (finite) collection of sublattice shifts. Such a periodic operator exhibits an electronic band structure, described as a Bloch bundle, which gives the range of energies that an electron, moving in the potential $V_L$, may attain~\cite{reed1978methodsIV, kittel2004introduction, kuchment2016overview}. The electronic, optical and magnetic properties of crystals depend on the form of these bands. In particular, the dynamics of a wave packet moving in the structure is strongly influenced by the bands geometry at the vicinity of the initial energy-momentum datum~\cite{ashcrfot1976solid, teufel2003adiabatic, allaire2005homogenization}.
	
	If $\lscr^\mathbf{R}$ is the honeycomb lattice, which is appropriate for describing graphene, then we expect the Bloch bundle to have conical singularities, called \emph{Dirac points}, at the vertices of the first Brillouin zone $\Gamma^*$~\cite{wehling2014dirac}. This terminology comes from the fact that wave packets whose energy-momentum is initially concentrated near these singularities evolve according to a two dimensional Dirac wave equation, the equation for massless relativistic fermions~\cite{fefferman2013wavepackets, arbunich2018rigorous}. The presence of Dirac points in honeycomb structures was first proved for the tight-binding model of graphene by Wallace~\cite{wallace1947bandtheory} and is now established for more realistic ones, including continuous models~\cite{hainzl2012groundstate, fefferman2012honeycomb, lee2016diraccones, fefferman2017honeycomb, berkolaiko2018symmetry, leeThorp2018elliptic}. Unfortunately, these works only consider one electron moving in a periodic material. But real systems have infinitely many electrons interacting with each other. The main motivation for this article is to show that Dirac points also appear when interactions between electrons are taken into account through a nonlinear term in the potential $V_L$.
	
	The simplest model for interacting electrons that one can think of is the \emph{periodic reduced Hartree-Fock (rHF) model}~\cite{solovej1991proof}. Hartree-Fock theories are standard approximation methods for atomic models where the electronic wave function is assumed to have the form of a Slater determinant~\cite{hartree1928wave, lieb1977hartreefock, lions1987solutions}. In the thermodynamic limit, these models converge to periodic nonlinear models~\cite{catto2001thermodynamic, catto2002thermodynamic}. The solution satisfies a nonlinear equation, called the mean-field equation~\cite{cances2008newapproach, ghimenti2008properties}. When the charges interact through the three-dimensional Coulomb potential $\frac{1}{\abs{\mathbf{x}}}$, the mean-field potential $V^\mathrm{MF}_L$ in the periodic rHF theory for the crystal whose nuclei are located at the vertices of $\lscr^\mathbf{R}_L$ is solution of
	\begin{align}
	\label{eq: mean-field_equation_intro}
	V^\mathrm{MF}_L = \left[\mathds{1}_{\intof{-\infty}{\epsilon_L}}(-\Delta+V^\mathrm{MF}_L)(\mathbf{x,x}) - \sum_{\mathbf{r\in}\lscr_L^\mathbf{R}} \delta_{\mathbf{r}}\right]\ast \frac{1}{|\cdot|} + \sum_{\mathbf{r}\in\lscr_L^\mathbf{R}} V^\mathrm{pp}(\cdot - \mathbf{r}) \pt
	\end{align}
	In this equation, $\delta$ is the Dirac delta, $V^\mathrm{pp}$ a pseudo-potential modeling the core electrons and $\epsilon_L\in\R$ a Lagrange multiplier called the Fermi level, which can be interpreted as a chemical potential and is used to adjust the number of electron per unit cell, so that the periodic measure in the square bracket of~\eqref{eq: mean-field_equation_intro} is locally neutral. This is necessary for the convolution with $\frac{1}{\abs{\mathbf{x}}}$ to make sense. It seems natural to expect that graphene will exhibit Dirac cones in rHF theory. However, this does not immediately follow from the existing results which only deal with the linear case.
	
	The \emph{dissociation regime} $L\to\infty$, corresponds to taking the nuclei of the crystal far from each others. The mean-field potential $V^\mathrm{MF}_L$ then resembles a superposition of mono-atomic potentials:
	\begin{align}
	\label{eq: almost_superposition_intro}
	V^\mathrm{MF}_L \simeq \sum_{\mathbf{r}\in\lscr^\mathbf{R}} V^\mathrm{MF}(\cdot - L\mathbf{r}) \, ,
	\end{align}
	for some potential $V^\mathrm{MF}$ solution of a nonlinear equation for one atom. Schrödinger operators whose potential is given by an exact periodic superposition of potential wells have already been studied in the literature~\cite{simon1984semiclassicalIII, outassourt1984, mohamed1991estimations, daumer1993periodique, fefferman2017honeycomb}. It is known that, in the semiclassical limit or in the dissociation regime, the width of the bands is exponentially small, determined by quantum tunneling, and that the geometry of the low-lying bands is given by the tight-binding model associated with the crystal. The latter model is often used to efficiently compute band structures in solid-state physics~\cite{goringe1997tightbinding}. However, these results do not cover periodic rHF theory.	
	
	In addition, it is also physically very important that the model describes a \emph{Dirac semi-metal}, as graphene~\cite{neta2009electronic, cooper2012experimental}. In other words, we want to show that $\epsilon_L$ is exactly equal to the energy at which the cones touch. Otherwise, the small excitations of the Fermi sea would not behave as Dirac fermions. In this article, we partially solve both questions. We study the dissociation limit $L\to\infty$ where we can prove the expected result under a reasonable assumption. The opposite regime, called the \emph{weak contrast regime } $L\to 0$, is studied in~\cite[Chapter 4]{cazalisthesis} whose results are summarized in Appendix~\ref{sec:the-weak-constrast-regime}. There, it is shown that the expected cones exist and that the Fermi level $\epsilon_L$ does \emph{not} coincide with the cones energy: the model describes a \emph{metal}. The general result is therefore \emph{not} true for all values of $L$. Because our argument is not quantitative, we are unable to determine in which regime lies the finite physical value $L_\mathrm{phy}\simeq 5.36$ of graphene~\cite{cooper2012experimental}. However, a numerical investigation with DFTK software~\cite{herbst2021dftk} suggests that $L_\mathrm{phy}$ is large enough to be in the Dirac semi-metal phase, see the discussion following Corollary~\ref{cor:dirac_points_rHF} below.
	
	For the purpose of studying the dissociation regime, we consider a general potential $V_L$ and we exhibit conditions under which the low-lying bands in the dispersion relation of $-\Delta+V_L$ can be approached by the corresponding tight-binding model. These conditions include periodic rHF theory with three-dimensional Coulomb interactions. Also, when the lattice has honeycomb symmetries and under a non-degeneracy condition, we prove the presence of Dirac points and we show, as expected, that the Fermi level $\epsilon_L$ is equal to the energy level of the cones.
	
	It has been shown that interactions can modify the shape of the Dirac cone in graphene~\cite{giuliani2010anomalous, hainzl2012groundstate}. Here we show that no such modification occurs in the rHF model. This is essentially because the mean-field potential is local. It would be interesting to study the full Hartree-Fock model, where the exchange term has been predicted to renormalize the cone with a logarithmically divergent effective velocity~\cite{hainzl2012groundstate}.
	
	Since our main motivation is the study of crystals sharing the symmetries of graphene, we work in $2D$, although many of our arguments hold the same in arbitrary dimension.
	
	This article is a shortened version of Chapter 3 of the authors's PhD thesis~\cite{cazalisthesis}. There, the reader will find more detailed proofs and additional comments.
	
	\subsection*{Organization of the paper}
	In Section~\ref{sec:statement-of-the-main-results}, we state our main results. Theorem~\ref{theo: feshbach-schur} is about the convergence to the tight-binding model of the periodic Schrödinger operator $H_L=-\Delta+V_L$ where $V_L$ satisfies some assumptions. Theorem~\ref{theo: existence_dirac_cones} states that, under a non-degeneracy condition, the dispersion relation of $H_L$ presents Dirac points  when $\lscr^\mathbf{R}$ is the honeycomb lattice. Theorem~\ref{theo: rHF} states that the assumptions in Theorem~\ref{theo: feshbach-schur} cover the periodic rHF theory with three-dimensional Coulomb interactions plus a pseudo-potential term which must satisfy a ionization condition. In Section~\ref{sec:two-dimensional-lattices-at-dissociation-with-coulomb-singularities}, we show Theorem~\ref{theo: feshbach-schur} whose proof strongly relies on the Feshbach-Schur method. In Section~\ref{sec:example-of-the-honeycomb-lattice}, we show Theorem~\ref{theo: existence_dirac_cones}. Section~\ref{sec:reduced-periodic-hartree-fock-model-at-dissociation} is devoted to the proof of Theorem~\ref{theo: rHF} which uses the concentration-compactness method. In Appendix~\ref{sec:the-weak-constrast-regime}, we consider the weak contrast regime $L\to 0$ and we expose the results from~\cite[Chapter 4]{cazalisthesis}. At last, in Appendix~\ref{sec:perturbation-theory-for-singular-potentials}, we state a perturbation theory result for singular potentials.
	
	\subsection*{Acknowledgments}
	The author would like to thank his PhD advisor M. Lewin for valuable discussions, A. Levitt for his assistance with DFTK software and D. Gontier for helpful advice with numerical implementations. This project has received funding from the European Research Council (ERC) under the European Union’s Horizon 2020 research and innovation programme (grant agreement MDFT No 725528 of M. Lewin).
	
	\section{Statement of the main results}\label{sec:statement-of-the-main-results}
	
	In this section, we state our main results. First, we recall the basic geometric features of two-dimensional many-site lattices. Then, we consider a periodic potential $V_L$ and we add conditions under which the dispersion relation of $-\Delta+V_L$ is given to leading order by the tight-binding model (see Theorem~\ref{theo: feshbach-schur}). In Theorem~\ref{theo: existence_dirac_cones}, we make this statement more precise when $\lscr^\mathbf{R}$ is the honeycomb lattice: under a non-degeneracy condition, the dispersion relation presents Dirac points. Thereafter, we describe the periodic rHF theory with three-dimensional Coulomb interactions and we state in Theorem~\ref{theo: rHF} that this model satisfies the conditions mentioned above.
	
	\subsection{Lattices}\label{sec:bravais-lattices-and-crystals}
	
	The scalar product of two vectors $\mathbf{u}$ and $\mathbf{v}$ of $\R^2$ is denoted by $\mathbf{u \cdot v}$ and the associated euclidean norm by $\abs{\mathbf{u}} = \sqrt{\mathbf{u \cdot u}}$. Let $(\mathbf{u}_1,\mathbf{u}_2)$ be a basis of $\R^2$. We consider the two-dimensional \emph{Bravais lattice}
	\begin{align*}
	\lscr \coloneqq \Z \mathbf{u}_1+\Z \mathbf{u}_2 \subset \R^2 \pt
	\end{align*}
	We denote by $\Gamma$ its \emph{Wigner-Seitz cell}. This is a choice of primitive cell whose interior consists of the vectors which are closer to the origin than any other vertex of $\mathscr{L}$. The \emph{reciprocal lattice }$\mathscr{L}^*$ of $\mathscr{L}$ is also a Bravais lattice and is defined by
	\begin{align*}
	\mathscr{L}^*\coloneqq \Z \mathbf{v}_1 \oplus \Z \mathbf{v}_2 \, ,
	\end{align*}
	where the reciprocal basis is determined by the orthogonality relations $\mathbf{v}_i \cdot \mathbf{u}_j = 2\pi\delta_{ij}$ for all $(i,j)\in\{1,2\}^2$. The Wigner-Seitz cell of the reciprocal lattice $\mathscr{L}^*$, denoted by $\Gamma^*$, is called the \emph{first Brillouin zone}.
	
	In the sequel, we consider lattices formed as a superposition of several shifted copies of $\mathscr{L}$. Let $N \in \N$ be the number of vertices per unit cell and $\mathbf{R} = (\mathbf{r}_1,\dots,\mathbf{r}_N) \in \Gamma^N$ (where $\mathbf{r}_i \neq \mathbf{r}_j$ if $i\neq j$) be their positions in $\Gamma$. The lattice associated with $\mathbf{R}$ is 
	\begin{align}
	\label{eq: many_sites_lattice}
	\mathscr{L}^\mathbf{R} \coloneqq  \mathscr{L} + \mathbf{R}= \enstq{\mathbf{u+r}}{\mathbf{u} \in \mathscr{L},~ \mathbf{r} \in \mathbf{R}} \pt
	\end{align}
	Let $L \geq 1$ be a length parameter. Thereafter, we will use the subscript $L$ to denote the dilation by a factor $L$. For instance, we write
	\[
	\lscr_L = L\lscr,\quad\Gamma_L = L \Gamma , \quad \mathscr{L}_L^* = L^{-1} \mathscr{L}^* ,\quad \Gamma_L^* = L^{-1} \Gamma^* \et \mathscr{L}_L^\mathbf{R} = L\mathscr{L}^\mathbf{R} \pt
	\]
	Let $\textbf{k} \in \R^2$ and $p\in \intff{1}{\infty}$. The space of locally $p$-integrable functions satisfying pseudo-periodic boundary conditions with quasi-momentum $\mathbf{k}$ is denoted by
	\begin{align*}
	L^p_\textbf{k}(\Gamma_L) \coloneqq \enstq{\varphi \in L^p_\mathrm{loc}(\R^2,\C)}{\forall \textbf{u} \in\mathscr{L}_L,~\forall \textbf{x} \in \R^2,~\varphi(\textbf{x}+\textbf{u}) = e^{i \textbf{k} \cdot \textbf{u}} \varphi(\textbf{x}) ~\mathrm{a.e}} \pt
	\end{align*}
	We will also denote by $L^p_\mathrm{per}(\Gamma_L) \coloneqq  L^p_{0}(\Gamma_L)$ the space of locally $p$-integrable functions which are invariant under the shifts of $\mathscr{L}_L$.
	
	\subsection{A class of periodic operators on \texorpdfstring{$\lscr_L$}{LL}}\label{sec:a-class-of-periodic-operators-on-lscrl}
	
	We want to study nonlinear models where the effective potential $V_L$ is close but not exactly given by a $\lscr_L$-periodic superposition of potentials wells, see \eqref{eq: almost_superposition_intro}. In this section, we describe the class of Schrödinger operators we consider. 
	
	\subsubsection{Periodic operator}\label{sec:periodic-operator}
	
	Our first assumption is about the local singularities that $V_L$ may present.
	\begin{hyp}[Singularities]
		\label{hypo_0}
		We consider a family $\{V_L\}_{L\geq 1}$ of real-valued potentials on $\R^2$ such that
		\begin{enumerate}[noitemsep, label=(\roman*),ref= \ref*{hypo_0}\textit{(\roman*)}]
			\item \label{hypo_01} $V_L \in \lper^p(\Gamma_L)$ for some $p\in\intoo{1}{\infty}$;
			\item \label{hypo_04} $\norme{V_L}_{\lper^\infty(\Gamma_L)}$ goes uniformly to zero at distance $L$ of the vertices of $\lscr_L^\mathbf{R}$ when $L$ goes to infinity:
			\begin{align*}
			\forall \rho>0,\quad \lim\limits_{L\to\infty}\normeL{V_L \mathds{1}_{\dist(\cdot,\lscr_L^\mathbf{R}) \geq L\rho}}_{\lper^\infty(\Gamma_L)} = 0 \pt
			\end{align*}
		\end{enumerate}
	\end{hyp}
	\begin{rem}
	By Assumption~\ref{hypo_01}, the potential $V_L$ may present local singularities of the form $\abs{\mathbf{x}}^{-\alpha}$ with $\alpha < 2$, including the three-dimensional Coulomb singularity $\alpha=1$. For simplicity, their location is constrained by Assumption~\ref{hypo_04} to the vertices of $\lscr^\mathbf{R}_L$. Also, the dependence of $V_L$ on $L$ can be highly nonlinear as long as its $L^p$-norm does not blow up faster than polynomials (see estimate~\eqref{rem: grow_polynomial} below). Later in Section~\ref{sec:the-rhf-model-for-periodic-systems-at-dissociation}, we will consider a nonlinear model, namely the periodic reduced Hartree-Fock model, and show that the corresponding $V_L$ satisfies Assumption~\ref{hypo_0}.
	\end{rem}
	In order to simplify the analysis, we assume that $\lscr_L$ and $V_L$ are invariant under the same symmetry group. If a group $G$ acts on some set $X$ then the action of $g \in G$ on $x \in X$ will be denoted by $g \cdot x$ or $g[x]$. Let $G \subset E_2(\R)$ be a subgroup of $E_2(\R)$, the \emph{Euclidean group }(or group of isometries) of $\R^2$. The action of $G$ on $\R^2$ is defined by $g\cdot \mathbf{x} \coloneqq g\mathbf{x}$ and its action on measurable functions by $(g\cdot v)(\mathbf{x}) \coloneqq v(g^{-1}\mathbf{x})$.
	
	The \emph{symmetry group }of a periodic two-dimensional pattern is the group of euclidean transformations $G$ leaving this pattern invariant. There exists only 17 distinct classes of such groups, called \emph{wallpaper groups }(or plane crystallographic groups)~\cite{martin1982transformation, armstrong1988groups}. A fundamental domain is a subset which contains exactly one point from each orbit of the action of $G$. Then, the pattern is uniquely determined by the specification of a fundamental domain and its symmetry group $G$.
	
	We denote by $G\subset E_2(\R)$ the symmetry group of $\lscr^\mathbf{R}$.
	The following assumption means that any fundamental domain of the lattice $\lscr^\mathbf{R}$ contains exactly one vertex.
	\begin{hyp}[$\lscr^\mathbf{R}$ has a single orbit]
		\label{hypo_1}
		The lattice $\lscr^\mathbf{R}$ is \emph{vertex-transitive:} the group $G \subset E_2(\R)$  acts \emph{transitively }on $\lscr^\mathbf{R}$.
	\end{hyp}
	\begin{rem} 
		Assumption~\ref{hypo_1} implies a constraint on the number of sites per primitive cell. An enumeration of possibilities shows that $N\in\{1, 2, 3, 4, 6, 8,9, 18, 36\}$. Later, we explain how one could relax this assumption (see Remark~\ref{rem: relaxing_hypo}).
	\end{rem}	
	
	We denote by $G_L$ the symmetry group of $\lscr_L^\mathbf{R}$ which has the same point group as $G$.
	Our next assumption states that $V_L$ has all the same symmetries as $\lscr_L^\mathbf{R}$. In particular, if the potential $V_L$ presents a singularity at one vertex of $\lscr^\mathbf{R}_L$ then the same singularity appears at each vertex, up to an orthogonal transformation.
	\begin{hyp}[Symmetries of $V_L$]
		\label{hypo_2}
		For all $L \geq 1$, the potential $V_L$ is invariant under the action of $G_L$: $\forall g\in G_L$, $g\cdot V_L = V_L$.
	\end{hyp}
	Now, we introduce
	\[
	\boxed{H_L = -\Delta + V_L \, ,}
	\]
	the $\mathscr{L}_L$-\emph{periodic operator }on $L^2(\R^2)$ associated with the potential $V_L$. 
	It is well known that the operator $H_L$ is bounded from below (see for instance~\cite[Section 1.2]{cycon1987schrodinger} or Proposition~\ref{lemma: kato_type_inequality}). Since we have assumed that $V_L\in \lper^p(\Gamma_L)$ for some $p\in\intoo{1}{\infty}$, we consider the Friedrichs self-adjoint extension of this operator. It admits the decomposition in fibers~\cite[Section XIII.16]{reed1978methodsIV} (see also~\cite[Section 2]{knauf1989coulombic})
	\[
	H_L \simeq \fint_{\Gamma_L^*}^\oplus H_{L}(\mathbf{k}) \diff \mathbf{k}  \pt
	\]
	There, for all $\mathbf{k} \in \Gamma_L^*$, the operator $H_L(\mathbf{k})=-\Delta + V_L$ acts on $\lk^2(\Gamma_L)$ and is self-adjoint on the domain
	\[
	\mathcal{D}(H_L(\mathbf{k})) = \enstq{\varphi \in H^1_\mathbf{k}(\Gamma_L)}{\left(-\Delta + V_L\right) \varphi \in L^2_\mathbf{k}(\Gamma_L) }\, ,
	\]
	where $H^1_\mathbf{k}(\Gamma_L) = \enstq{\varphi \in \lk^2(\Gamma_L)}{\partial_1 \varphi, \partial_2 \varphi \in \lk^2(\Gamma_L)}$ denotes the Sobolev space with pseudo-periodic boundary conditions. Moreover, because $H_L(\mathbf{k})$ has a compact resolvent, its spectrum is purely discrete and accumulates at $+\infty$
	\[
	\sigma(H_L\mathbf{(k)}) = \{\mu_{1,L}(\mathbf{k}) \leq \mu_{2,L}(\mathbf{k}) \leq \dots\} \pt
	\]
	The maps $\mathbf{k} \mapsto \mu_{n,L}(\mathbf{k})$ for $n\geq 1$ are called \emph{band functions}. They are Lipschitz and piecewise analytic functions of $\mathbf{k}$ (consequence of~\cite[Proposition 2.3]{birman99periodic} and Hartogs's theorem~\cite{krantz2001function}). The operator $H_L$ has only absolutely continuous spectrum~\cite{knauf1989coulombic, birman99periodic} given by the range of the band functions~\cite[Theorem XIII.85]{reed1978methodsIV}
	\[
	\sigma(H_L) = \bigcup_{\mathbf{k} \in \Gamma_L^*} \sigma(H_L(\mathbf{k})) =  \bigcup_{n \in \N^*} \mu_{n,L}(\Gamma_L^*) \pt
	\]
	The purpose of this article is to study the geometry of the spectral bands of the periodic operator $H_L$ in the \emph{dissociation }regime, that is, for $L\gg 1$.
	
	\subsubsection{Reference operator}\label{sec:reference-operator}
	
	So far, the assumptions on the family $\{V_L\}_{L\geq 1}$ do not give any information about the local behavior of $V_L$ when $L$ is large. We want $V_L$ to be approximately given by a periodic superposition of potential wells. This motivates the introduction of a \emph{reference potential }$V$ which we will assume to be the limit of $V_L$ at each vertex of $\lscr^\mathbf{R}_L$.
	\begin{hyp}[Reference potential]
		\label{hypo_3}
		Let $V$ be a real-valued potential such that
		\begin{enumerate}[noitemsep, label=(\roman*),ref= \ref*{hypo_3}\textit{(\roman*)}]
			\item \label{hypo_31} $V\in L^p(\R^2)+L^\infty(\R^2)$ where $p \in \intoo{1}{\infty}$ is the number introduced in Assumption~\ref{hypo_01};
			\item \label{hypo_32} $V(\mathbf{x}) = \grandO{\abs{\mathbf{x}}^{-1-\epsilon}}$ as $\abs{\mathbf{x}} \to \infty$ for some $\epsilon>0$;
			\item \label{hypo_33} The associated mono-atomic Schrödinger operator defined by 
			\begin{align}
			\label{eq: reference_operator}
			\boxed{H = -\Delta + V\, ,}
			\end{align}
			admits at least one negative eigenvalue.
		\end{enumerate}
	\end{hyp}
	The potential $V$ belongs the Kato class~\cite[Definition 1.10]{cycon1987schrodinger} hence is infinitesimally $(-\Delta)$-form bounded. We always work with the Friedrichs extension of $H$ which defines a self-adjoint operator on the domain
	\begin{align*}
	\mathcal{D}(H)  = \enstq{u \in H^1(\R^2)}{(-\Delta + V)u \in L^2(\R^2)} \pt
	\end{align*}
	By standard perturbation theory~\cite{kato1995perturbation, reed1978methodsIV}, its essential spectrum is given by $\sigma_\mathrm{ess}(H) = \intfo{0}{\infty}$ and the discrete spectrum of $H$ is negative. We denote it by
	\begin{align*}
	\sigma_\mathrm{d}(H) = \{ - \mu_1 < - \mu_2 \leq - \mu_3 \leq \dots \leq 0 \} \subset \intof{-\infty}{0}\pt
	\end{align*}
	To lighten the notation, we also denote by $-\mu$ the lowest eigenvalue of $H$, which is non-degenerate~\cite{goelden1977nondegeneracy}, by $g = \mu_1 - \mu_2 >0$ the \emph{spectral gap }above $-\mu$ and by $v$ the associated normalized eigenfunction. We have
	\begin{align*}
	Hv = (-\Delta+V)v =-\mu v \, ,
	\end{align*}
	where all the terms make sense in $H^{-1}(\R^2)$.
	
	\subsubsection{Effective mono-atomic operators}\label{sec:effective-mono-atomic-operator}
	
	In this section, we introduce effective operators by localizing the periodic potential $V_L$ around a vertex of $\lscr_L^\mathbf{R}$. The eigenfunctions and eigenvalues of these operators will be useful in order to precisely approximate the modes and the dispersion relation of the periodic operator $H_L$.
	\begin{figure}[h]
		\centering
		\begin{subfigure}[t]{0.45\textwidth}
			\centering
			\includegraphics{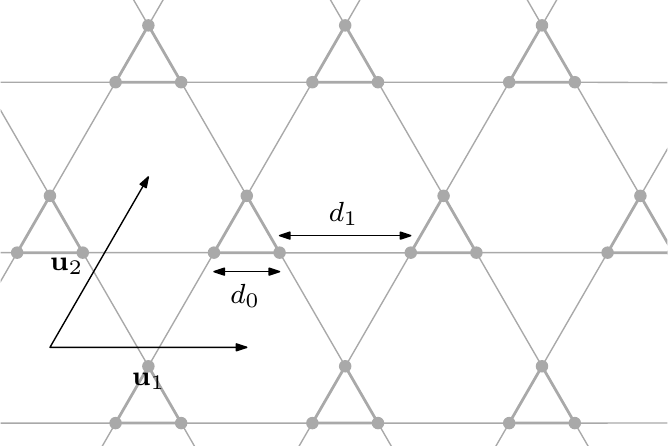}
			\caption{$\lscr=\mathbf{u}_1\Z+\mathbf{u}_2\Z$ is the triangular lattice, $N = \absL{\mathbf{R}}=3$ and $m=1$. Vertices linked by thick gray lines (resp. thin gray lines) are nearest neighbors (resp. second nearest neighbors). The nearest neighbor distance $d_0$ and the second nearest neighbor distance $d_1>d_0$ are displayed.}
			\label{fig:nn_lattice}
		\end{subfigure}
		\hfill
		\begin{subfigure}[t]{0.45\textwidth}
			\centering
			\includegraphics{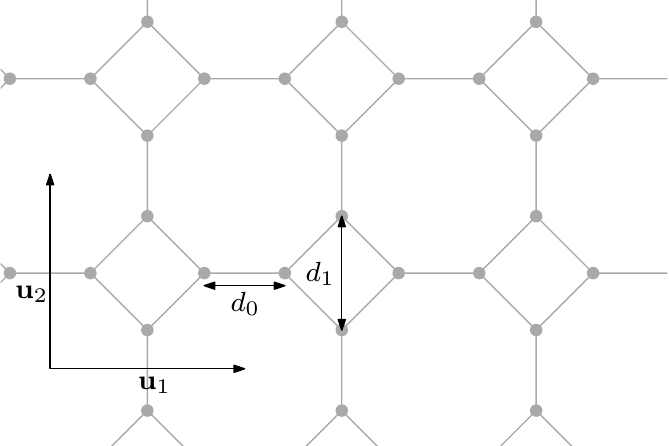}
			\caption{$\lscr=\mathbf{u}_1\Z+\mathbf{u}_2\Z$ is the square lattice, $N = \absL{\mathbf{R}}=4$ and $m=2$. Vertices linked by gray lines are nearest neighbors (none second nearest neighbor link is shown). The nearest neighbor distance $d_0$ and the second nearest neighbor distance $d_1>d_0$ are displayed. There are two kinds of edges: octogon-octogon edges and square-octogon edges. Thus $m=2$.} 
			\label{fig:square-octogon}
		\end{subfigure}
		\caption{Examples of lattices $\lscr^\mathbf{R}$ with different parameters.}
		\label{fig:examples}	
	\end{figure}
	
	We introduce the nearest neighbor distance $d_0 >0 $ of the unscaled lattice $\mathscr{L}^\mathbf{R}$ (see Figure~\ref{fig:examples})
	\begin{align}
	\label{eq: min_distance_lattice}
	d_0 \coloneqq \min \left\{\abs{\mathbf{r-r'}}\mathrel{~}\middle|\mathrel{~} (\mathbf{r,r'}) \in \left(\mathscr{L}^\mathbf{R}\right)^2 \et \mathbf{r\neq r'}\right\} \pt
	\end{align}
	We also denote by $d_1 > d_0$ the second nearest neighbor distance. Let $\delta \in \intoo{0}{1/2}$
	and $\chi \in \mathcal{C}^\infty_c(\R^2)$ a localization function such that
	\begin{align}
	\label{eq: definition_chi}
	0 \leq \chi \leq 1,\quad\chi \equiv 1 \quad \text{on} \quad B\left(0,\frac{1+\delta}{2}d_0\right) \et \supp \chi \subset  B\left(0,\left(\frac{1}{2}+\delta\right)d_0\right) \pt
	\end{align}
	Notice that we have $\frac{1}{2}d_0 < \left(\frac{1}{2}+\delta\right)d_0 < d_0$. In addition, we also require $\chi$ to be \emph{radial} and to satisfy the technical assumption
	\begin{align}
	\label{eq: chi_regularity}
	\sqrt{1-\chi} \in \mathcal{C}^1(\R^2) \pt
	\end{align}
	For $L\geq 1$ and $\mathbf{r} \in \mathscr{L}^\mathbf{R}$, we introduce the localization functions near the vertex $L\mathbf{r}$
	\begin{align}
	\label{eq: definition_Vlr}
	\chi_{L,\mathbf{r}}(\mathbf{x}) \coloneqq \chi (L^{-1}\mathbf{x} - \mathbf{r}) \et V_{L,\mathbf{r}}\coloneqq \chi_{L,\mathbf{r}} V_L \pt
	\end{align}
	The potential $V_{L,\mathbf{r}}$ belongs to $L^p(\R^2)$ and is compactly supported within the ball $B(L\mathbf{r},\left(\frac{1}{2}+\delta \right)Ld_0)$ which contains one and only one vertex of $\lscr_L^\mathbf{R}$. We recall that the space $L^p(\R^2)+L^\infty(\R^2)$, endowed with 
	\begin{align*}
	\norme{V}_{L^p(\R^2)+L^\infty(\R^2)} \coloneqq \inf\enstq{\norme{V_1}_{L^p(\R^2)}+\norme{V_2}_{L^\infty(\R^2)}}{V = V_1+V_2,~V_1\in L^p(\R^2),~V_2\in L^\infty(\R^2)} \, ,
	\end{align*}
	defined a Banach space. Our next assumption is that, up to a translation, $V_{L,\mathbf{r}}$ is asymptotically given by the reference potential $V$ when $L \to\infty$.
	\begin{hyp}[The wells are asymptotically equivalent to $V$]
		\label{hypo_4}
		For all $\mathbf{r}\in\lscr^\mathbf{R}$, we have
		\begin{align*}
		\lim\limits_{L\to\infty}\normeL{V_{L,\mathbf{r}} - V(\cdot - L\mathbf{r})}_{L^p(\R^2)+L^\infty(\R^2)} = 0 \pt
		\end{align*}
	\end{hyp}
	A first consequence of Assumption~\ref{hypo_4} is that the discrete spectrum of the \emph{effective mono-atomic} Schrödinger operator associated with $V_{L,\mathbf{r}}$, defined by
	\begin{align}
	\label{eq: effective_mono_atomic_operator}
	\boxed{H_{L,\mathbf{r}} \coloneqq -\Delta + V_{L,\mathbf{r}} \, ,}
	\end{align}
	is non-empty (see~\cite[Section XII-3]{kato1995perturbation} and also Proposition~\ref{prop: perturbation_theory} in Appendix~\ref{sec:perturbation-theory-for-singular-potentials}). In addition, its lowest eigenvalue is non-degenerate. By symmetry arguments, we can show that the operators $\{H_{L,\mathbf{r}}\}_{\mathbf{r} \in \mathscr{L}^\mathbf{R}}$ are unitarily equivalent and thus share the same spectrum (see Section~\ref{sec:properties-of-the-mono-atomic-operators} for the details). We denote by $-\mu_L$ their common lowest eigenvalue and by $v_{L,\mathbf{r}}$ the associated normalized eigenfunction:
	\begin{align}
	\label{eq: eigen_equation_Hlr}
	H_{L,\mathbf{r}} v_{L,\mathbf{r}} = (-\Delta + V_{L,\mathbf{r}})v_{L,\mathbf{r}} = -\mu_Lv_{L,\mathbf{r}} \, ,
	\end{align}
	where each term makes sense in $H^{-1}(\R^2)$.
	
	\subsubsection{Convergence to the tight-binding model}
	
	In this section, we state the main result of this article. The first theorem provides expansion to the leading order of the dispersion of $-\Delta+V_L$ when $V_L$ satisfies the assumptions enumerated in the previous section.
	
	We introduce the set of nearest neighbors pairs (see Figure~\ref{fig:examples})
	\begin{align}
	\label{eq: definition_NN}
	\mathscr{P}^\mathbf{R} \coloneqq \enstqbis{\{\mathbf{r,r'}\} \in \lscr^\mathbf{R} \times \lscr^\mathbf{R}}{\abs{\mathbf{r-r'}}=d_0} \pt
	\end{align}
	The group $G$ acts isometrically on $\lscr^\mathbf{R}$ hence it also defines an action on $\mathscr{P}^\mathbf{R}$. When this action  is transitive, we say that $\mathscr{P}^\mathbf{R}$ is \emph{edge-transitive}. However, this is \emph{not }necessarily the case (see Figure~\ref{fig:square-octogon} for an example of non edge-transitive lattice). Because $\mathscr{P}^\mathbf{R}$ is invariant under the shifts of $\lscr$, this action has finitely many orbits, which are denoted by $\mathcal{O}_1,\dots,\mathcal{O}_m$. For all $k \in \{1,\dots,m\}$, we consider a representative $\mathbf{p}_k = \{\mathbf{r}_k,\mathbf{r}_k'\}\in\mathcal{O}_k$ and we introduce the following interaction coefficient
	\begin{align}
	\label{eq: interaction_coefficient}
	\boxed{\theta_{L,k}\coloneqq  \langle v_{L,\mathbf{r}_k}, V_{L,\mathbf{r}_k}(1-\chi_{L,\mathbf{r}_k'})v_{L,\mathbf{r}_k'}\rangle_{L^2(\R^2)} \, ,}
	\end{align}
	where $V_{L,\mathbf{r}}$ and $v_{L,\mathbf{r}}$ are respectively defined in \eqref{eq: definition_Vlr} and \eqref{eq: eigen_equation_Hlr}. Using Assumption~\ref{hypo_2}, we can show that the quantity $\theta_{L,k}$ does not depend on the choice of the pair $\mathbf{p}_k \in \mathcal{O}_k$ (see Section~\ref{sec:properties-of-the-mono-atomic-operators}). In addition, we show in Proposition~\ref{prop: interaction_matrix} that $\theta_{L,k}$ is exponentially small when $L$ is large: for any $\epsilon >0$, there exists $C_\epsilon$ such that
	\begin{align*}
	\abs{\theta_{L,k}} \leq C_\epsilon e^{-(1-\epsilon)\sqrt{\mu}d_0L} \pt
	\end{align*}
	It is expected that in many cases this is essentially optimal, that is, $\theta_{L,k}$ is of order $e^{-\sqrt{\mu}d_0L}$ up to polynomial factors.

	The following theorem is the main result of this article.
	\begin{theo}[Convergence to the tight-binding model]
		\label{theo: feshbach-schur}
		Let $\delta\in\intoo{0}{1/2}$ be the parameter introduced in the definition \eqref{eq: definition_chi} of the cut-off function $\chi$. Under Assumptions~\ref{hypo_0}--\ref{hypo_4}, for all $\epsilon\in\intoo{0}{\delta}$ and for all $\mathbf{k} \in \Gamma_L^*$, the $N$ first Bloch eigenvalues satisfy
		\begin{align}
		\label{eq: main_theorem}
		\boxed{\mu_{j,L}(\mathbf{k}) = -\mu_L + \lambda_j \left(\sum_{k=1}^m \theta_{L,k} B_k(L\mathbf{k})\right) + \grandO{e^{-(1+\delta - \epsilon)\sqrt{\mu}d_0L}+e^{-(1 - \epsilon)\sqrt{\mu}d_1L}} \, ,}
		\end{align}
		for $L$ large enough and where the $O$ is independent from $\mathbf{k}$. Here $d_1>d_0$ denotes the second nearest neighbor distance in $\lscr^\mathbf{R}$, $\lambda_j(M)$ denotes the $j$\textsuperscript{th} lowest eigenvalue of a matrix $M$ and $B_k(L\mathbf{k})$ is the $N\times N$ matrix defined by
		\begin{align}
		\label{eq: definition_B}
		\forall \mathbf{k} \in \Gamma^*,\quad\forall (\mathbf{r,r'}) \in \mathbf{R}^2,\quad B_k(\mathbf{k})_{\mathbf{r,r'} }= \sum_{\substack{\mathbf{u} \in \lscr \\ (\mathbf{\mathbf{r},\mathbf{u+r'}}) \in \mathcal{O}_k}} e^{i\mathbf{k}\cdot \mathbf{u}}  \pt
		\end{align}
	\end{theo}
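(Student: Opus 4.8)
The plan is to reduce the computation of the $N$ lowest Bloch eigenvalues of $H_L(\mathbf{k})$ to a finite-dimensional problem by the Feshbach--Schur (Grushin) method: for each $\mathbf{k}\in\Gamma_L^*$ I would build an $N$-dimensional trial subspace $W_{\mathbf{k}}\subset\lk^2(\Gamma_L)$ out of the mono-atomic ground states $v_{L,\mathbf{r}}$, show that the part of $\sigma(H_L(\mathbf{k}))$ lying near $-\mu_L$ is, up to exponentially small errors, the spectrum of the compression of $H_L(\mathbf{k})$ to $W_{\mathbf{k}}$, and then identify the latter with $-\mu_L\,\mathrm{Id}+\sum_{k=1}^m\theta_{L,k}B_k(L\mathbf{k})$. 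The preliminary ingredient is a uniform Agmon estimate: since $-\mu_L$ is an isolated negative eigenvalue of $H_{L,\mathbf{r}}=-\Delta+V_{L,\mathbf{r}}$ with $V_{L,\mathbf{r}}$ compactly supported and, by Assumption~\ref{hypo_4} and Proposition~\ref{prop: perturbation_theory}, $\mu_L\to\mu>0$ with the gap above $-\mu_L$ close to $g$, one gets $\abs{v_{L,\mathbf{r}}(\mathbf{x})}+\abs{\nabla v_{L,\mathbf{r}}(\mathbf{x})}\leq C_\epsilon e^{-(1-\epsilon)\sqrt{\mu}\abs{\mathbf{x}-L\mathbf{r}}}$ uniformly for $L$ large (same for $v$). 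One then sets $\psi_{\mathbf{r}}^{\mathbf{k}}\coloneqq\sum_{\mathbf{u}\in\lscr_L}e^{i\mathbf{k}\cdot\mathbf{u}}v_{L,\mathbf{r}}(\cdot-\mathbf{u})\in\lk^2(\Gamma_L)$ (convergent by that decay), $W_{\mathbf{k}}=\spn\{\psi_{\mathbf{r}}^{\mathbf{k}}\}_{\mathbf{r}\in\mathbf{R}}$, $P=P_{\mathbf{k}}$ the orthogonal projection onto $W_{\mathbf{k}}$ and $Q=1-P$; by Assumptions~\ref{hypo_1}--\ref{hypo_2} all the $v_{L,\mathbf{r}}$ share the eigenvalue $-\mu_L$ and $\theta_{L,k}$ is well defined.

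The crux is a \emph{uniform spectral gap}: there is $c>0$ such that, for $L$ large and all $\mathbf{k}$, the operator $QH_L(\mathbf{k})Q$ on $\mathrm{ran}\,Q$ has no spectrum in $(-\mu_L-c,-\mu_L+c)$ — hence $H_L(\mathbf{k})$ has exactly $N$ eigenvalues there, the upper count coming from min--max on $W_{\mathbf{k}}$ and the lower bound from an IMS localization formula subordinate to $\{\supp\chi_{L,\mathbf{r}}\}_{\mathbf{r}}$ and its complement (this is where the radiality of $\chi$ and the condition $\sqrt{1-\chi}\in\mathcal{C}^1$ make the localization error $O(L^{-2})$): on a piece supported in $\{\chi_{L,\mathbf{r}}=1\}$ one replaces $V_L$ by $V_{L,\mathbf{r}}$ and uses the gap of $H_{L,\mathbf{r}}$ above $-\mu_L$ together with near-orthogonality of $\chi_{L,\mathbf{r}}v_{L,\mathbf{r}}$ to $\mathrm{ran}\,Q$, while on the far piece one uses $\norme{V_L\mathds{1}_{\dist(\cdot,\lscr_L^\mathbf{R})\geq L\rho}}_{\lper^\infty(\Gamma_L)}\to0$ (Assumption~\ref{hypo_04}) and $\mu_L\to\mu>0$. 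Given the gap, for $z$ in the disc of radius $c/2$ about $-\mu_L$ the Feshbach operator $F_P(z)=PH_L(\mathbf{k})P-PH_L(\mathbf{k})Q\bigl(QH_L(\mathbf{k})Q-z\bigr)^{-1}QH_L(\mathbf{k})P-z$ is well defined on $W_{\mathbf{k}}$, and $z\in\sigma(H_L(\mathbf{k}))$ near $-\mu_L$ iff $F_P(z)$ is singular, with equal multiplicity. From the eigenequation for $v_{L,\mathbf{r}}$ and $\lscr_L$-periodicity of $V_L$ one has $H_L\psi_{\mathbf{r}}^{\mathbf{k}}=-\mu_L\psi_{\mathbf{r}}^{\mathbf{k}}+\sum_{\mathbf{u}}e^{i\mathbf{k}\cdot\mathbf{u}}V_L\bigl(1-\chi_{L,\mathbf{r}}(\cdot-\mathbf{u})\bigr)v_{L,\mathbf{r}}(\cdot-\mathbf{u})$, so $QH_L(\mathbf{k})P\psi_{\mathbf{r}}^{\mathbf{k}}$ is that remainder; since $1-\chi_{L,\mathbf{r}}$ vanishes on $B(0,\tfrac{1+\delta}{2}d_0)$, splitting near/far from vertices and using that $\norme{V_L}_{\lper^p(\Gamma_L)}$ grows at most polynomially (estimate~\eqref{rem: grow_polynomial}) gives $\norme{QH_L(\mathbf{k})P}=O(e^{-\frac{1+\delta}{2}(1-\epsilon)\sqrt{\mu}d_0L})$, so the middle term of $F_P(z)$ is $O(e^{-(1+\delta-\epsilon)\sqrt{\mu}d_0L})$ and is absorbed into the error.

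Finally I would compute $PH_L(\mathbf{k})P$ in the (almost orthonormal) family $(\psi_{\mathbf{r}}^{\mathbf{k}})$. Unfolding the Bloch sums gives the Gram matrix $S(\mathbf{k})_{\mathbf{r},\mathbf{r}'}=\delta_{\mathbf{r},\mathbf{r}'}+\sum_{\mathbf{u}}e^{i\mathbf{k}\cdot\mathbf{u}}\pdtsc{v_{L,\mathbf{r}}}{v_{L,\mathbf{r}'}(\cdot-\mathbf{u})}_{L^2(\R^2)}$, whose off-diagonal part is $O(e^{-(1-\epsilon)\sqrt{\mu}d_0L})$, hence $S(\mathbf{k})=\mathrm{Id}+O(e^{-(1-\epsilon)\sqrt{\mu}d_0L})$ and $S(\mathbf{k})^{-1/2}$ exists; and, using the eigenequation and periodicity again,
\[
\mathcal{H}(\mathbf{k})_{\mathbf{r},\mathbf{r}'}\coloneqq\pdtsc{\psi_{\mathbf{r}}^{\mathbf{k}}}{H_L(\mathbf{k})\psi_{\mathbf{r}'}^{\mathbf{k}}}=-\mu_L S(\mathbf{k})_{\mathbf{r},\mathbf{r}'}+\sum_{\mathbf{u}\in\lscr_L}e^{i\mathbf{k}\cdot\mathbf{u}}\,\pdtsc{v_{L,\mathbf{r}}}{V_L\bigl(1-\chi_{L,\mathbf{r}'}(\cdot-\mathbf{u})\bigr)v_{L,\mathbf{r}'}(\cdot-\mathbf{u})}_{L^2(\R^2)}.
\]
In each summand $V_L(1-\chi_{L,\mathbf{r}'}(\cdot-\mathbf{u}))$ vanishes near $L\mathbf{r}'+\mathbf{u}$, the orbitals decay away from their centres, and Assumption~\ref{hypo_04} kills $V_L$ away from $\lscr_L^\mathbf{R}$; a case analysis on $\abs{L\mathbf{r}-(L\mathbf{r}'+\mathbf{u})}$ shows the summand is $O(e^{-(1+\delta-\epsilon)\sqrt{\mu}d_0L})$ when this distance is $0$ or $Ld_0$, and for a nearest-neighbor pair in orbit $\mathcal{O}_k$ it equals $\theta_{L,k}$ up to that error by the definition~\eqref{eq: interaction_coefficient} and the approximation $v_{L,\mathbf{r}}\approx\chi_{L,\mathbf{r}}v_{L,\mathbf{r}}$ modulo exponential tails, while second-neighbor pairs contribute $O(e^{-(1-\epsilon)\sqrt{\mu}d_1L})$ and anything further is smaller. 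Summing over $\mathbf{u}$ reproduces $\sum_k\theta_{L,k}B_k(L\mathbf{k})_{\mathbf{r},\mathbf{r}'}$ from~\eqref{eq: definition_B}, so, using $\abs{\theta_{L,k}}=O(e^{-(1-\epsilon)\sqrt{\mu}d_0L})$ (Proposition~\ref{prop: interaction_matrix}) and $1-\epsilon>\delta$ to absorb the $S^{-1/2}$ corrections,
\[
S(\mathbf{k})^{-1/2}\mathcal{H}(\mathbf{k})S(\mathbf{k})^{-1/2}=-\mu_L\,\mathrm{Id}+\sum_{k=1}^m\theta_{L,k}B_k(L\mathbf{k})+O\bigl(e^{-(1+\delta-\epsilon)\sqrt{\mu}d_0L}+e^{-(1-\epsilon)\sqrt{\mu}d_1L}\bigr)
\]
as Hermitian matrices. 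Solving $\det F_P(z)=0$, noting that for real $z$ the conjugated matrix $M(z)=S^{-1/2}\mathcal{H}(\mathbf{k})S^{-1/2}-S^{-1/2}E(z)S^{-1/2}$ (with $E(z)$ the second-order Feshbach term) is Hermitian and within the above error of $-\mu_L\,\mathrm{Id}+\sum_k\theta_{L,k}B_k(L\mathbf{k})$, and invoking Lipschitz stability of the eigenvalues of a Hermitian matrix under Hermitian perturbations, yields~\eqref{eq: main_theorem}, uniformly in $\mathbf{k}$ because all the above estimates are uniform in $\mathbf{k}$.

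The hard part is the uniform-in-$(L,\mathbf{k})$ spectral gap on the $Q$-sector — equivalently, that the part of $\sigma(H_L(\mathbf{k}))$ near $-\mu_L$ has dimension exactly $N$ — which must be proved in the presence of the admissible Coulomb-type local singularities of $V_L$: one has to combine the IMS partition, the Agmon decay of the $v_{L,\mathbf{r}}$, the perturbation theory of Appendix~\ref{sec:perturbation-theory-for-singular-potentials} and Assumption~\ref{hypo_04} in a way that neither feels the singularities nor degrades as $L\to\infty$. A secondary bookkeeping difficulty is keeping the cut-off plateau radius $\tfrac{1+\delta}{2}d_0$ separate from $\tfrac{d_0}{2}$ throughout, which is exactly what turns the naive exponent $1$ into $1+\delta$ in the error term.
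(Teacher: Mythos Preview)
Your overall architecture is the same as the paper's: Feshbach--Schur reduction onto the span of the Bloch-transformed mono-atomic ground states, an IMS-based gap on the complement, and identification of the compressed operator with the tight-binding matrix. The paper orthonormalizes once in $\ell^2(\lscr^\mathbf{R})$ via $Q_L^{-1/2}$ and then passes to fibers, while you orthonormalize fiberwise via $S(\mathbf{k})^{-1/2}$; these are equivalent and both are fine.

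There is, however, a genuine gap in your treatment of the off-diagonal Feshbach term. You claim $\norme{QH_L(\mathbf{k})P}=O(e^{-\frac{1+\delta}{2}(1-\epsilon)\sqrt{\mu}d_0L})$, but for the singular potentials allowed by Assumption~\ref{hypo_01} (e.g.\ Coulomb, $p<2$) this operator norm need not even be finite: the remainder $V_L(1-\chi_{L,\mathbf{r}})v_{L,\mathbf{r}}$ still sees the singularities of $V_L$ at the \emph{other} vertices $L\mathbf{r}'$, and there $v_{L,\mathbf{r}}$ is merely bounded (exponentially small, but bounded), so the product is only in $L^p_{\mathrm{loc}}$, not $L^2_{\mathrm{loc}}$. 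In particular $|x|^{-1}\notin L^2_{\mathrm{loc}}(\R^2)$, so for the Coulomb case $H_L\psi_{\mathbf{r}}^{\mathbf{k}}\notin L^2_{\mathbf{k}}(\Gamma_L)$ and $\norme{C_L(\mathbf{k})}=+\infty$. The paper never bounds $\norme{C_L(\mathbf{k})}$; instead it bounds $\norme{(B_L(\mathbf{k})-\lambda)^{-1/2}C_L(\mathbf{k})}$ directly (Proposition~\ref{prop: estimation_key}), which is exactly what the Feshbach remainder needs. This requires inserting factors of $(-\Delta+1)^{\pm 1/2}$: one shows $(B_L-\lambda)^{-1/2}\sqrt{-\Delta+1}P_L^\perp$ is polynomially bounded (Lemma~\ref{lemma: estimation_key1}), that $(-\Delta+1)^{-1/2}V_L(-\Delta+1)^{-1/2}$ is polynomially bounded (Corollary~\ref{cor: kato_type_inequality_consequence}), and then that $\sqrt{-\Delta+1}\,\mathcal{U}_{\mathrm{BF}}\bigl((1-\chi_{L,\mathbf{r}})v_{L,\mathbf{r}}\bigr)(\mathbf{k},\cdot)$ has the right exponential smallness (Lemmas~\ref{lemma: estimation_key7}--\ref{lemma: estimation_key8}). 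Your sketch correctly flags the singularities as the hard point for the \emph{gap}, but the same issue bites harder in the residual estimate, and the fix is not a ``splitting near/far'' in $L^2$ but this form-level argument.
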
	
	If $m=1$ then the second term in \eqref{eq: main_theorem} is the tight-binding model associated with the crystal $\lscr^\mathbf{R}$. In the case of the honeycomb lattice (which is introduced in Section~\ref{sec:the-triangular-and-the-hexagonal-lattices}), we have $N=2$ and $m=1$. The matrix $B_\mathrm{HC}(\mathbf{k})\coloneqq B_1(\mathbf{k})$ is given by
	\begin{align*}
	B_\mathrm{HC}(\mathbf{k}) = \begin{pmatrix}
	0 & 1+e^{i\mathbf{k\cdot u}_1}+e^{i\mathbf{k\cdot u}_2}  \\
	1+e^{-i\mathbf{k\cdot u}_1}+e^{-i\mathbf{k\cdot u}_2}& 0
	\end{pmatrix} \pt
	\end{align*}
	This is the matrix associated with the tight-binding model of graphene, also known as the \emph{Wallace model}~\cite{wallace1947bandtheory}. The dispersion relation, $\mu_\pm(\mathbf{k}) = \pm \abs{ 1+e^{i\mathbf{k\cdot u}_1}+e^{i\mathbf{k\cdot u}_2}}$, exhibits \emph{Dirac points} at the six vertices of the first Brillouin zone, see Figure~\ref{fig:chap_wallace}. In this case, a more precise result is provided later in Theorem~\ref{theo: existence_dirac_cones}.
	\begin{figure}
		\centering
		\includegraphics[width=0.8\textwidth]{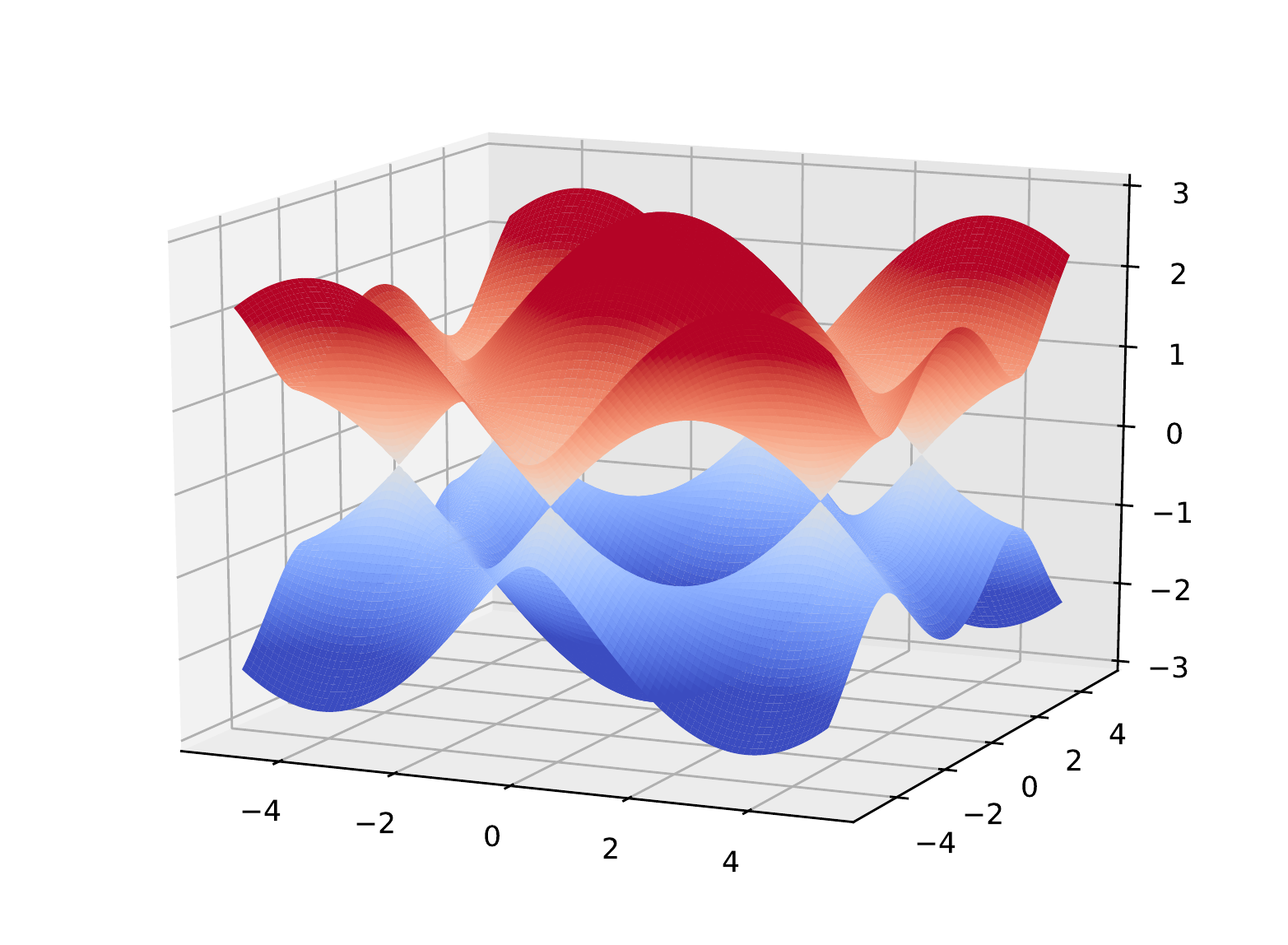}
		\caption{The Wallace model. The dispersion relation is invariant with respect to the rotation by $\pi/3$ about the origin. We observe conical singularities, Dirac points, at the vertices of the Brillouin zone $\Gamma^*$.}
		\label{fig:chap_wallace}
	\end{figure}
	
	For generic potentials, we expect $\theta_{L,k}$ to be non zero and of order $e^{-\sqrt{\mu}d_0L}$. If this is indeed the case then the tight binding model defined by the matrices \eqref{eq: definition_B} would give the leading order of the dispersion relation of $H_L$.
	
	Showing that the interaction coefficients $\theta_{L,k}$ are really of order $e^{-\sqrt{\mu}d_0L}$ seems challenging in the general setting we consider. In~\cite{fefferman2017honeycomb}, Fefferman, Weinstein and Lee-Thorp consider superpositions of localized potential wells centered on the vertices of $\lscr^H$, that is the case $V_L = L\sum_{\mathbf{r}\in\lscr^H} V^\mathrm{at}(\cdot- \mathbf{r})$ with $V^\mathrm{at}$ bounded. Under some symmetry, support and spectral assumptions on $V$, they are able to show that the dispersion relation of $-\Delta+V_L$ converges uniformly toward the Wallace model in the high contrast regime $L\to\infty$. In particular, they show that the interaction coefficient satisfies the bounds $e^{-c_1L} \lesssim \theta_L \lesssim e^{-c_2L}$ for some constants $c_1<c_2$ depending on $V^\mathrm{at}$. In~\cite{daumer1993periodique}, Daumer considers an exact periodic superposition of smooth potential wells which decay polynomially. For an isolated band and in the dissociation regime, the first order of the interaction coefficient is determined and shown to be of order $e^{-\sqrt{\mu}d_0L}$, corrected by explicit polynomial factors.
	
	Many results similar to Theorem~\ref{theo: feshbach-schur} have appeared in the literature. When the potential is homogeneous of degree $-1$, the regime $L\to\infty$ is equivalent to a semiclassical limit in which multiple wells potentials have been studied. In~\cite{simon1984semiclassicalIII}, Simon shows that, in the semiclassical regime, the width of the ground state band of a Schrödinger operator with smooth and periodic potential is given by the minimum action among all instantons connecting two distinct minima of the external potential. We also refer to the series of papers by Helffer and Sjöstrand~\cite{helffer1984multipleI, helffer1985puitsII, helffer1985multipleIII, helffer1985puitsIV, helffer1986puitsV, helffer1987puitsVI} and, in a periodic setting, by their collaborators~\cite{outassourt1984, mohamed1991estimations}. See also~\cite{daumer1996schrodinger} where Daumer considers finitely many wells at dissociation. In general, the results from this literature are more precise (for instance, in~\cite{daumer1996schrodinger}, the author determines exactly the tunneling coefficient) but the dependence, when it exists, of the potential on the semiclassical parameter is easier to handle than in our setting. Indeed, Assumption~\ref{hypo_0} allows for potentials with a highly nonlinear dependence in $L$, which is needed for the study of some nonlinear quantum models at dissociation (see Section~\ref{sec:reduced-periodic-hartree-fock-model-at-dissociation}).
	
	The overall strategy for proving Theorem~\ref{theo: feshbach-schur} is the following. We first study the projection of $H_L$ on the subspace spanned by the family $\{v_{L,\mathbf{r}}\}_{\mathbf{r}\in\lscr^\mathbf{R}}$ which approximates the spectral subspace associated with the $N$ low-lying bands of $H_L$. Then we use Feshbach-Schur method to recover the exact spectrum of $H_L(\mathbf{k})$. The Feshbach-Schur method, used to reduce the dimension of perturbative eigenvalue problems, was first developed by Schur in matrix theory~\cite{schur1917uber} and by Feshbach in nuclear physics~\cite{feshbach1958unified}. It has been reformulated by Bach, Fröhlich and Sigal in~\cite{bach1998renormalization, bach1998quantum}. It was used in the context of periodic operators in~\cite{fefferman2012honeycomb, fefferman2017honeycomb}.
	\begin{rem}
		\label{rem: relaxing_hypo}
		In fact, Assumption~\ref{hypo_1} is not necessary for the purpose of this work even if it will make the analysis clearer. Without this assumption, the number of vertices in each fundamental domain of $\lscr^\mathbf{R}$ (which is equal to the number of orbits of $\lscr^\mathbf{R}$ under the action of $G$) is not restricted to one anymore. In this more general situation, a result analogous to Theorem~\ref{theo: feshbach-schur} would hold if we associate to each orbit a reference potential whose lowest eigenvalue could differ from the one of referential potentials associated with the other orbits. Indeed, in this case, one could decompose $H_L$ into a direct sum of single orbit operators which do not interact to leading order.
	\end{rem}
	
	\subsubsection{The honeycomb lattice case}
	\begin{figure}[h]
		\centering
		\begin{subfigure}[t]{0.45\textwidth}
			\centering
			\includegraphics{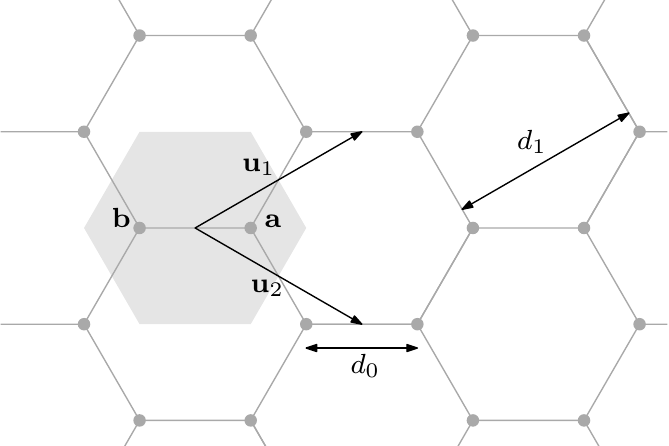}
			\caption{$\lscr = \mathbf{u}_1\Z+\mathbf{u}_2\Z$ is the triangular lattice, $N=\abs{\mathbf{R}}=2$ and $m=1$. The honeycomb lattice is the superposition of two sifted versions of $\lscr$ : $\lscr^H = (\lscr+\mathbf{a})\cup(\lscr+\mathbf{b})$. The Wigner-Seitz cell $\Gamma$ is colored in gray. Vertices linked by gray lines are nearest neighbors. The nearest neighbor distance $d_0$ and the second nearest neighbor distance $d_1>d_0$ are displayed.}
			\label{fig:honeycomb_lattice}
		\end{subfigure}
		\hfill
		\begin{subfigure}[t]{0.45\textwidth}
			\centering
			\includegraphics{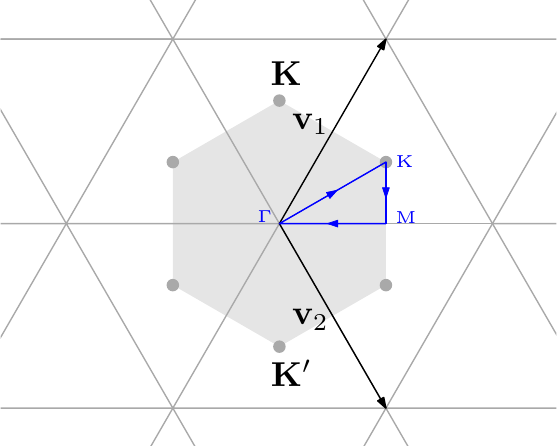}
			\caption{The vectors $\mathbf{v}_1$ and $\mathbf{v}_2$ generate the reciprocal lattice $\lscr^*$. The first Brillouin zone $\Gamma^*$ is colored in gray. The vertices $\mathbf{K}$ and $\mathbf{K}'$ of $\Gamma^*$ are shown, the others are obtained by rotation by $2\pi/3$ and $4\pi/3$ about the origin. The blue path corresponds to the section joining the special symmetry points $\Gamma$, $\mathrm{K}$ and $\mathrm{M}$ of the Brillouin zone. It is used in the chemistry literature for displaying dispersion relation data \protect\cite{cooper2012experimental}.}
			\label{fig:triangular_lattice_BZ}
		\end{subfigure}
		\caption{The honeycomb lattice $\lscr^H$.}	
		\label{fig:honeycomb_lattice_BZ}
	\end{figure}
	In estimate~\eqref{eq: main_theorem} of Theorem~\ref{theo: feshbach-schur}, the big $O$ does not explicitly depend on the quasi-momentum $\mathbf{k}\in\Gamma^*_L$. When $\lscr^\mathbf{R} = \lscr^H$ is the honeycomb lattice, we can make the estimates more precise in the vicinity of the vertices of the first Brillouin zone $\Gamma_L^*$. Then, under the additional assumption that the interaction coefficient has the expected order, we show that the dispersion relation has Dirac points at these vertices.
	
	In order to be more explicit, we first briefly recall some features of the honeycomb lattice (this lattice is fully described in Section~\ref{sec:the-triangular-and-the-hexagonal-lattices}). The honeycomb lattice is a two-sites lattice which forms a hexagonal tilling of the plane, see Figure~\ref{fig:honeycomb_lattice}. The underlying lattice is the triangular lattice and its first Brillouin zone is a regular hexagon whose vertices are denoted by $\mathbf{K}$ or $\mathbf{K}'$, see Figure~\ref{fig:triangular_lattice_BZ}. As already mentioned, there is only one interaction coefficient \eqref{eq: interaction_coefficient}, denoted by $\theta_L$. We also denote by $\mathbf{k}\in\Gamma^*_L\mapsto\mu_{\pm,L}(\mathbf{k})$ the two lowest band functions of $H_L$. Recall that $\delta\in\intoo{0}{1/2}$ is defined in~\eqref{eq: definition_chi} and that $d_0$ is the nearest neighbor distance, equal to $\frac{1}{\sqrt{3}}$ in this section.
	\begin{theo}[Dirac points]
		\label{theo: existence_dirac_cones}
		Assume that there exists $c >0$ and $\delta' \in \intfo{0}{\delta}$ small enough such that $\abs{\theta_L} \geq c e^{-(1+\delta')\sqrt{\mu}d_0L}$.
		Let $r>0$ and let $\mathbf{K}_\star \in \{\mathbf{K},\mathbf{K}'\}$ be a vertex of the first Brillouin zone $\Gamma^*$. Then, when $L \to \infty$, we have the expansion  
		\begin{align*}
		\mu_{\pm,L}\left(\frac{\mathbf{K}_\star + \kappa}{L}\right) = -\mu_L + \petito{\abs{\theta_L}} \pm \frac{\sqrt{3}}{2} \abs{\theta_L} \abs{\kappa} \left(1+E(\kappa)\right)(1+\petito{1})\, ,
		\end{align*}
		where $\abs{E(\kappa)} \leq C \abs{\kappa}$ for all $\kappa \in B(0,r)$ and where the $o$'s do not depend on $\kappa$.
	\end{theo}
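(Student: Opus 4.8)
The plan is to derive the stated expansion by applying Theorem~\ref{theo: feshbach-schur} with $N=2$, $m=1$, at the quasi-momentum $\mathbf{k} = (\mathbf{K}_\star+\kappa)/L \in \Gamma_L^*$, and then analyzing the resulting $2\times 2$ matrix $\theta_L B_{\mathrm{HC}}(\mathbf{K}_\star+\kappa)$ near the Dirac point. First I would record that, by Theorem~\ref{theo: feshbach-schur} (with the single interaction coefficient $\theta_L$ and the single matrix $B_{\mathrm{HC}}$),
\begin{align*}
\mu_{\pm,L}\left(\frac{\mathbf{K}_\star+\kappa}{L}\right) = -\mu_L + \lambda_\pm\bigl(\theta_L B_{\mathrm{HC}}(\mathbf{K}_\star+\kappa)\bigr) + \grandO{e^{-(1+\delta-\epsilon)\sqrt{\mu}d_0 L} + e^{-(1-\epsilon)\sqrt{\mu}d_1 L}}
\end{align*}
uniformly for $\kappa$ in a fixed compact set. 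Since $B_{\mathrm{HC}}$ is off-diagonal and Hermitian with off-diagonal entry $f(\mathbf{k}) := 1 + e^{i\mathbf{k\cdot u}_1} + e^{i\mathbf{k\cdot u}_2}$, its eigenvalues are $\pm|f(\mathbf{k})|$, so $\lambda_\pm(\theta_L B_{\mathrm{HC}}(\mathbf{k})) = \pm|\theta_L|\,|f(\mathbf{k})|$. Thus the whole problem reduces to a Taylor expansion of $|f|$ near the vertex $\mathbf{K}_\star$.

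Next I would expand $f$ around $\mathbf{K}_\star$. Since $\mathbf{K}_\star$ is a vertex of $\Gamma^*$, it is standard (and follows from the explicit form of $f$ together with the $2\pi/3$-rotational symmetry and the values $\mathbf{K}_\star\cdot\mathbf{u}_i$) that $f(\mathbf{K}_\star)=0$, and that the Jacobian of $f$ at $\mathbf{K}_\star$, viewed as a map $\R^2\to\C\cong\R^2$, is a nonzero conformal (or anticonformal) linear map; concretely $f(\mathbf{K}_\star+\kappa) = \alpha\,(\kappa_1 + i\sigma\kappa_2) + \grandO{|\kappa|^2}$ for a suitable basis, with $\sigma=\pm1$ and $|\alpha|$ an explicit constant. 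Computing the gradient entries from $\partial_{k_j} f = i u_1^{(j)} e^{i\mathbf{k\cdot u}_1} + i u_2^{(j)} e^{i\mathbf{k\cdot u}_2}$ at $\mathbf{k}=\mathbf{K}_\star$ and using $d_0 = 1/\sqrt 3$ gives $|\nabla f(\mathbf{K}_\star)| $ such that $|f(\mathbf{K}_\star+\kappa)| = \frac{\sqrt3}{2}|\kappa| + \grandO{|\kappa|^2}$; writing the second-order remainder as $\frac{\sqrt3}{2}|\kappa|\,E(\kappa)$ with $|E(\kappa)|\le C|\kappa|$ on $B(0,r)$ captures exactly the error term in the statement. (One has to be mildly careful that $|f|$ is only Lipschitz, not smooth, at $\kappa=0$; but since the linear part is a nondegenerate $\R$-linear isomorphism, $|f(\mathbf{K}_\star+\kappa)| = |\,\text{linear part}\,| + \grandO{|\kappa|^2}$ holds with the linear part's modulus, which is the clean way to phrase it and avoids dividing by something that vanishes.)

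Finally I would combine the two ingredients. Multiplying by $|\theta_L|$,
\begin{align*}
\lambda_\pm\bigl(\theta_L B_{\mathrm{HC}}(\mathbf{K}_\star+\kappa)\bigr) = \pm\frac{\sqrt3}{2}|\theta_L|\,|\kappa|\,(1+E(\kappa)),
\end{align*}
and it remains to absorb the Feshbach--Schur error $\grandO{e^{-(1+\delta-\epsilon)\sqrt\mu d_0 L} + e^{-(1-\epsilon)\sqrt\mu d_1 L}}$ into $\petito{|\theta_L|}$. This is precisely where the hypothesis $|\theta_L|\ge c\,e^{-(1+\delta')\sqrt\mu d_0 L}$ with $\delta' < \delta$ enters: choosing $\epsilon>0$ small enough that $1+\delta-\epsilon > 1+\delta'$, the first error term is $\petito{|\theta_L|}$; for the second term one uses $d_1 > d_0$ together with $\delta'<\delta<1/2$ (so that, after shrinking $\epsilon$, $(1-\epsilon)\sqrt\mu d_1 L > (1+\delta')\sqrt\mu d_0 L$ for $L$ large, since $d_1/d_0 > 1 \ge \frac{1+\delta'}{1-\epsilon}$ can be arranged — for the honeycomb lattice $d_1/d_0 = \sqrt3 > 3/2$). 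Hence both error terms are $\petito{|\theta_L|}$ uniformly in $\kappa\in B(0,r)$, and collecting the $o$'s (and noting that the $\grandO{|\kappa|^2}$ piece of the expansion can likewise be folded, away from $\kappa$ too small, or simply kept inside $E$) yields the claimed formula. The main obstacle is the bookkeeping in this last step: one must track that the exponential errors beat $|\theta_L|$ using only $d_1>d_0$ and $\delta'<\delta$, which forces the quantitative constraint "$\delta'$ small enough" in the hypothesis and a careful choice of the free parameter $\epsilon\in\intoo{0}{\delta}$; the expansion of $|f|$ itself is routine once the vanishing and nondegeneracy at $\mathbf{K}_\star$ are in hand.
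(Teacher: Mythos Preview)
Your approach—citing Theorem~\ref{theo: feshbach-schur} as a black box and then Taylor-expanding the Wallace eigenvalues $\pm|\theta_L|\,|f(\mathbf{K}_\star+\kappa)|$—recovers the correct leading cone but loses information that the theorem actually asserts. Theorem~\ref{theo: feshbach-schur} only yields
\[
\mu_{\pm,L}=-\mu_L\pm\tfrac{\sqrt3}{2}|\theta_L|\,|\kappa|\bigl(1+E(\kappa)\bigr)+R_\pm(\kappa,L),\qquad |R_\pm|=o(|\theta_L|)\ \text{uniformly in }\kappa,
\]
with errors $R_+$ and $R_-$ that need not agree. Consequently you cannot conclude that the gap satisfies $\mu_{+,L}-\mu_{-,L}=\sqrt3\,|\theta_L|\,|\kappa|(1+E)(1+o(1))$: your additive $o(|\theta_L|)$ could in principle close the gap for all $|\kappa|$ below some $L$-dependent threshold. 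But that multiplicative control of the gap is exactly the ``Dirac point'' content of the statement, and it is what is used verbatim in the proof of Corollary~\ref{cor:dirac_points_rHF}. Rewriting your formula as $-\mu_L+o(|\theta_L|)\pm(\cdots)(1+0)$ does not help, because the $o(|\theta_L|)$ you would insert is $R_\pm$, which depends on the sign.

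The paper's proof does \emph{not} invoke Theorem~\ref{theo: feshbach-schur}; it re-enters the Feshbach--Schur decomposition and applies the honeycomb symmetries to the full $2\times2$ reduced matrix $A_L(\mathbf{k})$, not merely to its tight-binding approximation $\theta_L B_{\mathrm{HC}}$. Parity forces the two diagonal entries of $A_L(\mathbf{k})$ to be \emph{exactly} equal, so the shift $-\mu_L+o(|\theta_L|)$ is genuinely common to both bands. The $2\pi/3$ rotation about a site (encoded via an equivalence relation on $\lscr$) groups the full Fourier series defining the off-diagonal entry $A_L\bigl((\mathbf{K}_\star+\kappa)/L;\mathbf{a},\mathbf{b}\bigr)$ so that every equivalence class contributes a factor $1+j+j^2=0$ at $\kappa=0$; this gives the off-diagonal the form $\tfrac{\sqrt3}{2}i\theta_L(\kappa_1+i\kappa_2+E(\kappa))(1+o(1))$ \emph{including all higher corrections}, not just the leading term $\theta_L f$. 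The same symmetry bookkeeping is then applied to the Feshbach--Schur residual $C_L^*(B_L-\lambda)^{-1}C_L$ to track its $\kappa$-dependence near $\mathbf{K}_\star$. It is this finer use of symmetry on the full objects—absent from your argument—that produces the multiplicative $(1+o(1))$ on the cone term.
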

	For the sake of simplicity, we have only stated Theorem~\ref{theo: existence_dirac_cones} in the case of the honeycomb lattice. This case corresponds to graphene, a layer of carbon atoms located at the vertices of a such lattice~\cite{neta2009electronic}. However, the conclusions of Theorem~\ref{theo: existence_dirac_cones} remain valid for any lattice $\lscr^\mathbf{R}$ having the same symmetries as graphene, that is, $\mathcal{PT}$-symmetry (parity / time-reversal symmetry) and rotation by $2\pi/3$ symmetry~\cite{berkolaiko2018symmetry}. Lattices with different symmetries are also expected to present Dirac points, see the review~\cite{wang2015rare}.
	
	\subsection{Application: the periodic reduced Hartree-Fock model}\label{sec:the-rhf-model-for-periodic-systems-at-dissociation}
	
	Now, we illustrate the use of Theorems~\ref{theo: feshbach-schur} and~\ref{theo: existence_dirac_cones} in a nonlinear situation. We consider the two-dimensional periodic reduced Hartree-Fock model with three-dimensional Coulomb interactions. In addition, we assume that the external potential is corrected by a pseudo-potential. This nonlinear model is obtained as the thermodynamic limit of the Hartree-Fock model where the exchange term is neglected~\cite{catto2001thermodynamic, cances2008newapproach, hainzl2009thermodynamicII}. It is the simplest model to describe graphene while taking interactions into account.
	
	If we assume that the particles interact through the three-dimensional Coulomb interaction $\frac{1}{\abs{\mathbf{x}}}$ then the $\mathscr{L}_L$-periodic \emph{interaction kernel}, denoted by $W_L$, is given by
	\begin{align}
	\label{eq:periodic_kernel}
	W_L = L^{-1}M' + \sum_{\mathbf{u}\in\lscr} \left(\frac{1}{\abs{\cdot - L\mathbf{u}}} - \frac{1}{\abs{\Gamma_L}} \int_{\Gamma_L} \frac{\diff \mathbf{y}}{\abs{\cdot - L\mathbf{u - y}}}\right) \, ,
	\end{align}
	for some constant $M'\in\R$ chosen so that $W_L\geq0$. The properties of the interaction kernel $W_L$ are given in Section~\ref{sec:properties-of-the-periodic-interaction-kernel-gl}. Let $\rho$ be a positive and $\mathscr{L}_L$-invariant locally finite measure. Its \emph{self-interaction energy }is defined by
	\[
	D_L(\rho,\rho)\coloneqq \iint_{\Gamma_L\times \Gamma_L} \rho(\mathbf{x}) W_L(\mathbf{x-y})\rho(\mathbf{y}) \diff \mathbf{x} \diff \mathbf{y} \in \intff{0}{\infty} \pt
	\]
	The \emph{interaction energy} of two $\mathscr{L}_L$-periodic charge distributions $\rho$ and $\mu$ is defined by
	\[
	D_L(\rho,\mu) \coloneqq \iint_{\Gamma_L\times \Gamma_L} \rho(\mathbf{x}) W_L(\mathbf{x-y})\mu(\mathbf{y}) \diff \mathbf{x} \diff \mathbf{y} \, ,
	\]
	whenever $D_L(\rho,\rho) < \infty$ and $D_L(\mu,\mu) < \infty$.
	
	We consider the lattice $\mathscr{L}_L^\mathbf{R}$ defined in \eqref{eq: many_sites_lattice} where a pointwise nucleus of charge $+1$ is placed at each vertex. Besides the three-dimensional Coulomb interaction $\frac{1}{\abs{\mathbf{x}}}$, we assume that the external potential induced by the lattice has an additional term, localized around each nucleus and which corresponds to a pseudo-potential, that is, an effective potential modeling the behavior of the core electrons which do not explicitly appear in the model~\cite{chelikowsky2011electrons}. Then the potential generated by $\mathscr{L}_L^\mathbf{R}$ has the following form 
	\begin{align}
	\label{eq:definition_V_periodic}
	- \sum_{\mathbf{r} \in \mathbf{R}} W_L(\cdot - L\mathbf{r}) + \sum_{\mathbf{r\in R}}\sum_{\mathbf{u}\in\lscr} V^\mathrm{pp}(\cdot - L(\mathbf{u+r})) \, ,
	\end{align}
	where $V^\mathrm{pp}\in L^p(\R^2)$ for some $p>1$. For simplicity, we also choose $V^\mathrm{pp}$ radial and compactly supported. These two additional assumptions are certainly not optimal and could be relaxed without great effort, see Remark~\ref{rem:assumptions_pseudo-potential}. We denote by $-W_L^\mathbf{R}$ (resp. $V_L^\mathbf{R}$) the left side (resp. the right side) of \eqref{eq:definition_V_periodic}. An \emph{admissible state }is an element of the space
	\begin{align}
	\label{eq: admissible_states}
	\mathcal{S}_{\mathrm{per},L} \coloneqq \enstq{\gamma = \gamma^* \in \mathcal{B}(L^2(\R^2))}{0 \leq \gamma \leq 1,~\forall \mathbf{u} \in \mathscr{L}_L,~ \tau_\mathbf{u} \gamma= \gamma \tau_\mathbf{u} \et \trm_{\lscr_L} ((1 - \Delta) \gamma) < \infty} \, ,
	\end{align}
	where $\tau_\mathbf{u}$ denotes the translation operator which shifts particles by $\mathbf{u}$ and where the symbol $\trm_{\lscr_L}$ denotes the trace per cell~\cite{cances2008newapproach}. 
	In this setting, the periodic\emph{ reduced Hartree-Fock model }(rHF) consists in solving the following minimization problem
	\begin{align}
	\label{eq: rHF_model}
	E_L \coloneqq \inf \enstqbis{\mathcal{E}_L(\gamma)}{\gamma \in \mathcal{S}_{\mathrm{per},L} \et \trm_{\mathscr{L}_L} (\gamma) = N/q} ,
	\end{align}
	where the periodic rHF energy functional is given by
	\begin{align}
	\label{eq: rHF_energy}
	\mathcal{E}_L(\gamma) \coloneqq \trm_{\lscr_L} (-\Delta \gamma) + \int_{\Gamma_L} \left(-W_L^\mathbf{R}+V_L^\mathbf{R}\right) \rho_\gamma + \frac{q}{2}D_L(\rho_\gamma,\rho_\gamma)\pt
	\end{align}
	The first term is the kinetic energy per unit cell of the infinitely many (valence) electrons. The second term is the Coulomb interaction with the lattice of nuclei, the third one the correction term, whereas the last term the (mean-field) electronic repulsion. Here the factor $q$ is the number of spin states ($q=2$ for electrons). We restrict ourselves to paramagnetic states, hence $q$ only shows up in the energy. The normalization $\trm_{\mathscr{L}_L} (\gamma) = N/q$ ensures the neutrality of the system. We do not include the energy of the nuclei in the cell $\Gamma_L$ because it does not play a role in our analysis. An adaptation of~\cite[Theorem 1]{cances2008newapproach} and~\cite[Theorem 2.1]{catto2001thermodynamic} shows that \eqref{eq: rHF_model} is well-posed in the sense that the minimization problem \eqref{eq: rHF_model} admits a unique minimizer $\gamma_L$. We denote by $\rho_L(\mathbf{x}) \coloneqq \gamma_L(\mathbf{x,x})$ its one-body density. Then $\gamma_L$ is the unique solution of the mean-field equation
	\begin{align*}
	\boxed{\gamma_L= \mathds{1}_{\intof{-\infty}{\epsilon_L}} \left(H_L^\mathrm{MF}\right) \ou H_L^\mathrm{MF} \coloneqq -\Delta -W_L^\mathbf{R}+V_L^\mathbf{R} +  q\rho_L *_L W_L\pt}
	\end{align*}
	Here, $H_L^\mathrm{MF}$ is the mean-field hamiltonian and $\epsilon_L\in \R$ is a Lagrange multiplier called the \emph{Fermi level} chosen to ensure that $\trm_{\lscr_L}(\gamma)=N/q$. The notation $*_L$ stands for convolution on $\Gamma_L$. Notice that the mean-field potential
	\begin{align*}
	V_L^\mathrm{MF} \coloneqq  - W_L^\mathbf{R}+V_L^\mathbf{R} + q\rho_L *_L W_L\, ,
	\end{align*}
	depends on $L$ in a highly nonlinear way. If the action of $G_L$ on $\lscr_L$ is free then we can regroup $-W_L^\mathbf{R}$ and $q\rho_L *_L W_L$ and write $V_L^\mathrm{MF}$ as in \eqref{eq: mean-field_equation_intro} plus the correction $V_L^\mathbf{R}$ thanks to the fact that there is no charge and no dipole.
	
	Now, we describe the reference potential occurring in the limit $L\to\infty$. We consider the minimization problem
	\begin{gather}
	\label{eq: min_problem_intro}
	I \coloneqq \inf \enstqbis{\mathcal{E}(u)}{u \in H^1(\R^2) \et \int_{\R^2} \abs{u}^2 = 1 } \, ,
	\end{gather}
	with the energy functional
	\begin{align*}
	\mathcal{E}(u) \coloneqq \int_{\R^2} \abs{\nabla u}^2+ \int_{\R^2}\left(-\frac{1}{\abs{\mathbf{x}}} + V^\mathrm{pp}(\mathbf{x})\right) \abs{u(\mathbf{x})}^2\diff \mathbf{x}+  \frac{1}{2} \iint_{\R^2 \times \R^2} \frac{\abs{u(\mathbf{x})}^2 \abs{u(\mathbf{y})}^2}{\abs{\mathbf{x-y}}} \diff \mathbf{x} \diff \mathbf{y} \pt
	\end{align*}
	Since $V^\mathrm{pp}\in L^p(\R^2)$ and is compactly supported, one can show that~\eqref{eq: min_problem_intro} admits a minimizer $v\in H^1(\R^2)$ by 
	following the arguments in~\cite[Section VII]{lieb1981thomasfermi}. In addition, we assume that $V^\mathrm{pp}$ satisfies a \emph{ionization condition }in the sense that the mean-field operator
	\begin{align*}
	\boxed{H^\mathrm{MF} \coloneqq -\Delta + V^\mathrm{MF} \ou V^\mathrm{MF} \coloneqq -\frac{1}{|\cdot|} + V^\mathrm{pp} + \abs{v}^2 \ast \frac{1}{|\cdot|} \, ,}
	\end{align*}
	admits at least one negative eigenvalue $-\mu<0$. In that case, it is non-degenerate, the minimizer $v$ is unique and it is the eigenfunction associated with $-\mu$. Also, up to a phase factor, $v$ is positive everywhere. 
	
	The following theorem states that the mean-field potential $V_L^\mathrm{MF}$ of the rHF model~\eqref{eq: min_problem_intro} belongs to the class described in Section~\ref{sec:a-class-of-periodic-operators-on-lscrl}.
	\begin{theo}
		\label{theo: rHF}
		If the lattice $\lscr^\mathbf{R}$ satisfies Assumption~\ref{hypo_1} then Assumptions~\ref{hypo_0}--\ref{hypo_4} are valid with $V^\mathrm{MF}$ as reference potential. In particular, the $N$ lowest Bloch eigenvalues of $H_L^\mathrm{MF}$ satisfy \eqref{eq: main_theorem}.
	\end{theo}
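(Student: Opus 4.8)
The plan is to verify, for the mean-field potential $V_L^{\mathrm{MF}} = -W_L^{\mathbf{R}} + V_L^{\mathbf{R}} + q\rho_L\ast_L W_L$, each of Assumptions~\ref{hypo_0}--\ref{hypo_4} with $V^{\mathrm{MF}}$ as reference potential; the only genuinely hard point is the control of the rHF density $\rho_L$ in the dissociation limit, which is a concentration-compactness statement. Assumption~\ref{hypo_2} is soft: the rHF functional $\mathcal{E}_L$ is invariant under the action of $G_L$, since the kinetic and interaction terms are invariant under every Euclidean isometry and the external potential $-W_L^{\mathbf{R}}+V_L^{\mathbf{R}}$ is $G_L$-invariant by construction; as the minimizer $\gamma_L$ is unique, it is $G_L$-invariant, and so are $\rho_L$ and $V_L^{\mathrm{MF}}$. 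For Assumption~\ref{hypo_3}, one observes that $v$ is radial (the one-atom functional $\mathcal{E}$ and the operator $H^{\mathrm{MF}}$ are rotation invariant and the ground state is non-degenerate) and that $v$ and $\nabla v$ decay like $e^{-(1-\epsilon)\sqrt{\mu}\abs{\mathbf{x}}}$ by standard Agmon-type estimates for $(-\Delta+V^{\mathrm{MF}})v=-\mu v$ with $V^{\mathrm{MF}}$ vanishing at infinity. Hence $\abs{v}^2\ast\frac1{\abs{\cdot}}$ is bounded, and since $\int_{\R^2}\abs{v}^2=1$ while the dipole moment of the radial density $\abs{v}^2$ vanishes, $\abs{v}^2\ast\frac1{\abs{\cdot}}(\mathbf{x}) = \frac1{\abs{\mathbf{x}}} + \grandO{\abs{\mathbf{x}}^{-3}}$ as $\abs{\mathbf{x}}\to\infty$. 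Together with $V^{\mathrm{pp}}\in L^p(\R^2)$ compactly supported and $\frac1{\abs{\cdot}}\in L^p(\R^2)+L^\infty(\R^2)$ for $p\in\intoo{1}{2}$, this yields~\ref{hypo_31}; the cancellation of the $\grandO{\abs{\mathbf{x}}^{-1}}$ terms in $V^{\mathrm{MF}}=-\frac1{\abs{\cdot}}+V^{\mathrm{pp}}+\abs{v}^2\ast\frac1{\abs{\cdot}}$ gives~\ref{hypo_32} with $\epsilon=2$; and~\ref{hypo_33} is exactly the ionization condition.

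The heart of the proof is to show that, per unit cell, the rHF density converges to that of $N$ decoupled neutral atoms:
\begin{align}
\label{eq:plan-cc}
\Bigl\| q\rho_L - \sum_{\mathbf{r}\in\lscr^{\mathbf{R}}}\abs{v(\cdot-L\mathbf{r})}^2\Bigr\|_{\lper^1(\Gamma_L)} \limit{L\to\infty} 0 \, ,
\end{align}
together with a uniform bound $\sup_{L\geq1}\norme{\rho_L}_{\lper^\infty(\Gamma_L)}<\infty$. The upper bound on $E_L$ — namely that $E_L$ equals the energy of $N$ decoupled copies of the atomic minimizer up to $\petito{1}$ — comes from the trial state $\tfrac1q\sum_{\mathbf{r}}\ket{v(\cdot-L\mathbf{r})}\bra{v(\cdot-L\mathbf{r})}$, suitably orthonormalized, the inter-atomic contributions being $\grandO{L^{-3}}$ precisely because each atom is neutral (the leading part of the attraction $-W_L^{\mathbf{R}}$ is cancelled by that of the electronic repulsion). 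For the lower bound one uses an IMS-type partition of unity subordinate to balls of radius comparable to $L$ around the vertices of $\lscr_L^{\mathbf{R}}$ and to the complementary region, applies it to $\gamma_L$, and shows that the kinetic localization error is $\petito{1}$, that the energy carried by the part of $\rho_L$ away from the nuclei is non-negative up to $\petito{1}$, and that no electronic mass vanishes or splits off. Vanishing and dichotomy are ruled out by the strict negativity of the atomic binding energy, itself a consequence of the ionization assumption ($H^{\mathrm{MF}}$ has a bound state, so keeping one electron localized near each nucleus is strictly favourable). Since the one-atom minimizer $v$ is unique up to a phase, the localized pieces of $\gamma_L$ converge, after translation, strongly to $\ket{v}\bra{v}$ in a topology controlling $(1-\Delta)^{1/2}\gamma(1-\Delta)^{1/2}$ and the Coulomb energy, which gives~\eqref{eq:plan-cc}. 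The uniform $\lper^\infty$ bound on $\rho_L$ then follows by bootstrapping in the mean-field equation $\gamma_L=\mathds{1}_{\intof{-\infty}{\epsilon_L}}(-\Delta+V_L^{\mathrm{MF}})$ — using that $V_L^{\mathrm{MF}}$ lies in the Kato class uniformly in $L$ (Proposition~\ref{lemma: kato_type_inequality}) — starting from the Lieb--Thirring estimate $\norme{\rho_L}_{\lper^2(\Gamma_L)}^2\lesssim\trm_{\lscr_L}(-\Delta\gamma_L)\lesssim1$. The main obstacle will be this step, and in particular the density-matrix concentration-compactness together with the exclusion of dichotomy in the presence of the long-range Coulomb interaction.

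Granting~\eqref{eq:plan-cc}, Assumption~\ref{hypo_4} is bookkeeping. On the ball $B\bigl(L\mathbf{r},(\tfrac12+\delta)Ld_0\bigr)$ containing $\supp\chi_{L,\mathbf{r}}$ one has, for $L$ large: $V_L^{\mathbf{R}}=V^{\mathrm{pp}}(\cdot-L\mathbf{r})$ exactly (the other translates of $V^{\mathrm{pp}}$ are supported at distance $\geq(\tfrac12-\delta)Ld_0$ away, and $\chi_{L,\mathbf{r}}\equiv1$ on $\supp V^{\mathrm{pp}}(\cdot-L\mathbf{r})$); $-W_L^{\mathbf{R}}=-\tfrac1{\abs{\cdot-L\mathbf{r}}}+\grandO{L^{-1}}$ in $L^\infty$ of that ball, since $W_L(\mathbf{y})=\tfrac1{\abs{\mathbf{y}}}+\grandO{L^{-1}}$ on bounded sets (by~\eqref{eq:periodic_kernel}) and the remaining wells of $-W_L^{\mathbf{R}}$ are $\grandO{L^{-1}}$ there; and $q\rho_L\ast_L W_L=\abs{v(\cdot-L\mathbf{r})}^2\ast\frac1{\abs{\cdot}}+\petito{1}$ in $L^\infty$ of that ball. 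The last point splits as $q\rho_L\ast_L W_L = \bigl(q\rho_L-\sum_{\mathbf{r}'}\abs{v(\cdot-L\mathbf{r}')}^2\bigr)\ast_L W_L + \sum_{\mathbf{r}'}\abs{v(\cdot-L\mathbf{r}')}^2\ast_L W_L$; the first term tends to $0$ in $L^\infty$ because $q\rho_L-\sum_{\mathbf{r}'}\abs{v(\cdot-L\mathbf{r}')}^2$ has vanishing mean and tends to $0$ in every $\lper^q(\Gamma_L)$ (interpolating~\eqref{eq:plan-cc} with the uniform $\lper^\infty$ bound), while in the second term the well at $L\mathbf{r}$ produces $\abs{v(\cdot-L\mathbf{r})}^2\ast\frac1{\abs{\cdot}}+\grandO{L^{-1}}+\grandO{e^{-cL}}$ and the other wells contribute $\grandO{L^{-1}}$. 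Adding up and using that $(1-\chi_{L,\mathbf{r}})V^{\mathrm{MF}}(\cdot-L\mathbf{r})\to0$ in $L^p(\R^2)+L^\infty(\R^2)$ (by the $\grandO{\abs{\mathbf{x}}^{-3}}$ decay of $V^{\mathrm{MF}}$) yields $\norme{\chi_{L,\mathbf{r}}V_L^{\mathrm{MF}}-V^{\mathrm{MF}}(\cdot-L\mathbf{r})}_{L^p(\R^2)+L^\infty(\R^2)}\to0$, which is Assumption~\ref{hypo_4}.

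Finally, Assumption~\ref{hypo_0} follows from the same ingredients. The term $q\rho_L\ast_L W_L$ is bounded in $\lper^\infty(\Gamma_L)$ uniformly in $L$ — splitting the convolution into the region where $W_L$ is singular (controlled by Hölder, using $\rho_L\in\lper^q(\Gamma_L)$ for some $q>2$) and the region where $W_L$ is bounded (controlled by $\int_{\Gamma_L}\rho_L=N/q$) — so the only local singularities of $V_L^{\mathrm{MF}}$ are the Coulomb singularities of $-W_L^{\mathbf{R}}$ and the $V^{\mathrm{pp}}$-singularities of $V_L^{\mathbf{R}}$, all in $\lper^p(\Gamma_L)$ for $p\in\intoo{1}{2}$; this is~\ref{hypo_01}. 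For~\ref{hypo_04}, fix $\rho>0$: on $\{\dist(\cdot,\lscr_L^{\mathbf{R}})\geq L\rho\}$ the pseudo-potential term vanishes for $L$ large, while $-W_L^{\mathbf{R}}+q\rho_L\ast_L W_L$ is, up to an $\petito{1}$ error coming from~\eqref{eq:plan-cc}, the periodic potential generated by the measure $\sum_{\mathbf{r}}\bigl(-\delta_{L\mathbf{r}}+\abs{v(\cdot-L\mathbf{r})}^2\bigr)$, which has zero charge and zero dipole per atom and hence decays like $\grandO{L^{-3}}$ at distance $L\rho$ from the nuclei. Therefore all Assumptions~\ref{hypo_0}--\ref{hypo_4} hold with $V^{\mathrm{MF}}$ as reference potential, and Theorem~\ref{theo: feshbach-schur} applies, giving~\eqref{eq: main_theorem} for the $N$ lowest Bloch eigenvalues of $H_L^{\mathrm{MF}}$.
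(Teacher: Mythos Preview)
Your outline follows the same overall scheme as the paper --- upper bound by a superposition trial state, lower bound by IMS localization plus concentration-compactness, then bookkeeping for each assumption --- but two technical choices diverge from the paper and one of them is a genuine soft spot.

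First, the paper does not run concentration-compactness at the density-matrix level. Because the rHF energy depends on $\gamma$ only through $\rho_\gamma$ and the kinetic term, it immediately applies the periodic Hoffmann--Ostenhof inequality $\int_{\Gamma_L}\lvert\nabla\sqrt{\rho_L}\rvert^2\leq\trm_{\lscr_L}(-\Delta\gamma_L)$, reducing the lower bound to the \emph{scalar} functional $\mathcal{E}(\sqrt{\rho_{L,\mathbf r}(\cdot+L\mathbf r)})$. This sidesteps entirely the ``density-matrix concentration-compactness'' you flag as the main obstacle: after Hoffmann--Ostenhof one is comparing with the one-body atomic problem $I(\lambda)$ and the binding inequality~\eqref{eq: binding_inequality} suffices to rule out loss of mass. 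To upgrade energy closeness to $H^1$-closeness the paper then invokes the quantitative stability inequality~\eqref{eq: stability}, not just uniqueness of $v$; your ``since $v$ is unique up to a phase, the localized pieces converge strongly'' needs either this or the precompactness of atomic minimizing sequences made explicit.

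Second --- and this is the weak step --- the paper never proves, and never needs, the uniform bound $\sup_L\norme{\rho_L}_{\lper^\infty(\Gamma_L)}<\infty$. Your bootstrapping argument via the mean-field equation is plausible but not substantiated here (Proposition~\ref{lemma: kato_type_inequality} gives Kato constants that are in general only polynomial in $L$, and passing from $\rho_L\in L^2$ to $\rho_L\in L^\infty$ uniformly through a spectral projector on a growing torus is nontrivial). The paper avoids this by controlling the potential directly: from $\sqrt{\rho_{L,\mathbf r}}\to v(\cdot-L\mathbf r)$ in $H^1$ it obtains $\bigl\lVert\rho_L\ast_L W_L-\sum_{\mathbf r}\lvert v(\cdot-L\mathbf r)\rvert^2\ast\lvert\cdot\rvert^{-1}\bigr\rVert_{\lper^\infty(\Gamma_L)}\to0$ via the elementary estimate $\norme{(uw)\ast_L W_L}_{\lper^\infty}\lesssim\norme{u}_{H^1}\norme{w}_{H^1}$ (Corollary~\ref{cor: inequality_potential}) together with the observation that the leftover piece $\rho_{L,\mathbf 0}$ has mass $o(1)$ and bounded $H^1$ square root, hence is $o(1)$ in every $\lper^q$. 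So your route can be salvaged, but the economical path is to drop the $L^\infty$ claim, use Hoffmann--Ostenhof to work with $\sqrt{\rho_L}$ throughout, and deduce the $L^\infty$ convergence of the \emph{potential} rather than of the density.
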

	To our knowledge, nonlinear periodic models as \eqref{eq: rHF_model} in the dissociation or the semiclassical regime have not been addressed very much in the literature. Albanese shows in~\cite{albanese1988localised} that the time dependent Hartree equation with a periodic potential consisting of a periodic array of deep wells admits a solution where all single orbital is exponentially decaying if the distance separating the wells is large enough. In a non periodic setting, the Hartree-Fock model (with the exchange term) in the dissociation regime was studied by Daumer in~\cite{daumer1994hartree}. By fixed-point methods and under assumptions ensuring that spectral tunneling can be neglected, the author constructed solutions to the Hartree-Fock equations which are also minimizers for the Hartree-Fock energy. Recently, in a series of papers~\cite{rougerie2018interacting, olgiati2020hartree, olgiati2021bosons}, Olgiati, Rougerie and Spehner consider bosonic systems trapped in a symmetric double-well potential in the limit where the distance between the wells increases to infinity and the potential barrier is high.
	\begin{rem}[Assumptions on $V^\mathrm{pp}$]
		\label{rem:assumptions_pseudo-potential}
		\begin{enumerate}[noitemsep, label=(\roman{*})]
			\item There exists pseudo-potentials $V^\mathrm{pp}$ which satisfy the ionization condition. For instance, one can take $V^\mathrm{pp} = - \eta V$ with $\eta \geq 0$ large enough and $V\in L^\infty(\R^2)$ non-negative, non-zero and radial (see~\cite[Chapter 3  -- Appendix A]{cazalisthesis}).
			\item To ensure that $V_L^\mathrm{MF}$ has the same symmetries as $\lscr^\mathbf{R}$, it is sufficient for $V^\mathrm{pp}$ to be invariant with respect to the point group of $G$. But requiring rotation invariance makes the proofs simpler. Also, the technical assumption that $V^\mathrm{pp}$ must have compact support could be replaced by an appropriate decay at infinity assumption without modifying most of the arguments in the proof of Theorem~\ref{theo: rHF}. We think that assuming $V^\mathrm{pp}(\mathbf{x}) = \grandO{\abs{\mathbf{x}}^{-2-\epsilon}}$ for some $\epsilon>0$ should do.
			\item The assumption that~\eqref{eq: min_problem_intro} admits a minimizer and that the mean-field operator has a negative eigenvalue is crucial for our analysis. Otherwise, Assumption~\ref{hypo_3} would not be satisfied.
			\item In~\cite[Appendix A]{cazalisthesis}, a numerical analysis suggests that there is a negative eigenvalue for $V^\mathrm{pp}=0$. This fact is known in $3D$~\cite{lieb1977hartreefock, lieb1981thomasfermi, benguria1981thomasfermi} but we have not found it stated anywhere in $2D$. 
		\end{enumerate}
	\end{rem}	
	The following corollary is a direct consequence of Theorem~\ref{theo: existence_dirac_cones} and Theorem~\ref{theo: rHF}.
	\begin{cor}[Dirac points in rHF for the honeycomb lattice]
		\label{cor:dirac_points_rHF}
		Assume that $\lscr^\mathbf{R}=\lscr^H$ is the honeycomb lattice and that $q=2$. We denote by $\theta_L$ the interaction coefficient and we assume there exists $c >0$ and $\delta' \in \intfo{0}{\delta}$ such that $\abs{\theta_L} \geq c e^{-(1+\delta')\sqrt{\mu}d_0L}$. Then, the conclusions of Theorem~\ref{theo: existence_dirac_cones} hold: the dispersion relation of $H^\mathrm{MF}_L$ admits Dirac points. In addition, the Fermi level $\epsilon_L$ is exactly equal to the energy level of the cones.
	\end{cor}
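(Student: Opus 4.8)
The plan is to combine the two main theorems essentially verbatim, the only genuine work being to pin down the Fermi level. First, by Theorem~\ref{theo: rHF}, the mean-field potential $V_L^\mathrm{MF}$ satisfies Assumptions~\ref{hypo_0}--\ref{hypo_4} with reference potential $V^\mathrm{MF}$, whose lowest eigenvalue is the non-degenerate $-\mu<0$ guaranteed by the ionization condition. Since $\lscr^H$ is the honeycomb lattice we have $N=2$, $m=1$, and the single interaction coefficient $\theta_L$; the hypothesis $\abs{\theta_L}\geq c\,e^{-(1+\delta')\sqrt{\mu}d_0L}$ is exactly the non-degeneracy assumption of Theorem~\ref{theo: existence_dirac_cones}. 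Applying that theorem gives the two lowest bands $\mu_{\pm,L}$ of $H_L^\mathrm{MF}$ in the neighborhood of each vertex $\mathbf{K}_\star$ of $\Gamma^*$ as conical: $\mu_{\pm,L}((\mathbf{K}_\star+\kappa)/L)=-\mu_L+\petito{\abs{\theta_L}}\pm\frac{\sqrt{3}}{2}\abs{\theta_L}\abs{\kappa}(1+E(\kappa))(1+\petito{1})$, so that $H_L^\mathrm{MF}$ has Dirac points at the six vertices of $\Gamma^*$.

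It remains to identify the Fermi level $\epsilon_L$ with the energy $-\mu_L+\petito{\abs{\theta_L}}$ at which the cones touch. The argument is a counting argument. The normalization in~\eqref{eq: rHF_model} is $\trm_{\lscr_L}(\gamma_L)=N/q=2/2=1$, and $\gamma_L=\mathds{1}_{\intof{-\infty}{\epsilon_L}}(H_L^\mathrm{MF})$, so $\epsilon_L$ is characterized by the requirement that exactly one band (counting the two-dimensional $\mathbf{k}$-average, i.e.\ $\fint_{\Gamma_L^*}$ of the number of eigenvalues below $\epsilon_L$) be filled. By Theorem~\ref{theo: feshbach-schur} the first two bands $\mu_{\pm,L}$ lie in a window of width $\grandO{\abs{\theta_L}}+\grandO{e^{-(1-\epsilon)\sqrt{\mu}d_1L}}$ around $-\mu_L$, while the third band $\mu_{3,L}$ is separated from $-\mu_L$ by a gap of order $g=\mu_1-\mu_2>0$ (the spectral gap of the reference operator $H^\mathrm{MF}$) minus exponentially small corrections; in particular $\mu_{3,L}$ stays bounded away from the cluster $\{\mu_{\pm,L}\}$ for $L$ large. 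Therefore the filling fraction $1$ forces $\epsilon_L$ to lie strictly between the top of the $\mu_{-,L}$ band and the bottom of the $\mu_{3,L}$ band, i.e.\ inside the two-band cluster. Within that cluster, $\mathcal{PT}$-symmetry plus the rotation-by-$2\pi/3$ symmetry of $\lscr^H$ (inherited by $V_L^\mathrm{MF}$ through Assumption~\ref{hypo_2}) makes the symbol $\theta_L B_\mathrm{HC}(L\mathbf{k})$ traceless, so the two bands $\mu_{\pm,L}$ are symmetric about $-\mu_L$ up to the $\petito{\abs{\theta_L}}$ error: $\mu_{+,L}(\mathbf{k})+\mu_{-,L}(\mathbf{k})=2(-\mu_L)+\petito{\abs{\theta_L}}$ uniformly in $\mathbf{k}$. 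A direct computation of $\fint_{\Gamma_L^*}\#\{n:\mu_{n,L}(\mathbf{k})\leq t\}\diff\mathbf{k}$ for $t$ ranging over this cluster — using that the Wallace dispersion $\pm\abs{1+e^{i\mathbf{k}\cdot\mathbf{u}_1}+e^{i\mathbf{k}\cdot\mathbf{u}_2}}$ has the property that the lower band has $\mathbf{k}$-average equal to exactly half the total states — shows that the value of $t$ producing filling $1$ is precisely the degeneracy energy of the cone, $-\mu_L+\petito{\abs{\theta_L}}$, which is the claim.

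The main obstacle is the last point: showing that the symmetric filling condition lands $\epsilon_L$ \emph{exactly} on the cone energy rather than merely $\grandO{\abs{\theta_L}}$-close to it. This requires that the $\petito{\abs{\theta_L}}$ error in the band asymmetry be not just small but genuinely zero at the relevant scale, which in turn rests on the exact $\mathcal{PT}$-symmetry of $V_L^\mathrm{MF}$ (not merely asymptotic symmetry): the off-diagonal structure of $B_\mathrm{HC}$ is protected, so the effective $2\times 2$ block $-\mu_L\,\mathrm{Id}+\theta_L B_\mathrm{HC}(L\mathbf{k})$ is \emph{exactly} traceless, hence $\epsilon_L=-\mu_L$ up to the $\grandO{e^{-(1+\delta-\epsilon)\sqrt{\mu}d_0L}+e^{-(1-\epsilon)\sqrt{\mu}d_1L}}$ remainder from~\eqref{eq: main_theorem}, which is $\petito{\abs{\theta_L}}$ under the standing hypothesis on $\theta_L$. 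One must also check that the chemical-potential equation $\trm_{\lscr_L}(\gamma_L)=1$ has a solution in the correct spectral gap, i.e.\ that $\epsilon_L$ does not fall into a band edge; this follows because the two-band cluster has positive measure of states and the Wallace model is a Dirac semi-metal (zero density of states at the cone but positive density of filled states below it), so the filling map $t\mapsto\fint\#\{\cdot\leq t\}$ is continuous and strictly increasing through the value $1$ at $t=-\mu_L$. That $H^\mathrm{MF}_L$ then describes a Dirac semi-metal — Fermi level exactly at the cone touching point — is the assertion of the corollary.
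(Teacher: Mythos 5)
The first step — applying Theorem~\ref{theo: rHF} to verify Assumptions~\ref{hypo_0}--\ref{hypo_4} and then Theorem~\ref{theo: existence_dirac_cones} to obtain the cones — is correct and is what the paper does. The gap is in the Fermi-level argument. You establish two things separately: $\epsilon_L = -\mu_L + \petito{\abs{\theta_L}}$ (from the filling condition plus the tracelessness of the symbol $-\mu_L I + \theta_L B_\mathrm{HC}(L\mathbf{k})$) and that the cone energy is $-\mu_L + \petito{\abs{\theta_L}}$ (from Theorem~\ref{theo: existence_dirac_cones}). These are two a priori unrelated $\petito{\abs{\theta_L}}$'s, so what you obtain is $\epsilon_L - (\text{cone energy}) = \petito{\abs{\theta_L}}$, which is weaker than the asserted exact equality. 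The appeal to exact tracelessness cannot close this: the Feshbach-Schur correction $-C_L(\mathbf{k})^*(B_L(\mathbf{k})-\lambda)^{-1}C_L(\mathbf{k})$ is negative semi-definite and genuinely nonzero, hence has strictly negative trace, so the actual band functions $\mu_{\pm,L}(\mathbf{k})$ of $H_L(\mathbf{k})$ are not symmetric about $-\mu_L$ even though the $2\times 2$ matrix $A_L(\mathbf{k})$ is traceless.

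What actually forces the exact coincidence is a no-overlap statement, which is the paper's route: one shows that $\sup_{\mathbf{k}}\mu_{-,L}(\mathbf{k}) = \inf_{\mathbf{k}}\mu_{+,L}(\mathbf{k})$, both achieved only at the vertices $\mathbf{K}_\star$ where they equal the cone energy; then the filling constraint $\trm_{\lscr_L}(\gamma_L)=1$ squeezes $\epsilon_L$ exactly to that value. The delicate part of proving no overlap is near the vertices, where the Wallace gap $\sim \abs{\theta_L}\abs{\kappa}$ competes with the uniform $\petito{\abs{\theta_L}}$ error of Theorem~\ref{theo: feshbach-schur} and is eventually dominated by it as $\abs{\kappa}\to 0$. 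The paper resolves this by invoking the sharper \emph{local} expansion of Theorem~\ref{theo: existence_dirac_cones}, which bounds the gap below by (a multiple of) $\abs{\theta_L}\abs{\kappa}$ uniformly for $\kappa$ small and $L$ large, and uses the uniform estimate of Theorem~\ref{theo: feshbach-schur} only away from the vertices, where the Wallace factor $\abs{1+e^{i\kappa\cdot\mathbf{u}_1}+e^{i\kappa\cdot\mathbf{u}_2}}$ is bounded below. Your argument contains neither gap lower bound; the symmetry of the symbol says nothing about the band gap, so it cannot rule out the first two bands overlapping on a thin region near the $\mathbf{K}_\star$'s, in which case $\epsilon_L$ would sit in the overlap and not at the cone energy.
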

		The coincidence of the Fermi level with the cones energy physically means that the small excitations of the Fermi sea behave as Dirac fermions: the dissociation regime describes a Dirac semi-metal.
		In Appendix~\ref{sec:the-weak-constrast-regime}, we present the results
		from~\cite[Chapter 4]{cazalisthesis} where we study the
		regime $L\to 0$ which corresponds to a perturbative regime of the free Laplacian $-\Delta$. There, we show that the operator $H_L^\mathrm{MF}$ admits Dirac points at
		the vertices of the Brillouin zone but that the Fermi level $\epsilon_L$
		\emph{does not} coincide with the energy level of the cones and lies within a Bloch band: this regime describes a metal. Consequently,
		there must be at least one phase transition as $L$ goes from 0 to
		infinity. To numerically conjecture in which phase the physical value
		$L_\mathrm{phy} \simeq 5.36$ should lie, we perform numerical
		simulations. The computations are done without the
		pseudo-potential term, that is $V^\mathrm{pp}=0$. 
		
		We use the DFTK software~\cite{herbst2021dftk}, a recently-developed
		package for the simulation of solids within density functional theory
		and related models. The numerical method proceeds by discretizing the
		Brillouin zone using equispaced points, and expanding the Bloch states
		in a truncated Fourier basis. The nuclear and Hartree potentials are
		computed in Fourier space. Compared to the usual setting of 3D density
		functional theory, the space dimension is forced to 2, and the formulas
		for the electrostatic (both nuclear and Hartree) potentials are modified
		to represent the kernel~\eqref{eq:periodic_kernel}.
		The resulting algebraic equations are solved using an nonlinear solver,
		employing iterative methods and fast Fourier transforms for efficiency.
		
		In the numerical results below, we use a truncation to plane waves
		having a maximal kinetic energy of 2000, and a $30\times30\times30$ grid of the
		Brillouin zone.	The convergence of both Brillouin zone sampling and the nonlinear solver
		was helped by adding an artificial Gaussian smearing of 0.001. The
		results, kindly provided by Antoine Levitt, are presented in
		Figure~\ref{fig: phase_transition}. The code is available as an example in the public release
		of DFTK.
				
		For any value of $L$, the mean-field operator displays conical singularities at the vertices $\mathbf{K}_\star \in\{\mathbf{K,K'}\}$. When $L=0$, the (renormalized) dispersion relation is given by the free Laplacian $-\Delta$. In particular, the first Bloch eigenvalue along the quasi-momenta section $\intof{\mathrm{K}}{\mathrm{M}}$ (resp. with quasi-momenta $\mathbf{K}_\star$) is doubly degenerate (resp. triply degenerate). For $L\leq 1.4$, the first two Bloch bands overlap and the Fermi level is lower than the cones energy; we are in the metal phase. For $L\geq 2$, the bands only overlap at the Dirac points and the Fermi level must coincide with the cones energy; we have attained the Dirac semi-metal phase. These results suggest that there is at least one phase transition for some $L_\mathrm{crit} \in\intoo{1.4}{2}$ and that $L_\mathrm{phy}$ is in the Dirac semi-metal phase.
		\begin{figure}[h]
			\centering
			\begin{subfigure}{0.32\textwidth}
				\centering
				\includegraphics[width=\textwidth]{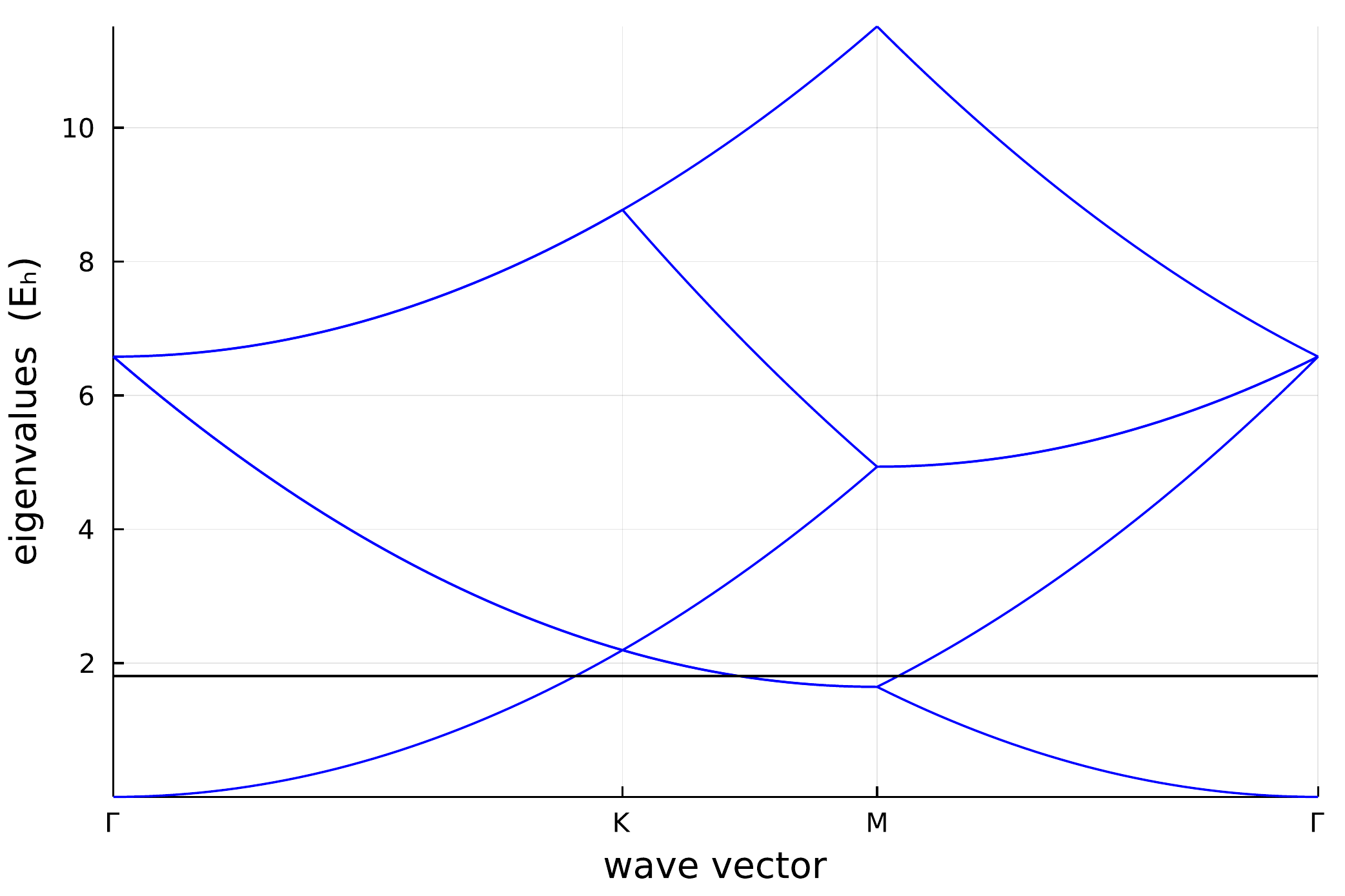}
				\caption{$L=0$}
				\label{fig:L=0}
			\end{subfigure}
			\hfill
			\begin{subfigure}{0.32\textwidth}
				\centering
				\includegraphics[width=\textwidth]{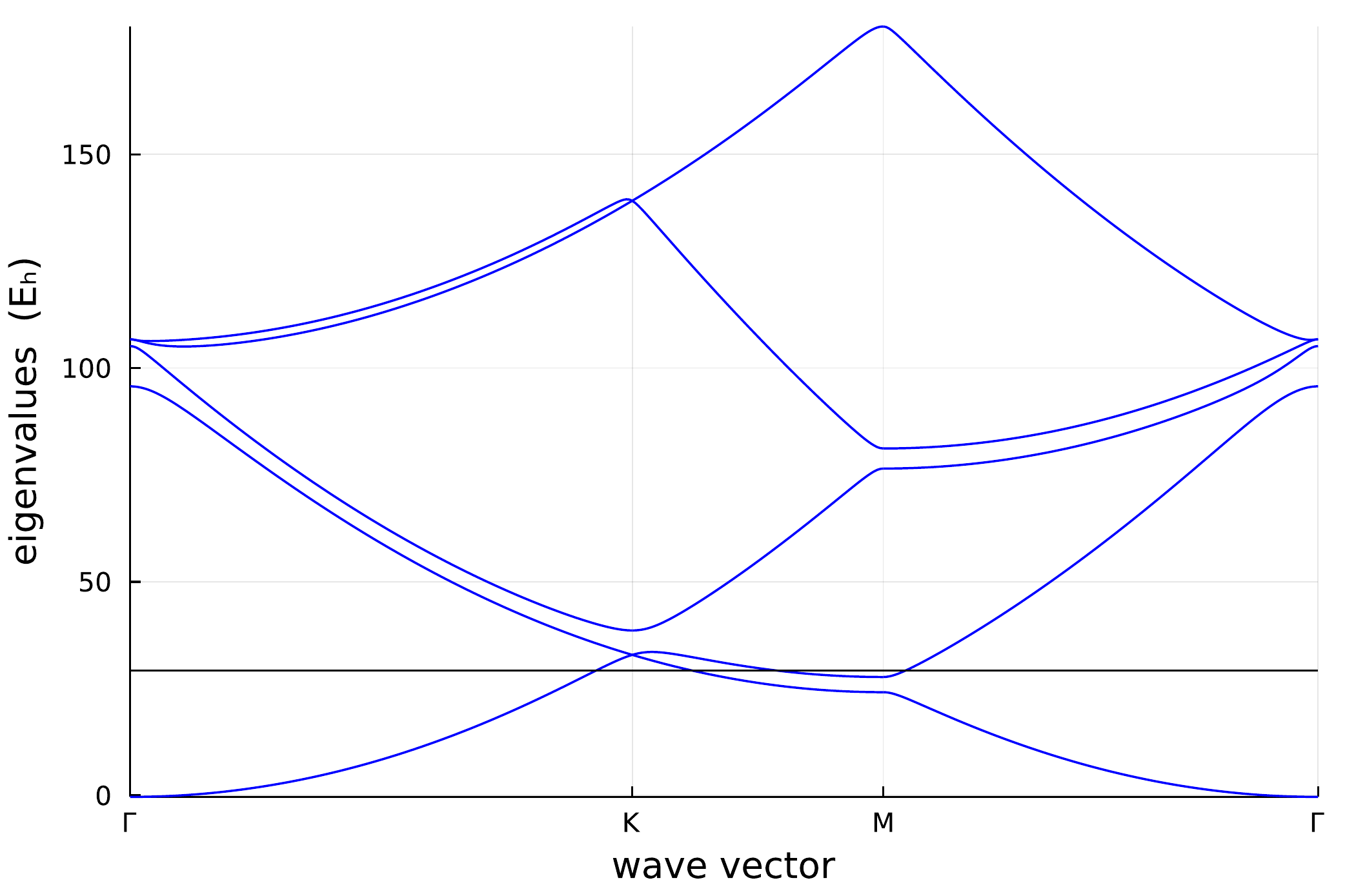}
				\caption{$L=0.5$}
				\label{fig:L=0.5}
			\end{subfigure}
			\hfill
			\begin{subfigure}{0.32\textwidth}
				\centering
				\includegraphics[width=\textwidth]{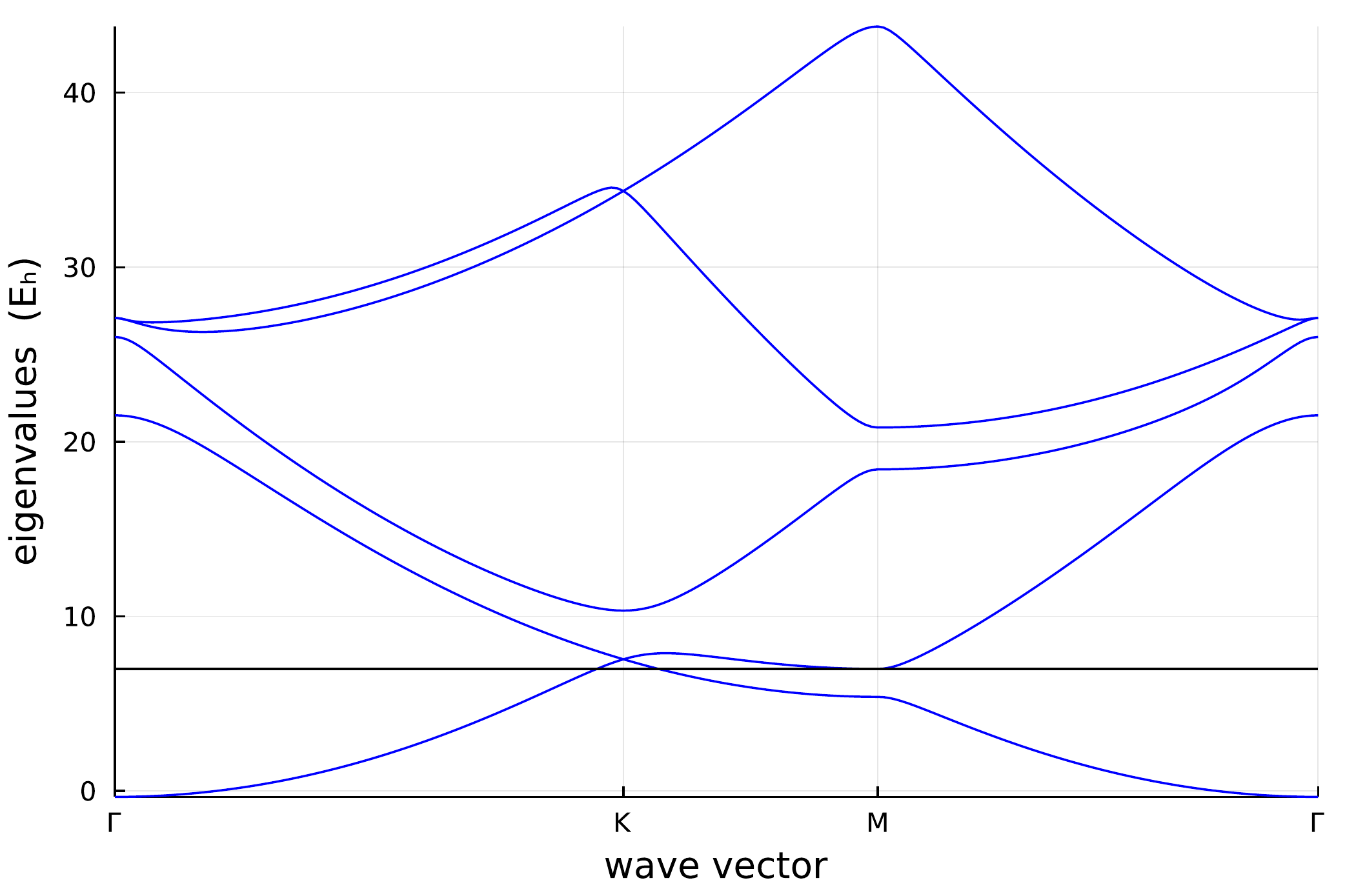}
				\caption{$L=1$}
			\end{subfigure}

			\begin{subfigure}{0.32\textwidth}
				\centering
				\includegraphics[width=\textwidth]{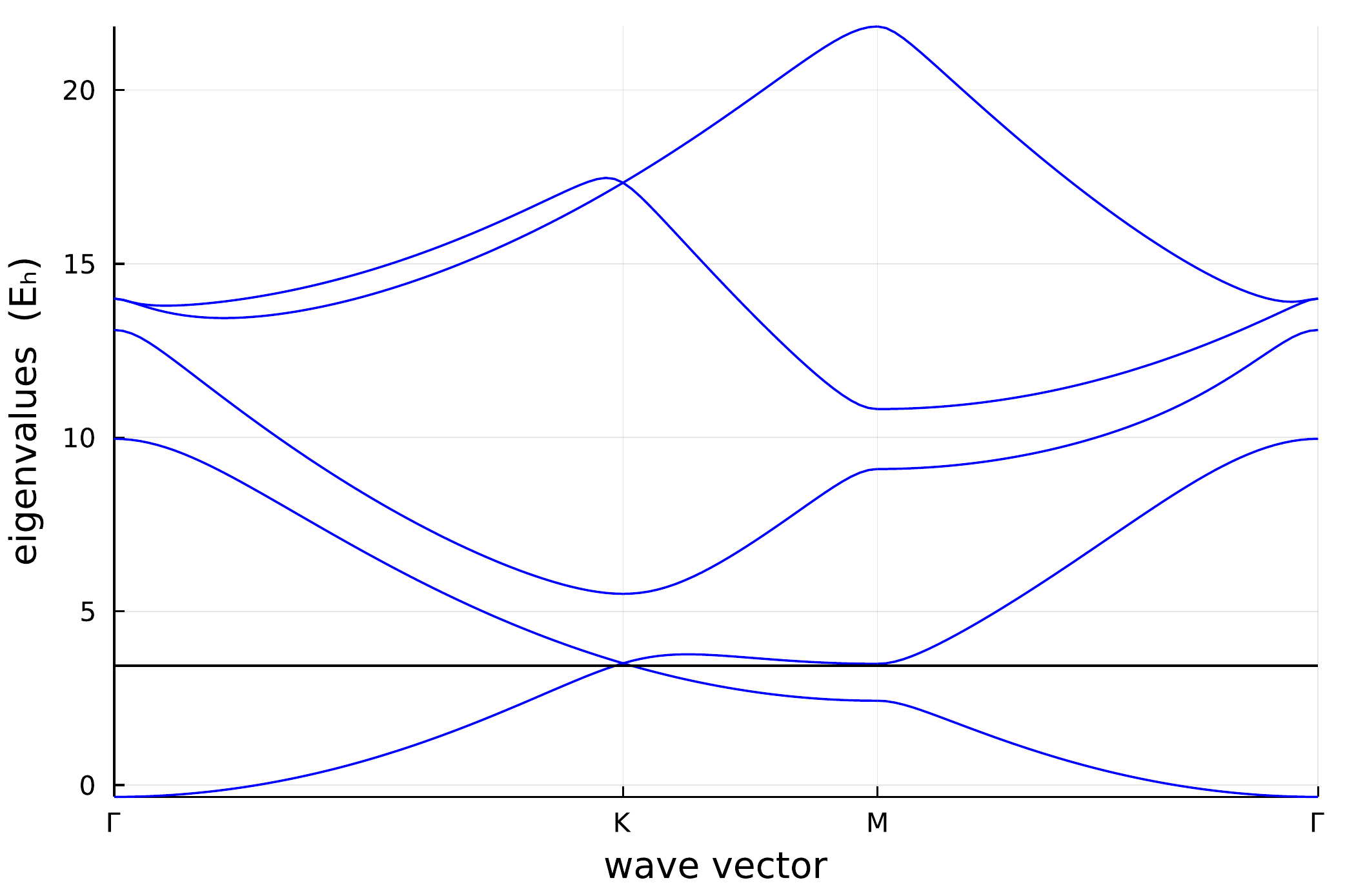}
				\caption{$L=1.4$}
			\end{subfigure}
			\hfill
			\begin{subfigure}{0.32\textwidth}
				\centering
				\includegraphics[width=\textwidth]{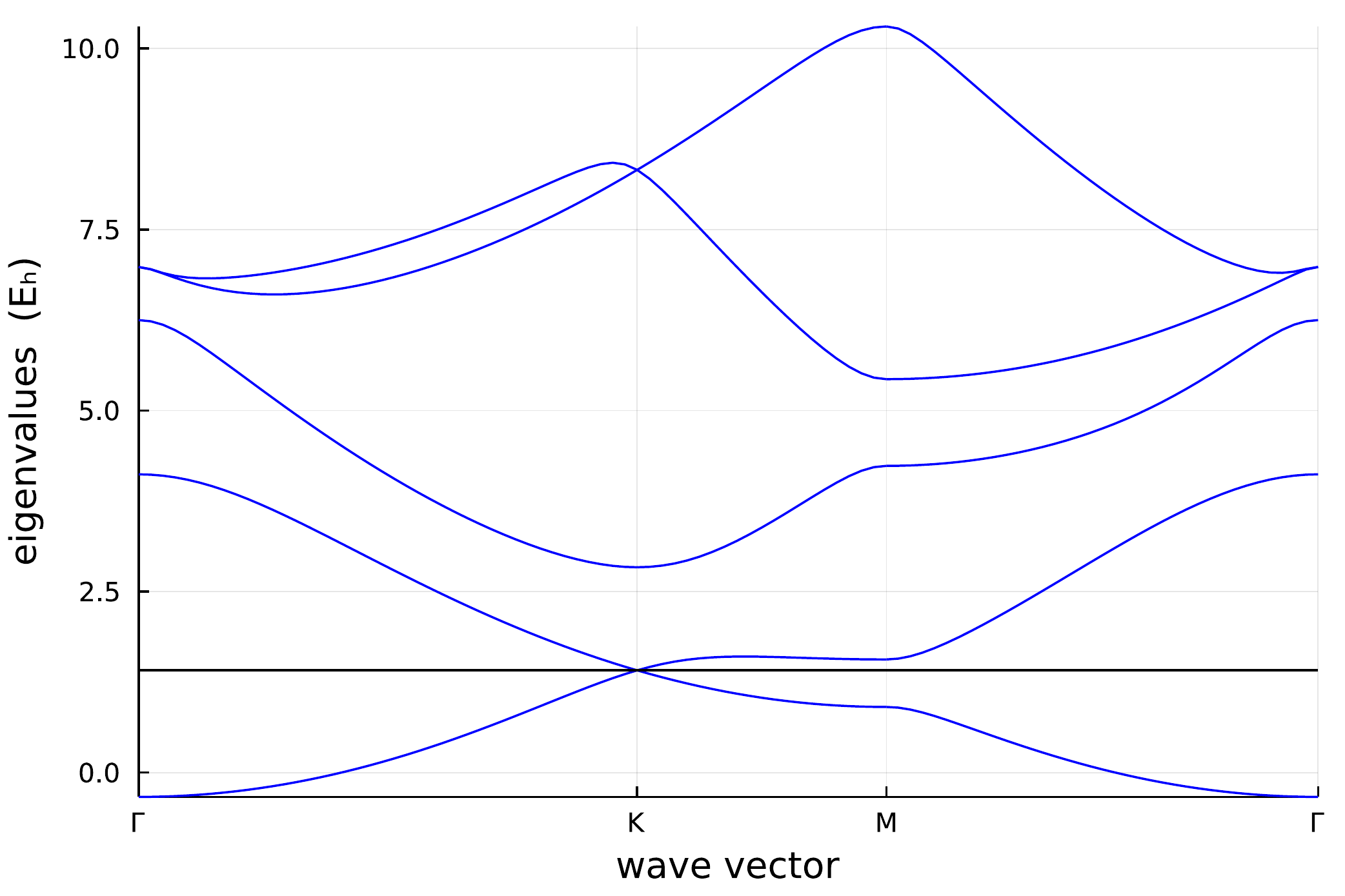}
				\caption{$L=2$}
			\end{subfigure}
			\hfill
			\begin{subfigure}{0.32\textwidth}
				\centering
				\includegraphics[width=\textwidth]{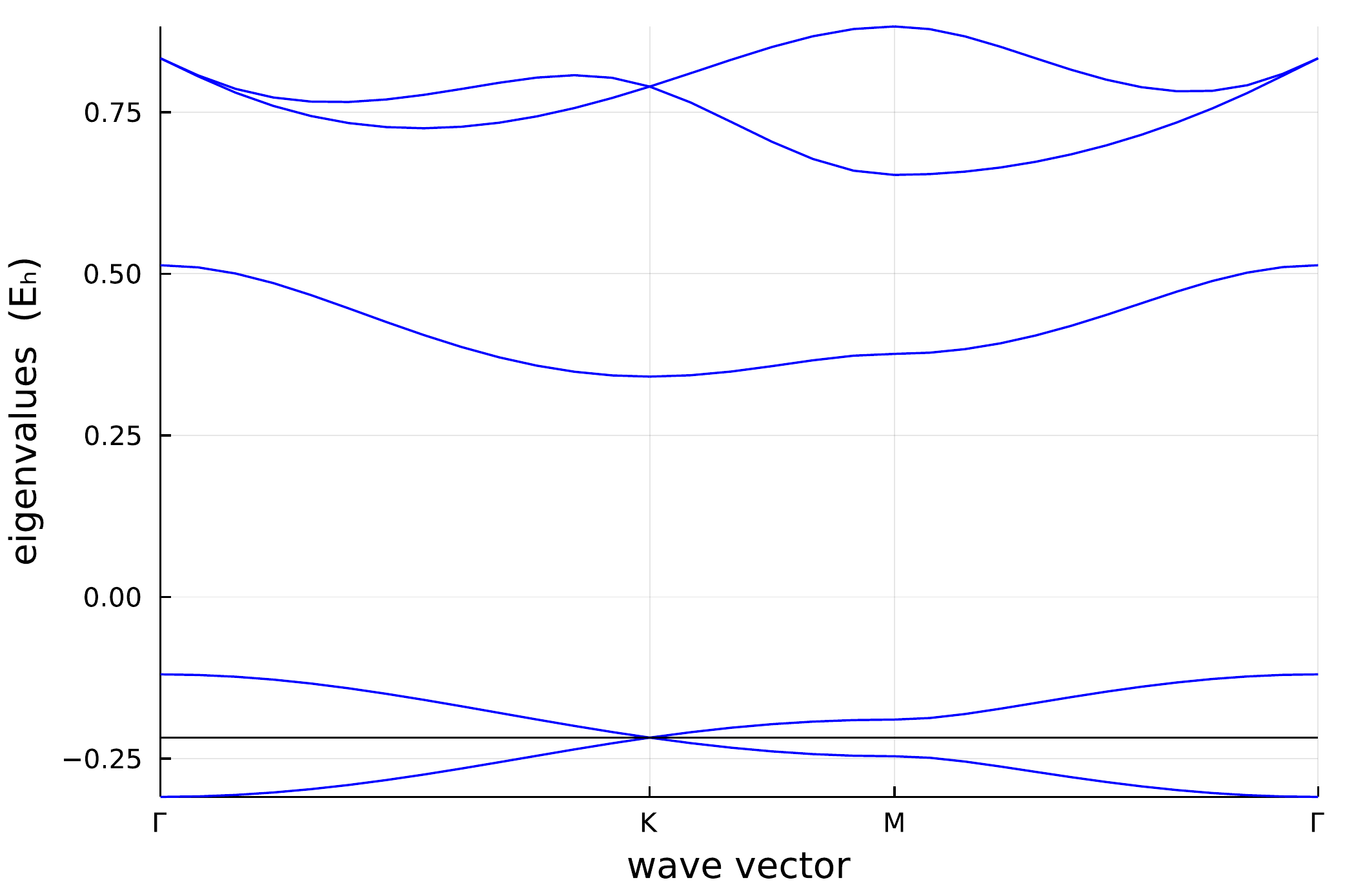}
				\caption{$L=6$}
			\end{subfigure}
			\caption{Numerical estimations by DFTK software of the fifth first Bloch eigenvalues of $H_L^\mathrm{MF}$ along the section $(\Gamma, \mathrm{K},\mathrm{M},\Gamma)$ joining the special symmetry points, see Figure~\ref{fig:triangular_lattice_BZ}. The Fermi level for $q=2$ is shown in black. For $L=0$, Figure~\ref{fig:L=0} shows the dispersion relation of $-\Delta$ on the lattice of fixed size $2\cdot\lscr^H$.}
			\label{fig: phase_transition}
		\end{figure}		
	
	\section{Proof of Theorem~\ref{theo: feshbach-schur}}\label{sec:two-dimensional-lattices-at-dissociation-with-coulomb-singularities}
	
	In this section, we consider a Schrödinger operator $H_L = -\Delta +V_L$ which commutes with the shifts of the scaled Bravais lattice $\mathscr{L}_L = L\lscr$ introduced in Section~\ref{sec:bravais-lattices-and-crystals}. We assume that the potential $V_L$ satisfies Assumptions~\ref{hypo_0}--\ref{hypo_4}. Under these assumptions, we employ the Feshbach-Schur method~\cite{bach1998renormalization, gustafson2020mathematical} in order to give the first two orders of the low-lying dispersion surfaces of $H_L$, in the regime where $L$ is large. This strategy has already been used in a similar context in~\cite{outassourt1984, fefferman2017honeycomb}, for instance. 
	
	\subsection{Strategy of proof}\label{sec:strategy-of-proof}
	
	The proof of Theorem~\ref{theo: feshbach-schur} crucially uses the Feshbach-Schur method . Let us briefly recall it. We consider $P$ and $P^\perp$, two orthogonal projections on a Hilbert space $\mathcal{H}$, such that $P+P^\perp=1$. Let $H$ be a self-adjoint operator on $\mathcal{H}$. It can be represented as the block matrix
	\begin{align*}
	H = \begin{pmatrix}
	A & C^* \\
	C & B
	\end{pmatrix} \, ,
	\end{align*}
	with $A=PAP$, $B= P^\perp HP^\perp$ and $C = P^\perp H P$.
	Let $E\in\sigma_\mathrm{d}(H)$ and assume that there exists $\epsilon>0$ such that
	\begin{align}
	\label{eq:energy_inequality_strategy}
	B - E \geq \epsilon P^\perp \pt
	\end{align}
	Then the eigenvalue problem $H\psi=E\psi$ is equivalent to
	\begin{align*}
	\left(A - C^*(B-E)^{-1} C - E\right) P\psi = 0 \pt
	\end{align*}
	The other component of $\psi$ is recovered thanks to the relation $P^\perp \psi = - (B- E)^{-1} C P\psi$. If the operator $C^*(B-E)^{-1} C$ is bounded then perturbation theory~\cite{kato1995perturbation} implies that the distance between $E$ and the spectrum of $A$ is estimated by
	\begin{align}
	\label{eq:estimate_spectrum}
	\dist(E, \sigma(A)) \leq \normeL{C^*(B-E)^{-1} C} \leq \normeL{(B-E)^{-1/2} C}^2 \pt
	\end{align}
	We see that, in order to correctly estimate $E$, the choice of the orthogonal projection $P$ should allow to estimate both $\sigma(A)$ and $\normeL{(B-E)^{-1/2} C}$.
	
	In the proof of Theorem~\ref{theo: feshbach-schur}, we apply this method to the Bloch operator $H=H_L(\mathbf{k}) \coloneqq -\Delta+V_L$ which acts on $\lk^2(\Gamma_L)$. We choose the projection $P=P_L(\mathbf{k})$ following ideas in~\cite{outassourt1984, daumer1993periodique, daumer1996schrodinger, fefferman2017honeycomb}. The projection $P_L(\mathbf{k})$ is chosen as the orthogonal projection on the subspace spanned by the Bloch-Floquet transforms of the functions $v_{L,\mathbf{r}}$, defined in \eqref{eq: eigen_equation_Hlr} as the first eigenfunctions of the operators $H_{L,\mathbf{r}}$. To construct this projector, we first show, in Section~\ref{sec:orthomalization-procedure}, that the familly $\{v_{L,\mathbf{r}}\}_{\mathbf{r\in}\lscr^\mathbf{R}}$ is almost orthonormal and from it, we form an orthonormal family $\{w_{L,\mathbf{r}}\}_{\mathbf{r\in}\lscr^\mathbf{R}}$ by applying the Gram-Schmidt process. This family shares many properties of (composite) Wannier functions: periodicity and localization (see Proposition~\ref{prop: properties_wannier}). When applying the Bloch-Floquet transform to each function $w_{L,\mathbf{r}}$, we obtain an orthonormal system of $N=\abs{\mathbf{R}}$ quasi-periodic functions which defines the orthogonal projection $P_L(\mathbf{k})$. The spectrum of $A=P_L(\mathbf{k})H_L(\mathbf{k})P_L(\mathbf{k})$ is computed in Corollary~\ref{cor: expansion_A_k}. The energy inequality \eqref{eq:energy_inequality_strategy} is shown in Proposition~\ref{prop: energy_estimate}. Finally, we estimate the right side of \eqref{eq:estimate_spectrum} in Proposition~\ref{prop: estimation_key}.
	
	\subsection{Notation}\label{sec:notations-and-conventions}
	
	We describe some notations we use in the sequel. For two quantities $A$ and $B$, we write $A \lesssim B$ whenever there exists a constant $C$ independent from any relevant parameters and such that $A \leq C B$.
	If $A_L$ and $B_L$ are sequences labeled by $L$ and $\alpha >0$ then the notation $A_L = \grandO{B^{\alpha-}_L}$ means that for every $\epsilon>0$ there exists a constant $C_\epsilon >0$ such that for $L$ large enough (depending on $\epsilon$) we have $\abs{A_L} \leq C_\epsilon B_L^{\alpha - \epsilon}$. We use the notation $\grandO{L^{-\infty}}$ to denote a $\grandO{L^{-k}}$ for all $k \in \N$, where the $O$ may depend on $k$.
	
	The open ball centered on $\mathbf{r} \in \R^2$ with radius $R\geq 0$ is denoted by $B(\mathbf{r},R)$. The distance between two closed sets $A,B \subset \R^d$ is denoted $\dist(A,B)$ or $\dist(a,B)$ (resp. $\dist(A,b)$) when $A=\{a\}$ (resp. $B = \{b\}$) is reduced to a singleton.
	
	\subsection{A regularity result}\label{sec:a-regularity-result}
	
	Let $V\in L^p(\R^2)+L^\infty(\R^2)$ with $p\in\intoo{1}{2}$. We consider the Friedrichs extension of $H=-\Delta+V$ which is self-adjoint on $\mathcal{D}(H) = \enstq{u \in H^1(\R^2)}{(-\Delta+V)u \in L^2(\R^2)}$, see~\cite{reed1975methodsII, cycon1987schrodinger}. The following proposition shows that $\mathcal{D}(H) \subset H^p(\R^2)$.
	\begin{prop}
		\label{prop: regularity}
		Let $V  \in L^p(\R^2) + L^\infty(\R^2)$ where $p\in\intoo{1}{\infty}$. Let $\alpha=\min(\frac{p}{2},1)$. For all $\nu >0$ large enough, we have
		\begin{align}
		\label{prop: regularity_estimate1}
		\normeL{(-\Delta+\nu)^{\alpha}(-\Delta+V+\nu)^{-1}(-\Delta+\nu)^{1-\alpha}} \leq 2 \pt
		\end{align}	
		Then there exists $C >0$ such that for all $u\in L^2(\R^2)$ satisfying $(-\Delta + V)u \in L^2(\R^2)$ we have
		\begin{align*}
		\norme{u}_{H^p(\R^2)} \leq 2 \normeL{(-\Delta + V)u}_{L^2(\R^2)} + C\norme{u}_{L^2(\R^2)} \, ,
		\end{align*}
		where we can choose
		\begin{align}
		\label{eq:regularity_constant}
		C \lesssim \max \left[1, \norme{V}^{\frac{p}{p-1}}_{L^p(\R^2)+L^\infty(\R^2)}\right]\pt
		\end{align}
	\end{prop}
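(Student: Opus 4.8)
The plan is to deduce the proposition from a Neumann‑series argument, once a single Schrödinger‑smoothing estimate is established. Fix $\alpha=\min(p/2,1)\in\intof{1/2}{1}$ and, for $\nu>0$ large, set $S_\nu\coloneqq(-\Delta+\nu)^{\alpha-1}V(-\Delta+\nu)^{-\alpha}$. Inverting formally via the resolvent identity,
\[
\bigl((-\Delta+\nu)^\alpha(-\Delta+V+\nu)^{-1}(-\Delta+\nu)^{1-\alpha}\bigr)^{-1}=(-\Delta+\nu)^{-(1-\alpha)}(-\Delta+V+\nu)(-\Delta+\nu)^{-\alpha}=1+S_\nu \pt
\]
So \eqref{prop: regularity_estimate1} will follow once $\normeL{S_\nu}_{L^2\to L^2}\le 1/2$, since then $\normeL{(1+S_\nu)^{-1}}\le(1-1/2)^{-1}=2$. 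I expect the genuine obstacle to be the rigorous justification that this rearrangement is legitimate, i.e. that $(-\Delta+\nu)^\alpha(-\Delta+V+\nu)^{-1}(-\Delta+\nu)^{1-\alpha}$ really is a bounded operator on $L^2(\R^2)$: a priori the Friedrichs resolvent $(-\Delta+V+\nu)^{-1}$ is only known to map $L^2$ into the form domain $H^1(\R^2)$, a gain of one derivative, whereas \eqref{prop: regularity_estimate1} asserts a gain of $2\alpha$. I would handle this by first treating $\alpha=1/2$, where $(-\Delta+\nu)^{\pm1/2}$ intertwines $L^2(\R^2)$ with $H^{\pm1}(\R^2)$ and $V$ is a bounded map $H^1(\R^2)\to H^{-1}(\R^2)$ (Hölder and $H^1(\R^2)\hookrightarrow L^q(\R^2)$, $q<\infty$), so that the composition is manifestly bounded and equals $(1+\widetilde S_\nu)^{-1}$ with $\widetilde S_\nu\coloneqq(-\Delta+\nu)^{-1/2}V(-\Delta+\nu)^{-1/2}$; then conjugating by the bounded operator $(-\Delta+\nu)^{\alpha-1/2}$, after checking that $(1+\widetilde S_\nu)^{-1}$ acts boundedly on $H^{2\alpha-1}(\R^2)$ as well (same Sobolev estimates). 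Alternatively one regularizes $V\rightsquigarrow V_n\in\mathcal{C}^\infty_c(\R^2)$, proves everything for $V_n$ where all operations are classical, and passes to the limit in operator norm.

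The computational core is the bound $\normeL{S_\nu}_{L^2\to L^2}\le 1/2$ for $\nu$ large. Decompose $V=V_1+V_2$ with $V_1\in L^p(\R^2)$ and $V_2\in L^\infty(\R^2)$. Since $\alpha-1\le0$ and $-\alpha<0$, the multipliers $(-\Delta+\nu)^{\alpha-1}$ and $(-\Delta+\nu)^{-\alpha}$ are bounded on $L^2$ with norms $\nu^{\alpha-1}$ and $\nu^{-\alpha}$, so the $L^\infty$-part contributes $\normeL{(-\Delta+\nu)^{\alpha-1}V_2(-\Delta+\nu)^{-\alpha}}\le\nu^{-1}\norme{V_2}_{L^\infty(\R^2)}$. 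For the $L^p$-part, say when $p\in\intoo{1}{2}$ so $\alpha=p/2$, I factor the operator through
\[
L^2(\R^2)\ \xrightarrow{(-\Delta+\nu)^{-\alpha}}\ L^\infty(\R^2)\ \xrightarrow{\times V_1}\ L^p(\R^2)\ \xrightarrow{(-\Delta+\nu)^{\alpha-1}}\ L^2(\R^2) \pt
\]
This uses the Sobolev embeddings $H^p(\R^2)\hookrightarrow L^\infty(\R^2)$ (valid since $p>1$) and $W^{2-p,p}(\R^2)\hookrightarrow L^2(\R^2)$ (valid since $(p-1)(2-p)\ge0$), Hölder for the multiplication by $V_1$, and the parabolic‑scaling identity $\normeL{(-\Delta+\nu)^{-\sigma}}_{L^a\to L^b}\lesssim\nu^{-\sigma+(1/a-1/b)}$ in dimension two. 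Adding the $\nu$-exponents $\bigl(-\tfrac p2+\tfrac12\bigr)+0+\bigl(-(1-\tfrac p2)+\tfrac1p-\tfrac12\bigr)=\tfrac1p-1$ gives $\normeL{(-\Delta+\nu)^{\alpha-1}V_1(-\Delta+\nu)^{-\alpha}}\lesssim\nu^{-(p-1)/p}\norme{V_1}_{L^p(\R^2)}$. (When $p\ge2$ one has $\alpha=1$ and $S_\nu=V(-\Delta+\nu)^{-1}$, and a direct Hölder–Sobolev estimate yields the same decay.) Altogether $\normeL{S_\nu}_{L^2\to L^2}\lesssim\nu^{-(p-1)/p}\norme{V_1}_{L^p}+\nu^{-1}\norme{V_2}_{L^\infty}$, so $\normeL{S_\nu}\le1/2$ as soon as $\nu\gtrsim\max\bigl(\norme{V_1}_{L^p}^{p/(p-1)},\norme{V_2}_{L^\infty}\bigr)$; optimizing over the decomposition, one may take $\nu\simeq\max\bigl(1,\norme{V}_{L^p(\R^2)+L^\infty(\R^2)}^{p/(p-1)}\bigr)$.

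For the second, easier half of the statement: let $u\in L^2(\R^2)$ satisfy $(-\Delta+V)u=:f\in L^2(\R^2)$. Because $V$ is infinitesimally form‑bounded (Kato class), $u$ lies in the Friedrichs domain $\mathcal{D}(H)\subset H^1(\R^2)$, so $(-\Delta+V+\nu)u=f+\nu u$, hence $u=(-\Delta+V+\nu)^{-1}(f+\nu u)$ and
\[
(-\Delta+\nu)^\alpha u=\bigl[(-\Delta+\nu)^\alpha(-\Delta+V+\nu)^{-1}(-\Delta+\nu)^{1-\alpha}\bigr](-\Delta+\nu)^{-(1-\alpha)}(f+\nu u) \pt
\]
By \eqref{prop: regularity_estimate1}, $\normeL{(-\Delta+\nu)^\alpha u}_{L^2}\le 2\nu^{-(1-\alpha)}\norme{f}_{L^2}+2\nu^\alpha\norme{u}_{L^2}$. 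Since $(-\Delta+1)^{2\alpha}\le(-\Delta+\nu)^{2\alpha}$ for $\nu\ge1$, we get $\norme{u}_{H^{2\alpha}(\R^2)}\le\normeL{(-\Delta+\nu)^\alpha u}_{L^2}$, and as $\nu^{-(1-\alpha)}\le1$ this reads $\norme{u}_{H^{2\alpha}(\R^2)}\le 2\norme{f}_{L^2}+2\nu^\alpha\norme{u}_{L^2}$, with $2\alpha=\min(p,2)$ — the asserted $H^p$-bound when $p\le2$. Finally, with $\nu$ chosen as in the previous paragraph, $C=2\nu^\alpha\lesssim\max\bigl(1,\norme{V}_{L^p+L^\infty}^{\alpha p/(p-1)}\bigr)\lesssim\max\bigl(1,\norme{V}_{L^p+L^\infty}^{p/(p-1)}\bigr)$, using $\alpha\le1$. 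In summary, the smoothing estimate of the second paragraph is routine harmonic analysis, while the main difficulty is the operator‑boundedness/legitimacy point flagged in the first paragraph.
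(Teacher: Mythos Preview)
Your proof is correct and follows the same overall architecture as the paper's: invert $(-\Delta+V+\nu)$ via a Neumann series around the free resolvent, controlled by the norm of $S_\nu=(-\Delta+\nu)^{\alpha-1}V(-\Delta+\nu)^{-\alpha}$. The difference lies in how this key operator is bounded. The paper factors $S_\nu=\bigl[(-\Delta+\nu)^{\alpha-1}|V|^{1-\alpha}\bigr]\,\bigl[|V|^{\alpha}(-\Delta+\nu)^{-\alpha}\bigr]$ and applies the Kato--Seiler--Simon inequality to place each factor in the Schatten classes $\mathfrak{S}_{p/(1-\alpha)}$ and $\mathfrak{S}_{p/\alpha}$, obtaining $\normeL{S_\nu}\le C(2,p)\,\nu^{1/p-1}\norme{V_1}_{L^p}$ in one stroke for every $\alpha\in\intff{1-p/2}{p/2}\cap\intff{0}{1}$. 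You instead route through $L^2\to L^\infty\to L^p\to L^2$ via the $L^a$--$L^b$ mapping properties of Bessel potentials and parabolic scaling; this is more elementary (no trace ideals) but slightly more ad hoc, and requires a separate argument when $p\ge 2$. On the domain/legitimacy issue you flag as the main difficulty, the paper's organization is cleaner: it first writes the Neumann series at the form level $\alpha=1/2$, where all operators are manifestly bounded between $L^2$ and $H^{\pm1}$, and then \emph{algebraically} regroups the resulting series $(-\Delta+\nu)^{-1/2}\sum_n(-1)^n[(-\Delta+\nu)^{-1/2}V(-\Delta+\nu)^{-1/2}]^n(-\Delta+\nu)^{-1/2}$ as $\sum_n(-1)^nS_\nu^{\,n}\,(-\Delta+\nu)^{-1}$ before taking norms---exactly the $\alpha=1/2$ workaround you propose, but without needing to conjugate by $(-\Delta+\nu)^{\alpha-1/2}$ or to regularize $V$.
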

	\begin{lemma}
		\label{lemma: regularity_aux}
		Let $d\geq 1$ and $p \geq 1$ such that $p > \frac{d}{2}$. Let $\alpha \in \intff{1 - \frac{p}{2}}{\frac{p}{2}} \cap\intff{0}{1}$. There exists a constant $C(d,p) $ such that for all $V \in L^p(\R^d)$ and $\nu > 0$ we have
		\begin{align*}
		\normeL{(-\Delta + \nu)^{-1+\alpha} V(-\Delta+ \nu)^{-\alpha}} \leq C(d,p)\nu^{\frac{d}{2p}-1} \normeL{V}_{L^p(\R^d)} \pt
		\end{align*}
	\end{lemma}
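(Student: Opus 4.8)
The plan is to reduce to the normalization $\nu=1$ by a dilation, and then to factor the operator through an intermediate Lebesgue space.

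\emph{Reduction to $\nu=1$.} Let $U_\lambda$ be the unitary dilation $(U_\lambda f)(\mathbf{x})=\lambda^{d/2}f(\lambda\mathbf{x})$ on $L^2(\R^d)$, so that $U_\lambda(-\Delta)U_\lambda^{-1}=\lambda^{-2}(-\Delta)$ and $U_\lambda V U_\lambda^{-1}$ is the multiplication operator by $\mathbf{x}\mapsto V(\lambda\mathbf{x})$. Taking $\lambda=\nu^{-1/2}$ gives $U_\lambda(-\Delta+\nu)^{-s}U_\lambda^{-1}=\nu^{-s}(-\Delta+1)^{-s}$, hence
\begin{align*}
U_\lambda\,(-\Delta+\nu)^{-1+\alpha}V(-\Delta+\nu)^{-\alpha}\,U_\lambda^{-1}=\nu^{-1}\,(-\Delta+1)^{-1+\alpha}\bigl[V(\nu^{-1/2}\,\cdot\,)\bigr](-\Delta+1)^{-\alpha}.
\end{align*}
Since $\normeL{V(\nu^{-1/2}\,\cdot\,)}_{L^p(\R^d)}=\nu^{d/(2p)}\normeL{V}_{L^p(\R^d)}$, the announced prefactor $\nu^{d/(2p)-1}$ appears automatically, and it remains to prove $\normeL{(-\Delta+1)^{-1+\alpha}V(-\Delta+1)^{-\alpha}}\leq C(d,p)\normeL{V}_{L^p(\R^d)}$.

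\emph{Factorization.} Taking adjoints replaces $V$ by $\overline V$ (of the same $L^p$-norm) and $\alpha$ by $1-\alpha$; since the hypothesis on $\alpha$ is invariant under $\alpha\mapsto1-\alpha$, we may assume $\alpha\leq\tfrac12$. Write the operator as the composition
\begin{align*}
L^2(\R^d)\ \xrightarrow{(-\Delta+1)^{-\alpha}}\ L^r(\R^d)\ \xrightarrow{\ \times V\ }\ L^s(\R^d)\ \xrightarrow{(-\Delta+1)^{-(1-\alpha)}}\ L^2(\R^d),\qquad \tfrac1s=\tfrac1p+\tfrac1r.
\end{align*}
By the inhomogeneous Sobolev embedding $H^{2\alpha}(\R^d)\hookrightarrow L^r(\R^d)$ the first arrow is bounded (with constant depending only on $d,r$) as soon as $\tfrac1r\geq(\tfrac12-\tfrac{2\alpha}{d})_+$; the middle arrow is Hölder's inequality, with norm $\normeL{V}_{L^p}$; the last arrow is Young's inequality applied to convolution with the order-$2(1-\alpha)$ inhomogeneous Bessel kernel (integrable local singularity, exponential decay), valid when $\tfrac12\leq\tfrac1s\leq\tfrac12+\tfrac{2(1-\alpha)}{d}$. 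Equivalently, one can bound each of the three factors by a Kato--Seiler--Simon Schatten-norm estimate after splitting $V$ by Hölder. Translating the three conditions into a window for $\tfrac1r$ gives
\begin{align*}
\max\!\left(\bigl(\tfrac12-\tfrac{2\alpha}{d}\bigr)_+,\ \tfrac12-\tfrac1p\right)\ \leq\ \tfrac1r\ \leq\ \min\!\left(\tfrac12,\ 1-\tfrac1p,\ \tfrac12+\tfrac{2(1-\alpha)}{d}-\tfrac1p\right),
\end{align*}
and a short computation shows the left-hand side is $\leq$ the right-hand side precisely because $p>d/2$ (which yields $\tfrac1p<\tfrac2d$, the slack separating the two resolvent orders) together with $\alpha\in\intff{1-p/2}{p/2}$ (which yields $\tfrac1s\leq1$, i.e. $V\cdot(\text{function})$ still lives in a genuine Lebesgue space). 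Any admissible choice of $r$ and the associated $s$ then produces the bound with constant depending only on $d$ and $p$.

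\emph{Main obstacle.} The soft ingredients — the dilation, the Sobolev and Young estimates, Hölder — are entirely routine; the real work is the bookkeeping in the last step: checking that the admissible window for the intermediate exponent is nonempty exactly under the stated hypotheses, and that the constant stays uniform as $\alpha$ approaches the endpoints $1-p/2$ and $p/2$ (one borderline Sobolev embedding has to be handled with a little care there, e.g. by isolating the two endpoints and invoking the explicit constants in Young's inequality).
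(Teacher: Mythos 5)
Your proof is essentially correct and takes a genuinely different route from the paper. The paper's proof (once the dilation has been used, explicitly or implicitly, to produce the factor $\nu^{d/(2p)-1}$) splits $V=\abs{V}^{1-\alpha}\cdot\sign(V)\abs{V}^{\alpha}$, uses H\"older's inequality for trace ideals to write
\begin{align*}
\normeL{(-\Delta+\nu)^{-1+\alpha}V(-\Delta+\nu)^{-\alpha}}\le\normeL{(-\Delta+\nu)^{-1+\alpha}\abs V^{1-\alpha}}_{\mathfrak S_{p/(1-\alpha)}}\normeL{\abs V^{\alpha}(-\Delta+\nu)^{-\alpha}}_{\mathfrak S_{p/\alpha}},
\end{align*}
and then applies Kato--Seiler--Simon to each factor. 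The constraint $\alpha\in\intff{1-p/2}{p/2}$ is exactly what makes both Schatten indices $p/(1-\alpha)$ and $p/\alpha$ at least $2$, so the whole argument is two lines and yields a completely explicit constant, namely $(2\pi)^{-d/p}\bigl(\int_{\R^d}(\abs{\mathbf k}^2+1)^{-p}\diff\mathbf k\bigr)^{1/p}$, which in particular is visibly independent of $\alpha$. Your factorization $L^2\to L^r\to L^s\to L^2$ via Sobolev embedding, H\"older, and Young with the Bessel kernel avoids Schatten ideals altogether and is more elementary in that sense, but it trades the two-line argument for the exponent bookkeeping you rightly identify as the crux; it also leaves the uniformity of the constant in $\alpha$ (as $\alpha$ approaches the endpoints $1-p/2$, $p/2$, where either the Sobolev embedding or the Bessel--Young step becomes borderline) as a point that has to be checked separately, whereas the paper's Schatten computation gives it for free. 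You do gesture at the KSS route as an ``equivalently'', but your description of it as ``bounding each of the three factors'' doesn't quite match what is actually done: the paper has only two Schatten factors, obtained by attaching half of $V$ to each resolvent. Finally, note that the paper also has to treat $\alpha\in\{0,1\}$ separately (there $p\ge2$ is forced and one of the Schatten factors disappears); your adjoint reduction to $\alpha\le1/2$ absorbs the $\alpha=1$ case, but $\alpha=0$ still needs a word, which your window analysis in fact covers since it forces $r=2$ and then $p>d/2$ is precisely what makes the Bessel step work.
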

	\begin{proof}
		First, we assume $\alpha \in \intoo{0}{1}$. Notice that the conditions on $p$ and $\alpha$ ensure that $\frac{p}{1-\alpha}\geq 2$ and $\frac{p}{\alpha} \geq 2$.
		Using Hölder's inequality for trace ideals~\cite[Theorem 2.8]{simon2005trace} and the Kato-Seiler-Simon inequality~\cite[Theorem 4.1]{simon2005trace}, we have
		\begin{align*}
		\normeL{(-\Delta + \nu)^{-1+\alpha} V(-\Delta+ \nu)^{-\alpha}}
		&\leq \normeL{(-\Delta + \nu)^{-1+\alpha} \abs{V}^{1-\alpha}}_{\mathfrak{S}_{p/(1-\alpha)}}\normeL{ \abs{V}^{\alpha}(-\Delta + \nu)^{-\alpha}}_{\mathfrak{S}_{p/\alpha}} \\
		&\leq C(d,p) \nu^{\frac{d}{2p}-1}\normeL{V}_{L^p(\R^d)} \, ,
		\end{align*}
		where the constant $C(d,p)$ reads
		\begin{align*}
			C(d,p) = (2\pi)^{-d/p} \left(\int_{\R^d} \frac{\diff \mathbf{k}}{(\abs{\mathbf{k}}^2+1)^{p}}\right)^{1/p} = \left[\frac{B\left(\frac{d}{2},p-\frac{d}{2}\right)}{2^d \pi^{\frac{d+1}{2}} \Gamma(\frac{d-1}{2})}\right]^{1/p} \pt
		\end{align*}
		We have denoted by $B(x,y)$ and $\Gamma(x)$ respectively the Euler beta function and the Euler Gamma function.
		If $\alpha \in \{0,1\}$ then we must have $p\geq 2$ due to our assumptions. Then the proof is the same except we do not use Hölder's inequality.
	\end{proof}
	\begin{proof}[Proof of Proposition~\ref{prop: regularity}]
		For $V = V_1 + V_2 \in L^p(\R^2) + L^\infty(\R^2)$ with $p\in\intoo{1}{\infty}$, we write
		\begin{align*}
		-\Delta + V + \nu = (-\Delta + \nu)^{1/2} \left(1 + (-\Delta+\nu)^{-1/2}V(-\Delta+\nu)^{-1/2}\right)(-\Delta + \nu)^{1/2} \pt
		\end{align*}
		Using Lemma~\ref{lemma: regularity_aux} with $\alpha=1/2$, for all $\nu \geq \max \left[\left(4 C(2,p) \normeL{V_1}_{L^p(\R^2)}\right)^\frac{p}{p - 1}, 4\normeL{V_2}_{L^\infty(\R^2)}\right]$ we have
		\begin{align*}
		\normeL{(-\Delta+\nu)^{-1/2}V(-\Delta+\nu)^{-1/2}} \leq 1/2 \pt
		\end{align*}
		Hence, the operator $-\Delta + V + \nu$ is invertible and its inverse is given by a Neumann series
		\begin{align*}
		(-\Delta + V + \nu)^{-1} = (-\Delta + \nu)^{-1/2} \sum_{n\geq 0} (-1)^n\left[(-\Delta+\nu)^{-1/2}V(-\Delta+\nu)^{-1/2}\right]^n (-\Delta + \nu)^{-1/2} \pt
		\end{align*}
		We multiply by $(-\Delta+\nu)^{\alpha}$ for $\alpha = \min(\frac{p}{2},1) \in \intof{1/2}{1}$ and we obtain
		\begin{align*}
		(-\Delta+\nu)^{\alpha}(-\Delta + V + \nu)^{-1}
		= \sum_{n\geq 0} (-1)^n \left[(-\Delta+\nu)^{-1+\alpha}V (-\Delta+\nu)^{-\alpha}\right]^n (-\Delta+\nu)^{-1+\alpha} \pt
		\end{align*}
		The term of order $n$ is bounded by Lemma~\ref{lemma: regularity_aux}
		\begin{align*}
		\normeL{\left[(-\Delta+\nu)^{-1+\alpha}V (-\Delta+\nu)^{-\alpha}\right]^n} \leq \left(C(2,p)\nu^{\frac{1}{p}-1} \normeL{V_1}_{L^p(\R^2)} + \nu^{-1}\normeL{V_2}_{L^\infty(\R^2)}\right)^n \leq 2^{-n} \pt
		\end{align*}
		Hence, we have obtained estimate \eqref{prop: regularity_estimate1}. This also shows that $\mathcal{D}(-\Delta+V) \subset \mathcal{D}((-\Delta+\nu)^\alpha) = H^{\min(p,2)}(\R^2)$ with the inequality
		\begin{align*}
		\normeL{(-\Delta+\nu)^\alpha u}_{L^2(\R^2)} \leq 2\nu^{-1+\alpha} \normeL{(-\Delta+V+\nu)u}_{L^2(\R^2)} \pt
		\end{align*}
		Thus, we find, for all $u\in\mathcal{D}(-\Delta+V)$ and $\nu\geq 1$,
		\begin{align*}
		\normeL{(-\Delta+1)^\alpha u}_{L^2(\R^2)}
		\leq 2 \normeL{(-\Delta+V)u}_{L^2(\R^2)} + C(\norme{V_1}_{L^p(\R^2)},\norme{V_2}_{L^\infty(\R^2)}) \normeL{u}_{L^2(\R^2)} \, ,
		\end{align*}
		with
		\begin{align*}
		C(\normeL{V_1}_{L^p(\R^2)},\norme{V_2}_{L^\infty(\R^2)}) = 2\cdot \max\left[1, (4C(2,p)\norme{V_1}_{L^p(\R^2)})^\frac{p\alpha}{p- 1}, 4\norme{V_2}^\alpha_{L^\infty(\R^2)} \right] \pt
		\end{align*}
		Estimate~\eqref{eq:regularity_constant} follows since $\alpha\leq 1$ and $\frac{p}{p-1}\geq1$. This concludes the proof of Proposition~\ref{prop: regularity}.
	\end{proof}
	
	\subsection{Kato's inequalities}
	
	In this section, we state Kato type estimates on the periodic potential $V_L$ in both $L^2(\R^2)$ and $\lk^2(\Gamma_L)$ spaces. In addition, we show that the constants appearing in the upper bounds grow polynomially with $L$.
	
	Before stating our proposition, we notice that Assumption~\ref{hypo_04} and Assumption~\ref{hypo_4} implies that the $\norme{V_L}_{\lper^p(\Gamma_L)}$ grows at most polynomially with $L$:
	\begin{align}
	\label{rem: grow_polynomial}
	\exists M \in \R,\quad \norme{V_L}_{\lper^p(\Gamma_L)} = \grandO{L^M}
	\end{align}
	
	\begin{prop}[Kato's inequalities]
		\label{lemma: kato_type_inequality}
		Let $L\geq 1$. For all $\epsilon >0$ there exists $C(\epsilon,L) >0$ such that for all $u \in H^1(\R^2)$, we have
		\begin{align}
		\label{eq: kato_type_inequality1}
		\pdtsc{u}{\abs{V_L} u}_{L^2(\R^2)} \leq \epsilon \norme{\nabla u}_{L^2(\R^2)}^2 + C(\epsilon,L) \norme{u}^2_{L^2(\R^2)} \, ,
		\end{align}
		and such that for all $\mathbf{k} \in \Gamma^*_L$ and for all $u \in H^1_\mathbf{k}(\Gamma_L)$, we have
		\begin{align}
		\label{eq: kato_type_inequality2}
		\pdtsc{u}{\abs{V_L} u}_{L^2_\mathbf{k}(\Gamma_L)} \leq \epsilon \norme{\nabla u}_{L^2_\mathbf{k}(\Gamma_L)}^2 + C(\epsilon,L) \norme{u}^2_{L^2_\mathbf{k}(\Gamma_L)} \pt
		\end{align}
		For any $r\in\intoo{p}{\infty}$, we can take $C(\epsilon,L) = \grandO{\epsilon^{-\frac{2r}{p}}L^{M\frac{p+r}{p}}}$ where $M$ is the constant appearing in~\eqref{rem: grow_polynomial}.
	\end{prop}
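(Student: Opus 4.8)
This is the statement that $V_L$ is $(-\Delta)$-form bounded with a relative bound that can be made arbitrarily small, the real content being that the additive constant stays polynomial in $L$. Since $V_L$ lies in $\lper^p(\Gamma_L)$ with $p>1$ — the Kato threshold in dimension two being $d/2=1$ — the relative form-boundedness is classical in spirit; the plan is to localise everything on a single period cell, establish there a Gagliardo--Nirenberg--Sobolev inequality in ``form-bound'' shape with explicit dependence on the small parameter and on the cell diameter, and then sum over the lattice. (Equivalently one may localise $V_L$ through a partition of unity subordinate to the cells and invoke the Kato--Seiler--Simon-type bound of Lemma~\ref{lemma: regularity_aux} with $\alpha=1/2$, but the Sobolev route is more self-contained.)

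For \eqref{eq: kato_type_inequality1}, decompose $\R^2=\bigsqcup_{\mathbf{u}\in\mathscr{L}_L}(\mathbf{u}+\Gamma_L)$. On each translated Wigner--Seitz cell, Hölder's inequality with exponents $p$ and $p'=p/(p-1)$ gives $\int_{\mathbf{u}+\Gamma_L}\abs{V_L}\abs{u}^2\leq\normeL{V_L}_{\lper^p(\Gamma_L)}\normeL{u}_{L^{2p'}(\mathbf{u}+\Gamma_L)}^2$. Since $p'<\infty$, the embedding $H^1\hookrightarrow L^{2p'}$ holds on a cell, and combining it with Young's inequality yields, for every $\delta>0$,
\[
\normeL{w}_{L^{2p'}(\Gamma_L)}^2\leq\delta\normeL{\nabla w}_{L^2(\Gamma_L)}^2+C_\delta(L)\normeL{w}_{L^2(\Gamma_L)}^2 \, ,
\]
with $C_\delta(L)$ a negative power of $\delta$ times a power of $L$ coming from the rescaling $\mathbf{x}\mapsto L\mathbf{x}$. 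Summation over $\mathbf{u}$ is legitimate because $\sum_\mathbf{u}\normeL{\nabla u}_{L^2(\mathbf{u}+\Gamma_L)}^2=\normeL{\nabla u}_{L^2(\R^2)}^2$ and likewise for the $L^2$ norms; choosing $\delta=\epsilon/\normeL{V_L}_{\lper^p(\Gamma_L)}$ then produces \eqref{eq: kato_type_inequality1}. Finally one invokes \eqref{rem: grow_polynomial} — which itself follows from Assumptions~\ref{hypo_04} and~\ref{hypo_4} — to replace $\normeL{V_L}_{\lper^p(\Gamma_L)}$ by $\grandO{L^M}$; keeping track of the resulting powers of $\epsilon$ and $L$ gives the announced monomial bound, the freedom $r\in\intoo{p}{\infty}$ reflecting the use of a deliberately non-sharp Sobolev exponent in place of the scaling-optimal Gagliardo--Nirenberg inequality in order to get a clean final constant.

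For the pseudo-periodic estimate \eqref{eq: kato_type_inequality2}, I would reduce to the very same one-cell computation by the diamagnetic inequality: for $u\in H^1_\mathbf{k}(\Gamma_L)$ the modulus $\abs{u}$ is $\mathscr{L}_L$-periodic, belongs to $H^1(\Gamma_L)$ with $\normeL{\nabla\abs{u}}_{L^2(\Gamma_L)}\leq\normeL{\nabla u}_{L^2_\mathbf{k}(\Gamma_L)}$ and $\normeL{\abs{u}}_{L^2(\Gamma_L)}=\normeL{u}_{L^2_\mathbf{k}(\Gamma_L)}$, so it suffices to prove the inequality for non-negative periodic $H^1$ functions on one cell, which is precisely the estimate already established. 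The only genuine work is the quantitative single-cell Gagliardo--Nirenberg--Young estimate and the honest tracking of its dependence on $\epsilon$ and on $L$; everything else is a summation or a soft reduction, and the one delicate point is arranging the Hölder and Sobolev steps so that the constant comes out exactly as $\epsilon^{-2r/p}L^{M(p+r)/p}$ for every $r>p$.
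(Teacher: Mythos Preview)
Your argument for the form bounds \eqref{eq: kato_type_inequality1}--\eqref{eq: kato_type_inequality2} is correct and close in spirit to the paper's, though organised differently. The paper first truncates $|V_L|$ at a height $A$, handles the bounded part trivially by $A\norme{u}^2$, and for the unbounded part uses H\"older with an exponent $q>p$ together with the lattice Sobolev bound $\sum_{\mathbf{u}}\norme{u}^2_{L^{2q'}(\mathbf{u}+\Gamma_L)}\lesssim\norme{u}^2_{H^1(\R^2)}$; you instead work directly at exponent $p$ and absorb the smallness into a Gagliardo--Nirenberg--Young splitting on each cell. Your diamagnetic reduction for \eqref{eq: kato_type_inequality2} is a clean alternative to the paper's direct use of the quasi-periodic Sobolev embedding $H^1_\mathbf{k}(\Gamma_L)\hookrightarrow L^{2q'}_\mathbf{k}(\Gamma_L)$ with $L$-independent constant.

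The gap is in the final quantitative statement. Your claim that tracking constants yields exactly $C(\epsilon,L)=\grandO{\epsilon^{-2r/p}L^{M(p+r)/p}}$ with a free parameter $r>p$ is not substantiated: your route, carried out honestly, produces a single bound of order $\epsilon^{-1/(p-1)}L^{Mp/(p-1)}$ (from $C_\delta\sim\delta^{-p'/p}$ and $\delta=\epsilon/\norme{V_L}_{\lper^p}$), and for $p<2$ this does not dominate the stated bound when $r\in\intoo{p}{p'}$. In the paper the parameter $r$ enters precisely through the truncation step you omit: with $1/q=\tfrac12(1/p+1/r)$ one combines H\"older and Tchebychev to get
\[
\norme{V_L\mathds{1}_{\abs{V_L}\geq A}}_{\lper^q(\Gamma_L)}\leq\norme{V_L}_{\lper^p(\Gamma_L)}^{(p+r)/(2r)}A^{-p/(2r)} \, ,
\]
and solving for the threshold $A$ that makes this $\leq\epsilon$ gives $A\sim\epsilon^{-2r/p}L^{M(p+r)/p}$, which becomes $C(\epsilon,L)$. ``Using a deliberately non-sharp Sobolev exponent'' cannot by itself manufacture $r$: pairing $\abs{u}^2$ with $\abs{V_L}$ at an exponent $q>p$ requires control of $\norme{V_L}_{\lper^q}$, which is only available after the level-set truncation.
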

	\begin{proof}
		We start with the proof of \eqref{eq: kato_type_inequality1}. Let $A>0$ and $u \in H^1(\R^2)$. Let $q \in \intoo{p}{\infty}$ and $q' \in \intoo{1}{\frac{p}{p-1}}$ such that $1/q+1/q'=1$. Using the periodicity of $V_L$ and Hölder's inequality, we write
		\begin{align}
		\label{eq: kato_inequality_11}
		\pdtsc{u}{\abs{V_L} u}_{L^2(\R^2)} \leq \norme{V_L \mathds{1}_{\abs{V_L} \geq A}}_{\lper^q(\Gamma_L)} \left(\sum_{\mathbf{u} \in \lscr} \norme{u}^2_{L^{2q'} (\Gamma_L+L\mathbf{u})}\right) + A \norme{u}^2_{L^2(\R^2)} \pt
		\end{align}
		Adapting the proof of~\cite[Theorem XIII.96]{reed1978methodsIV}, one can show the estimate
		\begin{align}
		\label{eq: kato_inequality_12}
		\sum_{\mathbf{u} \in \lscr} \norme{u}^2_{L^{2q'} (\Gamma_L+L\mathbf{u})} \lesssim \norme{u}^2_{H^1(\R^2)} \pt
		\end{align}
		We insert \eqref{eq: kato_inequality_12} into \eqref{eq: kato_inequality_11} then we can choose $A$ large enough such that \eqref{eq: kato_type_inequality1} holds.
		To show \eqref{eq: kato_type_inequality2}, we write for $A>0$ and $u \in H^1_\mathbf{k}(\Gamma_L)$
		\begin{align*}
		\pdtsc{u}{\abs{V_L} u}_{L^2_\mathbf{k}(\Gamma_L)} \leq \norme{V_L \mathds{1}_{\abs{V} \geq A}}_{\lper^q(\Gamma_L)} \norme{u}^2_{\lk^{2q'}(\Gamma_L)} + A \norme{u}^2_{\lk^2(\Gamma_L)} \pt
		\end{align*}
		We conclude using the Sobolev embedding $H^1_\mathbf{k}(\Gamma_L) \subset \lk^{2q'}(\Gamma_L)$ (where the constant of continuity does not depend on $L$ by~\cite[Theorem 2.28]{aubin1998somenonlinear}) then taking $A$ large enough. For the last statement, we consider $r>p$ such that $\frac{1}{q} = \frac{1}{2}\left(\frac{1}{p} + \frac{1}{r}\right)$ and use Hölder's inequality and Tchebychev's inequality
		\begin{align*}
		\norme{V_L \mathds{1}_{\abs{V_L} \geq A}}_{\lper^q(\Gamma_L)} 
		\leq \sqrt{\norme{V_L}_{\lper^p(\Gamma_L)} \norme{\mathds{1}_{\abs{V_L} \geq A}}_{\lper^r(\Gamma_L)}}
		\leq \norme{V_L}_{\lper^p(\Gamma_L)}^{\frac{r+p}{2r}} A^{-\frac{p}{2r}} \lesssim L^{M\frac{r+p}{2r}} A^{-\frac{p}{2r}}\, ,
		\end{align*}
		where $M\in\R$ is the constant appearing in~\eqref{rem: grow_polynomial}. This concludes the proof of Proposition~\ref{lemma: kato_type_inequality}.	
	\end{proof}
	\begin{rem}
		\label{rem: kato_type_inequality_consequence}
		In some cases, the constant $C(\epsilon,L)$ appearing in \eqref{eq: kato_type_inequality1} or \eqref{eq: kato_type_inequality2} does not depend on $L$, for instance, if there exists $C >0$ such that for any $\mathbf{r} \in \mathbf{R}$ we have $\abs{V_L(\mathbf{x-r})} \leq C \abs{\mathbf{x-r}}^{-1}$ in some neighborhood of $\mathbf{r}$.
	\end{rem}
	As a consequence of Proposition~\ref{lemma: kato_type_inequality}, the closure of the operator $(-\Delta+1)^{-1/2} V_L(-\Delta+1)^{-1/2}$ is bounded with norm at most polynomial in $L$.
	\begin{cor}
		\label{cor: kato_type_inequality_consequence}
		We have
		\begin{align}
		\label{eq: kato_type_inequality_consequence}
		\normeL{\frac{1}{\sqrt{-\Delta+1}} V_L \frac{1}{\sqrt{-\Delta +1}}}_{\mathcal{B}(L^2(\R^2))} = \grandO{ L^M} \text{ and }\normeL{\frac{1}{\sqrt{-\Delta+1}} V_L \frac{1}{\sqrt{-\Delta +1}}}_{\mathcal{B}(\lk^2(\Gamma_L))} = \grandO{L^M} \, ,
		\end{align}
		where the second $O$ is uniform in $\mathbf{k}$.
	\end{cor}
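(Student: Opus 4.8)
The statement follows readily from the Kato inequalities of Proposition~\ref{lemma: kato_type_inequality}; the plan is simply to transport those quadratic-form bounds through the resolvent $(-\Delta+1)^{-1/2}$. Since $V_L$ is real-valued, the operator $(-\Delta+1)^{-1/2}V_L(-\Delta+1)^{-1/2}$ is symmetric on its natural domain, so it is enough to control its quadratic form and then pass from that form bound to a genuine bounded operator. For $u\in L^2(\R^2)$ I would set $v\coloneqq(-\Delta+1)^{-1/2}u\in H^1(\R^2)$; then
\begin{align*}
\pdtsc{u}{(-\Delta+1)^{-1/2}V_L(-\Delta+1)^{-1/2}u}_{L^2(\R^2)}=\pdtsc{v}{V_L v}_{L^2(\R^2)}\et\norme{\nabla v}_{L^2(\R^2)}^2+\norme{v}_{L^2(\R^2)}^2=\norme{u}_{L^2(\R^2)}^2\pt
\end{align*}
By \eqref{eq: kato_type_inequality1} the quantity $\pdtsc{v}{\abs{V_L}v}$ is finite for every $v\in H^1(\R^2)$, so this manipulation is legitimate for all $u\in L^2(\R^2)$.

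First I would treat the $L^2(\R^2)$ bound. Bounding $\absL{\pdtsc{v}{V_L v}}\leq\pdtsc{v}{\abs{V_L}v}$ and applying \eqref{eq: kato_type_inequality1} with, say, $\epsilon=1$ yields
\begin{align*}
\absL{\pdtsc{v}{V_L v}_{L^2(\R^2)}}\leq\norme{\nabla v}_{L^2(\R^2)}^2+C(1,L)\norme{v}_{L^2(\R^2)}^2\leq\max(1,C(1,L))\,\norme{u}_{L^2(\R^2)}^2\pt
\end{align*}
By the last assertion of Proposition~\ref{lemma: kato_type_inequality} together with the polynomial growth \eqref{rem: grow_polynomial}, the factor $\max(1,C(1,L))$ is $\grandO{L^M}$ (here $M$ denotes a possibly larger exponent, controlled by the $M$ of \eqref{rem: grow_polynomial} and by $p$; an optimal choice of the truncation level in the proof of Proposition~\ref{lemma: kato_type_inequality} would even lower this power). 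To upgrade this to an operator bound, one uses the Cauchy--Schwarz inequality for the nonnegative form $(v_1,v_2)\mapsto\pdtsc{v_1}{\abs{V_L}v_2}$, which gives $\absL{\pdtsc{v_1}{V_L v_2}}\leq\pdtsc{v_1}{\abs{V_L}v_1}^{1/2}\pdtsc{v_2}{\abs{V_L}v_2}^{1/2}\lesssim L^M\norme{u_1}_{L^2(\R^2)}\norme{u_2}_{L^2(\R^2)}$; hence by the Riesz representation theorem the sesquilinear form $(u_1,u_2)\mapsto\pdtsc{v_1}{V_L v_2}$ is represented by a bounded self-adjoint operator of norm $\grandO{L^M}$, which is precisely the closure of $(-\Delta+1)^{-1/2}V_L(-\Delta+1)^{-1/2}$. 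This is the first estimate in \eqref{eq: kato_type_inequality_consequence}.

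The estimate on $\lk^2(\Gamma_L)$ is obtained by repeating this argument verbatim with \eqref{eq: kato_type_inequality2} in place of \eqref{eq: kato_type_inequality1}. The only extra point is the uniformity in $\mathbf{k}$, i.e.\ that the constant $C(\epsilon,L)$ in \eqref{eq: kato_type_inequality2} may be chosen independently of $\mathbf{k}\in\Gamma_L^*$: the Sobolev embedding invoked to prove \eqref{eq: kato_type_inequality2} becomes $\mathbf{k}$-independent after conjugating by $e^{-i\mathbf{k}\cdot\mathbf{x}}$, which turns $\mathbf{k}$-pseudoperiodic functions into $\lscr_L$-periodic ones, preserves $\abs{\cdot}$ pointwise, and changes the gradient only by the additive term $\abs{\mathbf{k}}\lesssim L^{-1}\leq 1$ on $\Gamma_L^*=L^{-1}\Gamma^*$. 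There is no genuine obstacle here: the whole substance sits in Proposition~\ref{lemma: kato_type_inequality}, and the only steps requiring a little care are this $\mathbf{k}$-uniformity and the density/closure technicality used to pass from the quadratic-form bound to a bounded operator.
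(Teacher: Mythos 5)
Your argument is correct in structure and reaches the conclusion by a genuinely different route from the paper. Where the paper factors $(-\Delta+1)^{-1/2}V_L(-\Delta+1)^{-1/2}$ as $B\sign(V_L)A$ with $A=\sqrt{\abs{V_L}}(-\Delta+1)^{-1/2}$ bounded and $B=(-\Delta+1)^{-1/2}\sqrt{\abs{V_L}}$ shown closable with $B^*=A$, you instead bound the quadratic form $u\mapsto\pdtsc{v}{V_L v}$ with $v=(-\Delta+1)^{-1/2}u$, then use Cauchy--Schwarz for the nonnegative form $\pdtsc{\cdot}{\abs{V_L}\cdot}$ and Riesz representation to produce a bounded self-adjoint operator. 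Both are legitimate ways to promote a form bound to an operator bound; yours is a bit more economical, the paper's a bit more constructive. The $\mathbf{k}$-uniformity comment is also correct.

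There is, however, one genuine gap: the exponent. Taking $\epsilon=1$ in \eqref{eq: kato_type_inequality1} gives $C(1,L)=\grandO{L^{M\frac{p+r}{p}}}$ by the last assertion of Proposition~\ref{lemma: kato_type_inequality}, and $\frac{p+r}{p}>1$, so this is strictly \emph{worse} than $\grandO{L^M}$. The corollary asserts $\grandO{L^M}$ with $M$ being precisely the exponent from \eqref{rem: grow_polynomial}, so your parenthetical ``here $M$ denotes a possibly larger exponent'' is not compatible with the statement being proved. To actually obtain $\grandO{L^M}$ you must optimize over $\epsilon$: taking $\epsilon(L)=L^{M\left(1-\frac{r}{p+2r}\right)}$ makes the two terms in the Kato inequality balance, each becoming $\grandO{L^{M\frac{p+r}{p+2r}}}\norme{u}^2$, and $\frac{p+r}{p+2r}<1$, so this is indeed $\grandO{L^M}$. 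This optimization is exactly the first line of the paper's proof, and it is the substantive content; the rest is operator-theoretic bookkeeping. (Downstream the corollary is only ever used to produce a polynomial-in-$L$ factor that gets absorbed by exponential smallness, so your weaker bound would in fact suffice for the applications, but that does not make it a proof of the stated corollary.)
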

	\begin{proof}
		Indeed, we use the estimate $C(L,\epsilon) \lesssim \epsilon^{-\frac{2r}{p}}L^{M\frac{p+r}{p}}$ for some $r>p$ and we take $\epsilon = L^{M\left(1 - \frac{r}{p+2r}\right)}$ to obtain
		\begin{align*}
		\forall \psi \in L^2(\R^2),\quad \normeL{\sqrt{\abs{V_L}} \frac{1}{\sqrt{-\Delta+1}}\psi}^2_{L^2(\R^2)} \lesssim  L^{M\left(1 - \frac{r}{p+2r}\right)} \norme{\psi}^2_{L^2(\R^2)} \lesssim L^M \norme{\psi}^2_{L^2(\R^2)}\pt
		\end{align*}
		Consequently, the operator $A \coloneqq \sqrt{\abs{V_L}} \frac{1}{\sqrt{-\Delta+1}}$ is bounded on $L^2(\R^2)$. Let $B \coloneqq \frac{1}{\sqrt{-\Delta+1}}\sqrt{\abs{V_L}}$ defined on the domain $\mathcal{D}(\sqrt{\abs{V_L}}) = \enstq{\psi \in L^2(\R^2)}{\sqrt{\abs{V_L}} \psi \in L^2(\R^2)}$. The operator $B$ is densely defined and closed on $L^2(\R^2)$ (see~\cite[Example 3.8]{schmugden2012unbounded}). By~\cite[Proposition 1.7]{schmugden2012unbounded}, we have $B^* = A$. This implies that $B^{**} = \overline{B} \in \mathcal{B}(L^2(\R^2))$ and we deduce that the operator
		\[B \sign(V_L) A = \frac{1}{\sqrt{-\Delta +1}}V_L \frac{1}{\sqrt{-\Delta+1}}\, ,\]
		admits a bounded extension on $L^2(\R^2)$ with norm at most $\grandO{L^M}$. The proof of the right side of~\eqref{eq: kato_type_inequality_consequence} is the same.
	\end{proof}
	
	\subsection{Exponential bounds on \texorpdfstring{$v$}{v}}
	
	In the next proposition, we give pointwise and integral exponential bounds on the first eigenfunction $v$ of the self-adjoint operator $H=-\Delta+V$ where the potential $V$ is the reference potential appearing in Assumption~\ref{hypo_3}.
	
	Recall that we can choose the phase of $v$ such that $v>0$ by~\cite{goelden1977nondegeneracy}. By Proposition~\ref{prop: regularity}, we also have $v \in H^{\min(p,2)}(\R^2)$. In particular, the Sobolev embeddings imply that $v \in L^\infty(\R^2) \cap \mathcal{C}_0^{0,\min(p-1,1)}(\R^2)$ where $\mathcal{C}_0^{\ell,\theta}(\R^2)$ denotes the spaces of $\mathcal{C}^\ell(\R^2)$ functions which vanish at infinity as well as their first $\ell$ derivatives and such that the derivatives of order $\ell$ are Hölder continuous with exponent $\theta\in\intff{0}{1}$.
	\begin{prop}[Exponential bounds on $v$]
		\label{prop: exponential_decay_v}
		There exists a constant $C>0$ such that
		\begin{gather*}
		\forall \mathbf{x} \in \R^2,\quad\frac{1}{C} \frac{e^{-\sqrt{\mu} \abs{\mathbf{x}}}}{1+\sqrt{\abs{\mathbf{x}}}} \leq v(\mathbf{x}) \leq C\frac{e^{-\sqrt{\mu} \abs{\mathbf{x}}}}{1+\sqrt{\abs{\mathbf{x}}}} \, ,\\
		\forall R \geq 0,\quad \int_{\abs{\mathbf{x}} \geq R} \left(\lvert v(\mathbf{x}) \rvert^2+ \lvert \nabla v(\mathbf{x}) \rvert^2\right) \diff \mathbf{x} \leq Ce^{-2\sqrt{\mu}R} \pt
		\end{gather*}
	\end{prop}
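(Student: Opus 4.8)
The plan is to establish the two-sided \emph{pointwise} bound by a barrier argument and to deduce the integral bound from it. Indeed, integrating $v(\mathbf x)^2 \lesssim e^{-2\sqrt{\mu}\abs{\mathbf x}}(1+\abs{\mathbf x})^{-1}$ over $\{\abs{\mathbf x}\ge R\}$ already gives $\int_{\abs{\mathbf x}\ge R}\abs{v}^2 \lesssim e^{-2\sqrt{\mu}R}$, while the gradient estimate follows from a Caccioppoli inequality: testing $-\Delta v=-(V+\mu)v$ against $\eta^2 v$, where $\eta$ equals $1$ on $\{\abs{\mathbf x}\ge R\}$, vanishes on $\{\abs{\mathbf x}\le R-1\}$ and has bounded gradient, yields $\int_{\abs{\mathbf x}\ge R}\abs{\nabla v}^2 \lesssim \int_{\abs{\mathbf x}\ge R-1}\abs{v}^2 \lesssim e^{-2\sqrt{\mu}R}$ for $R$ large (using that $V$ is bounded on $\{\abs{\mathbf x}\ge R-1\}$ by Assumption~\ref{hypo_32}), the range of bounded $R$ being trivial since $v\in H^1(\R^2)$. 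So everything reduces to the pointwise bound, which I would obtain using the two-dimensional Yukawa Green function
\begin{align*}
G_\mu(\mathbf x) \coloneqq (-\Delta+\mu)^{-1}\delta_0 = \frac{1}{2\pi}\,K_0\bigl(\sqrt{\mu}\,\abs{\mathbf x}\bigr) \, ,
\end{align*}
$K_0$ being the modified Bessel function of the second kind. Recall $K_0(s)=\sqrt{\pi/(2s)}\,e^{-s}\bigl(1+O(s^{-1})\bigr)$ as $s\to\infty$ and $K_0(s)\sim-\log s$ as $s\to0$, so that $G_\mu(\mathbf x)\asymp e^{-\sqrt{\mu}\abs{\mathbf x}}(1+\sqrt{\abs{\mathbf x}})^{-1}$ uniformly on $\{\abs{\mathbf x}\ge1\}$, while $(-\Delta+\mu)G_\mu=0$ on $\R^2\setminus\{0\}$ and the radial logarithmic derivative of $G_\mu$ tends to $-\sqrt{\mu}$ at infinity.

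For the upper bound, fix $\epsilon>0$ as in Assumption~\ref{hypo_32}, so $\abs{V(\mathbf x)}\le C\abs{\mathbf x}^{-1-\epsilon}$ for $\abs{\mathbf x}$ large. I would use the barrier $\Phi_+\coloneqq G_\mu\bigl(1-A\abs{\cdot}^{-\epsilon}\bigr)$. A direct computation using $(-\Delta+\mu)G_\mu=0$ and the asymptotics of $G_\mu$ gives, for $\abs{\mathbf x}$ large,
\begin{align*}
(-\Delta+V+\mu)\Phi_+ = G_\mu\Bigl[\,2\sqrt{\mu}\,A\epsilon\,\abs{\mathbf x}^{-1-\epsilon}\bigl(1+o(1)\bigr) + A\epsilon^2\abs{\mathbf x}^{-2-\epsilon} + V\bigl(1-A\abs{\mathbf x}^{-\epsilon}\bigr)\,\Bigr] \, ,
\end{align*}
and since $\abs{V}=o(\abs{\mathbf x}^{-1})$, choosing $A>0$ large enough (depending only on $C,\mu,\epsilon$) makes the bracket nonnegative for $\abs{\mathbf x}\ge R_1$, where $R_1$ is also taken large enough that $\Phi_+>0$ and $\mu+V>0$ on $\{\abs{\mathbf x}\ge R_1\}$. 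On that set $v$ solves $(-\Delta+V+\mu)v=0$ and is $C^{1,\alpha}$ by elliptic regularity, and $v\to0$ at infinity; so with $M\coloneqq\bigl(\max_{\abs{\mathbf x}=R_1}v\bigr)/\bigl(\min_{\abs{\mathbf x}=R_1}\Phi_+\bigr)<\infty$, the function $M\Phi_+-v$ is a supersolution of $-\Delta+V+\mu$ on $\{\abs{\mathbf x}>R_1\}$, is nonnegative on $\{\abs{\mathbf x}=R_1\}$ and tends to $0$ at infinity, hence is nonnegative there by the maximum principle (applied on annuli $\{R_1\le\abs{\mathbf x}\le R'\}$ and letting $R'\to\infty$, using $\Phi_+,v\to0$). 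Therefore $v\le M\Phi_+\le MG_\mu\lesssim e^{-\sqrt{\mu}\abs{\mathbf x}}(1+\sqrt{\abs{\mathbf x}})^{-1}$ for $\abs{\mathbf x}\ge R_1$, while on the compact set $\{\abs{\mathbf x}\le R_1\}$ the bound is trivial since $v$ is bounded and $e^{-\sqrt{\mu}\abs{\mathbf x}}(1+\sqrt{\abs{\mathbf x}})^{-1}$ is bounded below by a positive constant.

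The lower bound would be symmetric, using that the ground state $v$ is everywhere positive~\cite{goelden1977nondegeneracy}: the same computation shows $\Phi_-\coloneqq G_\mu\bigl(1+A\abs{\cdot}^{-\epsilon}\bigr)$ with $A>0$ large is a positive \emph{subsolution} of $-\Delta+V+\mu$ on some $\{\abs{\mathbf x}\ge R_2\}$, with $G_\mu\le\Phi_-\le2G_\mu$ there, and with $c\coloneqq\bigl(\min_{\abs{\mathbf x}=R_2}v\bigr)/\bigl(\max_{\abs{\mathbf x}=R_2}\Phi_-\bigr)>0$ the function $v-c\Phi_-$ is a supersolution on $\{\abs{\mathbf x}>R_2\}$, nonnegative on $\{\abs{\mathbf x}=R_2\}$ and vanishing at infinity, hence nonnegative; this gives $v\ge cG_\mu\gtrsim e^{-\sqrt{\mu}\abs{\mathbf x}}(1+\sqrt{\abs{\mathbf x}})^{-1}$ for $\abs{\mathbf x}\ge R_2$, and the compact part is again trivial since $e^{-\sqrt{\mu}\abs{\mathbf x}}(1+\sqrt{\abs{\mathbf x}})^{-1}\le1$. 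Taking the larger of the two constants yields the stated $C$.

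The delicate point is to reach the \emph{exact} rate $\sqrt{\mu}$ \emph{and} the \emph{sharp} algebraic prefactor $(1+\sqrt{\abs{\mathbf x}})^{-1}$ together: a crude Agmon-type weighted estimate only delivers $e^{-(\sqrt{\mu}-\eta)\abs{\mathbf x}}$ for each $\eta>0$, and a bare exponential $e^{-\sqrt{\mu}\abs{\mathbf x}}$ is useless as a barrier because $(-\Delta+\mu)e^{-\sqrt{\mu}\abs{\mathbf x}}$ has the wrong sign in dimension two. The correct comparison object is the Green function $G_\mu$ itself — its $\abs{\mathbf x}^{-1/2}$ factor being the asymptotics of $K_0$ — and the role of the strict decay $V=O(\abs{\mathbf x}^{-1-\epsilon})$ in Assumption~\ref{hypo_32} is precisely that it makes $V$ negligible against the $O(\abs{\mathbf x}^{-1-\epsilon})$ remainder created by multiplying $G_\mu$ by $1\mp A\abs{\mathbf x}^{-\epsilon}$; with only $\abs{\mathbf x}^{-1}$ decay the prefactor would in general be anomalous.
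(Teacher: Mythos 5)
Your proof is correct, and for the integral bounds it takes essentially the same route as the paper: the $L^2$ bound on $v$ is pulled from the pointwise one, and the bound on $\nabla v$ comes from testing $-\Delta v+\mu v=-Vv$ against a cutoff (the paper uses $\eta v$, you use $\eta^2 v$; both give the needed Caccioppoli-type estimate since $V$ is bounded far from the origin by Assumption~\ref{hypo_32}).

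The difference is that for the pointwise two-sided bound the paper simply cites~\cite[Corollary 2.2 \& Remark 2.2]{hoffmann1985pointwise}, whereas you reconstruct a self-contained barrier argument. Your argument is the right one, and it is in the spirit of both that reference and of the comparison-theorem argument~\cite{hoffmann1980comparison} the paper itself invokes later to prove Proposition~\ref{prop: exponential_decay_v_tilde} (there the barrier is $Y_\pm=K_0\bigl((\mu\pm\epsilon)^{1/2}\abs{\cdot}\bigr)$; you instead tilt the exact-rate barrier $G_\mu=\tfrac{1}{2\pi}K_0(\sqrt{\mu}\abs{\cdot})$ by the multiplicative factor $1\mp A\abs{\mathbf{x}}^{-\epsilon}$, which is what lets you keep the sharp rate $\sqrt{\mu}$ and the sharp $\abs{\mathbf{x}}^{-1/2}$ prefactor simultaneously rather than sacrificing an $\epsilon$ in the rate). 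Your supersolution/subsolution computation and the Phragm\'en--Lindel\"of-type passage to the limit on annuli are correct, as is your observation that the decay hypothesis $V=O(\abs{\mathbf{x}}^{-1-\epsilon})$ is exactly what makes the $V$-term absorbable into the $O(\abs{\mathbf{x}}^{-1-\epsilon})$ gain produced by the tilt. So the proposal is a valid, somewhat more explicit route to a result the paper outsources to the literature; it buys self-containedness at the cost of a page of computation.
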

	\begin{proof}
		The proofs of the pointwise bounds can be found in~\cite[Corollary 2.2 \& Remark 2.2]{hoffmann1985pointwise}.		
		The integral bound on $v$ is a direct consequence of the pointwise ones. To get the integral bound on $\nabla v$, we multiply $-\Delta v + \mu v = -V v$ by $\eta v$ where $\eta \equiv 1$ on $\abs{\mathbf{x}} \geq R$ and $\eta \equiv 0$ on $\abs{\mathbf{x}} \leq R - \delta$ for $\delta >0$ small enough and we integrate by parts.
	\end{proof}
	
	\subsection{Properties of the mono-atomic operators}\label{sec:properties-of-the-mono-atomic-operators}
	
	We recall that the effective potential $V_{L,\mathbf{r}}$ and its associated mono-atomic Schrödinger operator $H_{L,\mathbf{r}}=-\Delta+V_{L,\mathbf{r}}$ are defined in \eqref{eq: definition_Vlr} and~\eqref{eq: effective_mono_atomic_operator}. In this section, we give spectral properties of this operator. Because they are standard, most statements are given without proof, see~\cite{reed1975methodsII, reed1978methodsIV, cycon1987schrodinger}.
	
	The function $V_{L,\mathbf{r}}$ belongs to the Kato class and we can consider its Friedrichs extension which defines a self-adjoint operator on the domain
	\begin{align*}
	\mathcal{D}\left(H_{L,\mathbf{r}}\right) = \enstqbis{u \in H^1(\R^2)}{\left(-\Delta + V_{L,\mathbf{r}}\right)u \in L^2(\R^2)} \pt
	\end{align*}
	The essential spectrum of $H_{L,\mathbf{r}}$ is given by $\sigma_\mathrm{ess}(H_{L,\mathbf{r}}) = \intfo{0}{\infty}$. Since $H$ has a negative eigenvalue by  Assumption~\ref{hypo_4}, it follows from perturbation theory  (see Corollary~\ref{cor:perturbation_theory} and~\cite[Chapter XII]{reed1978methodsIV}) that the discrete spectrum of $H_{L,\mathbf{r}}$ is non-empty and that the lowest eigenvalue of $H_{L,\mathbf{r}}$ is non-degenerate for $L$ large enough.
	
	Now, we denote by $s_L$ the scaling operator by $L$ defined by $s_L \mathbf{x} = L\mathbf{x}$. Then we can write $\lscr_L^\mathbf{R} = s_L[\lscr^\mathbf{R}]$ and the group $G_L \coloneqq s_L G s_L^{-1} \subset E_2(\R)$. By Assumption~\ref{hypo_2} and the fact that $\chi$ is radial, we have 
	\begin{align*}
	\forall g \in G,\quad \forall \mathbf{r} \in \lscr^\mathbf{R},\quad (s_Lgs_L^{-1})\cdot V_{L,\mathbf{r}} = V_{L,g\cdot \mathbf{r}} \pt
	\end{align*}
	In particular, we notice that the action of $G_L$ on the set $\{V_{L,\mathbf{r}}\}_{\mathbf{r} \in \lscr^\mathbf{R}}$ only operates on the labels $\mathbf{r}\in\lscr^\mathbf{R}$ and, as a consequence of Assumption~\ref{hypo_1}, this action is also transitive. Therefore, the operators $\{H_{L,\mathbf{r}}\}_{\mathbf{r} \in \mathscr{L}^\mathbf{R}}$ are unitarily equivalent and share the same spectrum. We denote their common discrete spectrum by
	\begin{align*}
	\sigma_\mathrm{d}(H_{L,\mathbf{r}}) = \{ - \mu_{1,L} < - \mu_{2,L} \leq -\mu_{3,L} \leq \dots \leq 0 \} \subset \intof{-\infty}{0},\quad \forall \mathbf{r} \in \mathscr{L}^\mathbf{R} \pt
	\end{align*}
	To lighten the notations, we will write $\mu_L$ instead of $\mu_{1,L}$. We recall that we denote by $v_{L,\mathbf{r}}$ the normalized eigenfunction of $H_{L,\mathbf{r}}$ associated with $-\mu_L$. We can choose the overall phase of $v_{L,\mathbf{r}}$ such that $v_{L,\mathbf{r}}>0$ and we have $v_{L,\mathbf{r}} \in  H^{\min(p,2)}(\R^2) \cap \mathcal{C}_0^{0,\min(p-1,1)}(\R^2)$. 
	Moreover, by the non-degeneracy of $- \mu_L$, the group $G_L$ also acts transitively on the set $\{v_{L,\mathbf{r}}\}_{\mathbf{r} \in \mathscr{L}^\mathbf{R}}$ and, as for $V_{L,\mathbf{r}}$, operates only on the labels $\mathbf{r} \in \lscr^\mathbf{R}$
	\begin{align}
	\label{eq: action_v}
	\forall g \in G,\quad \forall \mathbf{r} \in \lscr^\mathbf{R},\quad (s_Lgs_L^{-1})\cdot v_{L,\mathbf{r}} =  v_{L,g \cdot\mathbf{r}} \pt
	\end{align}
	In particular, these relations imply that the interaction coefficient $\theta_{L,k}$ defined in \eqref{eq: interaction_coefficient} does not depend on the choice of the pair $\mathbf{p}_k \in\mathcal{O}_k$. 
	
	Assumption~\ref{hypo_4} implies that the first eigenvalue/eigenfunction pair of $H_{L,\mathbf{r}}$ converges, up to a translation by $L\mathbf{r}$, to the one of $H$ (see for instance Corollary~\ref{cor:perturbation_theory} from the Appendix~\ref{sec:perturbation-theory-for-singular-potentials}):
	\begin{align}
	\label{eq: closedness_modified_original_operator}
	\lim\limits_{L\to\infty}\mu_L = \mu  \et \forall \mathbf{r} \in \mathscr{L}^\mathbf{R},\quad \lim\limits_{L\to\infty}\norme{v_{L,\mathbf{r}} - v(\cdot - L\mathbf{r})}_{H^{\min(p,2)}(\R^2)} = 0 \pt
	\end{align}
	Notice that by the Sobolev embeddings, we also have convergence in $L^\infty(\R^2)$
	\begin{align}
	\label{eq:convergence_uniform}
	\lim\limits_{L\to\infty}\norme{v_{L,\mathbf{r}} - v(\cdot - L\mathbf{r})}_{L^\infty(\R^2)} = 0 \pt
	\end{align}
	Also, the right side of \eqref{eq: closedness_modified_original_operator} implies that $v$ is invariant under the action of the point group of $G_L$. Also, if we denote by $g_L$ the spectral gap of $H_{L,\mathbf{r}}$ above its lowest eigenvalue $-\mu_L$ then we have $g_L= g + \petito{1}$.
	
	\subsection{Uniform exponential bounds on \texorpdfstring{$v_{L,\mathbf{r}}$}{vLr}}
	
	In the next proposition, we give exponential bounds on $v_{L,\mathbf{r}}$. We use Assumption~\ref{hypo_4} to show they are independent from $L$.
	\begin{prop}[Uniform exponential bounds on $v_{L,\mathbf{r}}$]
		\label{prop: exponential_decay_v_tilde}
		Let $\epsilon \in\intoo{0}{1}$. There exists $L_\epsilon\geq 1$ and $C_\epsilon >0$ such that for all $L \geq L_\epsilon$ we have
		\begin{gather}
		\notag
		\forall \mathbf{x} \in \R^2,\quad \forall \mathbf{r} \in \lscr^\mathbf{R},\quad \frac{1}{C_\epsilon} e^{-(1 + \epsilon)\sqrt{\mu} \abs{\mathbf{x}-L\mathbf{r}}} \leq v_{L,\mathbf{r}}(\mathbf{x}) \leq C_\epsilon e^{-(1-\epsilon)\sqrt{\mu} \abs{\mathbf{x}-L\mathbf{r}}} \, ,\\
		\label{eq: exponential_decay_v_tilde_integrale}
		\forall R > 0,\quad \int_{\abs{\mathbf{x} - L\mathbf{r}} \geq R} \left(\lvert v_{L,\mathbf{r}}(\mathbf{x}) \rvert^2+ \lvert \nabla v_{L,\mathbf{r}}(\mathbf{x}) \rvert^2\right) \diff \mathbf{x} \leq C_\epsilon e^{-2(1-\epsilon)\sqrt{\mu}R} \pt
		\end{gather}
	\end{prop}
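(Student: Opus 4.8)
The plan is to reduce everything, by the scaling $s_L$ and the translation by $L\mathbf{r}$, to the operator $\widetilde{H}_{L,\mathbf{r}} \coloneqq -\Delta + V_{L,\mathbf{r}}(\cdot + L\mathbf{r})$ on $L^2(\R^2)$, whose lowest eigenvalue is $-\mu_L \to -\mu <0$ and whose normalized ground state is $\widetilde{v}_{L,\mathbf{r}}(\mathbf{x}) = v_{L,\mathbf{r}}(\mathbf{x}+L\mathbf{r})$; by the transitivity of $G_L$ on $\{v_{L,\mathbf{r}}\}$ (relation~\eqref{eq: action_v}) it suffices to treat a single, fixed representative $\mathbf{r}$, and in fact the constants can be taken uniform in $\mathbf{r}$ since there are only $N$ orbits. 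For the $L$-uniformity I would run the same Agmon-type argument that gives Proposition~\ref{prop: exponential_decay_v}, but keeping careful track of the dependence on $L$ through Assumption~\ref{hypo_4}, which guarantees $\norme{V_{L,\mathbf{r}}(\cdot+L\mathbf{r}) - V}_{L^p(\R^2)+L^\infty(\R^2)} \to 0$, and through~\eqref{eq: closedness_modified_original_operator}, which gives $\mu_L \to \mu$ and $\widetilde v_{L,\mathbf r}\to v$ in $H^{\min(p,2)}(\R^2)$, hence in $L^\infty(\R^2)$ by Sobolev. Fix $\epsilon\in\intoo{0}{1}$; choose $L_\epsilon$ so that for $L\ge L_\epsilon$ one has $\mu_L \ge (1-\epsilon/2)^2\mu$ and the perturbation of the potential outside a fixed large ball is smaller than $(\epsilon/2)\mu$ in the relevant norm.

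For the upper bound I would use the Agmon/Combes--Thomas comparison: let $f(\mathbf{x}) = (1-\epsilon)\sqrt{\mu}\,\abs{\mathbf{x}}$, a bounded-gradient weight with $\abs{\nabla f}^2 = (1-\epsilon)^2\mu \le \mu_L - c$ for some $c>0$ and all $L\ge L_\epsilon$ and $\abs{\mathbf{x}}$ large, since $V_{L,\mathbf{r}}(\cdot+L\mathbf{r})\to 0$ uniformly at infinity (Assumption~\ref{hypo_04}, rescaled). The standard integration-by-parts identity applied to $e^{f}\widetilde v_{L,\mathbf r}$, combined with the eigenvalue equation $(-\Delta+V_{L,\mathbf r}(\cdot+L\mathbf r))\widetilde v_{L,\mathbf r} = -\mu_L\widetilde v_{L,\mathbf r}$, yields $\norme{e^{f}\widetilde v_{L,\mathbf r}}_{L^2}\le C_\epsilon$ uniformly in $L$ (the contributions from the ball where $\abs{\nabla f}^2$ is not dominated are controlled by $\norme{\widetilde v_{L,\mathbf r}}_{L^2}=1$ and the uniform $L^\infty$-bound). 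Then subsolution/elliptic-regularity estimates — or, cheaply, the representation $\widetilde v_{L,\mathbf r} = (-\Delta+\mu_L)^{-1}(-V_{L,\mathbf r}(\cdot+L\mathbf r)\widetilde v_{L,\mathbf r})$ together with the exponential decay of the free resolvent kernel and the uniform $L^p+L^\infty$ bound on $V_{L,\mathbf r}(\cdot+L\mathbf r)$ — upgrade the weighted $L^2$ bound to the pointwise bound $\widetilde v_{L,\mathbf r}(\mathbf x)\le C_\epsilon e^{-(1-\epsilon)\sqrt\mu\abs{\mathbf x}}$. Undoing the translation gives the claimed upper bound, and the weighted $L^2$ bound together with the integrated version of the eigenvalue equation (multiplying by a cutoff times $\widetilde v_{L,\mathbf r}$ and integrating by parts, exactly as in the proof of Proposition~\ref{prop: exponential_decay_v}) gives~\eqref{eq: exponential_decay_v_tilde_integrale}.

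For the lower bound $\widetilde v_{L,\mathbf r}(\mathbf x)\ge C_\epsilon^{-1} e^{-(1+\epsilon)\sqrt\mu\abs{\mathbf x}}$ I would again imitate~\cite{hoffmann1985pointwise}: since $\widetilde v_{L,\mathbf r}>0$ (Perron--Frobenius / Harnack) and solves $-\Delta \widetilde v_{L,\mathbf r} = (-\mu_L - V_{L,\mathbf r}(\cdot+L\mathbf r))\widetilde v_{L,\mathbf r}$, one compares with the weight $g(\mathbf x)=(1+\epsilon)\sqrt\mu\abs{\mathbf x}$, for which $\abs{\nabla g}^2 = (1+\epsilon)^2\mu \ge \mu_L + c'$ for $L\ge L_\epsilon$ and $\abs{\mathbf x}$ large, so $e^{-g}$ is a (super/sub)solution of the relevant comparison operator outside a large ball; a maximum principle together with a uniform-in-$L$ lower bound on $\widetilde v_{L,\mathbf r}$ on a fixed sphere (which follows from $\widetilde v_{L,\mathbf r}\to v$ in $L^\infty$ and $v>0$, plus Harnack to rule out vanishing) gives the lower bound with $L$-independent constant. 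The main obstacle is precisely making all constants uniform in $L$: one must avoid any estimate whose constant depends on pointwise properties of $V_{L,\mathbf r}$ (which may have Coulomb-type singularities at the origin), and instead funnel all dependence through the two quantitative inputs that \emph{are} uniform — the convergence in Assumption~\ref{hypo_4} and the uniform smallness of $V_L$ far from the vertices in Assumption~\ref{hypo_04} — while using the already-established $H^{\min(p,2)}$/$L^\infty$ convergence~\eqref{eq: closedness_modified_original_operator},~\eqref{eq:convergence_uniform} to control $\widetilde v_{L,\mathbf r}$ on any fixed compact set.
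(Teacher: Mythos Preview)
Your plan is correct in outline and would yield the proposition, but the paper takes a more direct route that sidesteps the weighted-$L^2$-to-pointwise upgrade you invoke. Rather than running Agmon's integration-by-parts identity, the paper works entirely with pointwise comparison: it fixes a radius $R_\epsilon$ (independent of $L$) such that $\abs{V(\mathbf{x})}\le \epsilon/4$ for $\abs{\mathbf x}\ge R_\epsilon$, and on the exterior domain $\Omega_\epsilon = B(0,R_\epsilon)^c$ it compares $\widetilde v_{L,\mathbf r}$ simultaneously with the two explicit modified Bessel solutions $Y_\pm(\mathbf x)=K_0\bigl((\mu\pm\epsilon)^{1/2}\abs{\mathbf x}\bigr)$ of $(-\Delta+\mu\pm\epsilon)Y_\pm=0$. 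The $L^\infty$ convergence $\widetilde v_{L,\mathbf r}\to v$ fixes the boundary data on $\partial\Omega_\epsilon$ uniformly in $L$, and Assumption~\ref{hypo_4} together with $\mu_L\to\mu$ makes $Y_-$ (resp.\ $\widetilde v_{L,\mathbf r}$) a supersolution of $-\Delta+V_{L,\mathbf r}(\cdot+L\mathbf r)+\mu_L$ (resp.\ $-\Delta+\mu+\epsilon$) on $\Omega_\epsilon$ for $L$ large. A single invocation of the Hoffmann--Ostenhof comparison theorem~\cite[Theorem~1.1]{hoffmann1980comparison} then gives both pointwise bounds at once on $\Omega_\epsilon$, and the bounds extend to the fixed compact $B(0,R_\epsilon)$ by $L^\infty$ convergence again.

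What your Agmon approach buys is generality---it would work even if $V_{L,\mathbf r}$ were not compactly supported---but here that generality is not needed, and the price is the extra step of upgrading the weighted $L^2$ bound to a pointwise one (via subsolution estimates or the free resolvent), which is where the singular local behaviour of $V_{L,\mathbf r}$ would have to be handled carefully. The paper's Bessel-comparison argument avoids this entirely by never leaving the pointwise category; it also treats the upper and lower bounds symmetrically with the same comparison theorem, whereas you split them into two different mechanisms. For the integral bound and the extension to the compact core, your plan and the paper's coincide.
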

	\begin{proof}
		The function $v_L$ satisfies the equation $ \left(-\Delta + V_{L,\mathbf{r}} + \mu_L\right)v_{L,\mathbf{r}} = 0$ in $H^{-1}(\R^2)$ where $V_{L,\mathbf{r}}$ is compactly supported in the ball $B\left(L\mathbf{r},\left(\frac{1}{2}+\delta\right)Ld_0\right)$. Let $\epsilon \in \intoo{0}{\mu}$ and $R_\epsilon> 0$ such that $\abs{V(\mathbf{x})} \leq \epsilon/4$ for all $\abs{\mathbf{x}} \geq R_\epsilon$. Let $\Omega_\epsilon = B(0,R_\epsilon)^c$. We consider $Y_\pm$ the solution of the equation
		\[
		\left(-\Delta + \mu \pm \epsilon \right)Y_\pm = 0 \quad \text{on} \quad \Omega_\epsilon \pt
		\]
		We have $Y_\pm (\mathbf{x}) = K_0\left((\mu\pm\epsilon)^\frac{1}{2}\abs{\mathbf{x}}\right)$ where $K_0$ denotes the modified Bessel of the second kind with parameter 0. Following~\cite[p.266-267]{olver1997asymptotics}, we can show that the function $Y_\pm$ satisfies the asymptotics
		\begin{align*}
		Y_\pm(\mathbf{x}) = \sqrt{\frac{\pi}{2 (\mu\pm\epsilon)^\frac{1}{2} \abs{\mathbf{x}}}} e^{-(\mu\pm\epsilon)^\frac{1}{2}\abs{\mathbf{x}}} \left(1+\grandO{\frac{1}{\abs{\mathbf{x}}}}\right)\pt
		\end{align*}
		By the continuity of $v_{L,\mathbf{r}}$ and the convergence \eqref{eq:convergence_uniform} in $L^\infty(\R^2)$, there exists a constant $C_\epsilon>0$ independent from $L$ and such that
		\begin{align*}
		\forall \mathbf{x} \in\partial\Omega_\epsilon, \quad \frac{1}{C_\epsilon} Y_+(\mathbf{x} - L\mathbf{r}) \leq v_{L,\mathbf{r}}(\mathbf{x}) \leq C_\epsilon Y_-(\mathbf{x}-L\mathbf{r}) \pt
		\end{align*}
		Using the right side of \eqref{eq: closedness_modified_original_operator} and Assumption~\ref{hypo_4}, we show that $v_{L,\mathbf{r}}(\cdot + L\mathbf{r})$ (resp. $Y_-(\cdot - L\mathbf{r})$) is a supersolution of the operator $-\Delta + \mu + \epsilon$ (resp. $-\Delta + V_{L,\mathbf{r}}+ \mu_L$) for $L$ large enough, depending on $\epsilon$. We already know that $v_{L,\mathbf{r}}$ goes to zero at infinity. Then~\cite[Theorem 1.1]{hoffmann1980comparison} implies (notice that $V_{L,\mathbf{r}} + \mu_L \geq 0$ on $L\mathbf{r}+\Omega_\epsilon$ for $L$ large enough)
		\[
		\forall \mathbf{x} \in \Omega_\epsilon, \quad \frac{1}{C_\epsilon} Y_+(\mathbf{x} - L\mathbf{r}) \leq v_{L,\mathbf{r}}(\mathbf{x}) \leq C_\epsilon Y_-(\mathbf{x}-L\mathbf{r}) \pt
		\]
		We extend these estimates to $\R^2$ using $\norme{v_{L,\mathbf{r}} - v}_{L^\infty(\R^2)} = \petito{1}$ and we absorb the polynomial terms by slightly modifying $\epsilon$. The integral bounds \eqref{eq: exponential_decay_v_tilde_integrale} are shown as in the proof of Proposition~\ref{prop: exponential_decay_v}.
	\end{proof}
	
	\subsection{Orthonormalization procedure}\label{sec:orthomalization-procedure}
	
	In this section, we use the Gram-Schmidt process to construct an orthonormal family $\{w_{L,\mathbf{r}}\}_{\mathbf{r} \in \lscr^\mathbf{R}}$ from the almost orthonormal family $\{v_{L,\mathbf{r}}\}_{\mathbf{r} \in \lscr^\mathbf{R}}$. 
	
	First, we precise our notation concerning infinite matrices. Let $\Lambda$ be an infinite countable set. The Hilbert space of square summable and complex valued sequences labeled by $\Lambda$ is denoted by $\ell^2(\Lambda)$. A bounded operator $A \in \mathcal{B}(\ell^2(\Lambda))$ can be represented by an infinite matrix $(A(\mathbf{u,u'}))_{(\mathbf{u,u'}) \in \Lambda\times \Lambda}$ in the orthonormal basis $\{\ket{\mathbf{u}}\}_{\mathbf{u} \in \Lambda}$ where $\ket{\mathbf{u}} = (\delta(\mathbf{u,u'}))_{\mathbf{u'} \in \Lambda}$ with $\delta(\mathbf{u,u'})$ equal 1 if $\mathbf{u'=u}$ and $0$ otherwise. The identity matrix is denoted by $I$. The operator norm of a bounded operator $A$ will be denoted $\norme{A}$. In the sequel, we consider square infinite matrices indexed by $\mathscr{L}^\mathbf{R} \times \mathscr{L}^\mathbf{R}$. Finally, if a group $G$ acts on $\Lambda$ then we denote by $g \cdot A$ the action of $g\in G$ on $A\in\mathcal{B}(\ell^2(\Lambda))$ defined by
	\begin{align}
	\label{eq: action_matrix}
	g\cdot A \coloneqq T_g^* A T_g \, ,
	\end{align}
	where $T_g$ is the permutation matrix $T_g: \ket{\mathbf{u}} \mapsto \ket{g\cdot \mathbf{u}}$.
	
	Now, we discuss the construction of $\{w_{L,\mathbf{r}}\}_{\mathbf{r} \in \lscr^\mathbf{R}}$. In this setting, the tunneling coefficient is
	\begin{align*}
	\boxed{T_L \coloneqq \exp\left(-\sqrt{\mu} L\right)\pt}
	\end{align*}
	It measures the magnitude of the tunneling effect when $L \to \infty$. We construct the infinite Gram matrix for the family $\{v_{L,\mathbf{r}}\}_{\mathbf{r} \in \lscr^\mathbf{R}}$ as
	\begin{align*}
	Q_L (\mathbf{r},\mathbf{r'}) \coloneqq \pdtsc{v_{L,\mathbf{r}}}{v_{L,\mathbf{r'}}}_{L^2(\R^2)} >0 ,\quad \forall (\mathbf{r},\mathbf{r}') \in \lscr^\mathbf{R} \times \lscr^\mathbf{R} \pt
	\end{align*}
	Using the symmetry relations \eqref{eq: action_v}, the matrix $Q_L$ is $G$-invariant:
	\begin{align}
	\label{eq: QL_invariant}
	\forall g \in G,\quad g \cdot Q_L = Q_L \pt
	\end{align}
	In order to estimate the Gram matrix, we first need a lemma about convolutions of exponentials.
	\begin{lemma}
		\label{lemma: convolution_estimate}
		Let $\nu >0$. Then, for all $\epsilon \in \intoo{0}{1}$, we have
		\begin{align*}
		\forall \mathbf{x} \in \R^2,\quad \frac{1}{\nu^2}\left(1+\nu\abs{\mathbf{x}}\right)e^{-\nu\abs{\mathbf{x}}} \leq \left(e^{-\nu\abs{\cdot}} \ast e^{-\nu \abs{\cdot}}\right)(\mathbf{x}) \leq \frac{\pi}{2\nu^2}\left(1+\nu\abs{\mathbf{x}}\right)e^{-\nu\abs{\mathbf{x}}} \pt
		\end{align*}
	\end{lemma}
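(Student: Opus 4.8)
The plan is to compute the convolution essentially explicitly and then read off the two--sided bounds from the resulting one--dimensional integral.

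By the scaling $\mathbf{y}\mapsto\nu^{-1}\mathbf{y}$ one has $(e^{-\nu\abs{\cdot}}\ast e^{-\nu\abs{\cdot}})(\mathbf{x})=\nu^{-2}g(\nu\abs{\mathbf{x}})$, where $g(r):=\int_{\R^2}e^{-\abs{\mathbf{y}}-\abs{\mathbf{y}-\mathbf{x}}}\diff\mathbf{y}$ depends only on $r=\abs{\mathbf{x}}$ by rotational invariance; it therefore suffices to prove $(1+r)e^{-r}\le g(r)\le\frac{\pi}{2}(1+r)e^{-r}$ for all $r\ge0$. Fix $\mathbf{x}$ with $\abs{\mathbf{x}}=r>0$ and pass to confocal (prolate spheroidal) coordinates with foci $\mathbf{0}$ and $\mathbf{x}$: for $\mathbf{y}\in\R^2$ put $s=\abs{\mathbf{y}}+\abs{\mathbf{y}-\mathbf{x}}\in\intfo{r}{\infty}$ and $t=\abs{\mathbf{y}}-\abs{\mathbf{y}-\mathbf{x}}\in\intff{-r}{r}$. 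The map $\mathbf{y}\mapsto(s,t)$ is two--to--one (reflection across the line through $\mathbf{0}$ and $\mathbf{x}$), and a direct Jacobian computation gives $\diff\mathbf{y}=\frac{s^2-t^2}{4\sqrt{(s^2-r^2)(r^2-t^2)}}\diff s\diff t$ on each half--plane, while the integrand collapses to $e^{-\abs{\mathbf{y}}-\abs{\mathbf{y}-\mathbf{x}}}=e^{-s}$. Integrating first in $t$, using $\int_{-r}^{r}\frac{s^2-t^2}{\sqrt{r^2-t^2}}\diff t=\pi\bigl(s^2-\tfrac{r^2}{2}\bigr)$, yields the closed form
\[
g(r)=\frac{\pi}{2}\int_r^\infty\frac{\bigl(s^2-\tfrac{r^2}{2}\bigr)e^{-s}}{\sqrt{s^2-r^2}}\diff s \pt
\]
The substitution $s=r\cosh\theta$ further identifies $g(r)=\tfrac{\pi}{4}r^2 K_2(r)$, where $K_2$ is the modified Bessel function of the second kind; this is a convenient cross--check, and since $z^2K_2(z)\to2$ as $z\to0$ it gives $g(0)=\tfrac{\pi}{2}$, which is exactly why the upper bound is saturated at $\mathbf{x}=\mathbf{0}$.

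For the lower bound I would substitute $s=\sqrt{r^2+w^2}$, so that $\frac{\diff s}{\sqrt{s^2-r^2}}=\frac{\diff w}{\sqrt{r^2+w^2}}$ and $s^2-\tfrac{r^2}{2}=w^2+\tfrac{r^2}{2}$, hence $g(r)=\frac{\pi}{2}\int_0^\infty e^{-\sqrt{r^2+w^2}}\frac{w^2+r^2/2}{\sqrt{r^2+w^2}}\diff w$. Bounding $\sqrt{r^2+w^2}\le w+r$ in the exponent and, in the region $w\lesssim r$, using $\sqrt{r^2+w^2}\le r+\tfrac{w^2}{2r}$ together with $\frac{w^2+r^2/2}{\sqrt{r^2+w^2}}\ge\tfrac{r}{4}$, recovers the dominant $r^{3/2}$--type contribution; combined with the crude estimate obtained from $\sqrt{s^2-r^2}\le s$ in the original integral (which controls bounded $r$) this gives $g(r)\ge(1+r)e^{-r}$. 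For the upper bound I would split $\frac{s^2-r^2/2}{\sqrt{s^2-r^2}}=\sqrt{s^2-r^2}+\frac{r^2}{2\sqrt{s^2-r^2}}$ and estimate the two integrals via $s=r+u$, $\sqrt{u(u+2r)}\le u+\sqrt{2ru}$, and the classical identity $\int_r^\infty\frac{e^{-s}}{\sqrt{s^2-r^2}}\diff s=K_0(r)$ with the sharp bound $K_0(r)\le\sqrt{\tfrac{\pi}{2r}}e^{-r}$; the sharp constant $\tfrac{\pi}{2}$ is then pinned down from the explicit formula $g(r)=\tfrac{\pi}{4}r^2K_2(r)$ and the known two--sided estimates for $z^2K_2(z)$.

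The main obstacle is not the qualitative statement — the elementary substitutions above already give $g(r)=\grandO{(1+r)^{3/2}e^{-r}}$ and a matching lower estimate up to polynomial factors — but the \emph{sharp} envelope with constant $\tfrac{\pi}{2}$ in the upper bound and constant $1$ in the lower bound. These are genuinely tied to the fine behaviour of $K_2$ (the constant $\tfrac{\pi}{2}$ being forced by $z^2K_2(z)\to2$ as $z\to0$), so the delicate part is to show that $g$ stays within the claimed envelope on the transitional range of $r$ where neither crude estimate is tight. There I would argue directly from $g(r)=\tfrac{\pi}{4}r^2K_2(r)$, using standard monotonicity and log--convexity properties of $z\mapsto e^zK_2(z)$ and uniform bounds on modified Bessel functions.
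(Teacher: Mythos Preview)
Your confocal--coordinate computation, the Jacobian $\frac{s^2-t^2}{4\sqrt{(s^2-r^2)(r^2-t^2)}}$, the $t$--integral, and the identification $g(r)=\tfrac{\pi}{4}r^2K_2(r)$ are all correct. The genuine gap is that you did not push this identification to its conclusion: since $K_2(r)\sim\sqrt{\tfrac{\pi}{2r}}\,e^{-r}$ as $r\to\infty$, one has $g(r)\sim\tfrac{\pi^{3/2}}{4\sqrt{2}}\,r^{3/2}e^{-r}$, and hence
\[
\frac{g(r)}{\tfrac{\pi}{2}(1+r)e^{-r}}\;\longrightarrow\;\infty\qquad(r\to\infty)\pt
\]
Thus the upper bound in the stated form is \emph{false}; already $g(1)=\tfrac{\pi}{4}K_2(1)\approx1.276>\pi/e\approx1.156$. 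No amount of ``fine behaviour of $K_2$'' or log--convexity will rescue it, and the ``delicate part'' you isolate is not delicate but impossible. Your own observation that the crude bounds give $\grandO{(1+r)^{3/2}e^{-r}}$ is in fact sharp. The lower bound, by contrast, is true and your substitutions do prove it (more directly: the co--area formula with $\phi(\mathbf{y})=\abs{\mathbf{y}}+\abs{\mathbf{y}-\mathbf{x}}$ and the trivial estimate $\abs{\nabla\phi}\le2$ already give $g(r)\ge\tfrac12\int_r^\infty e^{-s}\,\mathrm{Perim}\{\phi=s\}\diff s\ge\int_r^\infty se^{-s}\diff s=(1+r)e^{-r}$, since the perimeter of the ellipse is $2sE(r/s)\ge2s$).

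The paper's argument is much shorter and different in spirit: it writes the convolution as $\int_{\abs{\mathbf{x}}}^\infty a e^{-\nu a}E(\abs{\mathbf{x}}/a)\diff a$ using the perimeter formula $4aE(\text{eccentricity})$ for confocal ellipses, and then simply invokes $1\le E\le\tfrac{\pi}{2}$ together with $\int_r^\infty ae^{-\nu a}\diff a=\nu^{-2}(1+\nu r)e^{-\nu r}$. This is elegant, but the first identity is not correct: by co--area the weight on the level set $\{\phi=s\}$ is $\int\frac{\diff\mathcal{H}^1}{\abs{\nabla\phi}}$, and $\abs{\nabla\phi}$ is $\le2$ rather than $\equiv2$ (it drops to $\tfrac{2}{s}\sqrt{s^2-r^2}$ at the minor--axis endpoints). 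Consequently the paper's formula is a \emph{lower} bound, which together with $E\ge1$ gives the stated lower inequality; the upper inequality does not follow and, as your Bessel asymptotics show, is in fact wrong. The dangling ``for all $\epsilon\in\intoo{0}{1}$'' in the statement suggests an earlier version of the lemma may have had $(1-\epsilon)$ in the exponent, under which the upper bound would hold; in any case, every application of the lemma in the paper immediately absorbs polynomial prefactors into an $\epsilon$ in the exponential rate, so the $r^{3/2}$ correction is harmless there.
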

	\begin{proof}
		Let $\mathcal{E}_a$ is the ellipse defined by the equation $\absL{\mathbf{y-x}}+\absL{\mathbf{y}} = 2a$ where $a\geq\frac{\abs{\mathbf{x}}}{2}$ is the semi-major axis. Using the formula $\absL{\mathcal{E}_a} = 4aE(\abs{\mathbf{x}}/2a) $ where $E$ denotes the complete elliptic integral of second kind, we have
		\begin{align*}
		\left(e^{-\nu\abs{\cdot}} \ast e^{-\nu \abs{\cdot}}\right)(\mathbf{x}) = \int_{\frac{\abs{\mathbf{x}}}{2}}^\infty e^{-2\nu a} \absL{\mathcal{E}_a} \diff a = \int_{\abs{\mathbf{x}}}^\infty a e^{-\nu a}  E\left(\abs{\mathbf{x}}/a\right) \diff a \pt
		\end{align*}
		We use the estimate $1\leq E(e) \leq \frac{\pi}{2}$, valid for all $e\in\intff{0}{1}$, to conclude the proof of Lemma~\ref{lemma: convolution_estimate}.
	\end{proof}
	
	We recall that $v_{L,\mathbf{r}}$ admits upper and lower pointwise exponential bounds (see Proposition~\ref{prop: exponential_decay_v_tilde}). Then Lemma~\ref{lemma: convolution_estimate} implies that for all $\epsilon>0$ there exists $C_\epsilon>0$ such that
	\begin{align}
	\label{eq: matrix_element_Q_L}
	Q_L(0,0) = 1 \et \frac{1}{C_\epsilon} T_L^{(1+\epsilon)\lvert \mathbf{r-r'}\rvert} \leq Q_L (\mathbf{r},\mathbf{r'}) \leq C_\epsilon T_L^{(1-\epsilon)\lvert \mathbf{r-r'}\rvert} \pt
	\end{align}
	The polynomial terms have been absorbed by slightly modifying the $\epsilon$ in Proposition~\ref{prop: exponential_decay_v_tilde}.
	
	We recall that the group $G$ acts on the set of pairs of nearest neighbors $\mathscr{P}^\mathbf{R}$	introduced in \eqref{eq: definition_NN} and that the orbits of this action are denoted by $\mathcal{O}_1,\dots,\mathcal{O}_m$. For all $k \in \{1,\dots,m\}$, we consider $\mathbf{p}_k = (\mathbf{r}_k,\mathbf{r}_k') \in \mathcal{O}_k$ and we introduce the interaction coefficient 
	\begin{align*}
	\zeta_{L,k} \coloneqq \langle v_{L,\mathbf{r}_k}, v_{L,\mathbf{r}_k'} \rangle_{L^2(\R^2)} \, ,
	\end{align*}
	which does not depend on the choice of the representative in $\mathcal{O}_k$. Using the estimate~\eqref{eq: matrix_element_Q_L}, we immediately have
	\begin{align}
	\label{eq: estimation_zeta}
	\frac{1}{C_\epsilon}T_L^{(1+\epsilon)d_0} \leq \zeta_{L,k}  \leq C_\epsilon T_L^{(1-\epsilon)d_0} \, ,
	\end{align}
	for all $\epsilon >0$. We also introduce the adjacency matrix $J_k$ associated with the orbit $\mathcal{O}_k$, defined as
	\begin{align}
	\label{eq: defintion_J}
	J_k(\mathbf{r,r'}) = 
	\begin{cases}
	1 & \text{if}~ (\mathbf{r,r'}) \in \mathcal{O}_k \, ,\\
	0 & \text{otherwise} \pt
	\end{cases}
	\end{align}
	\begin{lemma}
		\label{lemma: expansion_Q_L}
		We have the expansion
		\begin{align*}
		\normeL{Q_L - I - \sum_{k=1}^m \zeta_{L,k} J_k}_{\ell^2(\lscr^\mathbf{R}) \to \ell^2(\lscr^\mathbf{R})} = \grandO{T_L^{d_1-}} \pt
		\end{align*}
	\end{lemma}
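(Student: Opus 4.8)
The plan is to identify the matrix $R_L \coloneqq Q_L - I - \sum_{k=1}^{m}\zeta_{L,k}J_k$ entrywise, and then bound its operator norm by a Schur test. By construction the diagonal entries of $R_L$ vanish, since $Q_L(\mathbf{r},\mathbf{r})=\normeL{v_{L,\mathbf{r}}}_{L^2(\R^2)}^2=1$ and the $J_k$ have vanishing diagonal. The entries on nearest-neighbor pairs also vanish: if $\{\mathbf{r},\mathbf{r'}\}\in\mathcal{O}_k$ then, by the very definition of $\zeta_{L,k}$ and of the adjacency matrices $J_j$ in~\eqref{eq: defintion_J}, one has $Q_L(\mathbf{r},\mathbf{r'})=\zeta_{L,k}=\bigl(\sum_j\zeta_{L,j}J_j\bigr)(\mathbf{r},\mathbf{r'})$, the orbits $\mathcal{O}_j$ being pairwise disjoint. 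Since $d_1$ is the second-nearest-neighbor distance, no pair of vertices is at distance in $\intoo{d_0}{d_1}$, so $R_L$ is supported on pairs with $\absL{\mathbf{r}-\mathbf{r'}}\geq d_1$, where $R_L(\mathbf{r},\mathbf{r'})=Q_L(\mathbf{r},\mathbf{r'})>0$ and, by~\eqref{eq: matrix_element_Q_L}, satisfies $R_L(\mathbf{r},\mathbf{r'})\leq C_\epsilon T_L^{(1-\epsilon)\absL{\mathbf{r}-\mathbf{r'}}}$ for every $\epsilon>0$ (the polynomial prefactors from Lemma~\ref{lemma: convolution_estimate} having already been absorbed in that estimate).

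Next I would use the symmetries. The matrix $R_L$ is symmetric (the $v_{L,\mathbf{r}}$ are real, so $Q_L$ is symmetric, and each $J_k$ is symmetric) and $G$-invariant, since $Q_L$ is by~\eqref{eq: QL_invariant} and the $J_k$ are built from the $G$-orbits $\mathcal{O}_k$. In particular $R_L$ is invariant under the translations of $\lscr$, and combining this with the vertex-transitivity of Assumption~\ref{hypo_1}, the row sum $\sum_{\mathbf{r'}\in\lscr^\mathbf{R}}\absL{R_L(\mathbf{r},\mathbf{r'})}$ is independent of $\mathbf{r}\in\lscr^\mathbf{R}$, and finite. The uniform finiteness of the rows shows that $R_L$ defines a bounded operator on $\ell^2(\lscr^\mathbf{R})$, and a Schur test for symmetric matrices then gives $\normeL{R_L}_{\ell^2(\lscr^\mathbf{R})\to\ell^2(\lscr^\mathbf{R})}\leq\sum_{\mathbf{r'}\in\lscr^\mathbf{R},\ \absL{\mathbf{r}-\mathbf{r'}}\geq d_1}\absL{R_L(\mathbf{r},\mathbf{r'})}$ for any fixed $\mathbf{r}$.

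It then remains to estimate this single lattice sum. Using the entrywise bound above and writing $\absL{\mathbf{r}-\mathbf{r'}}=d_1+(\absL{\mathbf{r}-\mathbf{r'}}-d_1)$ with $\absL{\mathbf{r}-\mathbf{r'}}-d_1\geq0$, together with $L\geq1$ to get $e^{-(1-\epsilon)\sqrt{\mu}L(\absL{\mathbf{r}-\mathbf{r'}}-d_1)}\leq e^{-(1-\epsilon)\sqrt{\mu}(\absL{\mathbf{r}-\mathbf{r'}}-d_1)}$, one factors out $T_L^{(1-\epsilon)d_1}=e^{-(1-\epsilon)\sqrt{\mu}Ld_1}$ and is left with the convergent two-dimensional lattice sum $\sum_{\mathbf{r'}\in\lscr^\mathbf{R}}e^{-(1-\epsilon)\sqrt{\mu}\absL{\mathbf{r}-\mathbf{r'}}}$ (finite because the number of vertices of $\lscr^\mathbf{R}$ in a ball of radius $R$ is $\grandO{R^2}$), a constant depending only on $\epsilon$. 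Hence $\normeL{R_L}\leq C_\epsilon' T_L^{(1-\epsilon)d_1}$ for all $\epsilon>0$; rewriting $(1-\epsilon)d_1=d_1-\epsilon d_1$ and letting $\epsilon$ range over $\intoo{0}{1}$, this is precisely the bound $\grandO{T_L^{d_1-}}$ in the sense of Section~\ref{sec:notations-and-conventions}.

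This lemma is essentially bookkeeping, so I do not anticipate a genuine obstacle. The points requiring some care are: verifying, before invoking the Schur test, that $R_L$ is indeed a bounded operator (supplied by the uniform row-summability just obtained); ensuring the polynomial factors have been absorbed into~\eqref{eq: matrix_element_Q_L}; and keeping the nested $\epsilon$-quantifiers consistent with the $\grandO{\,\cdot\,^{-}}$ convention. I note that Assumption~\ref{hypo_1} is not really needed here: without it the row sums are merely $\lscr$-periodic, hence bounded by the maximum over the $N$ representatives, and the argument goes through verbatim.
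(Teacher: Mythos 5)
Your proof is correct and follows essentially the same route as the paper's own (very terse) proof: observe that the matrix $Q_L - I - \sum_k \zeta_{L,k}J_k$ vanishes on the diagonal and on nearest-neighbor pairs, then invoke the Schur test together with the entrywise bound~\eqref{eq: matrix_element_Q_L}. The additional bookkeeping you supply — verifying the row sums are uniformly controlled before applying Schur's test, factoring out $T_L^{(1-\epsilon)d_1}$ and summing the remaining convergent lattice exponential series, and noting that Assumption~\ref{hypo_1} is dispensable here — is exactly what the paper leaves implicit.
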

	We recall that the notation $\grandO{T_L^{d_1-}}$ is defined in Section~\ref{sec:notations-and-conventions}.
	\begin{proof}
		To show this expansion, first notice that the matrix element $(\mathbf{r,r'})$ of $Q_L-I- \sum_{k=1}^m \zeta_{L,k} J_k$ is equal to zero if $\mathbf{r=r'}$ or $(\mathbf{r,r'}) \in \mathscr{P}^\mathbf{R}$. It remains to call on Schur's test~\cite[Appendix A.1]{grafakos2014modern} and the estimates \eqref{eq: matrix_element_Q_L}. 
	\end{proof}
	Let $K_L \coloneqq Q_L - I$. As a consequence of Lemma~\ref{lemma: expansion_Q_L} and \eqref{eq: estimation_zeta}, we have
	\begin{align}
	\label{eq: norm_K_L}
	\norme{K_L} = \grandO{T_L^{d_0-}} \, ,
	\end{align}
	and by the (power series) functional calculus, we can write for $L$ large enough
	\begin{align}
	\label{eq: gram_schmidt}
	Q_L^{-1} = I - K_L + \grandO{T_L^{2d_0-}}  \et
	Q_L^{-\frac{1}{2}} = I - \frac{1}{2}K_L +\grandO{T_L^{2d_0-}} \pt
	\end{align}
	The infinite matrices $Q_L^{-1}$ and $Q_L^{-\frac{1}{2}}$ inherit the same periodicity property as $Q_L$, see \eqref{eq: QL_invariant}:
	\begin{align}
	\label{eq: QL_invariant_bis}
	\forall g \in G,\quad g\cdot Q_L^{-1} = Q_L^{-1} \et g\cdot Q_L^{-\frac{1}{2}} = Q_L^{-\frac{1}{2}} \pt
	\end{align}
	The next lemma states that $Q_L^{-1}$ and $Q_L^{-\frac{1}{2}}$ are well-localized in the sense that their off-diagonal coefficients decays exponentially fast with $\abs{\mathbf{r-r'}}$.
	\begin{lemma}
		\label{lemma: localization_inverse_G}
		Let $\epsilon \in \intoo{0}{1}$. There exists $L_\epsilon \geq 1$ such that for $L\geq L_\epsilon$ and for all $(\mathbf{r,r'}) \in (\lscr^\mathbf{R})^2$, we have
		\begin{align*}
		\abs{Q^{-1}_L(\mathbf{r,r'})} \leq 2 T_L^{(1-\epsilon)\abs{\mathbf{r-r'}}} \et \abs{Q^{-\frac{1}{2}}_L(\mathbf{r,r'})} \leq 2 T_L^{(1-\epsilon)\abs{\mathbf{r-r'}}} \pt
		\end{align*}
	\end{lemma}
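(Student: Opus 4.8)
The plan is to derive both bounds at once by a Combes--Thomas-type weighting of the Gram matrix: one multiplies the off-diagonal entries by an exponential of the distance $\abs{\mathbf{r-r'}}$, checks that this only mildly perturbs $K_L \coloneqq Q_L - I$ when $L$ is large, and then reads the decay of $Q_L^{-1}$ and $Q_L^{-1/2}$ off their Neumann and binomial series.

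Fix $\epsilon \in \intoo{0}{1}$ and introduce on $(\lscr^\mathbf{R})^2$ the weight $\omega(\mathbf{r,r'}) \coloneqq T_L^{-(1-\epsilon)\abs{\mathbf{r-r'}}}$, which is symmetric and submultiplicative, $\omega(\mathbf{r,r''}) \leq \omega(\mathbf{r,r'})\,\omega(\mathbf{r',r''})$, by the triangle inequality. Writing $\normeL{A}_\omega \coloneqq \sup_{\mathbf{r}}\sum_{\mathbf{r'}} \omega(\mathbf{r,r'})\abs{A(\mathbf{r,r'})}$ for the associated weighted Schur norm, submultiplicativity of $\omega$ gives $\normeL{AB}_\omega \leq \normeL{A}_\omega\normeL{B}_\omega$, hence $\normeL{A^n}_\omega \leq \normeL{A}_\omega^n$ and therefore $\abs{A^n(\mathbf{r,r'})} \leq \normeL{A}_\omega^n\, T_L^{(1-\epsilon)\abs{\mathbf{r-r'}}}$ for all $n\geq1$ and all $(\mathbf{r,r'})$. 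The first real step is to prove $\normeL{K_L}_\omega \leq 1/2$ for $L$ large enough. Since $K_L(\mathbf{r,r})=0$ and, by \eqref{eq: matrix_element_Q_L} applied with the parameter $\epsilon/2$, one has $\abs{K_L(\mathbf{r,r'})} \leq C_{\epsilon/2}\, T_L^{(1-\epsilon/2)\abs{\mathbf{r-r'}}}$ for $\mathbf{r}\neq\mathbf{r'}$, it follows that
\begin{align*}
\normeL{K_L}_\omega \leq C_{\epsilon/2}\, \sup_{\mathbf{r}\in\lscr^\mathbf{R}} \sum_{\substack{\mathbf{r'}\in\lscr^\mathbf{R} \\ \mathbf{r'}\neq\mathbf{r}}} T_L^{(\epsilon/2)\abs{\mathbf{r-r'}}} = C_{\epsilon/2}\, \sup_{\mathbf{r}\in\lscr^\mathbf{R}} \sum_{\substack{\mathbf{r'}\in\lscr^\mathbf{R} \\ \mathbf{r'}\neq\mathbf{r}}} e^{-\frac{\epsilon}{2}\sqrt{\mu}\,L\,\abs{\mathbf{r-r'}}} \pt
\end{align*}
Since nonzero distances in $\lscr^\mathbf{R}$ are bounded below by $d_0$ and $\lscr^\mathbf{R}$ is a finite union of translates of a two-dimensional Bravais lattice, this supremum is finite for every fixed $L$ and tends to $0$ as $L\to\infty$; so it is $\leq (2C_{\epsilon/2})^{-1}$ for $L$ large, giving $\normeL{K_L}_\omega \leq 1/2$.

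To conclude, one uses \eqref{eq: norm_K_L} to get $\normeL{K_L} \leq 1/2 < 1$ for $L$ large, so that $Q_L^{-1} = \sum_{n\geq0}(-1)^n K_L^n$ and $Q_L^{-1/2} = \sum_{n\geq0}\binom{-1/2}{n} K_L^n$ converge in operator norm, with $\absL{\binom{-1/2}{n}} = \binom{2n}{n}4^{-n} \leq 1$. Summing the entrywise bound $\abs{K_L^n(\mathbf{r,r'})} \leq 2^{-n}\, T_L^{(1-\epsilon)\abs{\mathbf{r-r'}}}$ over $n\geq0$ then yields $\abs{Q_L^{-1}(\mathbf{r,r'})} \leq 2\, T_L^{(1-\epsilon)\abs{\mathbf{r-r'}}}$ and, identically, $\abs{Q_L^{-1/2}(\mathbf{r,r'})} \leq 2\, T_L^{(1-\epsilon)\abs{\mathbf{r-r'}}}$; the diagonal case $\mathbf{r}=\mathbf{r'}$ is covered since $T_L^0=1$. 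This is the claimed estimate.

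The routine parts are the lattice-counting bound for the supremum and the submultiplicativity bookkeeping. The point that genuinely needs care --- and the reason a bare Neumann-series estimate does not suffice --- is that the full exponential weight must be carried through the convolutions defining $K_L^n$ instead of being split in halves at each iteration; the submultiplicative weight does exactly that, and the scheme closes because one can afford to relax the decay rate from $1-\epsilon/2$ to $1-\epsilon$: the sacrificed rate $\frac{\epsilon}{2}\sqrt{\mu}$ gets multiplied by $L\to\infty$, which is precisely what makes $\normeL{K_L}_\omega$ small. (Equivalently, one could conjugate $Q_L$ by the multiplication $\mathbf{r}\mapsto e^{\mathbf{a}\cdot\mathbf{r}}$ with $\abs{\mathbf{a}} = (1-\epsilon)\sqrt{\mu}\,L$ and bound $\normeL{(Q_L^{\mathbf{a}})^{-1}}$ and $\normeL{(Q_L^{\mathbf{a}})^{-1/2}}$ by $2$ via the same weighted-Schur estimate.)
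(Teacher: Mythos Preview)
Your argument is correct and is precisely the weighted-Schur/Banach-algebra method underlying Jaffard's result~\cite{jaffard1990proprietes} that the paper invokes (without giving details). The key mechanism is identical: sacrifice $\epsilon/2$ of the available decay rate so that $K_L$ becomes small in a submultiplicative weighted norm, then read off the decay of $Q_L^{-1}$ and $Q_L^{-1/2}$ from their power series in $K_L$.
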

	The proof of Lemma~\ref{lemma: localization_inverse_G} is an adaptation of the one of~\cite[Proposition 2]{jaffard1990proprietes} hence we do not write it and we refer the reader to~\cite{cazalisthesis} for the details.
	
	For $\mathbf{r} \in \lscr^\mathbf{R}$, we set
	\begin{align*}
	\boxed{w_{L,\mathbf{r}}\coloneqq \sum_{\mathbf{r'} \in \lscr^\mathbf{R}}Q_L^{-\frac{1}{2}}(\mathbf{r},\mathbf{r'} )v_{L,\mathbf{r'}} \pt}
	\end{align*}
	We recall that $s_L$ denotes the dilatation operator, defined by $s_L\mathbf{x} = L\mathbf{x}$.
	\begin{prop}[Properties of $w_{L,\mathbf{r}}$]
		\label{prop: properties_wannier}
		The series defining $w_{L,\mathbf{r}}$ converges in $H^{\min(p,2)}(\R^2)$. The family $\{w_{L,\mathbf{r}}\}_{\mathbf{r} \in \lscr^\mathbf{R}}$ forms an orthonormal family of $L^2(\R^2)$, satisfies the same symmetry relations \eqref{eq: action_v} as $\{v_{L,\mathbf{r}}\}_{\mathbf{r}\in\lscr^\mathbf{R}}$ 
		\begin{align}
		\label{eq: properties_wannier1}
		\forall g \in G,\quad \forall \mathbf{r} \in \lscr^\mathbf{R},\quad (s_Lgs_L^{-1})\cdot w_{L,\mathbf{r}} =  w_{L,g \cdot\mathbf{r}} \, ,
		\end{align}
		and the following pointwise exponential bounds: for all $\epsilon \in \intoo{0}{1}$ there exists $C_\epsilon >0 $ such that
		\begin{align}
		\label{eq: properties_wannier2}
		\forall \mathbf{r} \in\lscr^\mathbf{R}, \quad\forall \mathbf{x} \in \R^2,\quad \quad\abs{w_{L,\mathbf{r}}(\mathbf{x})} \leq C_\epsilon e^{-(1-\epsilon)\sqrt{\mu}\abs{\mathbf{x}-L\mathbf{r}}} \pt
		\end{align}
	\end{prop}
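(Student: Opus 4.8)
The plan is to read everything off two facts already established above: Lemma~\ref{lemma: localization_inverse_G}, which gives the exponential decay $\abs{Q_L^{-\frac12}(\mathbf{r},\mathbf{r'})}\leq 2\,T_L^{(1-\epsilon)\abs{\mathbf{r}-\mathbf{r'}}}$ of the off-diagonal entries of $Q_L^{-\frac12}$, and Proposition~\ref{prop: exponential_decay_v_tilde}, which gives the uniform pointwise bound $v_{L,\mathbf{r'}}(\mathbf{x})\leq C_\epsilon e^{-(1-\epsilon)\sqrt{\mu}\abs{\mathbf{x}-L\mathbf{r'}}}$. For the convergence of the series, one first notes that by the right-hand side of \eqref{eq: closedness_modified_original_operator} and the translation-invariance of the $H^{\min(p,2)}$-norm, $\norme{v_{L,\mathbf{r'}}}_{H^{\min(p,2)}(\R^2)}$ is bounded uniformly in $\mathbf{r'}\in\lscr^\mathbf{R}$ and in $L$ large. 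Combined with Lemma~\ref{lemma: localization_inverse_G} and the local finiteness of $\lscr^\mathbf{R}$, this yields $\sum_{\mathbf{r'}\in\lscr^\mathbf{R}}\abs{Q_L^{-\frac12}(\mathbf{r},\mathbf{r'})}\,\norme{v_{L,\mathbf{r'}}}_{H^{\min(p,2)}(\R^2)}\lesssim\sum_{\mathbf{r'}\in\lscr^\mathbf{R}}T_L^{(1-\epsilon)\abs{\mathbf{r}-\mathbf{r'}}}<\infty$ for $L$ large, so the series defining $w_{L,\mathbf{r}}$ converges absolutely, hence converges, in the Banach space $H^{\min(p,2)}(\R^2)$.

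Orthonormality is the usual Löwdin symmetric orthogonalization. Since the $v_{L,\mathbf{r}}$ are real-valued, $Q_L$ is a real, symmetric, positive-definite bounded operator on $\ell^2(\lscr^\mathbf{R})$ (recall $Q_L=I+K_L$ with $\norme{K_L}=\grandO{T_L^{d_0-}}$), and so is $Q_L^{-\frac12}$; in particular its matrix is real and symmetric. Using the absolute convergence just established to interchange the sums with the $L^2$-inner product, one computes $\pdtsc{w_{L,\mathbf{r}}}{w_{L,\mathbf{s}}}_{L^2(\R^2)}=\sum_{\mathbf{r'},\mathbf{s'}}Q_L^{-\frac12}(\mathbf{r},\mathbf{r'})\,Q_L(\mathbf{r'},\mathbf{s'})\,Q_L^{-\frac12}(\mathbf{s'},\mathbf{s})=\bigl(Q_L^{-\frac12}Q_LQ_L^{-\frac12}\bigr)(\mathbf{r},\mathbf{s})=\delta(\mathbf{r},\mathbf{s})$. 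The symmetry relation \eqref{eq: properties_wannier1} follows similarly: by \eqref{eq: QL_invariant_bis} the matrix $Q_L^{-\frac12}$ is $G$-invariant, so applying the isometry $s_Lgs_L^{-1}$ (which acts linearly and continuously on $H^{\min(p,2)}(\R^2)$) term by term, using the action \eqref{eq: action_v} on the $v_{L,\mathbf{r'}}$, and reindexing the sum by $\mathbf{r'}\mapsto g\cdot\mathbf{r'}$ turns $(s_Lgs_L^{-1})\cdot w_{L,\mathbf{r}}$ into $w_{L,g\cdot\mathbf{r}}$.

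For the pointwise bound, one starts from $\abs{w_{L,\mathbf{r}}(\mathbf{x})}\leq\sum_{\mathbf{r'}\in\lscr^\mathbf{R}}\abs{Q_L^{-\frac12}(\mathbf{r},\mathbf{r'})}\,\abs{v_{L,\mathbf{r'}}(\mathbf{x})}$ and inserts the two bounds above; since $T_L^{(1-\epsilon)\abs{\mathbf{r}-\mathbf{r'}}}=e^{-(1-\epsilon)\sqrt{\mu}\abs{L\mathbf{r}-L\mathbf{r'}}}$, the generic term is $\lesssim e^{-(1-\epsilon)\sqrt{\mu}(\abs{L\mathbf{r}-L\mathbf{r'}}+\abs{\mathbf{x}-L\mathbf{r'}})}$. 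For any $\eta\in\intoo{0}{1}$, adding $(1-\eta)$ times the triangle inequality $\abs{L\mathbf{r}-L\mathbf{r'}}+\abs{\mathbf{x}-L\mathbf{r'}}\geq\abs{\mathbf{x}-L\mathbf{r}}$ to $\eta$ times the trivial bound $\abs{L\mathbf{r}-L\mathbf{r'}}+\abs{\mathbf{x}-L\mathbf{r'}}\geq\abs{L\mathbf{r}-L\mathbf{r'}}$ gives $\abs{L\mathbf{r}-L\mathbf{r'}}+\abs{\mathbf{x}-L\mathbf{r'}}\geq(1-\eta)\abs{\mathbf{x}-L\mathbf{r}}+\eta\abs{L\mathbf{r}-L\mathbf{r'}}$, so the sum is bounded by $e^{-(1-\epsilon)(1-\eta)\sqrt{\mu}\abs{\mathbf{x}-L\mathbf{r}}}\sum_{\mathbf{r'}\in\lscr^\mathbf{R}}e^{-(1-\epsilon)\eta\sqrt{\mu}L\abs{\mathbf{r}-\mathbf{r'}}}$, and the remaining lattice sum is finite and uniformly bounded in $\mathbf{r}$ for $L$ large. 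Choosing $\eta$ small and absorbing the loss into $\epsilon$ yields \eqref{eq: properties_wannier2}.

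I do not expect a genuine obstacle here: once Lemma~\ref{lemma: localization_inverse_G} and Proposition~\ref{prop: exponential_decay_v_tilde} are in hand, the four assertions reduce to absolutely convergent manipulations of infinite matrices together with a triangle-inequality computation. The one point demanding some care is the two-exponential bookkeeping in the last step, where one must split the decay between the ``$\mathbf{x}$ far from $L\mathbf{r}$'' direction and the ``$\mathbf{r'}$ far from $\mathbf{r}$'' direction so that the lattice sum converges while still retaining the exponential decay in $\abs{\mathbf{x}-L\mathbf{r}}$.
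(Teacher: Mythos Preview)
Your proposal is correct and follows essentially the same approach as the paper: the paper likewise derives convergence from the summability in Lemma~\ref{lemma: localization_inverse_G} together with a uniform $H^{\min(p,2)}$ bound on the $v_{L,\mathbf{r'}}$ (the paper obtains this bound via Proposition~\ref{prop: regularity} rather than via \eqref{eq: closedness_modified_original_operator}, but these are equivalent here), states orthonormality ``by construction'', proves \eqref{eq: properties_wannier1} by the same reindexing using \eqref{eq: QL_invariant_bis} and \eqref{eq: action_v}, and for \eqref{eq: properties_wannier2} simply cites Proposition~\ref{prop: exponential_decay_v_tilde} and Lemma~\ref{lemma: localization_inverse_G} without writing out the triangle-inequality split that you make explicit.
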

	\begin{proof}
		By Proposition~\ref{prop: regularity} and the fact that $(Q_L^{-\frac{1}{2}}(\mathbf{r},\mathbf{r'} ))_{\mathbf{r'} \in \lscr^\mathbf{R}}$ is summable (see Lemma~\ref{lemma: localization_inverse_G}), we have
		\begin{align*}
		\sum_{\mathbf{r'} \in \lscr^\mathbf{R}}\abs{Q_L^{-\frac{1}{2}}(\mathbf{r},\mathbf{r'} )}\norme{v_{L,\mathbf{r'}}}_{H^{\min(p,2)}(\R^2)} 
		&\leq \sum_{\mathbf{r'} \in \lscr^\mathbf{R}}\abs{Q_L^{-\frac{1}{2}}(\mathbf{r},\mathbf{r'} )} \left(\norme{H_{L,\mathbf{r'}} v_{L,\mathbf{r'}}}_{L^2(\R^2)} + C\norme{v_{L,\mathbf{r'}}}_{L^2(\R^2)}\right) \\
		&\leq\left( \abs{\mu_L} + C\right) \sum_{\mathbf{r'} \in \lscr^\mathbf{R}}\abs{Q_L^{-\frac{1}{2}}(\mathbf{r},\mathbf{r'} )} < \infty \pt
		\end{align*}
		Notice that we can take the same constant $C$ for each $\mathbf{r'} \in \lscr^\mathbf{R}$ in Proposition~\ref{prop: regularity} since we have $\norme{V_{L,\mathbf{r}}}_{L^p(\R^2)+L^\infty(\R^2)} = \norme{V_{L,\mathbf{r'}}}_{L^p(\R^2)+L^\infty(\R^2)}$ for all $(\mathbf{r,r'}) \in \lscr^\mathbf{R}\times \lscr^\mathbf{R}$, see \eqref{eq:regularity_constant}. This shows the first assertion of Proposition~\ref{prop: properties_wannier}. In addition, the family $w_{L,\mathbf{r}}$ is orthonormal by construction.
		
		To show \eqref{eq: properties_wannier1}, we use the $G$-invariance of $Q^{-\frac{1}{2}}_L$ (see the right side of \eqref{eq: QL_invariant_bis})
		\begin{multline*}
		(s_Lgs_L^{-1})\cdot w_{L,\mathbf{r}} 
		= \sum_{\mathbf{r'} \in \lscr^\mathbf{R}} Q_L^{-\frac{1}{2}}(\mathbf{r,r'}) v_{L,g\cdot \mathbf{r'}}
		= \sum_{\mathbf{r'} \in \lscr^\mathbf{R}} Q_L^{-\frac{1}{2}}(\mathbf{r},g^{-1}\cdot\mathbf{r'}) v_{L,\mathbf{r'}}\\
		= \sum_{\mathbf{r'} \in \lscr^\mathbf{R}} Q_L^{-\frac{1}{2}}(g\cdot\mathbf{r},\mathbf{r'}) v_{L,\mathbf{r'}}
		= w_{L,g\cdot\mathbf{r}} \pt
		\end{multline*}
		Finally, the exponential bound \eqref{eq: properties_wannier2} is shown by using Proposition~\ref{prop: exponential_decay_v_tilde} and Lemma~\ref{lemma: localization_inverse_G}.
	\end{proof}
	In particular, the family $\{w_{L,\mathbf{r}}\}_{\mathbf{r} \in \lscr^\mathbf{R}}$ is generated by the shifts of the functions $\{w_{L,\mathbf{r}}\}_{\mathbf{r} \in \mathbf{R}}$ with respect to $\lscr_L$. We introduce the Bloch-Floquet transform of these functions
	\begin{align*}
	u_{L,\mathbf{r}}(\mathbf{k},\mathbf{x}) \coloneqq  \mathcal{U}_\mathrm{BF}(w_{L,\mathbf{r}})(\mathbf{k,x}) = \sum_{\mathbf{u} \in \lscr} e^{i\mathbf{k\cdot}L\mathbf{u}} w_{L,\mathbf{u+r}}(\mathbf{x}) \pt
	\end{align*}
	We have the following pseudo-periodicity relations: for all $\mathbf{k} \in \Gamma_L^*$, for all $\mathbf{x} \in \Gamma_L$, 
	\begin{align*}
	u_{L,\mathbf{r}}(\mathbf{k},\cdot) \in \lk^2(\Gamma_L) \et u_{L,\mathbf{r}}(\cdot,\mathbf{x}) \in \lper^2(\Gamma_L^*) \pt
	\end{align*}
	Hence, we can extend the definition of $u_{L,\mathbf{r}}$ for any element of $\mathbf{r\in}\lscr^\mathbf{R}$ with
	\begin{align}
	\label{eq: periodicity_relations}
	\forall \mathbf{k} \in\Gamma_L^*,\quad\forall \mathbf{u} \in \lscr,\quad\forall \mathbf{r} \in \mathbf{R},\quad \mathcal{U}_\mathrm{BF}(w_{L,\mathbf{u+r}})(\mathbf{k,\cdot}) =  e^{-i\mathbf{k}\cdot L\mathbf{u}} u_{L,\mathbf{r}}(\mathbf{k},\cdot) \pt
	\end{align}
	\begin{prop}
		\label{prop: properties_u}
		For all $\mathbf{k}\in\Gamma_L^*$ and for all $(\mathbf{r,r'}) \in \mathbf{R}^2$, we have the relations
		\begin{gather*}
		\pdtsc{u_{L,\mathbf{r}}(\mathbf{k},\cdot)}{u_{L,\mathbf{r'}}(\mathbf{k},\cdot)}_{L^2(\Gamma_L)} = \delta_{\mathbf{rr'}} \, ,\\
		\notag \pdtsc{\mathcal{U}_\mathrm{BF}(v_{L,\mathbf{r}})(\mathbf{k},\cdot)}{\mathcal{U}_\mathrm{BF}(v_{L,\mathbf{r'}})(\mathbf{k},\cdot)}_{L^2(\Gamma_L)} = \delta_{\mathbf{rr'}} + \grandO{T_L^{d_0-}} \, , \\
		\spn\left\{ u_{L,\mathbf{r}}(\mathbf{k},\cdot)\right\}_{\mathbf{r} \in \mathbf{R}} = \spn \left\{\mathcal{U}_{\mathrm{BF}}(v_{L,\mathbf{r}})(\mathbf{k},\cdot)\right\}_{\mathbf{r} \in \mathbf{R}} \pt
		\end{gather*}
	\end{prop}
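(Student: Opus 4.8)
The plan is to reduce all three statements to a single ``unfolding'' identity for the Bloch--Floquet transform. For $\phi,\psi\in L^2(\R^2)$ decaying fast enough, with $\mathcal{U}_\mathrm{BF}(\phi)(\mathbf{k},\cdot)=\sum_{\mathbf{u}\in\lscr}e^{i\mathbf{k}\cdot L\mathbf{u}}\phi(\cdot-L\mathbf{u})$, expanding the two sums, setting $\mathbf{w}=\mathbf{u}'-\mathbf{u}$, substituting $\mathbf{y}=\mathbf{x}-L\mathbf{u}$ and using that $\{\Gamma_L-L\mathbf{u}\}_{\mathbf{u}\in\lscr}$ tiles $\R^2$ gives
\begin{align*}
\pdtsc{\mathcal{U}_\mathrm{BF}(\phi)(\mathbf{k},\cdot)}{\mathcal{U}_\mathrm{BF}(\psi)(\mathbf{k},\cdot)}_{L^2(\Gamma_L)}=\sum_{\mathbf{u}\in\lscr}e^{i\mathbf{k}\cdot L\mathbf{u}}\,\pdtsc{\phi}{\psi(\cdot-L\mathbf{u})}_{L^2(\R^2)}\pt
\end{align*}
The interchange of sum and integral is legitimate because, by the pointwise exponential bounds \eqref{eq: properties_wannier2} and Proposition~\ref{prop: exponential_decay_v_tilde}, the overlaps $\pdtsc{\phi}{\psi(\cdot-L\mathbf{u})}$ decay exponentially in $\abs{\mathbf{u}}$; hence the right-hand side converges absolutely and uniformly in $\mathbf{k}$, and the identity holds pointwise in $\mathbf{k}\in\Gamma_L^*$, as needed.

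\textbf{First two identities.} Apply this with $\phi=w_{L,\mathbf{r}}$, $\psi=w_{L,\mathbf{r}'}$: since $w_{L,\mathbf{u}+\mathbf{r}'}=w_{L,\mathbf{r}'}(\cdot-L\mathbf{u})$ and the family $\{w_{L,\mathbf{r}}\}_{\mathbf{r}\in\lscr^\mathbf{R}}$ is orthonormal (Proposition~\ref{prop: properties_wannier}), the right-hand side equals $\sum_{\mathbf{u}\in\lscr}e^{i\mathbf{k}\cdot L\mathbf{u}}\delta_{\mathbf{r},\mathbf{u}+\mathbf{r}'}$, which for $\mathbf{r},\mathbf{r}'\in\mathbf{R}\subset\Gamma$ is nonzero only when $\mathbf{u}=0$ and $\mathbf{r}=\mathbf{r}'$; this gives the first identity. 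For the second, take $\phi=v_{L,\mathbf{r}}$, $\psi=v_{L,\mathbf{r}'}$ (with $v_{L,\mathbf{u}+\mathbf{r}'}=v_{L,\mathbf{r}'}(\cdot-L\mathbf{u})$, which follows from $V_{L,\mathbf{u}+\mathbf{r}'}=V_{L,\mathbf{r}'}(\cdot-L\mathbf{u})$ and the positivity normalization of the eigenfunctions), so the right-hand side becomes $\sum_{\mathbf{u}\in\lscr}e^{i\mathbf{k}\cdot L\mathbf{u}}Q_L(\mathbf{r},\mathbf{u}+\mathbf{r}')$. The term $\mathbf{u}=0$, $\mathbf{r}=\mathbf{r}'$ contributes $Q_L(\mathbf{r},\mathbf{r})=1$; each other term is bounded by $C_\epsilon T_L^{(1-\epsilon)\abs{\mathbf{r}-\mathbf{r}'-\mathbf{u}}}$ via \eqref{eq: matrix_element_Q_L}, with $\abs{\mathbf{r}-\mathbf{r}'-\mathbf{u}}\geq d_0$ since $\mathbf{r}$ and $\mathbf{u}+\mathbf{r}'$ are distinct vertices of $\lscr^\mathbf{R}$, and summing over $\mathbf{u}\in\lscr$ by a standard annulus count gives $\grandO{T_L^{d_0-}}$ uniformly in $\mathbf{k}$.

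\textbf{Span equality.} Applying $\mathcal{U}_\mathrm{BF}$ termwise to $w_{L,\mathbf{r}}=\sum_{\mathbf{r}'\in\lscr^\mathbf{R}}Q_L^{-1/2}(\mathbf{r},\mathbf{r}')v_{L,\mathbf{r}'}$ (the interchange being justified by the localization of $Q_L^{-1/2}$ from Lemma~\ref{lemma: localization_inverse_G} and the decay of the $v_{L,\mathbf{r}'}$), writing $\mathbf{r}'=\mathbf{u}+\mathbf{s}$ with $\mathbf{u}\in\lscr$, $\mathbf{s}\in\mathbf{R}$ and using \eqref{eq: periodicity_relations} yields
\begin{align*}
u_{L,\mathbf{r}}(\mathbf{k},\cdot)=\sum_{\mathbf{s}\in\mathbf{R}}M_L(\mathbf{k})_{\mathbf{r},\mathbf{s}}\,\mathcal{U}_\mathrm{BF}(v_{L,\mathbf{s}})(\mathbf{k},\cdot),\qquad M_L(\mathbf{k})_{\mathbf{r},\mathbf{s}}\coloneqq\sum_{\mathbf{u}\in\lscr}Q_L^{-1/2}(\mathbf{r},\mathbf{u}+\mathbf{s})\,e^{-i\mathbf{k}\cdot L\mathbf{u}}\pt
\end{align*}
Hence the left span is contained in the right one, and both are $N$-dimensional: the $u_{L,\mathbf{r}}(\mathbf{k},\cdot)$ are orthonormal by the first identity, while the $\mathcal{U}_\mathrm{BF}(v_{L,\mathbf{s}})(\mathbf{k},\cdot)$ have Gram matrix $I_N+\grandO{T_L^{d_0-}}$ by the second identity, invertible for $L$ large (uniformly in $\mathbf{k}$); so the two spans coincide. (Equivalently, $M_L(\mathbf{k})$ is the Bloch symbol of the $\lscr$-periodic operator $Q_L^{-1/2}$, and $Q_L^{1/2}Q_L^{-1/2}=I$ passes to symbols, giving invertibility directly.)

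\textbf{Main point of care.} There is no genuine obstacle here; the only delicate points are the rigorous justification of the unfolding identity \emph{pointwise} in $\mathbf{k}$ and of the termwise application of $\mathcal{U}_\mathrm{BF}$ to the infinite series defining $w_{L,\mathbf{r}}$ — both controlled by the uniform exponential decay of $v_{L,\mathbf{r}}$ and $w_{L,\mathbf{r}}$ and by the localization of $Q_L^{-1/2}$ from Lemma~\ref{lemma: localization_inverse_G}.
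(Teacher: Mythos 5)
Your proposal is correct and follows essentially the same route as the paper: it uses the same unfolding identity for the Bloch--Floquet inner product (which the paper writes directly and justifies via absolute convergence using Proposition~\ref{prop: properties_wannier} and Lemma~\ref{lemma: convolution_estimate}), the same orthonormality of $\{w_{L,\mathbf{r}}\}$ for the first identity and the Gram-matrix estimate \eqref{eq: matrix_element_Q_L} for the second, and the same explicit decomposition of $u_{L,\mathbf{r}}(\mathbf{k},\cdot)$ in terms of $\mathcal{U}_\mathrm{BF}(v_{L,\mathbf{s}})(\mathbf{k},\cdot)$ for the span equality, identical to the paper's \eqref{eq: explicit_decomposition} after a reindexing of the lattice sum. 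The only additions are cosmetic: you state the unfolding as a standalone lemma and observe that the $\mathcal{U}_\mathrm{BF}(v_{L,\mathbf{s}})$ are linearly independent via the Gram matrix, whereas the paper simply concludes from the inclusion of an $N$-dimensional space in an at-most-$N$-dimensional one.
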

	\begin{proof}
		We have
		\begin{gather*}
		\pdtsc{u_{L,\mathbf{r}}(\mathbf{k},\cdot)}{u_{L,\mathbf{r'}}(\mathbf{k},\cdot)}_{L^2(\Gamma_L)} = \sum_{\mathbf{u} \in \lscr} e^{i\mathbf{k}\cdot L\mathbf{u}} \pdtsc{w_{L,\mathbf{r}}}{w_{L,\mathbf{u+r'}}}_{L^2(\R^2)}  = \delta_{\mathbf{rr'}}\, , \\
		\pdtsc{\mathcal{U}_\mathrm{BF}(v_{L,\mathbf{r}})(\mathbf{k},\cdot)}{\mathcal{U}_\mathrm{BF}(v_{L,\mathbf{r'}})(\mathbf{k},\cdot)}_{L^2(\Gamma_L)} =\sum_{\mathbf{u} \in \lscr} e^{i\mathbf{k}\cdot L\mathbf{u}} \pdtsc{v_{L,\mathbf{r}}}{v_{L,\mathbf{u+r'}}}_{L^2(\R^2)} = \delta_{\mathbf{rr'}} + \grandO{T_L^{d_0-}} \pt
		\end{gather*}
		We have used that $\{w_{L,\mathbf{r}}\}_{\mathbf{r}\in\lscr^\mathbf{R}}$ is an orthonormal family (see Proposition~\ref{prop: properties_wannier}) in the first line and the estimate \eqref{eq: matrix_element_Q_L} in the second one. Note that the computation is justified since the sum converges absolutely. Indeed, we have
		\begin{align*}
		\int_{\Gamma_L} \sum_{\mathbf{u,u'} \in \lscr} \abs{w_{L,\mathbf{u+r}}w_{L,\mathbf{u'+r'}}} = \sum_{\mathbf{u} \in \lscr} \int_{\R^2} \abs{w_{L,\mathbf{r}}w_{L,\mathbf{u+r'}}} = \sum_{\mathbf{u} \in \lscr} (\abs{w_{L,\mathbf{r}}} \ast \abs{w_{L,\mathbf{r'}}})(\mathbf{u}) \, ,
		\end{align*}
		and we call on Proposition~\ref{prop: properties_wannier} and Lemma~\ref{lemma: convolution_estimate} to show that this sum is finite. Same argument applies to the second sum. The family $\{u_{L,\mathbf{r}}(\mathbf{k},\cdot)\}_{\mathbf{r} \in \mathbf{R}}$ forming an orthonormal system, it spans a subspace of dimension $N = \abs{\mathbf{R}}$. Moreover, using the periodicity properties of $Q_L^{-\frac{1}{2}}$ (see \eqref{eq: QL_invariant_bis}), the exponential bounds on $v_{L,\mathbf{r'}}$ (Proposition~\ref{prop: exponential_decay_v_tilde}) and the fact that $Q_L^{-\frac{1}{2}}$ is well-localized (Lemma~\ref{lemma: localization_inverse_G}), one can show 
		\begin{align}
		\label{eq: explicit_decomposition}
		u_{L,\mathbf{r}}(\mathbf{k},\cdot) = \sum_{\mathbf{r'} \in \mathbf{R}} \left(\sum_{\mathbf{u} \in \lscr} Q^{-\frac{1}{2}}_L(\mathbf{u + r},\mathbf{r'}) e^{i\mathbf{k} \cdot L\mathbf{u}}\right) \mathcal{U}_\mathrm{BF}(v_{L,\mathbf{r'}})(\mathbf{k},\cdot)\pt
		\end{align}
		This concludes the proof of Proposition~\ref{prop: properties_u}.
	\end{proof}
	
	\subsection{Interaction matrix}
	
	We denote by $E_L \coloneqq \spn\{w_{L,\mathbf{r}}\}_{\mathbf{r}\in\lscr^\mathbf{R}}$ the subspace spanned by the family $\{w_{L,\mathbf{r}}\}_{\mathbf{r\in}\lscr^\mathbf{R}}$ and by $P_L$ the orthogonal projector on $E_L$. In this section, we compute the matrix $A_L$ of $P_L H_L P_{L{\restreinta E_L}}$ in the orthonormal basis $\{w_{L,\mathbf{r}}\}_{\mathbf{r} \in \lscr^\mathbf{R}}$. 
	We recall the definition \eqref{eq: interaction_coefficient} of the interaction coefficient
	\begin{align*}
	\theta_{L,k}=  \langle v_{L,\mathbf{r}_k}, V_{L,\mathbf{r}_k}(1-\chi_{L,\mathbf{r}_k'})v_{L,\mathbf{r}_k'}\rangle_{L^2(\R^2)} \, ,
	\end{align*}
	where $\mathbf{c}_k = (\mathbf{r}_k,\mathbf{r}_k') \in \mathcal{O}_k$. In Section~\ref{sec:properties-of-the-mono-atomic-operators}, we prove that $\theta_{L,k}$ does not depend on the choice of the pair $\mathbf{c}_k \in \mathcal{O}_k$. Also, using the relations~\eqref{eq: properties_wannier1} and Assumption~\ref{hypo_2}, we deduce that $A_L$ has the same symmetries as $V_L$:
	\begin{align}
	\label{eq: A_invariance}
	\forall g \in G,\quad g\cdot A_L = A_L \, ,
	\end{align}
	where the action of $G$ on $\mathcal{B}(\ell^2(\lscr^\mathbf{R}))$ is defined in \eqref{eq: action_matrix}. 
	
	The next proposition provides second order estimates on the coefficients of $A_L$. We recall that $-\mu_L$ denotes the common first eigenvalue of the mono-atomic operators $H_{L,\mathbf{R}}=-\Delta+V_{L,\mathbf{r}}$, see Section~\ref{sec:properties-of-the-mono-atomic-operators}.
	\begin{prop}
		\label{prop: interaction_matrix}
		We have the following expansion
		\begin{align*}
		\boxed{A_L = -\mu_L I + \sum_{k=1}^m \theta_{L,k} J_k + \grandO{T_L^{(1+\delta)d_0-} + T_L^{d_1-}} \et \forall k\in\{1,\dots,m\},\quad \theta_{L,k} = \grandO{T_L^{d_0-}} \, ,}
		\end{align*}
		where $J_k$, defined in \eqref{eq: defintion_J}, is the adjacency matrix associated with the orbit $\mathcal{O}_k$ and where $d_1 > d_0$ denotes the second nearest neighbor distance in $\lscr^\mathbf{R}$.
	\end{prop}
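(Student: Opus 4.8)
The plan is to express $A_L$ through the Gram matrix $Q_L$ and a single family of off-diagonal quantities. Since $w_{L,\mathbf{r}} = \sum_{\mathbf{s}\in\lscr^\mathbf{R}} Q_L^{-\frac12}(\mathbf{r},\mathbf{s})\, v_{L,\mathbf{s}}$ with $Q_L^{-\frac12}$ real, symmetric and exponentially localized (Lemma~\ref{lemma: localization_inverse_G}), the matrix $A_L$ of $P_L H_L P_{L{\restreinta E_L}}$ in the basis $\{w_{L,\mathbf{r}}\}_{\mathbf{r}\in\lscr^\mathbf{R}}$ will be
\[
A_L = Q_L^{-\frac12}\,\widetilde{A}_L\, Q_L^{-\frac12}, \qquad \widetilde{A}_L(\mathbf{r},\mathbf{r'}) \coloneqq \pdtsc{v_{L,\mathbf{r}}}{H_L v_{L,\mathbf{r'}}}_{L^2(\R^2)}\,,
\]
the rearrangement of the double series being justified by the localization of $Q_L^{-\frac12}$ together with the exponential decay of the $v_{L,\mathbf{r}}$ (all pairings understood as quadratic forms on $H^1(\R^2)$; the relevant integrals are absolutely convergent because $v_{L,\mathbf{r}}\in L^\infty(\R^2)$ decays exponentially while $V_L\in\lper^p(\Gamma_L)$). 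Testing the eigenequation $(-\Delta + V_{L,\mathbf{r'}})v_{L,\mathbf{r'}} = -\mu_L v_{L,\mathbf{r'}}$ against $v_{L,\mathbf{r}}$ and using $V_L - V_{L,\mathbf{r'}} = (1-\chi_{L,\mathbf{r'}})V_L$, I obtain $\widetilde{A}_L = -\mu_L Q_L + \Theta_L$ with $\Theta_L(\mathbf{r},\mathbf{r'}) \coloneqq \pdtsc{v_{L,\mathbf{r}}}{(1-\chi_{L,\mathbf{r'}})V_L\, v_{L,\mathbf{r'}}}_{L^2(\R^2)}$. Since $Q_L^{-\frac12}Q_L Q_L^{-\frac12} = I$, this yields the \emph{exact} identity $A_L = -\mu_L I + Q_L^{-\frac12}\Theta_L Q_L^{-\frac12}$, and everything reduces to estimating $\Theta_L$.

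For that, the recurring tool will be the pointwise bound $\abs{v_{L,\mathbf{r}}(\mathbf{x})\, v_{L,\mathbf{r'}}(\mathbf{x})} \leq C_\epsilon\, e^{-(1-\epsilon)\sqrt{\mu}(\abs{\mathbf{x}-L\mathbf{r}} + \abs{\mathbf{x}-L\mathbf{r'}})}$ from Proposition~\ref{prop: exponential_decay_v_tilde}: on the support of the cut-offs below, the exponent is bounded from below by the claimed power of $T_L$, and splitting $\abs{\mathbf{x}-L\mathbf{r}} + \abs{\mathbf{x}-L\mathbf{r'}} \geq \alpha(\cdots) + (1-\alpha)\abs{\mathbf{x}-L\mathbf{r}}$ with $\alpha\in\intoo{0}{1}$ leaves an integrable tail $e^{-c\abs{\mathbf{x}-L\mathbf{r}}}$ which, integrated against $\abs{V_L}$ by Hölder's inequality cell-by-cell together with \eqref{rem: grow_polynomial}, costs only a polynomial factor in $L$ (absorbed in the ``$-$''). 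Applying this: (i) on the diagonal, $1-\chi_{L,\mathbf{r}}$ vanishes on $B(L\mathbf{r}, \tfrac{1+\delta}{2}d_0 L)$, hence $\Theta_L(\mathbf{r},\mathbf{r}) = \grandO{T_L^{(1+\delta)d_0-}}$; (ii) if $\{\mathbf{r},\mathbf{r'}\}\in\mathcal{O}_k$, writing $1 = \chi_{L,\mathbf{r}} + (1-\chi_{L,\mathbf{r}})$ isolates exactly $\theta_{L,k}$ from a remainder supported where both cut-offs are active, so where $\abs{\mathbf{x}-L\mathbf{r}}+\abs{\mathbf{x}-L\mathbf{r'}} \geq (1+\delta)d_0 L$, giving $\Theta_L(\mathbf{r},\mathbf{r'}) = \theta_{L,k} + \grandO{T_L^{(1+\delta)d_0-}}$; and since $\chi_{L,\mathbf{r}}$ is supported in $B(L\mathbf{r}, (\tfrac12+\delta)d_0 L)$, where $\abs{\mathbf{x}-L\mathbf{r}}+\abs{\mathbf{x}-L\mathbf{r'}} \geq d_0 L$ by the triangle inequality, also $\theta_{L,k} = \grandO{T_L^{d_0-}}$; (iii) if $\abs{\mathbf{r}-\mathbf{r'}} = \ell \geq d_1$, the triangle inequality gives $\abs{\mathbf{x}-L\mathbf{r}}+\abs{\mathbf{x}-L\mathbf{r'}} \geq \ell L$, hence $\Theta_L(\mathbf{r},\mathbf{r'}) = \grandO{T_L^{\ell-}}$. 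Thus $\Theta_L - \sum_{k=1}^m \theta_{L,k}J_k$ has entries of size $\grandO{T_L^{(1+\delta)d_0-}}$ on the diagonal and on nearest-neighbor pairs, and $\grandO{T_L^{\abs{\mathbf{r}-\mathbf{r'}}-}}$ with $\abs{\mathbf{r}-\mathbf{r'}}\geq d_1$ otherwise; since each row meets boundedly many nearest-neighbor pairs and the number of lattice points in a ball of radius $R$ is $\grandO{R^2}$, Schur's test gives $\normeL{\Theta_L - \sum_{k=1}^m \theta_{L,k}J_k} = \grandO{T_L^{(1+\delta)d_0-} + T_L^{d_1-}}$.

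To conclude, I will use $\normeL{Q_L^{-\frac12}-I} = \grandO{T_L^{d_0-}}$ from \eqref{eq: norm_K_L}--\eqref{eq: gram_schmidt}, and $\normeL{\Theta_L} = \grandO{T_L^{d_0-}}$ (the $J_k$ have bounded degree and each $\theta_{L,k} = \grandO{T_L^{d_0-}}$). Expanding $Q_L^{-\frac12}\Theta_L Q_L^{-\frac12} = \Theta_L + (Q_L^{-\frac12}-I)\Theta_L + \Theta_L(Q_L^{-\frac12}-I) + (Q_L^{-\frac12}-I)\Theta_L(Q_L^{-\frac12}-I)$, the last three terms are $\grandO{T_L^{2d_0-}}$, which is absorbed into $\grandO{T_L^{(1+\delta)d_0-}}$ precisely because $\delta\in\intoo{0}{1/2}\subset\intoo{0}{1}$. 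Combined with $A_L = -\mu_L I + Q_L^{-\frac12}\Theta_L Q_L^{-\frac12}$, this is the announced expansion, including the bound on $\theta_{L,k}$.

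The main obstacle will be the middle paragraph: controlling the many integrals pairing the merely-$L^p$, possibly Coulomb-singular, periodic potential $V_L$ against the localized modes, while (a) producing \emph{operator-norm} bounds rather than entrywise ones --- hence the systematic appeal to Schur's test --- and (b) making sure that the unavoidable polynomial-in-$L$ prefactors coming from \eqref{rem: grow_polynomial} and from lattice-point counting are harmless once hidden in the $\grandO{T_L^{\bullet-}}$ notation. Once the estimate on $\Theta_L$ is in place, the algebra $A_L = -\mu_L I + Q_L^{-\frac12}\Theta_L Q_L^{-\frac12}$ and the bookkeeping with $\delta < 1$ are routine.
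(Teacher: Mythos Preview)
Your proposal is correct and follows essentially the same approach as the paper. Your matrix $\Theta_L$ is exactly the paper's $D_L$, and the identity $A_L = -\mu_L I + Q_L^{-1/2}\Theta_L Q_L^{-1/2}$ together with the expansion of $Q_L^{-1/2}$ is precisely the paper's first and fourth steps; the only cosmetic difference is that your entrywise estimates of $\Theta_L$ use a single unified ``$\alpha$-splitting'' of the exponent, whereas the paper treats the diagonal via the Kato-type inequality (Proposition~\ref{lemma: kato_type_inequality}), the off-diagonal via a ball/complement Hölder decomposition, and the nearest-neighbor remainder via Cauchy--Schwarz --- but the outputs (estimate~\eqref{eq: estimate_D_L} and $\beta_{L,k}=\theta_{L,k}+\grandO{T_L^{(1+\delta)d_0-}}$) coincide with yours.
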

	\begin{proof}
		For clarity, we decompose the proof into several steps.
		\paragraph{First step.} We give an explicit expression of the matrix elements of $A_L$. Let $(\mathbf{r,r'}) \in (\lscr^\mathbf{R})^2$. First, notice that the quadratic form associated with $H_L$ is continuous on $H^1(\R^2)$ by Proposition~\ref{lemma: kato_type_inequality}. Moreover, the series defining $w_{L,\mathbf{r}}$ converges in $H^1(\R^2)$ by Proposition~\ref{prop: properties_wannier}. Therefore, we have
		\begin{align*}
		\pdtsc{w_{L,\mathbf{r}}}{H_L w_{L,\mathbf{r'}}}_{L^2(\R^2)} 
		& =  \sum_{\mathbf{s,s'} \in \lscr^\mathbf{R}} Q_L^{-\frac{1}{2}}(\mathbf{r,s})Q_L^{-\frac{1}{2}}(\mathbf{r',s'}) \pdtsc{v_{L,\mathbf{s}}}{(-\Delta + V_L)v_{L,\mathbf{s'}}}_{L^2(\R^2)} \\
		&= - \mu_L \delta(\mathbf{r,r'}) + (Q_L^{-\frac{1}{2}} D_L Q_L^{-\frac{1}{2}})(\mathbf{r,r'}) \, ,
		\end{align*}
		where $D_L$ is the infinite matrix defined by
		\[
		\forall (\mathbf{r,r'}) \in \lscr^\mathbf{R} \times \lscr^\mathbf{R},\quad D_L(\mathbf{r},\mathbf{r'}) =  \pdtsc{v_{L,\mathbf{r}}}{V_L\left(1-\chi_{L,\mathbf{r'}} \right) v_{L,\mathbf{r'}}}_{L^2(\R^2)} \pt
		\]
		Using Assumption~\ref{hypo_2} and the relations~\eqref{eq: action_v}, we have
		\begin{align}
		\label{eq: D_L_invariance}
		\forall g \in G,\quad g\cdot D_L = D_L \pt
		\end{align}
		\paragraph{Second step.} Now, we show that $D_L$ is well-localized in the following sense: for all $\epsilon >0$ there exists $C_\epsilon>0$ such that for all $(\mathbf{r,r'}) \in( \lscr^\mathbf{R})^2$ we have
		\begin{align}
		\label{eq: estimate_D_L}
		\abs{D_L(\mathbf{r,r'})} \leq 
		\begin{cases}
		C_\epsilon T_L^{(1-\epsilon)(1+\delta)d_0} &\text{if } \mathbf{r=r'}\, ,\\
		C_\epsilon T_L^{(1-\epsilon)\lvert\mathbf{r-r'}\rvert}&\text{if } \mathbf{r\neq r'} \pt
		\end{cases}
		\end{align}
		To lighten the notation, we introduce $\rho_{\mathbf{rr'}} = v_{L,\mathbf{r}}(1-\chi_{L,\mathbf{r'}}) v_{L,\mathbf{r'}}$.	
		
		If $\mathbf{r=r'}$ then Proposition~\ref{lemma: kato_type_inequality} shows that (recall that $\sqrt{1-\chi_{L,\mathbf{r}}}$ is smooth enough, see \eqref{eq: chi_regularity})
		\begin{align*}
		\abs{D_L(\mathbf{r,r})} \leq \norme{\nabla \sqrt{\rho_{\mathbf{rr}}} }^2_{L^2(\R^2)} + C(1,L) \norme{\sqrt{\rho_{\mathbf{rr}}}}^2_{L^2(\R^2)} \, ,
		\end{align*}
		where $C(1,L)$ is polynomial in $L$. We obtain $\abs{D_L(\mathbf{r,r'})} = \grandO{T_L^{(1+\delta)d_0-}}$ with Proposition~\ref{prop: exponential_decay_v_tilde}.
		
		Now, assume $\mathbf{r\neq r'}$.
		Using the periodicity of $V_L$ and Hölder's inequality, we have
		\begin{align*}
		\abs{D_L(\mathbf{r,r'})} \leq \norme{V_L}_{\lper^p} \sum_{\mathbf{u} \in \lscr} \norme{\rho_{\mathbf{rr'}}}_{L^{p'}(\Gamma_L+L\mathbf{u})} \, ,
		\end{align*}
		where $p'\in\intoo{1}{\infty}$ is the conjugate exponent of $p$. Let $\mathcal{B}$ be the ball $B\left(\frac{\mathbf{r+r'}}{2},\left(2 + \frac{d}{d_0}\right)\abs{\mathbf{r-r'}}\right)$ where $d_0>0$ is defined in \eqref{eq: min_distance_lattice} and where $d \coloneqq \max_{\mathbf{x}\in\Gamma} \abs{\mathbf{x}}$. We split the sum as follows
		\begin{align*}
		\sum_{\mathbf{u} \in \lscr} \norme{\rho_{\mathbf{rr'}}}_{L^{p'}(\Gamma_L+L\mathbf{u})} 
		= \sum_{\mathbf{u} \in  \lscr\cap \mathcal{B}} \norme{\rho_{\mathbf{rr'}}}_{L^{p'}(\Gamma_L+L\mathbf{u})}  +  \sum_{\mathbf{u} \in \lscr \cap \mathcal{B}^c} \norme{\rho_{\mathbf{rr'}}}_{L^{p'}(\Gamma_L+L\mathbf{u})} = I_1+I_2\pt
		\end{align*}
		The number of vertices of $\lscr$ inside the ball $\mathcal{B}$ is estimated by a $\grandO{\abs{\mathbf{r-r'}}^2}$. Thus, using the concavity inequality $\sum_{i=1}^n \abs{a_i}^\theta \leq n^{1-\theta} \left(\sum_{i=1}^n \abs{a_i}\right)^\theta$ valid for all $\theta \in \intff{0}{1}$, all $n\in\N$ and all $(a_1,\dots,a_n)\in\C^n$, we can bound $I_1$ by $\grandO{\abs{\mathbf{r-r'}}^{2/p} \norme{\rho_{\mathbf{rr'}}}_{L^{p'}(\R^2)}}$. Then, we estimate $\norme{\rho_{\mathbf{rr'}}}_{L^{p'}(\R^2)}$ using Lemma~\ref{lemma: convolution_estimate} and Proposition~\ref{prop: exponential_decay_v_tilde}: for all $\epsilon>0$ there exists $C_\epsilon,C_\epsilon'>0$ such that
		\begin{align}
		\label{eq: interaction_12}
		I_1 \leq C_\epsilon \frac{\abs{\mathbf{r-r'}}^{2/p}}{1+(L\abs{\mathbf{r-r'}})^{\frac{1}{2} - \frac{1}{p'}}} T_L^{(1-\epsilon/2) \abs{\mathbf{r-r'}}} \leq C'_\epsilon  T_L^{(1-\epsilon) \abs{\mathbf{r-r'}}} \pt
		\end{align}
		Now, we estimate $I_2$. First, we notice that $ \mathcal{B}^c \subset \mathcal{B}_1^c \cap \mathcal{B}_2^c$ where $\mathcal{B}_1 \coloneqq B\left(\mathbf{r},\left(\frac{3}{2} + \frac{d}{d_0}\right)\abs{\mathbf{r-r'}}\right)$ and $\mathcal{B}_2 \coloneqq B\left(\mathbf{r'},\left(\frac{3}{2} + \frac{d}{d_0}\right)\abs{\mathbf{r-r'}}\right)$. If $(\mathbf{r,r'}) \in \lscr^\mathbf{R} \times \lscr^\mathbf{R}$ are such that $\abs{\mathbf{r-r'}} \geq d_0 $, we have
		\begin{align*}
		\bigcup_{\mathbf{u} \in \lscr \cap \mathcal{B}_1^c} \left(\Gamma_L + L\mathbf{u}\right)
		\subset L (\mathcal{B}_1^c+\Gamma) 
		\subset L B\left(\mathbf{r}, \left(\frac{3}{2} + \frac{d}{d_0}\right) \abs{\mathbf{r-r'}} - d\right)^c  
		\subset L B\left(\mathbf{r}, \frac{3}{2} \abs{\mathbf{r-r'}}\right)^c \, ,
		\end{align*}
		and similar ones for $\mathcal{B}_2$. Then, using the Cauchy-Schwarz inequality, the Sobolev embedding $ H^1(\Gamma_L+L\mathbf{u}) \subset L^{2p'}(\Gamma_L+L\mathbf{u})$ (where the embedding constant does not depend on $L$ nor on $\mathbf{u}$, see~\cite[Theorem 2.28]{aubin1998somenonlinear}) and the inclusions stated above, we have 
		\begin{align*}
		I_2
		&\leq \sum_{\mathbf{u} \in \lscr \cap \mathcal{B}_1^c \cap \mathcal{B}_2^c} \norme{v_{L,\mathbf{r}}}_{L^{2p'}(\Gamma_L+L\mathbf{u})}\norme{v_{L,\mathbf{r'}}}_{L^{2p'}(\Gamma_L+L\mathbf{u})} \\
		&\leq \frac{1}{2} \left(\sum_{\mathbf{u} \in \lscr \cap \mathcal{B}_1^c} \norme{v_{L,\mathbf{r}}}_{L^{2p'}(\Gamma_L+L\mathbf{u})}^2 + \sum_{\mathbf{u} \in \lscr \cap \mathcal{B}_2^c} \norme{v_{L,\mathbf{r'}}}_{L^{2p'}(\Gamma_L+L\mathbf{u})}^2\right) \\
		&\lesssim \norme{v_{L,\mathbf{r}}}_{H^1(B(L\mathbf{r},\frac{3}{2} L \abs{\mathbf{r-r'}})^c)}^2 + \norme{v_{L,\mathbf{r'}}}_{H^1(B(L\mathbf{r'},\frac{3}{2} L\abs{\mathbf{r-r'}})^c)}^2 \pt
		\end{align*}
		Then, using Proposition~\ref{prop: exponential_decay_v_tilde}, we deduce that for $L$ large enough we have
		\begin{align}
		\label{eq: interaction_13}
		I_2 \lesssim \norme{v_{L,\mathbf{r}}}_{H^1(B(L\mathbf{r},\frac{3}{2} L\abs{\mathbf{r-r'}})^c)}^2
		\lesssim  T_L^{2\abs{\mathbf{r-r'}}}  \pt
		\end{align}
		Combining \eqref{eq: interaction_13} with \eqref{eq: interaction_12} shows \eqref{eq: estimate_D_L}.
		\paragraph{Third step.}
		Using the symmetry relations \eqref{eq: D_L_invariance}, we can write $D_L$ as
		\[
		D_L = \alpha_L I + \sum_{k=1}^m\beta_{L,k} J_k + R_L \, ,
		\]
		for some constants $\alpha_L,\beta_{L,1},\dots,\beta_{L,m}$ such that the diagonal and the nearest neighbors matrix elements of $R_L$ are equal to zero.
		The operator norm of $R_L$ is estimated thanks to \eqref{eq: estimate_D_L} and by employing the same method used for the estimation of $\norme{Q_L-I- \sum_{k=1}^m \zeta_{L,k} J_k}$ in Section~\ref{sec:orthomalization-procedure}: $\norme{R_L} = \grandO{T_L^{d_1-}}$ where $d_1 > d_0$ is the second nearest neighbor distance. Now, we estimate the coefficients $\alpha_L$ and $\beta_{L,k}$.
		
		Let $\mathbf{r} \in \lscr^\mathbf{R}$. Using that $\sqrt{1-\chi_{L,\mathbf{r}}} \in \mathcal{C}^1(\R^2)$ (see the assumption \eqref{eq: chi_regularity}), we have $\sqrt{1-\chi_{L,\mathbf{r}}}v_{L,\mathbf{r}} \in H^1(\R^2)$. Then, adapting the proof of inequality \eqref{eq: kato_type_inequality1} from Proposition~\ref{lemma: kato_type_inequality}, we can show
		\begin{align*}
		\alpha_L = \pdtsc{\sqrt{1-\chi_{L,\mathbf{r}}}v_{L,\mathbf{r}}}{V_L \sqrt{1-\chi_{L,\mathbf{r}}}v_{L,\mathbf{r}}}_{L^2(\R^2)}
		\lesssim \norme{V_L}_{\lper^p} \norme{\sqrt{1-\chi_{L,\mathbf{r}}}v_{L,\mathbf{r}}}^2_{H^1(\R^2)} \pt
		\end{align*}
		Then, using that $\supp \sqrt{1-\chi_{L,\mathbf{r}}} \subset B\left(L\mathbf{r}, \frac{1+\delta}{2}Ld_0\right)^c$, $\norme{\nabla \chi_{L,\mathbf{r}}}_{L^\infty} = \grandO{L^{-1}}$ (see Section~\ref{sec:effective-mono-atomic-operator} for these two facts), the estimate~\eqref{rem: grow_polynomial} and the estimate \eqref{eq: exponential_decay_v_tilde_integrale} from Proposition~\ref{prop: exponential_decay_v_tilde}, we show that
		\begin{align*}
		\alpha_L = \grandO{T_L^{(1+\delta)d_0-}} \pt
		\end{align*}
		
		Let $(\mathbf{r,r'})\in \mathscr{P}^\mathbf{R}$. Using the Cauchy-Schwarz inequality and the previous estimates twice, we obtain
		\begin{align*}
		\absL{\pdtsc{(1-\chi_{L,\mathbf{r}})v_{L,\mathbf{r}}}{V_L(1-\chi_{L,\mathbf{r'}})v_{L,\mathbf{r'}}}_{L^2(\R^2)}} = \grandO{T_L^{(1+\delta)d_0-}} \pt
		\end{align*}
		We recall that the nearest neighbor interaction coefficient $\theta_{L,k}$ defined in \eqref{eq: interaction_coefficient} does not depend on the choice of $(\mathbf{r,r'}) \in \mathcal{O}_k$. Hence, we have shown
		\begin{align*}
		\beta_{L,k} = \theta_{L,k} + \grandO{T_L^{(1+\delta)d_0-}} \et D_L = \sum_{k=1}^{m}\theta_{L,k} J_k + \grandO{T_L^{(1+\delta)d_0-} + T_L^{d_1-}} \pt
		\end{align*}
		\paragraph{Fourth step.}
		From \eqref{eq: estimate_D_L}, we deduce $\beta_{L,k} = \grandO{T_L^{d_0-}}$ which implies $\theta_{L,k} = \grandO{T_L^{d_0-}}$.
		Then, using~\eqref{eq: gram_schmidt}, the norm estimation \eqref{eq: norm_K_L} and the first step of the proof, we have
		\[
		A_L = -\mu_L I + Q_L^{-\frac{1}{2}} D_L Q_L^{-\frac{1}{2}} = -\mu_L I +  \sum_{k=1}^{m}\theta_{L,k} J_k + \grandO{T_L^{(1+\delta)d_0-}+ T_L^{d_1-}} \pt \qedhere
		\]
		This concludes the proof of Proposition~\ref{prop: interaction_matrix}.
	\end{proof}
	We recall that $H_L$ admits the decomposition in fibers $H_L = \fint_{\Gamma_L^*}^\oplus H_L(\mathbf{k}) \diff \mathbf{k}$ where $H_L(\mathbf{k}) = -\Delta + V_L$ on the subspace $\lk^2(\Gamma_L)$. We denote by $A_L(\mathbf{k})$ the matrix of $H_L(\mathbf{k})$ restricted to the $N$-dimensional subspace $E_L(\mathbf{k}) \coloneqq \spn\left\{u_{L,\mathbf{r}}(\mathbf{k},\cdot)\right\}_{\mathbf{r} \in \mathbf{R}}$.
	\begin{cor}
		\label{cor: expansion_A_k}
		We have the expansion 
		\begin{align*}
		\boxed{A_L(\mathbf{k}) = -\mu_L I + \sum_{k=1}^{m}\theta_{L,k} B_k(L\mathbf{k}) + \grandO{T_L^{(1+\delta)d_0-}+T_L^{d_1-}} \, ,}
		\end{align*}
		where the $O$ makes sense in $\mathcal{C}^\infty_\mathrm{per}(\Gamma_L^*)$, where $d_1>d_0$ denotes the second nearest neighbor distance in $\lscr^\mathbf{R}$ and where the matrix elements of $B_k(\mathbf{k})$ are given by
		\begin{align*}
		\forall \mathbf{k} \in \Gamma^*,\quad\forall (\mathbf{r,r'}) \in \mathbf{R}^2,\quad B_k(\mathbf{k}; \mathbf{r,r'}) = \sum_{\substack{\mathbf{u} \in \lscr \\ (\mathbf{\mathbf{r},\mathbf{u+r'}}) \in \mathcal{O}_k}} e^{i\mathbf{k}\cdot \mathbf{u}} \pt
		\end{align*}
	\end{cor}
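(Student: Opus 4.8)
The plan is to identify $A_L(\mathbf{k})$ with the Bloch--Floquet transform of the infinite matrix $A_L$ of Proposition~\ref{prop: interaction_matrix} and to transport the expansion obtained there fiberwise. First I would compute the matrix elements of $A_L(\mathbf{k})$ explicitly. Since $\{u_{L,\mathbf{r}}(\mathbf{k},\cdot)\}_{\mathbf{r}\in\mathbf{R}}$ is an orthonormal basis of $E_L(\mathbf{k})$ by Proposition~\ref{prop: properties_u} and each $u_{L,\mathbf{r}}(\mathbf{k},\cdot)$ lies in $H^1_\mathbf{k}(\Gamma_L)$ (because $w_{L,\mathbf{r}}\in H^1(\R^2)$), the entry $A_L(\mathbf{k})_{\mathbf{r,r'}}$ equals the value of the sesquilinear form of $H_L(\mathbf{k})$ on the pair $\bigl(u_{L,\mathbf{r}}(\mathbf{k},\cdot),u_{L,\mathbf{r'}}(\mathbf{k},\cdot)\bigr)$. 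Expanding $u_{L,\mathbf{r}}(\mathbf{k},\cdot)=\sum_{\mathbf{u}\in\lscr}e^{i\mathbf{k}\cdot L\mathbf{u}}w_{L,\mathbf{u+r}}$, using the $\lscr_L$-periodicity of the integrand and the absolute convergence already verified in the proof of Proposition~\ref{prop: properties_u} (via Proposition~\ref{prop: properties_wannier} and Lemma~\ref{lemma: convolution_estimate}), I obtain, for $\mathbf{r,r'}\in\mathbf{R}$,
\[
A_L(\mathbf{k})_{\mathbf{r,r'}}=\sum_{\mathbf{u}\in\lscr}e^{i\mathbf{k}\cdot L\mathbf{u}}\,\pdtsc{w_{L,\mathbf{r}}}{H_L w_{L,\mathbf{u+r'}}}_{L^2(\R^2)}=\sum_{\mathbf{u}\in\lscr}e^{i\mathbf{k}\cdot L\mathbf{u}}\,A_L(\mathbf{r},\mathbf{u+r'})\,,
\]
which in particular shows $\mathbf{k}\mapsto A_L(\mathbf{k})_{\mathbf{r,r'}}$ is $\lscr_L^*$-periodic and smooth, so that the statement in $\mathcal{C}^\infty_\mathrm{per}(\Gamma_L^*)$ makes sense.

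Next I would plug in the expansion of Proposition~\ref{prop: interaction_matrix}, writing $A_L(\mathbf{r},\mathbf{u+r'})=-\mu_L\,\delta_{\mathbf{r},\mathbf{u+r'}}+\sum_{k=1}^m\theta_{L,k}J_k(\mathbf{r},\mathbf{u+r'})+r_L(\mathbf{r},\mathbf{u+r'})$ with $r_L$ the remainder matrix, and evaluate the three contributions. For $\mathbf{r,r'}\in\mathbf{R}$ one has $\delta_{\mathbf{r},\mathbf{u+r'}}=\delta_{\mathbf{u},\mathbf{0}}\delta_{\mathbf{r},\mathbf{r'}}$, so the first term produces exactly $-\mu_L I$; by the very definition~\eqref{eq: definition_B} of the matrices $B_k$, $\sum_{\mathbf{u}\in\lscr}e^{i\mathbf{k}\cdot L\mathbf{u}}J_k(\mathbf{r},\mathbf{u+r'})=B_k(L\mathbf{k};\mathbf{r,r'})$, which gives the main term $\sum_k\theta_{L,k}B_k(L\mathbf{k})$. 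The bound $\theta_{L,k}=\grandO{T_L^{d_0-}}$ is inherited verbatim from Proposition~\ref{prop: interaction_matrix}.

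It then remains to control the remainder $\mathbf{k}\mapsto\sum_{\mathbf{u}\in\lscr}e^{i\mathbf{k}\cdot L\mathbf{u}}r_L(\mathbf{r},\mathbf{u+r'})$ in $\mathcal{C}^\infty_\mathrm{per}(\Gamma_L^*)$, and this is the only delicate point: the operator-norm estimate of Proposition~\ref{prop: interaction_matrix} is not enough, so I would instead use the \emph{entrywise} localization of $r_L$ that is actually established in the course of its proof — namely $\absL{r_L(\mathbf{s},\mathbf{s'})}\lesssim T_L^{(1+\delta)d_0-}$ whenever $\mathbf{s}=\mathbf{s'}$ or $\{\mathbf{s},\mathbf{s'}\}\in\mathscr{P}^\mathbf{R}$, and $\absL{r_L(\mathbf{s},\mathbf{s'})}\lesssim_\epsilon T_L^{(1-\epsilon)\absL{\mathbf{s-s'}}}$ otherwise, which follow from~\eqref{eq: estimate_D_L} together with the localization of $Q_L^{-\frac12}$ from Lemma~\ref{lemma: localization_inverse_G}. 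Summing over $\mathbf{u}\in\lscr$: the boundedly many terms with $\mathbf{u+r'}$ equal or nearest-neighbor to $\mathbf{r}$ contribute $\grandO{T_L^{(1+\delta)d_0-}}$, while the remaining terms, for which $\absL{\mathbf{r}-\mathbf{u-r'}}\geq d_1$, sum to $\grandO{T_L^{d_1-}}$ by a geometric-series argument. Differentiating $j$ times in $\mathbf{k}$ multiplies the $\mathbf{u}$-th summand by $\grandO{(L\absL{\mathbf{u}})^j}$, a factor polynomial in $L$ and in $\absL{\mathbf{r}-\mathbf{u-r'}}$, hence absorbed by the exponential decay at the cost of an arbitrarily small loss in the exponent — which is precisely what the notation $T_L^{\alpha-}$ records. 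Thus every $\mathbf{k}$-derivative of the remainder is $\grandO{T_L^{(1+\delta)d_0-}+T_L^{d_1-}}$ uniformly in $\mathbf{k}\in\Gamma_L^*$, completing the proof. The main obstacle, then, is not any isolated estimate but the bookkeeping: one must invoke the remainder of Proposition~\ref{prop: interaction_matrix} in its localized form rather than merely in operator norm, so that the $\mathbf{k}$-dependence, and in particular smoothness with uniform derivative bounds, can be extracted.
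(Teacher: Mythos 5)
Your proposal is correct and follows essentially the same route as the paper: both identify the matrix elements $A_L(\mathbf{k})_{\mathbf{r,r'}}$ with the Fourier series $\sum_{\mathbf{u}\in\lscr}e^{i\mathbf{k}\cdot L\mathbf{u}}A_L(\mathbf{r},\mathbf{u+r'})$, plug in the entrywise decomposition from the proof of Proposition~\ref{prop: interaction_matrix}, and control the tail using the localization estimate~\eqref{eq: estimate_GDG}. If anything, you are slightly more explicit than the paper in spelling out the entrywise bounds on the remainder $r_L$ (diagonal and nearest-neighbor entries of size $T_L^{(1+\delta)d_0-}$, exponential decay beyond) and in justifying the $\mathcal{C}^\infty_\mathrm{per}$ statement by noting that each $\mathbf{k}$-derivative only costs a factor $(L\absL{\mathbf{u}})^j$, which is absorbed by the exponential decay at the price of an arbitrarily small loss in the exponent.
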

	\begin{proof}
		Let $(\mathbf{r,r'}) \in \mathbf{R}^2$. Using that $\mathcal{U}_\mathrm{BF}$ defines an isometry from $L^2(\R^2)$ to $L^2(\Gamma_L^*,L^2(\Gamma_L))$ and the relations \eqref{eq: periodicity_relations}, we have
		\begin{align*}
		A_L(\mathbf{r,u+r'}) 
		& = \pdtsc{w_{L,\mathbf{r}}}{H_L w_{L,\mathbf{u+r'}}}_{L^2(\R^2)} \\
		& = \fint_{\Gamma_L^*}  e^{i\mathbf{k}\cdot L\mathbf{u}} \pdtsc{u_{L,\mathbf{r}}(\mathbf{k},\cdot)}{H_L(\mathbf{k}) u_{L,\mathbf{r'}}(\mathbf{k},\cdot)}_{L^2(\Gamma_L)}\diff \mathbf{k} = \fint_{\Gamma_L^*}  e^{i\mathbf{k}\cdot L\mathbf{u}} A_L(\mathbf{k};\mathbf{r,r'}) \diff \mathbf{k}\pt
		\end{align*}
		Te first step of the proof of Proposition~\ref{prop: interaction_matrix} implies that the off-diagonal coefficients of $A_L$ are given by $Q^{-\frac{1}{2}}_L D_L Q^{-\frac{1}{2}}_L$. By Lemma~\ref{lemma: localization_inverse_G} and the estimates \eqref{eq: estimate_D_L}, the off-diagonal coefficients of $D_L$ and $Q_L^{-\frac{1}{2}}$ are exponentially decaying with the same rate. One can show the similar property for $Q^{-\frac{1}{2}}_L D_L Q^{-\frac{1}{2}}_L$, that is, for all $\epsilon \in \intoo{0}{1}$ there exists $C_\epsilon >0$ such that
		\begin{align}
		\label{eq: estimate_GDG}
		\forall (\mathbf{r,r'}) \in (\lscr^\mathbf{R})^2,\quad \absL{\left(Q^{-\frac{1}{2}}_L D_L Q^{-\frac{1}{2}}_L\right)(\mathbf{r,r'})} \leq C_\epsilon T_L^{(1-\epsilon)\abs{\mathbf{r-r'}}} \pt
		\end{align}
		Recalling the relations \eqref{eq: A_invariance}, we deduce that $A_L(\mathbf{k};\mathbf{r,r'})$ is given by the Fourier series
		\begin{multline*}
		A_L(\mathbf{k};\mathbf{r,r'}) 
		= \sum_{\mathbf{u} \in \lscr} e^{i\mathbf{k}\cdot L\mathbf{u}} A_L(\mathbf{r},\mathbf{u+r'})
		= -\mu_L \delta(\mathbf{r,r'}) + \sum_{k=1}^{m} \theta_{L,k} \left(\sum_{\mathbf{u}\in\lscr} e^{iL\mathbf{k}\cdot \mathbf{u}} \mathds{1}_{(\mathbf{\mathbf{r},\mathbf{u+r'}}) \in \mathcal{O}_k}\right)\\
		+ \sum_{\mathbf{u} \neq 0} e^{iL\mathbf{k}\cdot \mathbf{u}} (Q_L^{-\frac{1}{2}}D_LQ_L^{-\frac{1}{2}})(\mathbf{r,u+r'}) \mathds{1}_{(\mathbf{\mathbf{r},\mathbf{u+r'}}) \notin \mathscr{P}^\mathbf{R}}\, ,
		\end{multline*}
		where the last term is $\grandO{T_L^{d_1-}}$ in $\mathcal{C}^\infty_\mathrm{per}(\Gamma_L^*)$ thanks to \eqref{eq: estimate_GDG}.
	\end{proof}
	
	\subsection{An energy estimate on \texorpdfstring{$E_L^\perp(\mathbf{k})$}{ELk}}
	
	For $\mathbf{k}\in\Gamma_L^*$, we recall that the $N$-dimensional vector space $E_L(\mathbf{k})$ is given by
	\begin{align*}
	E_L(\mathbf{k}) \coloneqq \spn \left\{u_{L,\mathbf{r}}(\mathbf{k},\cdot)\right\}_{\mathbf{r\in R}} = \spn \left\{\mathcal{U}_\mathrm{BF}(v_{L,\mathbf{r}})(\mathbf{k},\cdot)\right\}_{\mathbf{r\in R}} \pt
	\end{align*}
	The equality comes from Proposition~\ref{prop: properties_u}. We denote by $P_L(\mathbf{k})$ the associated orthogonal projection and $P_L^\perp(\mathbf{k}) \coloneqq 1 - P_L(\mathbf{k})$ the orthogonal projection on $E_L^\perp(\mathbf{k}) \coloneqq E_L(\mathbf{k})^\perp$. Our goal is to show that there is an uniform energy gap on $E^\perp_L(\mathbf{k})$ compared with $E_L(\mathbf{k})$.
	\begin{prop}
		\label{prop: energy_estimate}
		Let $\mathbf{k} \in \Gamma_L^*$. We have the energy estimate: for all $u \in H^1_\mathbf{k}(\Gamma_L) \cap E^\perp_L(\mathbf{k})$, we have
		\begin{align}
		\label{eq: energy_estimate}
		\pdtsc{u}{H_L(\mathbf{k})u}_{L^2(\Gamma_L)} \geq (g_L - \mu_L + \petito{1}) \norme{u}^2_{L^2(\Gamma_L)} = (g - \mu + \petito{1}) \norme{u}^2_{L^2(\Gamma_L)} \, ,
		\end{align}
		where $g_L >0$ and $\mu_L >0$ (resp. $g>0$ and $\mu >0$) denote the first spectral gap and the common lowest eigenvalue of the effective mono-atomic operators $\{H_{L,\mathbf{r}}\}_{\mathbf{r\in}\lscr^\mathbf{R}}$ (resp. reference operator $H$) introduced in \eqref{eq: effective_mono_atomic_operator} (resp. in \eqref{eq: reference_operator}).
	\end{prop}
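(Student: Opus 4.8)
The plan is to combine an IMS-type localization formula with the separation of the wells and the uniform exponential decay of $v_{L,\mathbf{r}}$ (Proposition~\ref{prop: exponential_decay_v_tilde}). First I would reformulate the constraint $u\in E_L^\perp(\mathbf{k})$: by the change of variables used in the proof of Proposition~\ref{prop: properties_u}, for every $\mathbf{r}\in\lscr^\mathbf{R}$ one has $\pdtsc{u}{\mathcal{U}_\mathrm{BF}(v_{L,\mathbf{r}})(\mathbf{k},\cdot)}_{L^2(\Gamma_L)}=\int_{\R^2}u\,\overline{v_{L,\mathbf{r}}}$, where $u$ on the right denotes the pseudo-periodic function on $\R^2$ and the integral converges absolutely by the exponential decay of $v_{L,\mathbf{r}}$; hence $u\in E_L^\perp(\mathbf{k})$ is equivalent to $\pdtsc{u}{v_{L,\mathbf{r}}}_{L^2(\R^2)}=0$ for all $\mathbf{r}\in\lscr^\mathbf{R}$. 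Next I would fix a small $\rho>0$ and build a partition of unity on $\R^2$ consisting of $\lscr_L$-periodic functions $\eta_{L,\mathbf{r}}$ ($\mathbf{r}\in\mathbf{R}$), each a sum of disjointly supported bumps with the one near $L\mathbf{r}$ supported in $B(L\mathbf{r},\tfrac{1}{2}Ld_0)$ and equal to $1$ on $B(L\mathbf{r},\rho L)$, together with an interstitial function $\eta_{L,0}$ supported in $\{\dist(\cdot,\lscr_L^\mathbf{R})\geq\rho L\}$, so that $\sum_{\mathbf{r}\in\mathbf{R}}\eta_{L,\mathbf{r}}^2+\eta_{L,0}^2\equiv1$, $\sqrt{1-\sum_{\mathbf{r}}\eta_{L,\mathbf{r}}^2}\in\mathcal{C}^1(\R^2)$ and $\norme{\nabla\eta_{L,\bullet}}_{L^\infty}=\grandO{L^{-1}}$; the inequality $\tfrac{1}{2}d_0<\tfrac{1+\delta}{2}d_0$ guarantees $\chi_{L,\mathbf{u}+\mathbf{r}}\equiv1$ on each bump, hence $V_L=V_{L,\mathbf{u}+\mathbf{r}}$ there. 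The IMS formula then gives
\begin{align*}
\pdtsc{u}{H_L(\mathbf{k})u}_{L^2(\Gamma_L)}=\sum_{\mathbf{r}\in\mathbf{R}}\pdtsc{\eta_{L,\mathbf{r}}u}{H_L(\mathbf{k})\eta_{L,\mathbf{r}}u}_{L^2(\Gamma_L)}+\pdtsc{\eta_{L,0}u}{H_L(\mathbf{k})\eta_{L,0}u}_{L^2(\Gamma_L)}-\sum_{\bullet}\norme{\abs{\nabla\eta_{L,\bullet}}u}_{L^2(\Gamma_L)}^2\, ,
\end{align*}
where the gradient sum is $\grandO{L^{-2}}\norme{u}_{L^2(\Gamma_L)}^2=\petito{1}\norme{u}^2$.

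Then I would estimate the two kinds of terms. On $\supp\eta_{L,0}$, Assumption~\ref{hypo_04} gives $\abs{V_L}\leq\omega_L$ with $\omega_L\to0$, so, since $g_L-\mu_L=\min\bigl(\sigma(H_{L,\mathbf{r}})\setminus\{-\mu_L\}\bigr)\leq0$, the interstitial term is $\geq-\omega_L\norme{\eta_{L,0}u}^2\geq(g_L-\mu_L)\norme{\eta_{L,0}u}^2-\omega_L\norme{u}_{L^2(\Gamma_L)}^2$. For each atomic term, the disjointness of the bumps and the periodicity of $\abs{u}$ fold the quadratic form to the plane: $\pdtsc{\eta_{L,\mathbf{r}}u}{H_L(\mathbf{k})\eta_{L,\mathbf{r}}u}_{L^2(\Gamma_L)}=\pdtsc{\phi u}{H_{L,\mathbf{r}}\phi u}_{L^2(\R^2)}$, where $\phi$ is the bump of $\eta_{L,\mathbf{r}}$ centered at $L\mathbf{r}$ and $\phi u\in H^1(\R^2)$ is compactly supported. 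Writing $\phi u=c\,v_{L,\mathbf{r}}+w$ with $w\perp v_{L,\mathbf{r}}$, the equation $H_{L,\mathbf{r}}v_{L,\mathbf{r}}=-\mu_Lv_{L,\mathbf{r}}$ removes the cross term and the variational bound $\pdtsc{w}{H_{L,\mathbf{r}}w}\geq(g_L-\mu_L)\norme{w}^2$ gives $\pdtsc{\phi u}{H_{L,\mathbf{r}}\phi u}\geq(g_L-\mu_L)\norme{\phi u}^2-g_L\abs{c}^2$. Using $\pdtsc{u}{v_{L,\mathbf{r}}}_{L^2(\R^2)}=0$ one has $\abs{c}=\abs{\pdtsc{u}{(1-\phi)v_{L,\mathbf{r}}}_{L^2(\R^2)}}$, and since $1-\phi$ vanishes on $B(L\mathbf{r},\rho L)$, a cell-by-cell Cauchy--Schwarz using the periodicity of $\abs{u}$ and Proposition~\ref{prop: exponential_decay_v_tilde} yields $\abs{c}\leq C_\epsilon T_L^{(1-\epsilon)\rho}\norme{u}_{L^2(\Gamma_L)}$; one also has $\norme{\phi u}_{L^2(\R^2)}^2=\norme{\eta_{L,\mathbf{r}}u}_{L^2(\Gamma_L)}^2$. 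Summing all contributions, invoking the partition of unity $\sum_{\mathbf{r}}\norme{\eta_{L,\mathbf{r}}u}^2+\norme{\eta_{L,0}u}^2=\norme{u}_{L^2(\Gamma_L)}^2$ and the convergences $g_L\to g$, $\mu_L\to\mu$, I arrive at \eqref{eq: energy_estimate} with a $\petito{1}$ uniform in $\mathbf{k}\in\Gamma_L^*$ and in $u$.

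The main obstacle is the bookkeeping: making the fold/unfold identifications between $L^2_\mathbf{k}(\Gamma_L)$ and $L^2(\R^2)$ precise, and constructing the $\mathcal{C}^1$ partition of unity with $\grandO{L^{-1}}$ gradients compatible with the two support requirements ($V_L=V_{L,\mathbf{r}}$ near the atoms, $\abs{V_L}$ uniformly small in the interstitial region). The one genuinely delicate point is bounding the orthogonality defect $\abs{c}$ by $\norme{u}_{L^2(\Gamma_L)}$ rather than $\norme{u}_{H^1}$, which forces the cell-by-cell splitting above and uses the periodicity of $\abs{u}^2$ crucially.
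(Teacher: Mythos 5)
Your proof is correct and reaches the estimate by a genuinely different localization scheme, though it shares the same skeleton with the paper's (translate the constraint $u\in E_L^\perp(\mathbf{k})$ into $\int_{\R^2} u\,\overline{v_{L,\mathbf{r}}}=0$, apply a periodic IMS formula, and use the stability inequality $\pdtsc{v}{H_{L,\mathbf{r}}v}\geq(g_L-\mu_L)\norme{v}^2-g_L\abs{\langle v,v_{L,\mathbf{r}}\rangle}^2$). The difference is the partition of unity. The paper uses only $N$ periodic pieces $\eta_{L,\mathbf{r}}$ whose bumps $\xi_{L,\mathbf{r}}$ live in balls $B(L\mathbf{r},LR)$ with $R$ large and may overlap between cells; this forces three separate error estimates after folding: the orthogonality defect, a potential-defect term $\pdtsc{\xi_{L,\mathbf{r}}u}{V_L(1-\chi_{L,\mathbf{r}})\xi_{L,\mathbf{r}}u}=\petito{\norme{u}^2}$ (because $V_L\neq V_{L,\mathbf{r}}$ on $\supp\xi_{L,\mathbf{r}}$), and cross terms $\pdtsc{\xi_{L,\mathbf{r}}(\cdot-L\mathbf{u})u}{H_L\xi_{L,\mathbf{r}}u}$, $\mathbf{u}\neq0$, which need a dedicated integration-by-parts lemma (Lemma~\ref{lemma: technical_and_convenient}). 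By adding an interstitial cutoff $\eta_{L,0}$ and shrinking the atomic bumps inside $B(L\mathbf{r},\tfrac{1}{2}Ld_0)\subset\{\chi_{L,\mathbf{r}}\equiv1\}$, you make the bumps pairwise disjoint so that each atomic piece sees exactly $H_{L,\mathbf{r}}$: both the cross terms and the potential defect disappear identically, and the only additional term is the interstitial one, which Assumption~\ref{hypo_04} bounds directly by $-\omega_L\norme{\eta_{L,0}u}^2$ with $\omega_L\to0$. What you buy is fewer and cleaner error terms (no analogue of Lemma~\ref{lemma: technical_and_convenient} is needed, and the unfolded norms sum exactly to $\norme{u}^2$ rather than only from below); what you pay is that your partition of unity must be built with some care to reconcile the $\mathcal{C}^1$ regularity, the $\grandO{L^{-1}}$ gradients, and the two support requirements, which you correctly flag as bookkeeping. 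The delicate step you single out — controlling the defect $\abs{c}$ by $\norme{u}_{L^2(\Gamma_L)}$ via a cell-by-cell Cauchy--Schwarz using the $\lscr_L$-periodicity of $\abs{u}$ and the uniform exponential decay of Proposition~\ref{prop: exponential_decay_v_tilde} — is exactly the paper's estimate $\pdtsc{u}{\xi_{L,\mathbf{r}}v_{L,\mathbf{r}}}_{L^2(\Gamma_L)}=\petito{\norme{u}_{L^2(\Gamma_L)}}$, and your argument for it is sound.
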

	\begin{proof}
		First, we notice that the last equality in \eqref{eq: energy_estimate} comes from \eqref{eq: closedness_modified_original_operator}.	Now, we consider an $\lscr$-periodic partition of unity $\sum_{\mathbf{r \in R}} (\eta_\mathbf{r})^2 = 1 $ with $\eta_\mathbf{r} \in W^{1,\infty}_\mathrm{per}(\Gamma_L) \cap \mathcal{C}^\infty(\R^2)$ where 
		\[
		W^{1,\infty}_\mathrm{per}(\Gamma_L) = \enstq{u \in \lper^\infty(\Gamma_L)}{\partial_1 u \in \lper^\infty(\Gamma_L) \et \partial_2 u \in \lper^\infty(\Gamma_L)}\, ,
		\]
		and satisfying the following conditions
		\begin{gather*}
		\exists \rho >0,\quad\forall \mathbf{r} \in \mathbf{R},\quad \exists \xi_\mathbf{r} \in \mathcal{C}^\infty_c(\R^2),\quad \left(\xi_{\mathbf{r}\restreinta B(\mathbf{r},\rho)} \equiv 1 \et \eta_\mathbf{r} = \sum_{\mathbf{u} \in \lscr} \xi_\mathbf{r}(\cdot - \mathbf{u})\right) \pt
		\end{gather*}
		The parameter $\rho>0$ can be chosen as small as we want. Then, by defining $\eta_{L,\mathbf{r}} \coloneqq \eta_\mathbf{r}(L^{-1} \cdot)$ and $\xi_{L,\mathbf{r}} \coloneqq \xi_\mathbf{r}(L^{-1} \cdot)$, we obtain a $\lscr_L$-periodic partition of unity $\sum_{\mathbf{r \in R}} (\eta_{L,\mathbf{r}})^2 = 1 $.
		Let $ u \in H^1_\mathbf{k}(\Gamma_L) \cap E^\perp_L(\mathbf{k})$. Using the periodic IMS localization formula, we have
		\begin{align}
		\label{eq: energy_estimate_1}
		\pdtsc{u}{H_L(\mathbf{k})u}_{L^2(\Gamma_L)} 
		= \sum_{\mathbf{r \in R}}\left( \pdtsc{\eta_{L,\mathbf{r}} u}{H_L(\mathbf{k})\eta_{L,\mathbf{r}} u}_{L^2(\Gamma_L)} - \pdtsc{u}{\absL{\nabla \eta_{L,\mathbf{r}}}^2u}_{L^2(\Gamma_L)}\right) \pt
		\end{align}
		Because $\xi_\mathbf{r}$ is compactly supported into a ball $B(\mathbf{r},R)$ for some constant $R>0$, we have
		\begin{align*}
		\pdtsc{\eta_{L,\mathbf{r}} u}{H_L(\mathbf{k})\eta_{L,\mathbf{r}} u}_{L^2(\Gamma_L)} = \sum_{\abs{\mathbf{u}} \leq 2R} \pdtsc{\xi_{L,\mathbf{r}}(\cdot - L\mathbf{u}) u}{H_L\xi_{L,\mathbf{r}} u}_{L^2(\R^2)} \pt
		\end{align*}
		Inserting this in \eqref{eq: energy_estimate_1}, taking the real part and using $\norme{\nabla \eta_{L,\mathbf{r}}}_{L^\infty} = \grandO{L^{-1}}$, we get
		\begin{multline*}
		\pdtsc{u}{H_L(\mathbf{k})u}_{L^2(\Gamma_L)} 
		= \sum_{\mathbf{r \in R}} \pdtsc{\xi_{L,\mathbf{r}} u}{H_L\xi_{L,\mathbf{r}} u}_{L^2(\R^2)} \\
		+ \sum_{\mathbf{r \in R}}\sum_{0< \abs{\mathbf{u}} \leq 2R} \Re\left(\pdtsc{\xi_{L,\mathbf{r}}(\cdot - L\mathbf{u}) u}{H_L\xi_{L,\mathbf{r}} u}_{L^2(\R^2)}\right) + \grandO{L^{-2} \norme{u}^2_{L^2(\Gamma_L)}}	\pt
		\end{multline*}
		Recalling the following stability inequality (which results from the functional calculus)
		\begin{align*}
		\forall v \in H^1(\R^2),\quad \pdtsc{v}{H_{L,\mathbf{r}} v}_{L^2(\R^2)} \geq (-\mu_L + g_L) \norme{v}^2_{L^2(\R^2)} - g_L \abs{\langle v,v_{L,\mathbf{r}} \rangle_{L^2(\R^2)}}^2 \, ,
		\end{align*}
		we can write
		\begin{align}
		\label{eq: energy_estimate_inter}
		\pdtsc{u}{H_L(\mathbf{k})u}_{L^2(\Gamma_L)}
		\geq &~ (-\mu_L + g_L) \sum_{\mathbf{r\in R}} \norme{\xi_{L,\mathbf{r}} u}^2_{L^2(\R^2)}  - g_L \sum_{\mathbf{r \in R}} \abs{\langle \xi_{L,\mathbf{r}} u,v_{L,\mathbf{r}} \rangle_{L^2(\R^2)}}^2 \\ 
		\notag
		&+ \sum_{\mathbf{r} \in \mathbf{R}} \pdtsc{\xi_{L,\mathbf{r}} u}{V_L(1- \chi_{L,\mathbf{r}})\xi_{L,\mathbf{r}} u}_{L^2(\R^2)} \\
		\notag
		&+\sum_{\mathbf{r \in R}}\sum_{0< \abs{\mathbf{u}} \leq 2R} \Re\left(\pdtsc{\xi_{L,\mathbf{r}}(\cdot - L\mathbf{u}) u}{H_L\xi_{L,\mathbf{r}} u}_{L^2(\R^2)}\right)+ \petito{\norme{u}^2_{L^2(\Gamma_L)}} \pt
		\end{align}
		Expanding the identity $\norme{u}^2_{L^2(\Gamma_L)} = \sum_{\mathbf{r\in R}} \norme{\eta_{L,\mathbf{r}} u}^2_{L^2(\Gamma_L)}  $, we obtain
		\begin{align*}
		\norme{u}^2_{L^2(\Gamma_L)}= \sum_{\mathbf{r\in R}} \norme{\xi_{L,\mathbf{r}} u}^2_{L^2(\R^2)} + \sum_{\mathbf{r\in R}} \sum_{\mathbf{u}\in\lscr\setminus\{0\}} \pdtsc{\xi_{L,\mathbf{r}}(\cdot - L\mathbf{u})}{\xi_{L,\mathbf{r}}u}_{L^2(\R^2)} \geq  \sum_{\mathbf{r\in R}} \norme{\xi_{L,\mathbf{r}} u}^2_{L^2(\R^2)} \pt
		\end{align*}
		Since $-\mu_L + g_L \leq 0$ (recall that $\sigma_\mathrm{ess}(H_{L,\mathbf{r}}) = \intfo{0}{\infty}$), we have
		\begin{align}
		\label{eq: orthogonality_condition_11}
		(-\mu_L + g_L) \sum_{\mathbf{r\in R}} \norme{\xi_{L,\mathbf{r}} u}^2_{L^2(\R^2)} \geq (-\mu_L + g_L )\norme{u}^2_{L^2(\Gamma_L)} \pt
		\end{align}
		Moreover, using that $E_L(\mathbf{k})$ is spanned by the family $\{\mathcal{U}_\mathrm{BF}(v_{L,\mathbf{r}})(\mathbf{k},\cdot)\}_{\mathbf{r\in R}}$ (see Proposition~\ref{prop: properties_u}), the Cauchy-Schwarz inequality and Proposition~\ref{prop: exponential_decay_v_tilde}, we have
		\begin{align*}
		0 
		& = \pdtsc{u}{\mathcal{U}_\mathrm{BF}(v_{L,\mathbf{r}})(\mathbf{k},\cdot)}_{L^2(\Gamma_L)} \\
		& = \pdtsc{ u}{\xi_{L,\mathbf{r}} v_{L,\mathbf{r}}}_{L^2(\Gamma_L)} + \pdtsc{ u}{(1-\xi_{L,\mathbf{r}}) v_{L,\mathbf{r}}}_{L^2(\Gamma_L)} + \sum_{\mathbf{u} \in \lscr \setminus \{0\}} \pdtsc{ u}{v_{L,\mathbf{r}}}_{L^2(\Gamma_L+L\mathbf{u})} \\
		& = \pdtsc{ u}{\xi_{L,\mathbf{r}} v_{L,\mathbf{r}}}_{L^2(\Gamma_L)} + \petito{\norme{u}_{L^2(\Gamma_L)}}\pt
		\end{align*}
		Thus, we have shown
		\begin{align}
		\label{eq: orthogonality_condition_12}
		\pdtsc{ u}{\xi_{L,\mathbf{r}} v_{L,\mathbf{r}}}_{L^2(\Gamma_L)} = \petito{\norme{u}_{L^2(\Gamma_L)}}\pt
		\end{align}
		Notice that $\supp \xi_{L,\mathbf{r}} \subset \left[\cup_{\mathbf{r' \in \lscr^\mathbf{R} \setminus\{\mathbf{r}\}}} B(L\mathbf{r'},L\rho)\right]^c \cap B(L\mathbf{r},LR)$. Then, if we choose $\rho$ small enough in order to have $B(L\mathbf{r},L\rho) \subset \{\chi_{L,\mathbf{r}} \equiv 1\}$, we have
		\begin{align*}
		\absL{\pdtsc{\xi_{L,\mathbf{r}} u}{V_L(1- \chi_{L,\mathbf{r}})\xi_{L,\mathbf{r}} u}_{L^2(\R^2)}}
		\leq \int_{\R^2} \abs{V_L} \abs{u}^2 \mathds{1}_{\supp \xi_{L,\mathbf{r}}\cap \{\chi_{L,\mathbf{r}} \equiv 1\}^c} 
		\lesssim \norme{V_L \mathds{1}_{\Lambda_L}}_{\lper^\infty(\Gamma_L)} \norme{u}^2_{L^2(\Gamma_L)} \, ,
		\end{align*}
		where $\Lambda_L = \left(\cup_{\mathbf{r'} \in \lscr^\mathbf{R}} B(L\mathbf{r'},L\rho)\right)^c$. Assumption~\ref{hypo_04} says that $\norme{V_L \mathds{1}_{\Lambda_L}}_{\lper^\infty(\Gamma_L)} = \petito{1}$. Thus
		\begin{align}
		\label{eq: orthogonality_condition_13}
		\pdtsc{\xi_{L,\mathbf{r}} u}{V_L(1- \chi_{L,\mathbf{r}})\xi_{L,\mathbf{r}} u}_{L^2(\R^2)} = \petito{\norme{u}^2_{L^2(\Gamma_L)}} \pt
		\end{align}
		With the same support arguments, we can show that 
		\begin{align}
		\label{eq: orthogonality_condition_14}
		\forall \mathbf{u} \in \lscr\setminus \{0\},\quad\pdtsc{\xi_{L,\mathbf{r}}(\cdot - L\mathbf{u}) u}{V_L\xi_{L,\mathbf{r}} u}_{L^2(\R^2)} = \petito{\norme{u}^2_{L^2(\Gamma_L)}} \pt
		\end{align}
		To treat the kinetic part $\Re\left(\pdtsc{\xi_{L,\mathbf{r}}(\cdot - L\mathbf{u}) u}{-\Delta\xi_{L,\mathbf{r}} u}_{L^2(\R^2)}\right)$, we use the following lemma.
		\begin{lemma}
			\label{lemma: technical_and_convenient}
			Let $\xi,\xi' \in \mathcal{C}^\infty_c(\R^2)$ smooth, positive-valued and compactly supported functions. Then, for all $u \in H^1_\mathrm{loc}(\R^2,\C)$, we have
			\begin{align*}
			\Re\left(\pdtsc{\xi u}{-\Delta (\xi' u)}_{L^2(\R^2)}\right) \geq -\frac{1}{2} \int_{\R^2} \abs{u}^2 \left(\xi\Delta \xi' + \xi' \Delta \xi \right)\pt
			\end{align*}
		\end{lemma}
		For clarity, we show this lemma after completing the proof of Proposition~\ref{prop: energy_estimate}. By Lemma~\ref{lemma: technical_and_convenient} and the properties of $\xi_{L,\mathbf{r}}$, we can show
		\begin{align}
		\label{eq: orthogonality_condition_15}
		\forall \mathbf{u} \in \lscr\setminus \{0\},\quad\pdtsc{\xi_{L,\mathbf{r}}(\cdot - L\mathbf{u}) u}{-\Delta \xi_{L,\mathbf{r}} u}_{L^2(\R^2)} = \grandO{L^{-2}\norme{u}^2_{L^2(\Gamma_L)}} \pt
		\end{align}	
		We obtain \eqref{eq: energy_estimate} by inserting \eqref{eq: orthogonality_condition_11}, \eqref{eq: orthogonality_condition_12}, \eqref{eq: orthogonality_condition_13}, \eqref{eq: orthogonality_condition_14} and \eqref{eq: orthogonality_condition_15} in \eqref{eq: energy_estimate_inter} and using that $g_L$ is uniformly bounded for $L \geq 1$ (see \eqref{eq: closedness_modified_original_operator}).
	\end{proof}
	\begin{proof}[Proof of Lemma~\ref{lemma: technical_and_convenient}]
		Integrating by parts several times, we get
		\begin{align*}
		\Re\left(\pdtsc{\xi u}{-\Delta (\xi' u)}_{L^2(\R^2)}\right) 
		&= \int \left(\abs{u}^2 \nabla \xi \cdot \nabla \xi' + \Re\left(u \nabla \overline{u}\right) \cdot \xi \nabla \xi' + \Re\left(\overline{u} \nabla u\right) \cdot \xi' \nabla \xi + \xi \xi' \abs{\nabla u}^2\right) \\
		&\geq  \int \abs{u}^2  \nabla \xi \cdot \nabla \xi' + \frac{1}{2} \nabla \abs{u}^2 \cdot \nabla (\xi \xi') \\
		&\geq -\frac{1}{2} \int \abs{u}^2 (\Delta(\xi \xi') - 2 \nabla \xi \cdot \nabla \xi')\\
		&\geq -\frac{1}{2} \int_{\R^2} \abs{u}^2 \left(\xi\Delta \xi' + \xi' \Delta \xi \right)\pt \qedhere
		\end{align*}
	\end{proof}
	
	\subsection{Estimate of the residual term in the Feshbach-Schur method}\label{sec:estimate-of-the-residual-term-in-the-feshbach-schur-method}
	
	In this section, we use the Feshbach-Schur method to express the $N$ lowest eigenvalues of $H_L(\mathbf{k})$ as a perturbation of the spectrum of the $N\times N$ matrix $A_L(\mathbf{k})$.
	
	As a consequence of Proposition~\ref{prop: energy_estimate}, we have for $L$ large enough 
	\begin{align}
	\label{eq: energy_estimate_away}
	\boxed{P_L^\perp(\mathbf{k}) (H_L(\mathbf{k})+ \mu_L) P_L^\perp(\mathbf{k}) \geq  \frac{g}{2} P_L^\perp(\mathbf{k}) \pt}
	\end{align}
	We decompose $\lk^2(\Gamma_L) = E_L(\mathbf{k}) \overset{\perp}{\oplus} E_L^\perp(\mathbf{k})$ and we write
	\[
	H_L(\mathbf{k}) = \begin{pmatrix}
	A_L(\mathbf{k}) & C_L(\mathbf{k})^* \\
	C_L(\mathbf{k}) & B_L(\mathbf{k})
	\end{pmatrix} \, ,
	\]
	with $ A_L(\mathbf{k}) = P_L(\mathbf{k}) H_L(\mathbf{k}) P_L(\mathbf{k})$, $B_L(\mathbf{k}) = P_L^\perp(\mathbf{k}) H_L(\mathbf{k}) P_L^\perp(\mathbf{k})$ and $C_L(\mathbf{k}) = P_L^\perp(\mathbf{k}) H_L(\mathbf{k}) P_L(\mathbf{k})$. The Feshbach-Schur method~\cite[Chapter 11]{gustafson2020mathematical} says that for $\lambda\leq -\mu_L+g/3$, we have
	\begin{align}
	\label{eq: FS_method}
	\boxed{\lambda \in \sigma(H_L(\mathbf{k}))
		\Longleftrightarrow \lambda \in \sigma\left(A_L(\mathbf{k}) - C_L(\mathbf{k})^*(B_L(\mathbf{k})-\lambda)^{-1} C_L(\mathbf{k}) \right)\pt}
	\end{align}
	In the right hand side appears an $N \times N$ hermitian matrix. Therefore, \eqref{eq: FS_method} implies that $H_L(\mathbf{k})$ has exactly $N$ eigenvalues (counted with multiplicity) in the interval $\intoo{-\infty}{-\mu_L + g/3}$. Corollary~\ref{cor: expansion_A_k} gives the expansion for $A_L(\mathbf{k})$ when $L \to \infty$. The next proposition bounds the residual term $C_L(\mathbf{k})^*(B_L(\mathbf{k})-\lambda)^{-1} C_L(\mathbf{k})$ in operator norm.
	\begin{prop}
		\label{prop: estimation_key}
		For $L$ large enough, for all $\mathbf{k} \in \Gamma_L^*$ and for all $\lambda \in \intoo{-\infty}{-\mu_L+ g/3}$, we have the estimate
		\begin{align*}
		\boxed{\normeL{C_L(\mathbf{k})^*(B_L(\mathbf{k})-\lambda)^{-1} C_L(\mathbf{k})} = \grandO{T_L^{(1+\delta)d_0-}}\, ,}
		\end{align*}
		where the $O$ is independent from $\mathbf{k}$.
	\end{prop}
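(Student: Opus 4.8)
The plan is to use the spectral gap established in Proposition~\ref{prop: energy_estimate} to reduce the estimate to an $H^{-1}_\mathbf{k}(\Gamma_L)$ bound on $(H_L(\mathbf{k})+\mu_L)\phi$ for $\phi\in E_L(\mathbf{k})$, and then to exploit that the functions spanning $E_L(\mathbf{k})$ are, up to an exponentially small error, eigenfunctions of the local operators $H_{L,\mathbf{r}}$. Working in the $H^1_\mathbf{k}$--$H^{-1}_\mathbf{k}$ duality rather than in $L^2$ is essential, since $V_L$ retains Coulomb-type singularities at every vertex of $\lscr_L^\mathbf{R}$ and therefore $(H_L(\mathbf{k})+\mu_L)\phi$ need not lie in $L^2$.

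By Proposition~\ref{prop: energy_estimate} and the boxed bound~\eqref{eq: energy_estimate_away}, for $\lambda\leq-\mu_L+g/3$ and $L$ large one has $B_L(\mathbf{k})-\lambda\geq\frac{g}{6}P_L^\perp(\mathbf{k})$, so $\normeL{(B_L(\mathbf{k})-\lambda)^{-1/2}}\leq\sqrt{6/g}$ uniformly in $\mathbf{k}$ and $\lambda$; by~\eqref{eq:estimate_spectrum} it then suffices to prove $\normeL{(B_L(\mathbf{k})-\lambda)^{-1/2}C_L(\mathbf{k})}=\grandO{T_L^{\frac{1}{2}(1+\delta)d_0-}}$. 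Testing against $\eta\in E_L^\perp(\mathbf{k})$ with $\norme{\eta}_{L^2(\Gamma_L)}=1$ and writing $\zeta\coloneqq(B_L(\mathbf{k})-\lambda)^{-1/2}\eta\in E_L^\perp(\mathbf{k})\cap H^1_\mathbf{k}(\Gamma_L)$, I get $\pdtsc{\eta}{(B_L(\mathbf{k})-\lambda)^{-1/2}C_L(\mathbf{k})\phi}_{L^2(\Gamma_L)}=\pdtsc{(H_L(\mathbf{k})+\mu_L)\phi}{\zeta}$ in the form sense, using $P_L(\mathbf{k})\zeta=0$ and $P_L^\perp(\mathbf{k})\phi=0$. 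From $\pdtsc{\zeta}{(B_L(\mathbf{k})-\lambda)\zeta}_{L^2(\Gamma_L)}=1$, the Kato inequality~\eqref{eq: kato_type_inequality2} with a fixed $\epsilon$ (say $\epsilon=1/2$, whose constant grows only polynomially in $L$), and $\norme{\zeta}_{L^2(\Gamma_L)}^2\leq6/g$, one finds that $\norme{\zeta}_{H^1_\mathbf{k}(\Gamma_L)}$ is bounded polynomially in $L$ — a factor the $\grandO{\cdot^-}$ convention will absorb. Hence the whole proposition reduces to proving $\normeL{(H_L(\mathbf{k})+\mu_L)\phi}_{H^{-1}_\mathbf{k}(\Gamma_L)}=\grandO{T_L^{\frac{1}{2}(1+\delta)d_0-}}\,\norme{\phi}_{L^2(\Gamma_L)}$.

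For this, I expand $\phi=\sum_{\mathbf{r}\in\mathbf{R}}c_\mathbf{r}\,u_{L,\mathbf{r}}(\mathbf{k},\cdot)$ with $\sum_\mathbf{r}\abs{c_\mathbf{r}}^2=\norme{\phi}_{L^2(\Gamma_L)}^2$ by Proposition~\ref{prop: properties_u}. Since $v_{L,\mathbf{r}'}$ solves $(-\Delta+\chi_{L,\mathbf{r}'}V_L+\mu_L)v_{L,\mathbf{r}'}=0$, we have $(H_L+\mu_L)v_{L,\mathbf{r}'}=(1-\chi_{L,\mathbf{r}'})V_Lv_{L,\mathbf{r}'}=:g_{L,\mathbf{r}'}$ in $H^{-1}(\R^2)$, hence $(H_L+\mu_L)w_{L,\mathbf{r}}=\sum_{\mathbf{r}'\in\lscr^\mathbf{R}}Q_L^{-1/2}(\mathbf{r},\mathbf{r}')\,g_{L,\mathbf{r}'}$. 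Because $\chi\equiv1$ on $B(0,\frac{1+\delta}{2}d_0)$, the factor $1-\chi_{L,\mathbf{r}'}$ is supported where $\abs{\mathbf{x}-L\mathbf{r}'}\geq\frac{1+\delta}{2}Ld_0$, and there the pointwise bound of Proposition~\ref{prop: exponential_decay_v_tilde} shows that $(1-\chi_{L,\mathbf{r}'})v_{L,\mathbf{r}'}$ equals $\grandO{T_L^{\frac{1}{2}(1+\delta)d_0-}}$ times an exponentially decaying profile that is integrable over $\R^2$; combining this with the $\lscr_L$-periodicity of $V_L$, Hölder on each translated cell $\Gamma_L+L\mathbf{u}$, the polynomial growth $\norme{V_L}_{\lper^p(\Gamma_L)}=\grandO{L^M}$ from~\eqref{rem: grow_polynomial}, and summation of a geometric series over $\mathbf{u}\in\lscr$, one obtains $\norme{g_{L,\mathbf{r}'}}_{L^p(\R^2)}=\grandO{T_L^{\frac{1}{2}(1+\delta)d_0-}}$ uniformly in $\mathbf{r}'$. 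For any pseudo-periodic $\psi\in H^1_\mathbf{k}(\Gamma_L)$, unfolding the Bloch--Floquet transform turns $\pdtsc{\mathcal{U}_\mathrm{BF}(g_{L,\mathbf{r}'})(\mathbf{k},\cdot)}{\psi}_{L^2(\Gamma_L)}$ into $\pdtsc{g_{L,\mathbf{r}'}}{\psi}_{L^2(\R^2)}$, which, splitting over the cells $\Gamma_L+L\mathbf{u}$, applying Hölder and the uniform Sobolev embedding $H^1_\mathbf{k}(\Gamma_L)\hookrightarrow\lk^{p'}(\Gamma_L)$, is bounded by $\grandO{T_L^{\frac{1}{2}(1+\delta)d_0-}}\norme{\psi}_{H^1_\mathbf{k}(\Gamma_L)}$. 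Summing over $\mathbf{r}'$ with the uniform $\ell^1$-bound on the rows of $Q_L^{-1/2}$ from Lemma~\ref{lemma: localization_inverse_G}, and over the finitely many $\mathbf{r}\in\mathbf{R}$ by Cauchy--Schwarz, yields the desired $H^{-1}_\mathbf{k}(\Gamma_L)$ estimate, and the proposition follows.

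The main obstacle is precisely the absence of an $L^2$ bound on $(H_L(\mathbf{k})+\mu_L)\phi$: only the ``home'' singularity of each well is removed by $1-\chi_{L,\mathbf{r}'}$, while the singularities at the other vertices survive, though with an exponentially small weight coming from the decay of $v_{L,\mathbf{r}'}$. Handling this forces one to work with the $H^{-1}_\mathbf{k}$--$H^1_\mathbf{k}$ pairing, to unfold the Bloch--Floquet transform carefully back to $\R^2$, and to use a Sobolev embedding on $\Gamma_L$ whose constant does not degenerate as $L\to\infty$ (see~\cite[Theorem 2.28]{aubin1998somenonlinear}); all the polynomial-in-$L$ quantities that appear along the way — $\norme{V_L}_{\lper^p(\Gamma_L)}$, the Kato constant, $\abs{\Gamma_L}=\abs{\Gamma}L^2$, the number of neighbouring cells — are harmless thanks to the $\grandO{\cdot^-}$ convention.
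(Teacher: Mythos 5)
Your argument is correct and follows the same overall strategy as the paper's chain of lemmas (Lemmas~\ref{lemma: estimation_key1}--\ref{lemma: estimation_key8}): reduce via~\eqref{eq:estimate_spectrum} to bounding $\normeL{(B_L(\mathbf{k})-\lambda)^{-1/2}C_L(\mathbf{k})}$, transfer the estimate to a negative-Sobolev bound on $(H_L(\mathbf{k})+\mu_L)$ restricted to $E_L(\mathbf{k})$, and finally exploit $(H_L+\mu_L)v_{L,\mathbf{r}'}=(1-\chi_{L,\mathbf{r}'})V_L v_{L,\mathbf{r}'}$ together with the exponential decay of $v_{L,\mathbf{r}'}$ outside the ball $B(L\mathbf{r}',\frac{1+\delta}{2}Ld_0)$, absorbing all polynomial-in-$L$ factors (Kato constant, $\norme{V_L}_{\lper^p}$, cell counts) into the $\grandO{\cdot^-}$ convention. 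Where you diverge is the technical implementation of the last two stages. The paper keeps everything in the $L^2$ scale by inserting $\sqrt{-\Delta+\mu_L}$-type factors: Lemma~\ref{lemma: estimation_key1} provides the $(B_L-\lambda)^{-1/2}(-\Delta+1)^{1/2}$ bound from the convex combination of the Kato and gap inequalities, Lemma~\ref{lemma: estimation_key4} controls the non-commutation of $P_L^\perp(\mathbf{k})$ with $\sqrt{-\Delta+\mu_L}$, Lemma~\ref{lemma: estimation_key7} uses periodic and free Fourier multipliers to pass the resolvent weight through $V_L$, and Lemma~\ref{lemma: estimation_key8} resums via Poisson summation and the convolution estimate of Lemma~\ref{lemma: estimation_key2}. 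You instead work directly with the $H^{-1}_\mathbf{k}$--$H^1_\mathbf{k}$ duality: testing against $\eta$ and extracting the polynomial $H^1$-bound on $\zeta=(B_L-\lambda)^{-1/2}\eta$ from the quadratic form identity replaces Lemmas~\ref{lemma: estimation_key1} and~\ref{lemma: estimation_key4} in one stroke, and a cell-by-cell Hölder argument with the $L$-uniform Sobolev embedding $H^1_\mathbf{k}(\Gamma_L)\hookrightarrow\lk^{p'}(\Gamma_L)$ replaces the Fourier-multiplier and Poisson-summation machinery of Lemmas~\ref{lemma: estimation_key7} and~\ref{lemma: estimation_key8}. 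Your route is conceptually more transparent — the duality makes explicit why an $L^p$ bound on $g_{L,\mathbf{r}'}$ suffices despite $(H_L+\mu_L)\phi\notin L^2$ — whereas the paper's Fourier-side manipulations bundle the same content into self-contained operator norm bounds. Both are valid; the exponent $\frac{1}{2}(1+\delta)d_0$ on the linear level squares to the asserted $(1+\delta)d_0$. One small bookkeeping point worth stating explicitly in your write-up: after Bloch--Floquet transform, the sum $\sum_{\mathbf{r}'\in\lscr^\mathbf{R}}Q_L^{-1/2}(\mathbf{r},\mathbf{r}')\mathcal{U}_\mathrm{BF}(g_{L,\mathbf{r}'})(\mathbf{k},\cdot)$ regroups into a finite sum over $\mathbf{r}''\in\mathbf{R}$ with $\mathbf{k}$-dependent coefficients $\sum_{\mathbf{u}}Q_L^{-1/2}(\mathbf{r},\mathbf{u}+\mathbf{r}'')e^{-i\mathbf{k}\cdot L\mathbf{u}}$, whose modulus you control via the $\ell^1$-row bound from Lemma~\ref{lemma: localization_inverse_G}; this is implicit in what you wrote but is the place where the uniformity in $\mathbf{k}$ is secured.
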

	
	We divide the proof of Proposition~\ref{prop: estimation_key} in several lemmas.
	\begin{lemma} 
		\label{lemma: estimation_key1}
		For $L$ large enough, for all $\mathbf{k} \in \Gamma_L^*$, for all $\lambda \in \intoo{-\infty}{-\mu_L+ g/3}$, for all $\varphi \in H^1_\mathbf{k}(\Gamma_L) \cap E_L^\perp(\mathbf{k})$, we have
		\begin{align}
		\label{eq: technical_boundedness}
		\normeL{ \frac{1}{\sqrt{B_L(\mathbf{k}) - \lambda}}\sqrt{P_L^\perp(\mathbf{k}) (-\Delta + 1)  P_L^\perp(\mathbf{k})} \varphi}_{L^2(\Gamma_L)} \leq \sqrt{\frac{12}{g} \max\left(1,2C_L+\frac{g}{3}\right)} \norme{\varphi}_{L^2(\Gamma_L)} \, ,
		\end{align}
		where $C_L = C(1/2,L)$ is the constant appearing in the estimate \eqref{eq: kato_type_inequality2} of Proposition~\ref{lemma: kato_type_inequality}.
	\end{lemma}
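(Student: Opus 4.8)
The plan is to reduce the operator bound \eqref{eq: technical_boundedness} to a comparison of quadratic forms on $E_L^\perp(\mathbf{k})$, and then to deduce that comparison from Kato's inequality \eqref{eq: kato_type_inequality2} together with the energy estimate \eqref{eq: energy_estimate_away}. Concretely, I would introduce $S:=\bigl(P_L^\perp(\mathbf{k})(-\Delta+1)P_L^\perp(\mathbf{k})\bigr)^{1/2}$, a positive self-adjoint operator on $E_L^\perp(\mathbf{k})$ whose form domain is $H^1_\mathbf{k}(\Gamma_L)\cap E_L^\perp(\mathbf{k})$ and which satisfies $\normeL{S\varphi}^2_{L^2(\Gamma_L)}=\normeL{\nabla\varphi}^2_{L^2(\Gamma_L)}+\normeL{\varphi}^2_{L^2(\Gamma_L)}$ for $\varphi\in E_L^\perp(\mathbf{k})$. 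Since $\mu_L\to\mu>0$, for $L$ large one has $\mu_L+\lambda<g/3$, so \eqref{eq: energy_estimate_away} yields $\pdtsc{\varphi}{(B_L(\mathbf{k})-\lambda)\varphi}_{L^2(\Gamma_L)}\geq\tfrac{g}{6}\normeL{\varphi}^2_{L^2(\Gamma_L)}$ for all $\varphi\in E_L^\perp(\mathbf{k})$; hence $B_L(\mathbf{k})-\lambda$ is a positive invertible operator on $E_L^\perp(\mathbf{k})$ sharing the form domain of $S^2$.

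With $u=(B_L(\mathbf{k})-\lambda)^{-1/2}\psi$ (which then lies in $\mathcal{D}(S)$) and $P_L^\perp(\mathbf{k})u=u$, one has the elementary identities $\normeL{S(B_L(\mathbf{k})-\lambda)^{-1/2}\psi}^2_{L^2(\Gamma_L)}=\normeL{\nabla u}^2_{L^2(\Gamma_L)}+\normeL{u}^2_{L^2(\Gamma_L)}$ and $\normeL{\psi}^2_{L^2(\Gamma_L)}=\pdtsc{u}{(B_L(\mathbf{k})-\lambda)u}_{L^2(\Gamma_L)}$. As the left-hand side of \eqref{eq: technical_boundedness} equals $\normeL{(B_L(\mathbf{k})-\lambda)^{-1/2}S}=\normeL{S(B_L(\mathbf{k})-\lambda)^{-1/2}}$, the bound \eqref{eq: technical_boundedness} is \emph{equivalent} to the quadratic-form inequality
\[
\pdtsc{\varphi}{(B_L(\mathbf{k})-\lambda)\varphi}_{L^2(\Gamma_L)}\;\geq\;\frac{1}{C'}\Bigl(\normeL{\nabla\varphi}^2_{L^2(\Gamma_L)}+\normeL{\varphi}^2_{L^2(\Gamma_L)}\Bigr),\qquad \varphi\in H^1_\mathbf{k}(\Gamma_L)\cap E_L^\perp(\mathbf{k}),
\]
where $C':=\tfrac{12}{g}\max\bigl(1,2C_L+\tfrac{g}{3}\bigr)$ is the square of the constant in \eqref{eq: technical_boundedness}. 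To prove this, use $P_L^\perp(\mathbf{k})\varphi=\varphi$ to write $\pdtsc{\varphi}{(B_L(\mathbf{k})-\lambda)\varphi}_{L^2(\Gamma_L)}=\normeL{\nabla\varphi}^2_{L^2(\Gamma_L)}+\pdtsc{\varphi}{V_L\varphi}_{L^2(\Gamma_L)}-\lambda\normeL{\varphi}^2_{L^2(\Gamma_L)}$, apply \eqref{eq: kato_type_inequality2} with $\epsilon=\tfrac12$ to get $\pdtsc{\varphi}{V_L\varphi}_{L^2(\Gamma_L)}\geq-\tfrac12\normeL{\nabla\varphi}^2_{L^2(\Gamma_L)}-C_L\normeL{\varphi}^2_{L^2(\Gamma_L)}$, and use $\lambda<g/3$ and $C_L+\tfrac12+\tfrac{g}{3}\leq\max(1,2C_L+\tfrac g3)$ for $L$ large, to obtain a first lower bound $\pdtsc{\varphi}{(B_L(\mathbf{k})-\lambda)\varphi}_{L^2(\Gamma_L)}\geq\tfrac12\bigl(\normeL{\nabla\varphi}^2_{L^2(\Gamma_L)}+\normeL{\varphi}^2_{L^2(\Gamma_L)}\bigr)-\max(1,2C_L+\tfrac g3)\normeL{\varphi}^2_{L^2(\Gamma_L)}$. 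Combining this with the second lower bound $\pdtsc{\varphi}{(B_L(\mathbf{k})-\lambda)\varphi}_{L^2(\Gamma_L)}\geq\tfrac g6\normeL{\varphi}^2_{L^2(\Gamma_L)}$ from \eqref{eq: energy_estimate_away} — e.g. by substituting the latter into the former to absorb the negative $\normeL{\varphi}^2_{L^2(\Gamma_L)}$ term — leaves a positive multiple of $\normeL{\nabla\varphi}^2_{L^2(\Gamma_L)}+\normeL{\varphi}^2_{L^2(\Gamma_L)}$, and tracking the constants gives a bound of the form \eqref{eq: technical_boundedness}.

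The only delicate point is the reduction itself: one must check that $S^2=P_L^\perp(\mathbf{k})(-\Delta+1)P_L^\perp(\mathbf{k})$ and $B_L(\mathbf{k})-\lambda$ genuinely share the form domain $H^1_\mathbf{k}(\Gamma_L)\cap E_L^\perp(\mathbf{k})$ on $E_L^\perp(\mathbf{k})$ — which holds because $E_L(\mathbf{k})=\spn\{u_{L,\mathbf{r}}(\mathbf{k},\cdot)\}_{\mathbf{r}\in\mathbf{R}}$ with $u_{L,\mathbf{r}}(\mathbf{k},\cdot)\in H^1_\mathbf{k}(\Gamma_L)$, so $P_L^\perp(\mathbf{k})$ preserves $H^1_\mathbf{k}(\Gamma_L)$, while Proposition~\ref{lemma: kato_type_inequality} shows the quadratic form of $H_L(\mathbf{k})$ is closed with form domain $H^1_\mathbf{k}(\Gamma_L)$ — so that $u=(B_L(\mathbf{k})-\lambda)^{-1/2}\psi$ indeed lies in $\mathcal{D}(S)$ and the identities above are legitimate. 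Once this is secured, the remaining steps are routine manipulations of the two form lower bounds, and no interplay with the exponential smallness of $T_L$ is needed at this stage.
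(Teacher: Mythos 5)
Your proposal is correct and follows essentially the same route as the paper: both combine the Kato bound \eqref{eq: kato_type_inequality2} with the spectral gap estimate \eqref{eq: energy_estimate_away} to show $B_L(\mathbf{k})-\lambda \gtrsim P_L^\perp(\mathbf{k})(-\Delta+1)P_L^\perp(\mathbf{k})$ and then invert, the only stylistic difference being that you phrase the comparison as a quadratic-form inequality and make the form-domain bookkeeping explicit, where the paper works with operator inequalities and cites a monotonicity lemma for the inversion. One small arithmetic note: by padding the Kato bound into $\tfrac12(\|\nabla\varphi\|^2+\|\varphi\|^2)-\max(1,2C_L+\tfrac g3)\|\varphi\|^2$ before combining, your route produces a denominator $12\max(1,2C_L+\tfrac g3)+2g$ rather than $12\max(1,2C_L+\tfrac g3)$, so it does not land exactly on the stated constant — the paper instead keeps the Kato bound as $-\tfrac12\Delta-C_L$ and takes the convex combination $t\bigl(-\tfrac12\Delta-C_L\bigr)+(1-t)\tfrac g6$ with $t=\tfrac{g}{2g+12C_L}$, which makes the coefficient of the identity exactly $\tfrac g{12}$ — but since only the polynomial-in-$L$ growth of the constant is used downstream in Proposition~\ref{prop: estimation_key}, your weaker constant is entirely adequate.
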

	\begin{proof}
		As a consequence of the estimate \eqref{eq: kato_type_inequality2} of Proposition~\ref{lemma: kato_type_inequality}, there exists $C_L >0$ such that
		\begin{align}
		\notag
		B_L(\mathbf{k})-\lambda=P_L^\perp(\mathbf{k})(H_L(\mathbf{k}) - \lambda)P_L^\perp(\mathbf{k}) 
		& \geq P_L^\perp(\mathbf{k})\left[-\frac{1}{2} \Delta - (C_L+\lambda) \right]P_L^\perp(\mathbf{k})\\
		\label{eq: technical_boundedness_1}
		&\geq P_L^\perp(\mathbf{k})\left[- \frac{1}{2}\Delta - C_L \right]P_L^\perp(\mathbf{k}) \pt
		\end{align}
		The second inequality comes from the fact that for $L$ large enough, we have $\intoo{-\infty}{-\mu_L + g/3} \subset \intoo{-\infty}{0}$. By \eqref{eq: energy_estimate_away}, we have 
		\begin{align}
		\label{eq: technical_boundedness_2}
		B_L(\mathbf{k})-\lambda \geq  \frac{g}{6} P_L^\perp(\mathbf{k}) \, ,
		\end{align}
		for all $\lambda < -\mu_L + g/3$. By taking the convex combination $(1-t)\times \eqref{eq: technical_boundedness_1}+ t\times\eqref{eq: technical_boundedness_2}$ with $t=\frac{g}{2g+12C_L}$, we obtain
		\begin{align*}
		B_L(\mathbf{k}) - \lambda \geq \frac{g}{12} \min\left(1, \frac{1}{2C_L+g/3}\right) P_L^\perp(\mathbf{k}) (-\Delta + 1)  P_L^\perp(\mathbf{k}) \pt
		\end{align*}
		From this previous estimate and~\cite[Corollary 10.12]{schmugden2012unbounded}, we have 
		\begin{align*}
		(B_L(\mathbf{k}) - \lambda)^{-1} \leq \frac{12}{g} \max\left(1,2C_L+\frac{g}{3}\right) P_L^\perp(\mathbf{k})(-\Delta + 1)^{-1}P_L^\perp(\mathbf{k}) \pt
		\end{align*}
		The estimate \eqref{eq: technical_boundedness} results from this inequality.
	\end{proof}
	The following technical lemma is convenient to estimate convolutions when only integral bounds are at our disposal.
	\begin{lemma}
		\label{lemma: estimation_key2}
		Let $\varphi,\psi\in L^2(\R^2)$ such that $\norme{\varphi \mathds{1}_{\abs{x}\geq R}}_{L^2(\R^2)} \leq C e^{-\alpha R}$ and $\norme{\psi \mathds{1}_{\abs{x}\geq R}}_{L^2(\R^2)} \leq C e^{-\beta R}$ for some constants $C,\alpha,\beta \geq  0$ and for all $R\geq0$. Then, we have
		\begin{align*}
		\forall \mathbf{x} \in \R^2,\quad \absL{\left(\varphi \ast \psi \right)(\mathbf{x})} \leq C\left( \norme{\varphi}_{L^2(\R^2)} e^{-\frac{\beta}{2}\abs{\mathbf{x}}}+  \norme{\psi}_{L^2(\R^2)} e^{-\frac{\alpha}{2}\abs{\mathbf{x}}}\right) \pt
		\end{align*}
	\end{lemma}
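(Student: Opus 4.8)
The plan is to write the convolution explicitly as $(\varphi \ast \psi)(\mathbf{x}) = \int_{\R^2} \varphi(\mathbf{y}) \psi(\mathbf{x} - \mathbf{y}) \diff \mathbf{y}$ and to split the domain of integration according to whether $\mathbf{y}$ is far from the origin or $\mathbf{x} - \mathbf{y}$ is. The elementary point is the triangle inequality $\abs{\mathbf{y}} + \abs{\mathbf{x} - \mathbf{y}} \geq \abs{\mathbf{x}}$, which forces, for every $\mathbf{y} \in \R^2$, at least one of the inequalities $\abs{\mathbf{y}} \geq \abs{\mathbf{x}}/2$ and $\abs{\mathbf{x} - \mathbf{y}} \geq \abs{\mathbf{x}}/2$ to hold. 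Hence $\R^2 = A_1 \cup A_2$ with $A_1 \coloneqq \{\mathbf{y} : \abs{\mathbf{y}} \geq \abs{\mathbf{x}}/2\}$ and $A_2 \coloneqq \{\mathbf{y} : \abs{\mathbf{x} - \mathbf{y}} \geq \abs{\mathbf{x}}/2\}$, and I would estimate the contribution of each piece by the Cauchy--Schwarz inequality.

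On $A_1$, Cauchy--Schwarz and the decay hypothesis on $\varphi$ with $R = \abs{\mathbf{x}}/2$ give
\begin{align*}
\absL{\int_{A_1} \varphi(\mathbf{y}) \psi(\mathbf{x} - \mathbf{y}) \diff \mathbf{y}} \leq \normeL{\varphi \mathds{1}_{\abs{\cdot} \geq \abs{\mathbf{x}}/2}}_{L^2(\R^2)} \norme{\psi}_{L^2(\R^2)} \leq C e^{-\frac{\alpha}{2}\abs{\mathbf{x}}} \norme{\psi}_{L^2(\R^2)} \pt
\end{align*}
Symmetrically, on $A_2$, after the change of variables $\mathbf{z} = \mathbf{x} - \mathbf{y}$ and using the decay hypothesis on $\psi$,
\begin{align*}
\absL{\int_{A_2} \varphi(\mathbf{y}) \psi(\mathbf{x} - \mathbf{y}) \diff \mathbf{y}} \leq \norme{\varphi}_{L^2(\R^2)} \normeL{\psi \mathds{1}_{\abs{\cdot} \geq \abs{\mathbf{x}}/2}}_{L^2(\R^2)} \leq C e^{-\frac{\beta}{2}\abs{\mathbf{x}}} \norme{\varphi}_{L^2(\R^2)} \pt
\end{align*}
Adding the two bounds and using $\abs{(\varphi \ast \psi)(\mathbf{x})} \leq \sum_{i=1}^2 \absL{\int_{A_i} \varphi(\mathbf{y}) \psi(\mathbf{x}-\mathbf{y}) \diff \mathbf{y}}$ yields the claimed inequality.

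I do not expect any genuine obstacle here: the statement is a routine two-region splitting argument. The only minor subtleties are that $A_1$ and $A_2$ overlap (harmless, since we only want an upper bound and can simply add the two contributions) and that they cover $\R^2$ (which is exactly the triangle inequality above). Tracking sharper constants would require disjointifying the decomposition, but this is unnecessary for the conclusion as stated.
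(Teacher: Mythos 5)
Your proposal is correct and follows essentially the same argument as the paper: split $\R^2$ into the two (overlapping) regions where $\abs{\mathbf{y}} \geq \abs{\mathbf{x}}/2$ or $\abs{\mathbf{x}-\mathbf{y}} \geq \abs{\mathbf{x}}/2$, justified by the triangle inequality, and apply Cauchy--Schwarz together with the decay hypothesis on each piece. The only cosmetic difference is that you label the variables so as to perform an explicit change of variables on the second region, whereas the paper writes the integrand as $\varphi(\mathbf{x-y})\psi(\mathbf{y})$ from the start.
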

	\begin{proof}
		We cannot have $\abs{\mathbf{y}} < \abs{\mathbf{x}}/2$ and $\abs{\mathbf{y-x}} < \abs{\mathbf{x}}/2$ simultaneously since it would imply $\abs{\mathbf{x}} < \abs{\mathbf{x}}$.
		Hence, we can write
		\begin{align*}
		\absL{\left(\varphi \ast \psi \right)(\mathbf{x})} 
		\leq \int_{\abs{\mathbf{x-y}} \geq \frac{\abs{\mathbf{x}}}{2}} \absL{\varphi(\mathbf{x-y})} \absL{\psi(\mathbf{y})} \diff \mathbf{y} + \int_{\abs{\mathbf{y}} \geq \frac{\abs{\mathbf{x}}}{2}} \absL{\varphi(\mathbf{x-y})} \absL{\psi(\mathbf{y})} \diff \mathbf{y} \, ,
		\end{align*}
		and we conclude with the Cauchy-Schwarz inequality.
	\end{proof}

	For $\mathbf{v}\in\lscr_L^*$ and $\mathbf{k}\in\Gamma_L^*$, we denote $e_{\mathbf{v+k}}(\mathbf{x}) \coloneqq \abs{\Gamma_L}^{-\frac{1}{2}} e^{i \mathbf{x \cdot (v+k)}} \in \lk^2(\Gamma_L)$. We recall that $e_{\mathbf{v+k}}$ is a normalized eigenfunction of the Laplace operator $-\Delta$ defined on $\lk^2(\Gamma_L)$ associated with the eigenvalue $\abs{\mathbf{v+k}}^2$ and that the set $(e_{\mathbf{v+k}})_{\mathbf{v} \in \lscr_L^*}$ forms an orthonormal family of $\lk^2(\Gamma_L)$.
	\begin{lemma}
		\label{lemma: estimation_key3}
		For all $\mathbf{r\in R}$, we have
		\begin{align}
		\label{eq: uniform_boundedness_H1_UBF}
		\sup_{L\geq 1} \sup_{\mathbf{k}\in\Gamma_L^*}\normeL{\sqrt{-\Delta+1}~\mathcal{U}_{\mathrm{BF}}(v_{L,\mathbf{r}})(\mathbf{k},\cdot)}_{L^2(\Gamma_L)} < \infty \pt
		\end{align}
	\end{lemma}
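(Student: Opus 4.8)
The plan is to reduce the $H^1_{\mathbf{k}}(\Gamma_L)$-norm of $\mathcal{U}_{\mathrm{BF}}(v_{L,\mathbf{r}})(\mathbf{k},\cdot)$ to a lattice sum of overlap integrals which decay exponentially, uniformly in $\mathbf{k}$. First I would invoke the translation part of the symmetry relations~\eqref{eq: action_v} --- namely $v_{L,\mathbf{u}+\mathbf{r}}=v_{L,\mathbf{r}}(\cdot-L\mathbf{u})$ for $\mathbf{u}\in\lscr$ --- to rewrite $\mathcal{U}_{\mathrm{BF}}(v_{L,\mathbf{r}})(\mathbf{k},\mathbf{x})=\sum_{\mathbf{u}\in\lscr}e^{i\mathbf{k}\cdot L\mathbf{u}}v_{L,\mathbf{r}}(\mathbf{x}-L\mathbf{u})$, so that $\nabla\mathcal{U}_{\mathrm{BF}}(v_{L,\mathbf{r}})(\mathbf{k},\cdot)=\sum_{\mathbf{u}\in\lscr}e^{i\mathbf{k}\cdot L\mathbf{u}}(\nabla v_{L,\mathbf{r}})(\cdot-L\mathbf{u})$ componentwise.

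Next I would expand $\normeL{\sqrt{-\Delta+1}\,\mathcal{U}_{\mathrm{BF}}(v_{L,\mathbf{r}})(\mathbf{k},\cdot)}_{L^2(\Gamma_L)}^2$ as a double sum over $\lscr\times\lscr$ and unfold the integral over $\Gamma_L$ into an integral over $\R^2$, using that $\{\Gamma_L+L\mathbf{u}\}_{\mathbf{u}\in\lscr}$ tiles the plane, exactly as in the proof of Proposition~\ref{prop: properties_u}; this yields
\begin{align*}
\normeL{\sqrt{-\Delta+1}\,\mathcal{U}_{\mathrm{BF}}(v_{L,\mathbf{r}})(\mathbf{k},\cdot)}_{L^2(\Gamma_L)}^2=\sum_{\mathbf{w}\in\lscr}e^{i\mathbf{k}\cdot L\mathbf{w}}\left(\pdtsc{\nabla v_{L,\mathbf{r}}}{(\nabla v_{L,\mathbf{r}})(\cdot-L\mathbf{w})}_{L^2(\R^2)}+\pdtsc{v_{L,\mathbf{r}}}{v_{L,\mathbf{r}}(\cdot-L\mathbf{w})}_{L^2(\R^2)}\right).
\end{align*}
The term $\mathbf{w}=0$ equals $1+\normeL{\nabla v_{L,\mathbf{r}}}_{L^2(\R^2)}^2$, which is bounded uniformly in $L$ by letting $R\to0$ in~\eqref{eq: exponential_decay_v_tilde_integrale} of Proposition~\ref{prop: exponential_decay_v_tilde}.

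For $\mathbf{w}\neq0$ I would bound by absolute values and use the Cauchy--Schwarz inequality to turn each overlap into a convolution of $\abs{v_{L,\mathbf{r}}}$ (resp.~$\abs{\nabla v_{L,\mathbf{r}}}$) with a reflected copy of itself, evaluated at $L\mathbf{w}$; translating by $-L\mathbf{r}$ (which cancels out) puts these functions into the hypotheses of Lemma~\ref{lemma: estimation_key2} via the $L^2$-tail bounds~\eqref{eq: exponential_decay_v_tilde_integrale}, which gives, for $L$ large enough and with $C_\epsilon$ independent of $L$ and $\mathbf{k}$,
\begin{align*}
\absL{\pdtsc{v_{L,\mathbf{r}}}{v_{L,\mathbf{r}}(\cdot-L\mathbf{w})}_{L^2(\R^2)}}+\absL{\pdtsc{\nabla v_{L,\mathbf{r}}}{(\nabla v_{L,\mathbf{r}})(\cdot-L\mathbf{w})}_{L^2(\R^2)}}\leq C_\epsilon\, e^{-\frac{1-\epsilon}{2}\sqrt{\mu}\,L\abs{\mathbf{w}}}.
\end{align*}
Lemma~\ref{lemma: estimation_key2} is also what makes the unfolding rigorous, since only integral (not pointwise) decay of $\nabla v_{L,\mathbf{r}}$ is available and it is what guarantees absolute convergence of the double sum. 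Summing over $\mathbf{w}\in\lscr$ and using $L\geq1$ to dominate $\sum_{\mathbf{w}}e^{-\frac{1-\epsilon}{2}\sqrt{\mu}L\abs{\mathbf{w}}}\leq\sum_{\mathbf{w}}e^{-\frac{1-\epsilon}{2}\sqrt{\mu}\abs{\mathbf{w}}}<\infty$ then produces a bound independent of $\mathbf{k}$ and of (large) $L$, which is the claim. The only real work is the bookkeeping in the unfolding step together with checking that the absolute convergence and the error terms are controlled uniformly in $\mathbf{k}$; both are taken care of by the translation symmetry and by Lemma~\ref{lemma: estimation_key2}, so there is no genuinely delicate estimate --- the content is merely that the overlaps $\pdtsc{v_{L,\mathbf{r}}}{v_{L,\mathbf{u}+\mathbf{r}}}_{L^2(\R^2)}$ and their gradient analogues are summable over $\mathbf{u}\in\lscr$ with sum bounded independently of $L$. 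Since $v_{L,\mathbf{r}}$ is only defined once $L$ is large enough for $H_{L,\mathbf{r}}$ to possess a negative eigenvalue, the supremum over $L$ is understood over that range.
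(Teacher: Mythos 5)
Your argument is correct, and it arrives at the same lattice sum that drives the paper's proof, namely $\sum_{\mathbf{w}\in\lscr}e^{\pm i\mathbf{k}\cdot L\mathbf{w}}\bigl(\pdtsc{v_{L,\mathbf{r}}}{v_{L,\mathbf{r}}(\cdot-L\mathbf{w})}_{L^2(\R^2)}+\pdtsc{\nabla v_{L,\mathbf{r}}}{\nabla v_{L,\mathbf{r}}(\cdot-L\mathbf{w})}_{L^2(\R^2)}\bigr)$, but you reach it by a more elementary route. The paper works in momentum space: it expands the squared norm over the plane-wave basis $e_{\mathbf{v}+\mathbf{k}}$, identifies the coefficients with $\mathcal{F}(v_{L,\mathbf{r}})(\mathbf{v}+\mathbf{k})$, rewrites the resulting series as a Fourier transform of the convolution $(\sqrt{-\Delta+1}\,\mathcal{P}v_{L,\mathbf{r}})\ast(\sqrt{-\Delta+1}\,v_{L,\mathbf{r}})$, and then applies the Poisson summation formula~\eqref{prop: poisson_summation_formula}. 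You instead use the identity $\normeL{\sqrt{-\Delta+1}\,\psi}_{L^2(\Gamma_L)}^2=\normeL{\psi}_{L^2(\Gamma_L)}^2+\normeL{\nabla\psi}_{L^2(\Gamma_L)}^2$ on $H^1_\mathbf{k}(\Gamma_L)$ and unfold the double lattice sum directly in physical space, as in the proof of Proposition~\ref{prop: properties_u}. This buys a little clarity: your route makes explicit the split into $v$-overlaps and $\nabla v$-overlaps, which is exactly what the integral tail bound~\eqref{eq: exponential_decay_v_tilde_integrale} feeds into Lemma~\ref{lemma: estimation_key2}, whereas the paper invokes Lemma~\ref{lemma: estimation_key2} for $\sqrt{-\Delta+1}\,v_{L,\mathbf{r}}$ (a nonlocal operator applied to $v_{L,\mathbf{r}}$, whose tails are not directly controlled by~\eqref{eq: exponential_decay_v_tilde_integrale}) and leaves the same reduction implicit. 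Both proofs otherwise use the same two inputs --- Lemma~\ref{lemma: estimation_key2} and the uniform $H^1$ exponential decay from Proposition~\ref{prop: exponential_decay_v_tilde} --- to justify absolute convergence of the unfolding and to obtain a bound uniform in $(L,\mathbf{k})$, and both implicitly restrict to $L$ large enough that $v_{L,\mathbf{r}}$ exists, as you correctly note at the end.
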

	\begin{proof}
		The parity operator $\mathcal{P}$ is defined by $(\mathcal{P}v)(\mathbf{x}) = v(-\mathbf{x})$. We have
		\begin{align*}
		\normeL{\sqrt{-\Delta+1}~\mathcal{U}_{\mathrm{BF}}(v_{L,\mathbf{r}})(\mathbf{k},\cdot)}_{L^2(\Gamma_L)}^2 
		& = \sum_{\mathbf{v} \in \lscr^*_L} \left(\abs{\mathbf{v+k}}^2+1\right) \absL{\pdtsc{e_{\mathbf{v+k}}}{\mathcal{U}_{\mathrm{BF}}(v_{L,\mathbf{r}})(\mathbf{k},\cdot)}_{L^2(\Gamma_L)}}^2 \\
		& = \frac{(2\pi)^2}{\abs{\Gamma_L}}\sum_{\mathbf{v} \in \lscr^*_L} \left(\abs{\mathbf{v+k}}^2+1\right) \absL{\mathcal{F}(v_{L,\mathbf{r}})(\mathbf{v+k})}^2 \\
		& = \frac{2\pi}{\abs{\Gamma_L}}\sum_{\mathbf{v} \in \lscr^*_L} \mathcal{F}\left[\left(\sqrt{-\Delta+1}\mathcal{P}v_{L,\mathbf{r}}\right)\ast \left(\sqrt{-\Delta+1}v_{L,\mathbf{r}}\right)\right] (\mathbf{v+k})\\
		& = \sum_{\mathbf{u} \in \lscr_L} e^{-i\mathbf{u\cdot k}} \left[\left(\sqrt{-\Delta+1}\mathcal{P}v_{L,\mathbf{r}}\right)\ast \left(\sqrt{-\Delta+1}v_{L,\mathbf{r}}\right)\right](\mathbf{u})  \pt
		\end{align*}
		The equality $\pdtsc{e_{\mathbf{v+k}}}{\mathcal{U}_{\mathrm{BF}}(v_{L,\mathbf{r}})(\mathbf{k},\cdot)}_{L^2(\Gamma_L)} = 2\pi\abs{\Gamma_L}^{-1/2}\mathcal{F}(v_{L,\mathbf{r}})(\mathbf{v+k})$ holds because $v_{L,\mathbf{r}}$ decays exponentially fast by Proposition~\ref{prop: exponential_decay_v_tilde}. We recall that if $f,g \in L^2(\R^d)$ satisfy $f \ast g \in L^1(\R^d)$ then we have $ \mathcal{F}(f \ast g) = (2\pi)^{d/2}\widehat{f} \widehat{g}$. By Proposition~\ref{prop: exponential_decay_v_tilde} and Lemma~\ref{lemma: estimation_key2}, we have $\left(\sqrt{-\Delta+1}v_{L,\mathbf{r}}\right)\ast \left(\sqrt{-\Delta+1}v_{L,\mathbf{r}}\right) \in L^1(\R^2)$ which justifies the third equality in the computation above. The last one is justified with Lemma~\ref{lemma: estimation_key2} and Poisson's summation formula, which we recall (see for instance~\cite[Theorem 3.2.8]{grafakos2014classical}): for $f\in\mathcal{C}(\R^2)$ which satisfies, for some $C,\delta >0$, the conditions $\abs{f(\mathbf{x})} \leq C(1+\abs{\mathbf{x}})^{-2-\delta}$ and $\sum_{\mathbf{v} \in \lscr^*_L} |\widehat{f}(\mathbf{v})| < \infty$ then we have
		\begin{align}
		\label{prop: poisson_summation_formula}
		\forall \mathbf{x} \in \R^2,\quad \frac{2\pi}{\abs{\Gamma_L}}\sum_{\mathbf{v} \in \lscr^*_L} \widehat{f}(\mathbf{v}) e^{i\mathbf{x \cdot v}} = \sum_{\mathbf{u} \in \lscr_L} f(\mathbf{u+x}) \pt
		\end{align}
		Then, using Lemma~\ref{lemma: estimation_key2} and the estimate \eqref{eq: exponential_decay_v_tilde_integrale} from Proposition~\ref{prop: exponential_decay_v_tilde}, we can bound
		\begin{align*}
		\normeL{\sqrt{-\Delta+1}~\mathcal{U}_{\mathrm{BF}}(v_{L,\mathbf{r}})(\mathbf{k},\cdot)}_{L^2(\Gamma_L)}^2 
		\lesssim \norme{v_{L,\mathbf{r}}}_{H^1(\R^2)} \pt
		\end{align*}
		Finally, we show the estimate \eqref{eq: uniform_boundedness_H1_UBF} by recalling that $v_{L,\mathbf{r}}$ is uniformly bounded in $H^1(\R^2)$ (see for instance the right side of \eqref{eq: closedness_modified_original_operator}).
	\end{proof}
	\begin{lemma}
		\label{lemma: estimation_key4}
		We have
		\[
		\sup_{L\geq 1} \sup_{\mathbf{k}\in\Gamma_L^*} \normeL{\frac{1}{\sqrt{-\Delta+\mu_L}} P_L^\perp(\mathbf{k}) \sqrt{-\Delta+\mu_L}}_{\mathcal{B}(\lk^2(\Gamma_L))} < \infty \pt
		\]
	\end{lemma}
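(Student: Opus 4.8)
The plan is to exploit that $P_L^\perp(\mathbf{k}) = 1 - P_L(\mathbf{k})$ and that $P_L(\mathbf{k})$ is a rank-$N$ orthogonal projection onto the span of the orthonormal family $\{u_{L,\mathbf{r}}(\mathbf{k},\cdot)\}_{\mathbf{r}\in\mathbf{R}}$ (Proposition~\ref{prop: properties_u}). Set $R_L \coloneqq \sqrt{-\Delta+\mu_L}$, a positive self-adjoint operator on $\lk^2(\Gamma_L)$ with domain $H^1_\mathbf{k}(\Gamma_L)$. On this domain $R_L^{-1}R_L = I$, so that $R_L^{-1}P_L^\perp(\mathbf{k})R_L = I - R_L^{-1}P_L(\mathbf{k})R_L$. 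Since the identity is bounded, it will suffice to bound the operator $R_L^{-1}P_L(\mathbf{k})R_L$, initially defined on $H^1_\mathbf{k}(\Gamma_L)$, uniformly in $L$ (large) and $\mathbf{k}\in\Gamma_L^*$, and then read off the result for $P_L^\perp(\mathbf{k})$.

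Using $P_L(\mathbf{k}) = \sum_{\mathbf{r}\in\mathbf{R}} \ket{u_{L,\mathbf{r}}(\mathbf{k},\cdot)}\bra{u_{L,\mathbf{r}}(\mathbf{k},\cdot)}$ and the self-adjointness of $R_L$ (note $u_{L,\mathbf{r}}(\mathbf{k},\cdot)\in H^1_\mathbf{k}(\Gamma_L) = \mathcal{D}(R_L)$), I would rewrite $R_L^{-1}P_L(\mathbf{k})R_L$ as a finite sum of rank-one operators $\ket{R_L^{-1}u_{L,\mathbf{r}}(\mathbf{k},\cdot)}\bra{R_L u_{L,\mathbf{r}}(\mathbf{k},\cdot)}$, whose operator norm is $\leq \sum_{\mathbf{r}\in\mathbf{R}} \normeL{R_L^{-1}u_{L,\mathbf{r}}(\mathbf{k},\cdot)}_{L^2(\Gamma_L)}\,\normeL{R_L u_{L,\mathbf{r}}(\mathbf{k},\cdot)}_{L^2(\Gamma_L)}$. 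The first factor is $\leq \mu_L^{-1/2}$ since $u_{L,\mathbf{r}}(\mathbf{k},\cdot)$ is normalized and $-\Delta\geq 0$, and $\mu_L$ is bounded away from zero because $\mu_L\to\mu>0$. For the second factor, since $-\Delta+\mu_L \leq \max(1,\mu_L)(-\Delta+1)$ with $\mu_L$ bounded, it is enough to bound $\normeL{\sqrt{-\Delta+1}\,u_{L,\mathbf{r}}(\mathbf{k},\cdot)}_{L^2(\Gamma_L)}$ uniformly; this follows from the explicit decomposition \eqref{eq: explicit_decomposition} of $u_{L,\mathbf{r}}(\mathbf{k},\cdot)$ in terms of the $\mathcal{U}_\mathrm{BF}(v_{L,\mathbf{r'}})(\mathbf{k},\cdot)$ together with the uniform bound of Lemma~\ref{lemma: estimation_key3}, once one checks that the coefficients $\sum_{\mathbf{u}\in\lscr}Q_L^{-1/2}(\mathbf{u+r},\mathbf{r'})e^{i\mathbf{k}\cdot L\mathbf{u}}$ are uniformly bounded — which is immediate from the off-diagonal exponential decay of $Q_L^{-1/2}$ given by Lemma~\ref{lemma: localization_inverse_G}, the resulting lattice sum being finite and non-increasing in $T_L\in\intoo{0}{1}$.

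Combining these estimates yields $\normeL{R_L^{-1}P_L(\mathbf{k})R_L} \lesssim N\mu_L^{-1/2} = \grandO{1}$, uniformly in $L$ and $\mathbf{k}$, so that $R_L^{-1}P_L(\mathbf{k})R_L$ extends to a bounded operator on $\lk^2(\Gamma_L)$; hence $R_L^{-1}P_L^\perp(\mathbf{k})R_L = I - R_L^{-1}P_L(\mathbf{k})R_L$ also extends boundedly, with norm $\leq 1 + \grandO{1}$, which is the claim. I do not expect a genuine obstacle here: the only delicate point is bookkeeping with the unbounded operator $R_L$, namely always working on the common domain $H^1_\mathbf{k}(\Gamma_L)$ and moving $R_L$ onto the fixed vectors $u_{L,\mathbf{r}}(\mathbf{k},\cdot)$ before estimating, so that the whole difficulty is channelled into the bound for $\normeL{\sqrt{-\Delta+1}\,u_{L,\mathbf{r}}(\mathbf{k},\cdot)}_{L^2(\Gamma_L)}$, which has already been secured in Lemma~\ref{lemma: estimation_key3}.
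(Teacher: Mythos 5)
Your proof is correct and follows essentially the same route as the paper: reduce to bounding $R_L^{-1}P_L(\mathbf{k})R_L$ via $P_L^\perp = I - P_L$, expand $P_L(\mathbf{k})$ as a finite rank-$N$ sum of projectors onto the $u_{L,\mathbf{r}}(\mathbf{k},\cdot)$, move $R_L^{\pm 1}$ onto these fixed vectors, and then invoke the decomposition \eqref{eq: explicit_decomposition}, the localization of $Q_L^{-1/2}$ (Lemma~\ref{lemma: localization_inverse_G}), and the uniform $H^1$-bound of Lemma~\ref{lemma: estimation_key3}. The only cosmetic difference is that you bound the rank-one pieces by the product $\norme{R_L^{-1}u_{L,\mathbf{r}}}\cdot\norme{R_L u_{L,\mathbf{r}}}$ whereas the paper bounds $P_L(\mathbf{k})\sqrt{-\Delta+1}$ first and factors out $\mu_L^{-1/2}$ afterwards; the content is identical.
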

	\begin{proof}
		Using the identity $P_L^\perp(\mathbf{k}) = \id-P_L(\mathbf{k})$ and $\sup_{L\geq 1} \mu_L < \infty$, it is sufficient to show that the operator $P_L(\mathbf{k}) \sqrt{-\Delta+1}$ is bounded uniformly with respect to $L\geq 1$ and $\mathbf{k}\in\Gamma_L^*$. Writing $P_L(\mathbf{k}) = \sum_{\mathbf{r \in R}} \ket{u_{L,\mathbf{r}}(\mathbf{k},\cdot)} \bra{u_{L,\mathbf{r}}(\mathbf{k},\cdot)}$ (see Proposition~\ref{prop: properties_u}), we can show that for all $\varphi \in H^1_\mathbf{k}(\Gamma_L)$ we have
		\begin{align*}
		\normeL{P_L(\mathbf{k}) \sqrt{-\Delta+1} \varphi}_{L^2(\Gamma_L)} \leq \left(\sum_{\mathbf{r \in R}} \normeL{\sqrt{-\Delta+1} u_{L,\mathbf{r}}(\mathbf{k},\cdot)}_{L^2(\Gamma_L)}\right) \norme{\varphi}_{L^2(\Gamma_L)} \pt
		\end{align*}
		Then, using identity \eqref{eq: explicit_decomposition}, we can bound
		\begin{align*}
		\sum_{\mathbf{r \in R}}\normeL{\sqrt{-\Delta+1} u_{L,\mathbf{r}}(\mathbf{k},\cdot)}_{L^2(\Gamma_L)} 
		\leq \left(\sup_\mathbf{r \in R} \sum_{\mathbf{u} \in \lscr^\mathbf{R}} \absL{G^{-\frac{1}{2}}_L(\mathbf{u,r})}\right)\sum_{\mathbf{r \in R}} \normeL{\sqrt{-\Delta+1}~\mathcal{U}_{\mathrm{BF}}(v_{L,\mathbf{r}})(\mathbf{k},\cdot)}_{L^2(\Gamma_L)} \pt
		\end{align*}
		We conclude thanks to Lemma~\ref{lemma: localization_inverse_G} and Lemma~\ref{lemma: estimation_key3}.
	\end{proof}
	\begin{lemma}
		\label{lemma: estimation_key7}
		For $L$ large enough, we have: for all $\mathbf{r \in R}$, for all $\mathbf{k} \in \Gamma_L^*$,
		\begin{align*}
		\normeL{\frac{1}{\sqrt{-\Delta+\mu_L}} \left(H_L(\mathbf{k})+\mu_L\right) \mathcal{U}_\mathrm{BF}(v_{L,\mathbf{r}})(\mathbf{k},\cdot)}_{L^2(\Gamma_L)}
		\lesssim L^{M} \normeL{\sqrt{-\Delta+1}~\mathcal{U}_\mathrm{BF}\left((1-\chi_{L,\mathbf{r}})v_{L,\mathbf{r}}\right)(\mathbf{k},\cdot)}_{L^2(\Gamma_L)} \, ,
		\end{align*}
		where $M$ is the constant appearing in~\eqref{rem: grow_polynomial}.
	\end{lemma}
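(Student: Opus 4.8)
The plan is to turn the eigenvalue equation for $v_{L,\mathbf{r}}$ into an explicit formula for $(H_L(\mathbf{k})+\mu_L)\,\mathcal{U}_\mathrm{BF}(v_{L,\mathbf{r}})(\mathbf{k},\cdot)$ whose right-hand side only involves the ``tail'' $(1-\chi_{L,\mathbf{r}})v_{L,\mathbf{r}}$, and then to absorb the potential $V_L$ using the Kato-type bound of Corollary~\ref{cor: kato_type_inequality_consequence}. The starting point is the algebraic identity
\begin{align*}
(-\Delta + V_L + \mu_L)\,v_{L,\mathbf{r}} = (-\Delta + V_{L,\mathbf{r}} + \mu_L)\,v_{L,\mathbf{r}} + (1-\chi_{L,\mathbf{r}})\,V_L\,v_{L,\mathbf{r}} = (1-\chi_{L,\mathbf{r}})\,V_L\,v_{L,\mathbf{r}} \, ,
\end{align*}
which follows from $V_{L,\mathbf{r}} = \chi_{L,\mathbf{r}} V_L$ and the eigenequation $H_{L,\mathbf{r}} v_{L,\mathbf{r}} = -\mu_L v_{L,\mathbf{r}}$. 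Since $v_{L,\mathbf{r}} \in H^{\min(p,2)}(\R^2) \subset H^1(\R^2)$ by Proposition~\ref{prop: regularity} and $V_L$ is infinitesimally form-bounded, all three terms make sense in $H^{-1}(\R^2)$; in particular the last term equals $V_L\big[(1-\chi_{L,\mathbf{r}})v_{L,\mathbf{r}}\big]$ with $(1-\chi_{L,\mathbf{r}})v_{L,\mathbf{r}} \in H^1(\R^2)$.

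Next I would apply the Bloch--Floquet transform $\mathcal{U}_\mathrm{BF}$ with respect to $\lscr_L$, using that it intertwines the $\lscr_L$-periodic operator $-\Delta+V_L$ with the fibers $H_L(\mathbf{k})$ and that it commutes with multiplication by the $\lscr_L$-periodic function $V_L$. Because $v_{L,\mathbf{r}}$ and $(1-\chi_{L,\mathbf{r}})v_{L,\mathbf{r}}$ decay exponentially (Proposition~\ref{prop: exponential_decay_v_tilde}), their transforms $u \coloneqq \mathcal{U}_\mathrm{BF}(v_{L,\mathbf{r}})(\mathbf{k},\cdot)$ and $\psi \coloneqq \mathcal{U}_\mathrm{BF}\big((1-\chi_{L,\mathbf{r}})v_{L,\mathbf{r}}\big)(\mathbf{k},\cdot)$ lie in $H^1_\mathbf{k}(\Gamma_L)$ for every $\mathbf{k}$, and the identity above becomes, in $H^{-1}_\mathbf{k}(\Gamma_L)$,
\begin{align*}
(H_L(\mathbf{k}) + \mu_L)\,u = V_L\,\psi \, .
\end{align*}

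It then remains to bound $\normeL{(-\Delta+\mu_L)^{-1/2} V_L \psi}_{L^2(\Gamma_L)}$. I would factor $(-\Delta+\mu_L)^{-1/2} V_L = \big[(-\Delta+\mu_L)^{-1/2}(-\Delta+1)^{1/2}\big]\big[(-\Delta+1)^{-1/2} V_L (-\Delta+1)^{-1/2}\big](-\Delta+1)^{1/2}$ and observe that $(-\Delta+\mu_L)^{-1/2}(-\Delta+1)^{1/2}$ is bounded uniformly in $L$ once $L$ is large, since $\mu_L \to \mu > 0$ by \eqref{eq: closedness_modified_original_operator}, while the middle factor has $\mathcal{B}(\lk^2(\Gamma_L))$-norm $\grandO{L^M}$ by Corollary~\ref{cor: kato_type_inequality_consequence}. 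Combining these gives $\normeL{(-\Delta+\mu_L)^{-1/2} V_L \psi}_{L^2(\Gamma_L)} \lesssim L^M \normeL{(-\Delta+1)^{1/2}\psi}_{L^2(\Gamma_L)}$, which is the claim after recognizing $\psi$ and $(H_L(\mathbf{k})+\mu_L)u$ on both sides.

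I expect the main obstacle to be the first step rather than the estimates: $v_{L,\mathbf{r}}$ a priori belongs only to the form domain $H^1(\R^2)$ and not to $\mathcal{D}(H_L)$, so the identity $(H_L(\mathbf{k})+\mu_L)u = V_L\psi$ must be understood in $H^{-1}_\mathbf{k}(\Gamma_L)$ and justified by testing against $H^1_\mathbf{k}(\Gamma_L)$ functions, using that $\mathcal{U}_\mathrm{BF}$ is unitary from $L^2(\R^2)$ onto $\fint_{\Gamma_L^*}^\oplus \lk^2(\Gamma_L)\,\diff \mathbf{k}$ and restricts to isomorphisms on the associated $H^{\pm1}$ scales; the exponential decay of $v_{L,\mathbf{r}}$ is what makes the pointwise-in-$\mathbf{k}$ objects $u$ and $\psi$ well defined and $H^1_\mathbf{k}$-valued. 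Once this is in place, the remaining fractional-power manipulations are routine.
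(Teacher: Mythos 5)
Your proof is correct and follows the paper's strategy: both arguments hinge on the identity $(H_L(\mathbf{k})+\mu_L)\,\mathcal{U}_\mathrm{BF}(v_{L,\mathbf{r}})(\mathbf{k},\cdot) = V_L\,\mathcal{U}_\mathrm{BF}\bigl((1-\chi_{L,\mathbf{r}})v_{L,\mathbf{r}}\bigr)(\mathbf{k},\cdot)$, followed by the Kato-type operator bound of Corollary~\ref{cor: kato_type_inequality_consequence} applied to the factor $(-\Delta+1)^{-1/2}V_L(-\Delta+1)^{-1/2}$. The only organizational difference is that the paper establishes this identity by explicitly computing Fourier coefficients of each term and invoking $L^p$ Fourier-multiplier theorems (on $\R^2$ and on the torus) to justify the fractional powers acting on $L^p$ data, whereas you derive it directly as an equality in $H^{-1}_\mathbf{k}(\Gamma_L)$ via the $\lscr_L$-intertwining of $\mathcal{U}_\mathrm{BF}$; both routes reach the same estimate.
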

	\begin{proof}
		We have
		\begin{multline*}
		\normeL{\frac{1}{\sqrt{-\Delta+\mu_L}}\left( H_L(\mathbf{k})+\mu_L\right) \mathcal{U}_\mathrm{BF}(v_{L,\mathbf{r}})(\mathbf{k},\cdot)}^2_{L^2(\Gamma_L)}\\
		= \normeL{\left(\sqrt{-\Delta+\mu_L} + \frac{1}{\sqrt{-\Delta+\mu_L}}V_L\right)\mathcal{U}_\mathrm{BF}(v_{L,\mathbf{r}})(\mathbf{k},\cdot)}^2_{L^2(\Gamma_L)} \pt
		\end{multline*}
		The function $\mathcal{U}_\mathrm{BF}(v_{L,\mathbf{r}})(\mathbf{k},\cdot)$ is in $H^1_\mathbf{k}(\Gamma_L)$ by Lemma~\ref{lemma: estimation_key3} and the operator $\frac{1}{\sqrt{-\Delta+1}}V_L\frac{1}{\sqrt{-\Delta+1}}$ is bounded on $\lk^2(\Gamma_L)$ by Corollary~\ref{cor: kato_type_inequality_consequence}, so all the terms in the previous equality make sense in $\lk^2(\Gamma_L)$. Let $\mathbf{v} \in \lscr_L^*$. Because $v_{L,\mathbf{r}}$ is exponentially decaying, we have
		\begin{align*}
		\pdtsc{e_{\mathbf{v+k}}}{\sqrt{-\Delta+\mu_L}~ \mathcal{U}_\mathrm{BF}(v_{L,\mathbf{r}})(\mathbf{k},\cdot)}_{L^2(\Gamma_L)}
		&=\frac{2\pi}{\sqrt{\abs{\Gamma_L}}}\sqrt{\abs{\mathbf{v+k}}^2+\mu_L} \mathcal{F}(v_{L,\mathbf{r}})(\mathbf{v+k}) \\
		&=\frac{2\pi}{\sqrt{\abs{\Gamma_L}}} \mathcal{F}\left(\sqrt{-\Delta+\mu_L}v_{L,\mathbf{r}}\right)(\mathbf{v+k})  \\
		&=-\mathcal{F}\left(\frac{1}{\sqrt{-\Delta+\mu_L}} V_{L,\mathbf{r}} v_{L,\mathbf{r}}\right)(\mathbf{v+k}) \pt
		\end{align*}
		From the inequality
		\begin{align*}
		\forall \varphi\in\mathcal{S}(\R^2),\quad \normeL{\frac{1}{\sqrt{-\Delta+\mu_L}} \varphi}_{L^p(\R^2)} \leq \frac{2\pi}{\sqrt{\mu_L}} \norme{\varphi}_{L^p(\R^2)} \, ,
		\end{align*}	
		we see that the operator $(-\Delta+\mu_L)^{-1/2}$ is a $L^p$ Fourier multiplier.
		We have $V_{L,\mathbf{r}} v_{L,\mathbf{r}} \in L^p(\R^2)$ since $v_{L,\mathbf{r}} \in  L^\infty(\R^2)$. This implies
		\begin{align*}
		\mathcal{F}\left(\frac{1}{\sqrt{-\Delta+\mu_L}} V_{L,\mathbf{r}} v_{L,\mathbf{r}}\right)(\mathbf{v+k})
		= \frac{1}{\sqrt{\abs{\mathbf{v+k}}+\mu_L}} \mathcal{F}\left(V_{L,\mathbf{r}} v_{L,\mathbf{r}}\right)(\mathbf{v+k}) \pt
		\end{align*}
		By~\cite[Theorem 4.3.7]{grafakos2014classical}, the operator $(-\Delta+\mu_L)^{-1/2}$ is also a $L^p$ multiplier on the torus $\R^2 / \lscr_L $. Then, we have
		\begin{align*}
		\pdtsc{e_{\mathbf{v+k}}}{\frac{1}{\sqrt{-\Delta+\mu_L}}V_L \mathcal{U}_\mathrm{BF}(v_{L,\mathbf{r}})(\mathbf{k},\cdot)}_{L^2(\Gamma_L)}
		= \frac{2\pi}{\sqrt{\abs{\Gamma_L}}} \frac{1}{\sqrt{\abs{\mathbf{v+k}}^2+\mu_L}} \mathcal{F}\left(V_L v_{L,\mathbf{r}}\right)(\mathbf{v+k}) \pt
		\end{align*}
		The second equality comes from the fact that $v_{L,\mathbf{r}}$ is exponentially decaying at infinity. Recalling that $\mu_L \to \mu$ as $L\to\infty$ and using the Plancherel theorem, we show
		\begin{multline*}
		\normeL{\frac{1}{\sqrt{-\Delta+\mu_L}} \left(H_L(\mathbf{k}) +\mu_L\right)\mathcal{U}_\mathrm{BF}(v_{L,\mathbf{r}})(\mathbf{k},\cdot)}^2_{L^2(\Gamma_L)} \\
		\lesssim \frac{(2\pi)^2}{\abs{\Gamma_L}} \sum_{\mathbf{v} \in \lscr_L^*} \frac{1}{\abs{\mathbf{v+k}}^2+1} \absL{\mathcal{F}\left(V_L(1-\chi_{L,\mathbf{r}}) v_{L,\mathbf{r}}\right)(\mathbf{v+k})}^2 \pt
		\end{multline*}
		Going back up the previous calculations, we have
		\begin{multline*}
		\frac{1}{\sqrt{\abs{\Gamma_L}}}\frac{1}{\sqrt{\abs{\mathbf{v+k}}^2+1}}\mathcal{F}\left(V_L(1-\chi_{L,\mathbf{r}}) v_{L,\mathbf{r}}\right)(\mathbf{v+k})\\
		= \pdtsc{e_{\mathbf{v+k}}}{\frac{1}{\sqrt{-\Delta+1}}V_L \mathcal{U}_\mathrm{BF}\left((1- \chi_{L,\mathbf{r}})v_{L,\mathbf{r}}\right)(\mathbf{k},\cdot)}_{L^2(\Gamma_L)} \pt
		\end{multline*}
		Then, by the Plancherel theorem, we obtain 
		\begin{align*}
		\frac{(2\pi)^2}{\abs{\Gamma_L}}\sum_{\mathbf{v} \in \lscr_L^*} \frac{1}{\abs{\mathbf{v+k}}^2+1} &\absL{\mathcal{F}\left(V_L(1-\chi_{L,\mathbf{r}}) v_{L,\mathbf{r}}\right)(\mathbf{v+k})}^2\\
		&= \normeL{\frac{1}{\sqrt{-\Delta+1}}V_L \mathcal{U}_\mathrm{BF}\left((1- \chi_{L,\mathbf{r}})v_{L,\mathbf{r}}\right)(\mathbf{k},\cdot)}_{L^2(\Gamma_L)}^2\\
		&\lesssim L^{2M} \normeL{\sqrt{-\Delta+1}~\mathcal{U}_\mathrm{BF}\left((1-\chi_{L,\mathbf{r}})v_{L,\mathbf{r}}\right)(\mathbf{k},\cdot)}^2_{L^2(\Gamma_L)} \pt
		\end{align*}
		For the last inequality, we have used the boundedness of the operator $(-\Delta+1)^{-1/2}V_L(-\Delta+1)^{-1/2}$ on $\lk^2(\Gamma_L)$, see Corollary~\ref{cor: kato_type_inequality_consequence}.
	\end{proof}
	\begin{lemma}
		\label{lemma: estimation_key8}
		For $L$ large enough we have: for all $\mathbf{r\in R}$, for all $\mathbf{k} \in \Gamma_L^*$,
		\begin{align*}
		\normeL{\sqrt{-\Delta+1}~\mathcal{U}_\mathrm{BF}\left((1-\chi_{L,\mathbf{r}})v_{L,\mathbf{r}}\right)(\mathbf{k},\cdot)}^2_{L^2(\Gamma_L)}
		= \grandO{T_L^{(1+\delta)d_0-}} \, ,
		\end{align*}
		where the $O$ is independent from $\mathbf{r}$ or $\mathbf{k}$.
	\end{lemma}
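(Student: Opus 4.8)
The plan is to bound the quantity in two stages: first estimate the full-space norm $\normeL{(1-\chi_{L,\mathbf{r}})v_{L,\mathbf{r}}}_{H^1(\R^2)}$ using the support of $1-\chi_{L,\mathbf{r}}$ together with the uniform integral exponential bounds of Proposition~\ref{prop: exponential_decay_v_tilde}, and then transfer this estimate to the Bloch fibre at $\mathbf{k}$ by applying the triangle inequality to the Bloch--Floquet series and controlling the sum of the resulting per-cell contributions.

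For the first stage, write $g_{L,\mathbf{r}}\coloneqq(1-\chi_{L,\mathbf{r}})v_{L,\mathbf{r}}$. By the definition \eqref{eq: definition_chi} of $\chi$, both $1-\chi_{L,\mathbf{r}}$ and $\nabla\chi_{L,\mathbf{r}}$ vanish on $B\bigl(L\mathbf{r},\tfrac{1+\delta}{2}Ld_0\bigr)$, and $\normeL{\nabla\chi_{L,\mathbf{r}}}_{L^\infty(\R^2)}=\grandO{L^{-1}}$. Expanding $\nabla g_{L,\mathbf{r}}=-(\nabla\chi_{L,\mathbf{r}})v_{L,\mathbf{r}}+(1-\chi_{L,\mathbf{r}})\nabla v_{L,\mathbf{r}}$, one obtains, for every $\epsilon\in\intoo{0}{1}$,
\[
\normeL{g_{L,\mathbf{r}}}_{H^1(\R^2)}^2\lesssim\int_{\abs{\mathbf{x}-L\mathbf{r}}\geq\frac{1+\delta}{2}Ld_0}\bigl(\abs{v_{L,\mathbf{r}}(\mathbf{x})}^2+\abs{\nabla v_{L,\mathbf{r}}(\mathbf{x})}^2\bigr)\diff\mathbf{x}\leq C_\epsilon\,e^{-(1-\epsilon)(1+\delta)\sqrt{\mu}d_0L},
\]
by the integral bound \eqref{eq: exponential_decay_v_tilde_integrale} (the factor $2$ coming from squaring the $L^2$-norm combines with $\tfrac{1+\delta}{2}$ to give $1+\delta$). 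Hence $\normeL{g_{L,\mathbf{r}}}_{H^1(\R^2)}^2=\grandO{T_L^{(1+\delta)d_0-}}$, uniformly in $\mathbf{r}\in\mathbf{R}$.

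For the second stage, since $g_{L,\mathbf{r}}$ decays exponentially the series $\mathcal{U}_\mathrm{BF}(g_{L,\mathbf{r}})(\mathbf{k},\cdot)=\sum_{\mathbf{u}\in\lscr}e^{i\mathbf{k}\cdot L\mathbf{u}}g_{L,\mathbf{r}}(\cdot-L\mathbf{u})$ converges in $H^1_\mathbf{k}(\Gamma_L)$, and by the functional calculus $\normeL{\sqrt{-\Delta+1}\,\mathcal{U}_\mathrm{BF}(g_{L,\mathbf{r}})(\mathbf{k},\cdot)}_{L^2(\Gamma_L)}=\normeL{\mathcal{U}_\mathrm{BF}(g_{L,\mathbf{r}})(\mathbf{k},\cdot)}_{H^1_\mathbf{k}(\Gamma_L)}$. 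The triangle inequality in $H^1(\Gamma_L)$ then gives $\normeL{\mathcal{U}_\mathrm{BF}(g_{L,\mathbf{r}})(\mathbf{k},\cdot)}_{H^1_\mathbf{k}(\Gamma_L)}\lesssim\sum_{\mathbf{u}\in\lscr}\normeL{g_{L,\mathbf{r}}}_{H^1(\Gamma_L+L\mathbf{u})}$. The term $\mathbf{u}=0$ is at most $\normeL{g_{L,\mathbf{r}}}_{H^1(\R^2)}=\grandO{T_L^{\frac{(1+\delta)d_0}{2}-}}$ by the first stage. For $\mathbf{u}\neq 0$, using $\abs{g_{L,\mathbf{r}}}\leq\abs{v_{L,\mathbf{r}}}$, $\abs{\nabla g_{L,\mathbf{r}}}\lesssim\abs{v_{L,\mathbf{r}}}+\abs{\nabla v_{L,\mathbf{r}}}$, the vanishing of $g_{L,\mathbf{r}}$ on $B\bigl(L\mathbf{r},\tfrac{1+\delta}{2}Ld_0\bigr)$, and the elementary estimate $\dist(\Gamma_L+L\mathbf{u},L\mathbf{r})\geq L(\abs{\mathbf{u}}-2d_\Gamma)$, where $d_\Gamma\coloneqq\max_{\mathbf{x}\in\Gamma}\abs{\mathbf{x}}$ and $L\mathbf{r}\in\Gamma_L$ since $\mathbf{r}\in\mathbf{R}\subset\Gamma$, one gets $\normeL{g_{L,\mathbf{r}}}_{H^1(\Gamma_L+L\mathbf{u})}^2\leq C_\epsilon e^{-2(1-\epsilon)\sqrt{\mu}R_\mathbf{u}}$ with $R_\mathbf{u}\coloneqq\max\bigl(\tfrac{1+\delta}{2}Ld_0,\,L(\abs{\mathbf{u}}-2d_\Gamma)\bigr)$, again from \eqref{eq: exponential_decay_v_tilde_integrale}. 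Splitting $\sum_{\mathbf{u}\neq 0}$ into the finite set $\{\abs{\mathbf{u}}\leq 2d_\Gamma+\tfrac{(1+\delta)d_0}{2}\}$ (each term $\lesssim_\epsilon e^{-(1-\epsilon)\frac{(1+\delta)d_0}{2}\sqrt{\mu}L}$) and its complement (on which $R_\mathbf{u}=L(\abs{\mathbf{u}}-2d_\Gamma)\geq\tfrac{(1+\delta)d_0}{2}L+(\abs{\mathbf{u}}-2d_\Gamma-\tfrac{(1+\delta)d_0}{2})$ for $L\geq1$, so that after factoring out $e^{-(1-\epsilon)\frac{(1+\delta)d_0}{2}\sqrt{\mu}L}$ the residual lattice sum converges uniformly in $L$), one finds $\sum_{\mathbf{u}\neq 0}\normeL{g_{L,\mathbf{r}}}_{H^1(\Gamma_L+L\mathbf{u})}=\grandO{T_L^{\frac{(1+\delta)d_0}{2}-}}$. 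Adding the $\mathbf{u}=0$ term and squaring gives the claimed bound, uniformly in $\mathbf{r}\in\mathbf{R}$ and $\mathbf{k}\in\Gamma_L^*$.

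The point requiring care is the second stage: one must check that summing the per-cell $H^1$-norms of $g_{L,\mathbf{r}}$ over the infinite lattice $\lscr$ does not degrade the exponent below $\tfrac12(1+\delta)d_0$. Because $g_{L,\mathbf{r}}$ is exponentially localized near the single vertex $L\mathbf{r}$, only the cell $\mathbf{u}=0$ and a bounded number of neighbouring cells carry mass at the leading rate $e^{-\frac12(1+\delta)\sqrt{\mu}d_0L}$, while the remaining cells form a geometrically decaying tail summable uniformly in $L$; thus $\sum_{\mathbf{u}\in\lscr}\normeL{g_{L,\mathbf{r}}}_{H^1(\Gamma_L+L\mathbf{u})}$ has the same order as the single-cell norm, and squaring recovers the full exponent $(1+\delta)d_0$. (A cruder alternative is to bound the number of cells meeting $\supp g_{L,\mathbf{r}}$ at that scale by a polynomial in $L$, which the notation $\grandO{\cdot^-}$ absorbs.) The support condition $\supp(1-\chi_{L,\mathbf{r}})\subset B\bigl(L\mathbf{r},\tfrac{1+\delta}{2}Ld_0\bigr)^c$ built into \eqref{eq: definition_chi} is exactly what produces the exponent $(1+\delta)d_0$ rather than $d_0$; everything else rests on the uniform exponential bounds of Proposition~\ref{prop: exponential_decay_v_tilde}.
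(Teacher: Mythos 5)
Your proof is correct and arrives at the right exponent, but it follows a genuinely different route from the paper's. The paper (reusing the computation from Lemma~\ref{lemma: estimation_key3}) rewrites the squared fibre norm as a Poisson-summation identity
\[
\normeL{\sqrt{-\Delta+1}\,\mathcal{U}_\mathrm{BF}\psi(\mathbf{k},\cdot)}^2_{L^2(\Gamma_L)}
=\sum_{\mathbf{u}\in\lscr_L}e^{-i\mathbf{u}\cdot\mathbf{k}}\Bigl[\bigl(\sqrt{-\Delta+1}\mathcal{P}\psi\bigr)\ast\bigl(\sqrt{-\Delta+1}\psi\bigr)\Bigr](\mathbf{u}),
\]
with $\psi=(1-\chi_{L,\mathbf{r}})v_{L,\mathbf{r}}$, then controls the lattice sum by interpolating between the pointwise decay of the convolution provided by Lemma~\ref{lemma: estimation_key2} and the uniform Young bound $\norme{\psi}^2_{H^1}$, and finally invokes the support of $1-\chi_{L,\mathbf{r}}$ to get $\norme{\psi}^2_{H^1}=\grandO{T_L^{(1+\delta)d_0-}}$. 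You instead apply the triangle inequality directly to the Bloch--Floquet series in the $H^1(\Gamma_L)$ topology and bound each per-cell contribution $\normeL{g_{L,\mathbf{r}}}_{H^1(\Gamma_L+L\mathbf{u})}$ by combining the support constraint with the integral exponential bound \eqref{eq: exponential_decay_v_tilde_integrale}, observing that the cell $\mathbf{u}=0$ dominates and the tail sums geometrically and uniformly in $L$. Both arguments hinge on the same two facts --- the cut-off pushing $\supp g_{L,\mathbf{r}}$ outside $B\bigl(L\mathbf{r},\tfrac{1+\delta}{2}Ld_0\bigr)$, and the uniform exponential decay of $v_{L,\mathbf{r}}$ --- and deliver the same exponent $(1+\delta)d_0$. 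Your route is more elementary in that it avoids Poisson summation and the interpolation trick; the trade-off is that you estimate the unsquared fibre norm cell by cell and must verify the lattice sum does not degrade the rate, which you do correctly by splitting into finitely many near cells (each of size $T_L^{(1+\delta)d_0/2-}$) and a uniformly summable far tail.
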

	\begin{proof}
		We denote $\psi = (1-\chi_{L,\mathbf{r}}) v_{L,\mathbf{r}}$. Following the proof of Lemma~\ref{lemma: estimation_key3}, we get
		\begin{align*}
		\normeL{\sqrt{-\Delta+1}\left(\mathcal{U}_\mathrm{BF}\psi\right)(\mathbf{k},\cdot)}^2_{L^2(\Gamma_L)}
		= \sum_{\mathbf{u} \in \lscr_L} e^{-i\mathbf{u\cdot k}} \left[\left(\sqrt{-\Delta+1}\mathcal{P}\psi\right)\ast \left(\sqrt{-\Delta+1}\psi \right)\right](\mathbf{u})  \, ,
		\end{align*}
		where $\mathcal{P}$ is the parity operator.
		Clearly, $\psi$ and $\mathcal{P}\psi$ satisfy the same integral exponential bound~\eqref{eq: exponential_decay_v_tilde_integrale} from Proposition~\ref{prop: exponential_decay_v_tilde} as $v_{L,\mathbf{r}}$. Therefore, by Lemma~\ref{lemma: estimation_key2}, we have for $L$ large enough
		\begin{align*}
		\absL{\left[\left(\sqrt{-\Delta+1}\mathcal{P}\psi\right)\ast \left(\sqrt{-\Delta+1}\psi \right)\right](\mathbf{u})}
		\lesssim \norme{\psi}_{H^1(\R^2)} e^{-\frac{\mu}{4} \abs{\mathbf{u}}} \pt
		\end{align*}
		Also, by Young's inequality, we have 
		\begin{align*}
		\normeL{\left[\left(\sqrt{-\Delta+1}\mathcal{P}\psi\right)\ast \left(\sqrt{-\Delta+1}\psi \right)\right]}_{L^\infty(\R^2)} \leq \norme{\psi}^2_{H^1(\R^2)} \pt
		\end{align*}
		Hence, for all $\epsilon >0$ and for all $L\geq 1$, we obtain
		\begin{align*}
		\normeL{\sqrt{-\Delta+1}\left(\mathcal{U}_\mathrm{BF}\psi\right)(\mathbf{k},\cdot)}^2_{L^2(\Gamma_L)}
		\lesssim \norme{\psi}^{2-\epsilon}_{H^1(\R^2)} \sum_{\mathbf{u} \in \lscr} e^{-\frac{\mu \epsilon L\abs{\mathbf{u}}}{4} } 
		\lesssim \norme{\psi}^{2-\epsilon}_{H^1(\R^2)}  \underset{=C_\epsilon <\infty}{\underbrace{\sum_{\mathbf{u} \in \lscr} e^{-\frac{\mu \epsilon \abs{\mathbf{u}}}{4}}}} \pt
		\end{align*}
		We recall that $\{\chi_{L,\mathbf{r}} \equiv 1\} \subset B\left(L\mathbf{r}, \frac{1+\delta}{2} Ld_0\right)$. So, by Proposition~\ref{prop: exponential_decay_v_tilde}, we have
		\begin{align*}
		\norme{\psi}^2_{H^1(\R^2)} = \grandO{T_L^{(1+\delta)d_0-}} \pt
		\end{align*}
		This ends the proof of Lemma~\ref{lemma: estimation_key8}.
	\end{proof}
	\begin{proof}[Proof of Proposition~\ref{prop: estimation_key}]
		We notice that
		\begin{align*}
		\normeL{C_L(\mathbf{k})^*(B_L(\mathbf{k})-\lambda)^{-1} C_L(\mathbf{k})} \leq \normeL{(B_L(\mathbf{k})-\lambda)^{-1/2} C_L(\mathbf{k})}^2 \pt
		\end{align*}
		Then, we write
		\begin{align*}
		(B_L(\mathbf{k})-\lambda)^{-1/2} C_L(\mathbf{k}) = (B_L(\mathbf{k})-\lambda)^{-1/2} P_L^\perp(\mathbf{k})\left(H_L(\mathbf{k}) + \mu_L\right) P_L(\mathbf{k}) \pt
		\end{align*}
		We use successively Lemma~\ref{lemma: estimation_key1}, Lemma~\ref{lemma: estimation_key4}, Lemma~\ref{lemma: estimation_key7}, Lemma~\ref{lemma: estimation_key8} and the fact that $E_L(\mathbf{k})$ is spanned by the family $\{\mathcal{U}_\mathrm{BF}(v_{L,\mathbf{r}})(\mathbf{k,\cdot})\}_{\mathbf{r\in R}}$ by Proposition~\ref{prop: properties_u}. The polynomial terms emerging from the norm of $(-\Delta+1)^{-1/2}V_L(-\Delta+1)^{-1/2}$ are absorbed by slightly modifying the $\epsilon$ in the definition of $\grandO{T_L^{(1+\delta)d_0-}}$.
	\end{proof}
	We can conclude this section with the
	\begin{proof}[Proof of Theorem~\ref{theo: feshbach-schur}]
		Theorem~\ref{theo: feshbach-schur} is a consequence of Corollary~\ref{cor: expansion_A_k}, statement \eqref{eq: FS_method} and Proposition~\ref{prop: estimation_key}.
	\end{proof}
	
	\section{Proof of Theorem~\ref{theo: existence_dirac_cones}}\label{sec:example-of-the-honeycomb-lattice}
	
	\subsection{The triangular and honeycomb lattices}\label{sec:the-triangular-and-the-hexagonal-lattices}
	
	First, we recall some basic geometric features of the triangular and honeycomb lattices, see Figure~\ref{fig:honeycomb_lattice_BZ}. The \emph{triangular lattice }is the Bravais lattice defined as the set of discrete translations
	\begin{align*}
	\mathscr{L}\coloneqq \Z \mathbf{u}_1 \oplus \Z \mathbf{u}_2 \quad\text{where}\quad \mathbf{u}_1 = \begin{pmatrix}
	\sqrt{3}/2\\ 
	1/2
	\end{pmatrix} \et \mathbf{u}_2 = \begin{pmatrix}
	\sqrt{3}/2 \\ 
	-1/2
	\end{pmatrix} \pt
	\end{align*}
	We denote by $\Gamma$ its Wigner-Seitz cell which is a regular hexagon. The reciprocal lattice $\mathscr{L}^*$ of $\mathscr{L}$ is given by
	\begin{align*}
	\mathscr{L}^*\coloneqq \Z \mathbf{v}_1 \oplus \Z \mathbf{v}_2 \quad\text{where} \quad \mathbf{v}_1 = \frac{4\pi}{\sqrt{3}}\begin{pmatrix}
	1/2\\ 
	\sqrt{3}/2
	\end{pmatrix} \et \mathbf{v}_2 = \frac{4\pi}{\sqrt{3}}\begin{pmatrix}
	1/2 \\ 
	-\sqrt{3}/2
	\end{pmatrix} \pt
	\end{align*}
	Notice the normalizations $\abs{\mathbf{u}_i} = 1$ and $\abs{\mathbf{v}_i} = \frac{4\pi}{\sqrt{3}}$.
	The first Brillouin zone $\Gamma^*$ is also a regular hexagon. Its vertices are of two types, $\mathbf{K}$ and $\mathbf{K}'$, depending on their orbit under the rotation by $2\pi/3$ about the origin. We use the following conventions
	\begin{align*}
	\mathbf{K} = \frac{1}{3} (\mathbf{v}_1 - \mathbf{v}_2) \et \mathbf{K}' = \frac{1}{3} (\mathbf{v}_2 - \mathbf{v}_1) = - \mathbf{K} \pt
	\end{align*}	
	A generic vertex is denoted by $\mathbf{K}_\star \in \{\mathbf{K},\mathbf{K}'\}$. Notice that for any vertex $\mathbf{K}_\star$, we have $M_\mathcal{R}\mathbf{K}_\star \in \mathbf{K}_\star + \lscr^*$ where $M_\mathcal{R}$ is the rotation matrix by $2\pi/3$. The \emph{honeycomb lattice }is defined as
	\begin{align*}
	\lscr^H \coloneqq \left(\lscr + \mathbf{a}\right) \cup \left(\lscr+\mathbf{b}\right) \ou
	\mathbf{a} = \frac{1}{2\sqrt{3}} \begin{pmatrix}
	1\\0
	\end{pmatrix} \et \mathbf{b}=-\mathbf{a} \pt
	\end{align*}
	The nearest neighbor distance in $\lscr^H$ is $d_0 =\abs{\mathbf{a-b}} = 1/\sqrt{3}$ and the second nearest neighbor distance is $d_1=\abs{\mathbf{u}_i}=1$.
	As in the previous sections, for $L\geq 1$, we use the subscript $L$ to denote the objects defined from the dilated lattice $\mathscr{L}_L = L \mathscr{L}$.
	
	The symmetry group $G$ of the honeycomb lattice $\mathscr{L}^H$ belongs to the \textbf{p6m} class. In particular, $\lscr^H$ is invariant with respect to the shifts of the triangular lattice $\mathscr{L}$, under parity symmetry and horizontal reflection symmetry. Also, $\mathscr{L}_H$ is invariant with respect to the rotation by $2\pi/3$ (resp. by $\pi/3$) about $\mathbf{a}$ or $\mathbf{b}$ (resp. about $\mathbf{x}_c$ where $\mathbf{x}_c \in \R^2$ denotes the center of any hexagon).
	
	Notice that because $\mathbf{b} = \mathbf{-a}$, $\lscr^H$ satisfies Assumption~\ref{hypo_1}: $G$ acts transitively on $\lscr^H$. In addition, it can be seen that the action of $G$ on the set of nearest neighbors $\mathscr{P}^\mathbf{R}$, defined in \eqref{eq: definition_NN}, is also transitive. Hence $m=1$ in this case.
	
	\subsection{Existence of Dirac cones}
	
	In this section, we consider a potential $V_L$ satisfying Assumptions~\ref{hypo_0}, \ref{hypo_2}, \ref{hypo_3} and \ref{hypo_4} with $\lscr^\mathbf{R} = \lscr^H$, which satisfies Assumption~\ref{hypo_1}. We use the same notations as in Section~\ref{sec:two-dimensional-lattices-at-dissociation-with-coulomb-singularities}. In particular, we denote by $H_L = -\Delta+V_L$ the periodic Schrödinger operator associated with $V_L$ and by $H_L(\mathbf{k})$ the restriction of $H_L$ along the fiber $L^2_\mathbf{k}(\Gamma)$. Also the matrix of $H_L$ (resp. $H_L(\mathbf{k})$) in the subspace $E_L$ (resp. $E_L(\mathbf{k})$) spanned by the orthonormal family $\{w_{L,\mathbf{r}}\}_{\mathbf{r}\in\lscr^\mathbf{R}}$ (resp. $\{u_{L,\mathbf{r}}(\mathbf{k,\cdot})\}_{\mathbf{r}\in\lscr^\mathbf{R}}$) is denoted by $A_L$ (resp. $A_L(\mathbf{k})$).
	
	By Theorem~\ref{theo: feshbach-schur}, we know that for $L$ large enough, the spectrum $H_L(\mathbf{k})$ is given to leading order by the Wallace model:
	\begin{align*}
	\forall \mathbf{k} \in \Gamma_L^*,\quad \mu_{\pm,L}(\mathbf{k}) = - \mu_L \pm \abs{ \theta_L} \absL{ 1+e^{i\mathbf{k\cdot u}_1}+e^{i\mathbf{k\cdot u}_2}} +\grandO{T_L^{\frac{1+\delta}{\sqrt{3}}-}+T_L^{1-}} \, ,
	\end{align*}
	where $\theta_L$ is the interaction coefficient defined in \eqref{eq: interaction_coefficient}. Assume there exists $c >0$ and $\delta' \in \intfo{0}{\delta}$ small enough such that
	\begin{align}
	\label{eq: assumption_gamma_L}
	\abs{\theta_L} \geq c T_L^{\frac{1+\delta'}{\sqrt{3}}} \pt
	\end{align}
	Let $r>0$ small enough and $\mathbf{K}_\star \in \{\mathbf{K},\mathbf{K}'\}$ be a vertex of the first Brillouin zone $\Gamma^*$. Then, our goal is to show that, when $L \to \infty$, we have the expansion
	\begin{align}
	\label{eq: expansion_eigenvalues_honeycomb}
	\mu_{\pm,L}\left(\frac{\mathbf{K}_\star + \kappa}{L}\right) = -\mu_L + \petito{\abs{\theta_L}} \pm \frac{\sqrt{3}}{2} \abs{\theta_L}\abs{\kappa} \left(1+E(\kappa)\right)(1+\petito{1})\, ,
	\end{align}
	where $\abs{E(\kappa)} \leq C \abs{\kappa}$ for all $\kappa \in B(0,r)$ and where the $o$'s do not depend on $\kappa$.
	The proof is mainly an adaptation of the arguments used to show Theorem~\ref{theo: feshbach-schur} but with the additional knowledge that both the symmetry group $G$ and its action on $\lscr^\mathbf{R}$ and $\mathscr{P}^\mathbf{R}$ are entirely specified. 
	
	Throughout the proof, the $O$'s and $o$'s will not depend on the pseudo-momentum $\mathbf{k}$.
	\paragraph{First step.}
	For $\mathbf{k} \in \R^2$, we write
	\[
	A_L(\mathbf{k}/L) = \begin{pmatrix}
	A_L(\mathbf{k}/L; \mathbf{a},\mathbf{a})& A_L(\mathbf{k}/L; \mathbf{a},\mathbf{b}) \\
	A_L(\mathbf{k}/L; \mathbf{b},\mathbf{a})& A_L(\mathbf{k}/L; \mathbf{b},\mathbf{b})
	\end{pmatrix} \, ,	
	\]
	where
	\begin{gather*}
	A_L(\mathbf{k}/L; \mathbf{a},\mathbf{a}) = \overline{A_L(\mathbf{k}/L; \mathbf{a},\mathbf{a})} = \sum_{\mathbf{u\in}\lscr} e^{i\mathbf{k\cdot u}} A_L(\mathbf{a, a+u}) \, ,\\
	A_L(\mathbf{k}/L; \mathbf{b},\mathbf{b}) = \overline{A_L(\mathbf{k}/L; \mathbf{b},\mathbf{b})} = \sum_{\mathbf{u\in}\lscr} e^{i\mathbf{k\cdot u}} A_L(\mathbf{b, b+u}) \, ,\\
	A_L(\mathbf{k}/L; \mathbf{a},\mathbf{b}) = \overline{A_L(\mathbf{k}/L; \mathbf{b},\mathbf{a})} = \sum_{\mathbf{u\in}\lscr} e^{i\mathbf{k\cdot u}} A_L(\mathbf{a, b+u}) \pt
	\end{gather*}
	We recall that $s_L$ denotes the scaling operator, defined by $s_L\mathbf{x} = L\mathbf{x}$.
	Using the fact that the point group of the symmetry group $G$ of $\lscr^H$ acts on $\lscr$ (see~\cite[Theorem 25.2]{armstrong1988groups}), we can show that
	\begin{align*}
	\forall \mathbf{k} \in \Gamma^*_L,\quad\forall \mathbf{r\in}\lscr^\mathbf{R},\quad \forall g \in G,\quad (s_Lgs_L^{-1}) \cdot u_{L,\mathbf{r}}(\mathbf{k},\cdot) = 	u_{L,g\cdot\mathbf{r}}(S_g\mathbf{k},\cdot) \, ,
	\end{align*}
	where $S_g$ is the linear part of $g$. Then, we deduce that
	\begin{align}
	\label{eq: A_L_k_symmetry_relations}
	\forall g \in G,\quad \forall \mathbf{k} \in \Gamma_L^*,\quad	g \cdot A_L(\mathbf{k}) \coloneqq A_L(S_g^{-1} \mathbf{k}) \, ,
	\end{align}
	where $S_g$ is the linear part of $g$.	Using identity \eqref{eq: A_L_k_symmetry_relations} with $g$ the parity symmetry $(\mathcal{P}\psi)(\mathbf{x}) = \psi(-\mathbf{x})$, we can show that
	\begin{align*}
	A_L(\mathbf{k}/L; \mathbf{a},\mathbf{a}) = \overline{A_L(\mathbf{k}/L; \mathbf{b},\mathbf{b})} = A_L(\mathbf{k}/L; \mathbf{b},\mathbf{b}) \pt
	\end{align*}		
	\paragraph{Second step.}
	Now, we estimate each matrix element of $A_L(\mathbf{k}/L)$, starting with the diagonal terms. We introduce the following equivalence relation on $\lscr$:
	\begin{align*}
	\mathbf{u}\sim_1\mathbf{u'} \Longleftrightarrow \exists k \in \{0,1,2\},\quad \mathbf{u'} = M_\mathcal{R}^k \mathbf{u} \, ,
	\end{align*}
	where $M_\mathcal{R}$ is the rotation matrix by $2\pi/3$. Now, using identity \eqref{eq: A_invariance} with $g$ the rotation by $2\pi/3$ about $\mathbf{a}$, we can write
	\begin{multline*}
	A_L(\mathbf{k}/L; \mathbf{a},\mathbf{a}) 
	= \sum_{\mathbf{u}\in\lscr} e^{i\mathbf{K_\star}\cdot\mathbf{u}} A_L(\mathbf{a,a+u})\\
	+ \sum_{\mathbf{u\in}\lscr/\sim_1} \left(e^{i\mathbf{k}\cdot \mathbf{u}}+e^{i\mathbf{k}\cdot  M_\mathcal{R}\mathbf{u}}+e^{i\mathbf{k}\cdot  M_\mathcal{R}^2 \mathbf{u}} -3 e^{i\mathbf{K}_\star\cdot \mathbf{u}}\right) A_L(\mathbf{a, a+u}) \pt
	\end{multline*}
	Let $\kappa \subset B(0,r)$ for some $r>0$.	Using that $M_\mathcal{R} \mathbf{K}_\star \in \mathbf{K}_\star + \lscr^*$, we have for all $\mathbf{u} \in \lscr$ and for all $\mathbf{k} = \mathbf{K}_\star + \kappa$
	\begin{align*}
	\absL{e^{i\mathbf{k}\cdot \mathbf{u}}+e^{i\mathbf{k}\cdot  M_\mathcal{R}\mathbf{u}}+e^{i\mathbf{k}\cdot  M_\mathcal{R}^2 \mathbf{u}} -3 e^{i\mathbf{K}_\star\cdot \mathbf{u}}} 
	= \absL{ e^{i\kappa\cdot \mathbf{u}}+e^{i\kappa\cdot  M_\mathcal{R}\mathbf{u}}+e^{i\kappa\cdot  M_\mathcal{R}^2 \mathbf{u}}-3 }
	\lesssim \abs{\kappa}^2 \abs{\mathbf{u}}^2 \pt
	\end{align*}
	Then, using the exponential localization of $A_L$ provided in \eqref{eq: estimate_GDG}, the expansion $A_L(\mathbf{a,a}) = - \mu_L + \grandO{T_L^{\frac{1+\delta}{\sqrt{3}}-} + T_L^{1-}}$ (see Proposition~\ref{prop: interaction_matrix}) and assumption \eqref{eq: assumption_gamma_L}, we have for $L$ large enough
	\begin{multline}
	\label{eq: diagonal}
	A_L\left(\frac{\mathbf{K}_\star + \kappa}{L}; \mathbf{a},\mathbf{a}\right) 
	= \sum_{\mathbf{u}\in\lscr} e^{i\mathbf{K_\star}\cdot\mathbf{u}} A_L(\mathbf{a,a+u}) + \abs{\kappa}^2 \grandO{\sum_{\mathbf{u} \in \lscr\setminus\{0\}} \abs{\mathbf{u}}^2 A_L(\mathbf{a,a+u})} \\
	= -\mu_L + (1+\abs{\kappa}^2)\grandO{T_L^{\min\left(1,\frac{1+\delta}{\sqrt{3}}\right)-}} = -\mu_L + (1+\abs{\kappa}^2) \petito{\abs{\theta_L}}\pt
	\end{multline}
	\paragraph{Third step.}
	Now, we consider the off-diagonal terms. If $g$ denotes the rotation by $2\pi/3$ about $\mathbf{a}$ then, for all $\mathbf{u}\in\lscr$, we have
	\begin{gather*}
	A_L(\mathbf{a,b+u}) = A_L(\mathbf{a},g\cdot(\mathbf{b+u})) = A_L(\mathbf{a},\mathbf{b}+M_\mathcal{R}(\mathbf{u}+\mathbf{u}_1)) \, ,\\
	A_L(\mathbf{a,b+u}) = A_L(\mathbf{a},g^2\cdot(\mathbf{b+u})) = A_L(\mathbf{a},\mathbf{b}+M^2_\mathcal{R}(\mathbf{u}+\mathbf{u}_2)) \pt
	\end{gather*}
	This leads to the following equivalence relation
	\begin{align}
	\label{eq: equivalence_relation}
	\mathbf{u}\sim_2\mathbf{u'} \Longleftrightarrow \exists k \in \{0,1,2\},\quad \mathbf{u'} = M_\mathcal{R}^k( \mathbf{u}+\mathbf{u}_k) \, ,
	\end{align}
	where we have used the convention $\mathbf{u}_0 = 0$. We write
	\begin{align*}
	A_L(\mathbf{k}/L; \mathbf{a},\mathbf{b}) = \sum_{\mathbf{u\in}\lscr/\sim_2} \left(e^{i\mathbf{k}\cdot \mathbf{u}}+e^{i\mathbf{k}\cdot  M_\mathcal{R}(\mathbf{u}+\mathbf{u}_1)}+e^{i\mathbf{k}\cdot  M_\mathcal{R}^2 (\mathbf{u}+\mathbf{u}_2)}\right) A_L(\mathbf{a, b+u}) \pt
	\end{align*}
	Using again that $M_\mathcal{R}\mathbf{K}_\star \in \mathbf{K}_\star + \lscr^*$, we have for $\mathbf{k} = \mathbf{K}_\star+\kappa$
	\begin{align*}
	A_L(\mathbf{k}/L; \mathbf{a},\mathbf{b}) = \sum_{\mathbf{u\in}\lscr/\sim_2} e^{i\mathbf{K}_\star\cdot \mathbf{u}} \left(e^{i\kappa\cdot \mathbf{u}}+e^{i\mathbf{K}_\star\cdot \mathbf{u}_1}e^{i\kappa\cdot  M_\mathcal{R}(\mathbf{u}+\mathbf{u}_1)}+e^{i\mathbf{K}_\star\cdot \mathbf{u}_2}e^{i\kappa\cdot  M_\mathcal{R}^2 (\mathbf{u}+\mathbf{u}_2)}\right) A_L(\mathbf{a, b+u}) \pt
	\end{align*}
	From now on, we only consider the case $\mathbf{K}_\star = \mathbf{K}$. The minor changes for the case $\mathbf{K}_\star=\mathbf{K}'$ are left to the reader.
	Using the identities $e^{i\mathbf{K}\cdot \mathbf{u}_k} = j^k$ where $k \in \{0,1,2\}$ and $j = e^{i\frac{2\pi}{3}}$, we have
	\begin{multline*}
	A_L\left(\frac{\mathbf{K}+\kappa}{L}; \mathbf{a},\mathbf{b}\right)
	= i \kappa \cdot \sum_{\mathbf{u} \in \lscr/\sim_2} e^{i\mathbf{K}\cdot\mathbf{u}} \left(\mathbf{u} + j M_\mathcal{R}(\mathbf{u+u}_1)+j^2 M^2_\mathcal{R}(\mathbf{u+u}_2)\right)A_L(\mathbf{a,b+u}) \\
	+ \abs{\kappa}^2 \grandO{ \sum_{\mathbf{u} \in \lscr/\sim_2} (1+\abs{\mathbf{u}}^2) A_L(\mathbf{a,b+u})} \pt
	\end{multline*}
	By assumption \eqref{eq: assumption_gamma_L} and Proposition~\ref{prop: interaction_matrix}, we have $A_L(\mathbf{a,b}) = \theta_L(1+\petito{1})$.
	Using the exponential localization \eqref{eq: estimate_GDG} of $A_L$, the main contributor for both sums when $L$ is large correspond to $\mathbf{u}=0$. In addition, a computation shows: $ jM_\mathcal{R} \mathbf{u}_1 + j^2M_\mathcal{R}^2\mathbf{u}_2 
	= \frac{\sqrt{3}}{2}(1~ i)^T$.
	Hence, we have
	\begin{align}
	\label{eq: off_diagonal}
	A_L\left(\frac{\mathbf{K}+\kappa}{L}; \mathbf{a},\mathbf{b}\right) = \frac{\sqrt{3}}{2}i \theta_L (\kappa_1+i\kappa_2+E(\kappa)) (1+\petito{1})\, ,
	\end{align}
	with $\absL{E(\kappa)} \leq  C \abs{\kappa}^2$ for all $\kappa \in B(0,r)$.
	Now, the eigenvalues $\lambda_{\pm,L}(\kappa)$ of $A_L\left(\frac{\mathbf{K}+\kappa}{L}\right)$ solve
	\begin{align*}
	\mu^2 - 2 A_L\left(\frac{\mathbf{K}+\kappa}{L};\mathbf{a,a}\right) \mu + \absL{A_L\left(\frac{\mathbf{K}+\kappa}{L};\mathbf{a,b}\right)}^2 = 0 \pt
	\end{align*}
	Using expansions \eqref{eq: diagonal} and \eqref{eq: off_diagonal}, we obtain that
	\begin{align}
	\label{eq: expansion_vp}
	\lambda_{\pm,L}(\kappa) = -\mu_L + \petito{\abs{\theta_L}} \pm \frac{\sqrt{3}}{2} \abs{\theta_L}\abs{\kappa} \left(1+E'(\kappa)\right)(1+\petito{1}) \, ,
	\end{align}
	where  $\abs{E'(\kappa)} \leq C' \abs{\kappa}$ for all $\kappa \in B(0,r)$. In particular, a conical singularity appears at $\kappa=0$.
	\paragraph{Fourth step.}
	Let $\lambda \in \intoo{-\infty}{-\mu_L+g/3}$ where $g$ denotes the first spectral gap of the mono-atomic operators defined in \eqref{eq: effective_mono_atomic_operator}.
	Following the proof of Proposition~\ref{prop: estimation_key}, it can be seen that for some constant $C>0$, we have: for all $\mathbf{k} \in \R^2$
	\begin{multline*}
	\norme{C_L(\mathbf{k}/L)^* (B_L(\mathbf{k}/L) - \lambda)^{-1} C_L(\mathbf{k}/L)} = \norme{(B_L(\mathbf{k}/L) - \lambda)^{-1/2} C_L(\mathbf{k}/L)}^2 \\
	\lesssim L^C \sum_{\mathbf{u} \in \lscr} e^{-i\mathbf{u\cdot k}} \left[\left(\sqrt{-\Delta+1} \mathcal{P}\psi_{L,\mathbf{r}}\right)\ast \left(\sqrt{-\Delta+1} \psi_{L,\mathbf{r}}\right)\right](\mathbf{u}) \, ,
	\end{multline*}
	where $\psi_{L,\mathbf{r}} = (1-\chi_{L,\mathbf{r}}) v_{L,\mathbf{r}}$ for any $\mathbf{r}\in\{\mathbf{a,b}\}$. Notice that the sum is real by Poisson summation formula~\eqref{prop: poisson_summation_formula}. Assume $\mathbf{r=a}$ (the case $\mathbf{r=b}$ is treated similarly). First, notice that $\mathcal{P}\psi_{L,\mathbf{a}} = \psi_{L,\mathbf{b}}$ and that, using the Fourier transform, $\sqrt{-\Delta+1}\psi_{L,\mathbf{a}}$ has the symmetries of $\psi_{L,\mathbf{a}}$. Then, it is not difficult to show that $ \left[\left(\sqrt{-\Delta+1} \psi_{L,\mathbf{b}}\right)\ast \left(\sqrt{-\Delta+1} \psi_{L,\mathbf{a}}\right)\right](\mathbf{u})$ is constant on the equivalence class associated with the equivalence relation $\sim_2$ introduced in \eqref{eq: equivalence_relation}. Then, we combine the arguments used in the proof of \eqref{eq: off_diagonal} with those used in the proof of Lemma~\ref{lemma: estimation_key8} to show: for all $\mathbf{k} = \mathbf{K}_\star +\kappa$ with $\kappa \in B(0,r)$
	\begin{align}
	\label{enfin}
	\normeL{C_L(\mathbf{k}/L)^* (B_L(\mathbf{k}/L) - \lambda)^{-1} C_L(\mathbf{k}/L)} =  (1+ \abs{\kappa} + E''(\kappa)) \grandO{T_L^{\frac{1+\delta}{\sqrt{3}}}} = (1+ \abs{\kappa} + E''(\kappa)) \petito{\abs{\theta_L}} \, ,
	\end{align}
	where there exists $C''>0$ such that $\abs{E''(\kappa)} \leq C''\abs{\kappa}^2$ for all $\kappa \in B(0,r)$. Finally, by using the estimate \eqref{enfin}, the expansion \eqref{eq: expansion_vp} and the relation \eqref{eq: FS_method}, we show the expansion \eqref{eq: expansion_eigenvalues_honeycomb}.

	\section{Proof of Theorem~\ref{theo: rHF}}\label{sec:reduced-periodic-hartree-fock-model-at-dissociation}
	
	In this section, we show that we can apply Theorem~\ref{theo: feshbach-schur} to the periodic rHF model introduced in Section~\ref{sec:the-rhf-model-for-periodic-systems-at-dissociation}. We first recall the main features of this model. For simplicity, we assume $q=1$. It consists in solving the minimization problem
	\begin{align*}
	E_L \coloneqq \inf \enstqbis{\mathcal{E}_L(\gamma)}{\gamma \in \mathcal{S}_{\mathrm{per},L} \et \trm_{\mathscr{L}_L} (\gamma) = N} ,
	\end{align*}
	where $\mathcal{S}_{\mathrm{per},L}$ (resp. $\mathcal{E}_L(\gamma)$) is defined in \eqref{eq: admissible_states} (resp. in \eqref{eq: rHF_energy}) and denotes the space of admissible states (resp. the periodic rHF energy of the state $\gamma$). For all $L\geq 1$, this model admits a unique minimizer, denoted by $\gamma_L$. We also denote by $\rho_L(\mathbf{x}) = \gamma_L(\mathbf{x,x})$ its one-body density. The minimizer $\gamma_L$ satisfies the self-consistent equation
	\begin{align*}
	\gamma_L = \mathds{1}_{\intof{-\infty}{\varepsilon_L}} \left(H^\mathrm{MF}_L\right) = \mathds{1}_{\intof{-\infty}{\varepsilon_L}} \left(-\Delta + V^\mathrm{MF}_L\right) \pt
	\end{align*}
	The mean-field potential $V^\mathrm{MF}_L$ is given by
	\begin{align*}
	V^\mathrm{MF}_L \coloneqq - W_L^\mathbf{R} + V_L^\mathbf{R} + \rho_L \ast_{L} W_L \, ,
	\end{align*}
	where $W_L$ is the periodic three-dimensional Coulomb interaction kernel (defined in~\eqref{eq: G_definition_bis} below) and where
	\begin{align*}
	- W_L^\mathbf{R} = \sum_{\mathbf{r\in R}} W_L(\cdot - L\mathbf{r}) \et V_L^\mathbf{R} = \sum_{\mathbf{r\in R}} \sum_{\mathbf{u}\in\lscr} V^\mathrm{pp}(\cdot - L(\mathbf{u+r}))\, ,
	\end{align*}
	are respectively the external potential induced by the nuclei of the lattice and a correction term given by an exact $\lscr^\mathbf{R}$-superposition of a radial and compactly supported potential $V^\mathrm{pp}\in L^p(\R^2)$ with $p>1$. The Fermi level $\varepsilon_L \in \R$ is chosen in order to have $\trm_{\mathscr{L}_L}(\gamma_L) = \int_{\Gamma_L} \rho_L = N$. Because Assumption~\ref{hypo_1} on the symmetry of the lattice does not depend on the model in consideration but only on the underlying lattice $\lscr^\mathbf{R}$, we will assume that
	\begin{align*}
	\boxed{\lscr^\mathbf{R} \text{ satisfies Assumption~\ref{hypo_1}.}}
	\end{align*}
	In this section, we denote by $G\subset E_2(\R)$ (resp. $G_L$) the symmetry group of $\lscr^\mathbf{R}$ (resp. $\lscr^\mathbf{R}_L$).
	
	\subsection{Properties of the periodic interaction kernel \texorpdfstring{$W_L$}{WL}}\label{sec:properties-of-the-periodic-interaction-kernel-gl}
	The $\lscr_L$-periodic interaction kernel $W_L$ is defined as the \emph{unique }solution in the sense of tempered distribution of the following system
	\begin{align}
	\label{eq: G_definition_bis}
	\left\lbrace
	\begin{aligned}
	& \sqrt{-\Delta} W_L= 2 \pi \left(\sum_{\mathbf{u} \in \mathscr{L}_L}\delta_\mathbf{u}-\frac{1}{\abs{\Gamma_L}}\right)\, , \\
	& W_L ~\text{is}~ \mathscr{L}_L\text{-periodic},~ \min_{\R^2} W_L = 0 \pt
	\end{aligned} \right.
	\end{align}
	This choice of kernel is motivated by the fact we consider a model where the charges are confined in two-dimensional space but interact with the three-dimensional Coulomb interaction. The potential $W_L$ is in fact the periodized version of $\frac{1}{\abs{\mathbf{x}}}$. It differs from the well-known three-dimensional periodic Coulomb interaction kernel (see for instance~\cite[Section XI]{lieb1977thomasfermi}) which is the solution of a similar system to \eqref{eq: G_definition_bis} where the constant $2\pi$ is replaced by $4\pi$ and the fractional Laplace operator $\sqrt{-\Delta}$ by the Laplace operator $-\Delta$. However, they share the same behavior $\abs{\mathbf{x}}^{-1}$ in the vicinity of the vertices of $\lscr_L$, as shown in Proposition~\ref{prop: madelung}.
	
	The kernel $W_L$ being the unique solution of the system \eqref{eq: G_definition_bis}, we have the relation $W_L = L^{-1} W_1(L^{-1}\cdot)$ for all $L\geq 1$. In addition, $W_L$ admits the following Fourier expansion of $W_L$: there exists $M = \fint_\Gamma W_1 >0$ such that
	\begin{align}
	\label{eq:fourier_expansion_W}
	W_L = L^{-1}M + \frac{2\pi}{\sqrt{\abs{\Gamma_L}}} \sum_{\mathbf{v} \in \lscr_L^* \setminus \{0\}} \frac{e_\mathbf{v}}{\abs{\mathbf{v}}} \, ,
	\end{align}
	where we recall that $e_\mathbf{v}(\mathbf{x}) = \abs{\Gamma_L}^{-1/2}e^{i\mathbf{x\cdot v}}$.
	\begin{lemma}
		\label{lemma: hypo2}
		The mean-field potential $V_L^\mathrm{MF}$ is $G_L$-invariant, that is,
		\[
		\boxed{V_L^\mathrm{MF} \text{ satisfies Assumption~\ref{hypo_2}.}}
		\]
	\end{lemma}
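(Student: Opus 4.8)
The plan is to show separately that each of the three contributions to $V_L^\mathrm{MF} = -W_L^\mathbf{R} + V_L^\mathbf{R} + \rho_L *_L W_L$ is invariant under the symmetry group $G_L$; the claim then follows. The two ``external'' contributions are dealt with by elementary changes of variables. First I would record that $W_L$, being the unique solution of \eqref{eq: G_definition_bis}, inherits all the symmetries of its right-hand side: the periodic measure $\sum_{\mathbf{u}\in\lscr_L}\delta_\mathbf{u} - \abs{\Gamma_L}^{-1}$ is invariant under the full space group of $\lscr_L$ and $\sqrt{-\Delta}$ commutes with Euclidean isometries, so $W_L$ is not only $\lscr_L$-periodic but also invariant under every linear isometry preserving $\lscr_L$; in particular, for $g \in G_L$ with linear part $S_g$ we have $W_L\circ S_g^{-1} = W_L$, since $S_g$ preserves $\lscr_L$ (the point-group-on-the-lattice statement, \cite[Theorem 25.2]{armstrong1988groups}). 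Likewise $V^\mathrm{pp}$ is radial, hence point-group invariant. Writing a generic $g\in G_L$ as $\mathbf{x}\mapsto S_g\mathbf{x}+\mathbf{t}_g$ and using that $g$ maps $\lscr_L^\mathbf{R}$ bijectively onto itself, one checks that $L\mathbf{r}\mapsto g(L\mathbf{r})$ induces a permutation of $\mathbf{R}$ modulo $\lscr_L$; a direct substitution in $W_L^\mathbf{R} = \sum_{\mathbf{r}\in\mathbf{R}}W_L(\cdot - L\mathbf{r})$ and $V_L^\mathbf{R} = \sum_{\mathbf{s}\in\lscr_L^\mathbf{R}}V^\mathrm{pp}(\cdot - \mathbf{s})$, absorbing the $\lscr_L$-part of $g(L\mathbf{r})$ via the $\lscr_L$-periodicity of $W_L$ (resp. re-indexing the full sum for $V_L^\mathbf{R}$), yields $g\cdot W_L^\mathbf{R} = W_L^\mathbf{R}$ and $g\cdot V_L^\mathbf{R} = V_L^\mathbf{R}$.

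The substantive point is the invariance of the Hartree term, which I would obtain from the uniqueness of the rHF minimizer. For $g\in G_L$ let $U_g$ denote the induced unitary on $L^2(\R^2)$; since $g$ normalizes $\lscr_L$, conjugation $\gamma\mapsto U_g\gamma U_g^*$ maps the admissible class $\mathcal{S}_{\mathrm{per},L}$ of \eqref{eq: admissible_states} into itself and preserves the charge constraint $\trm_{\lscr_L}(\gamma)=N$. Moreover the energy \eqref{eq: rHF_energy} is $G_L$-invariant term by term: the kinetic term because $-\Delta$ commutes with isometries; the linear term $\int_{\Gamma_L}(-W_L^\mathbf{R}+V_L^\mathbf{R})\rho$ because the potentials were just shown to be $G_L$-invariant while $\rho_{U_g\gamma U_g^*} = g\cdot\rho_\gamma$; and $D_L(\rho,\rho)$ because $W_L(S_g(\mathbf{x}-\mathbf{y})) = W_L(\mathbf{x}-\mathbf{y})$. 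Hence $\mathcal{E}_L(U_g\gamma_L U_g^*) = E_L$, so by the uniqueness of the minimizer (the adaptation of \cite[Theorem 1]{cances2008newapproach} and \cite[Theorem 2.1]{catto2001thermodynamic} quoted in Section~\ref{sec:the-rhf-model-for-periodic-systems-at-dissociation}) we get $U_g\gamma_L U_g^* = \gamma_L$, whence $\rho_L = \rho_{\gamma_L}$ is $G_L$-invariant. It then follows that $\rho_L *_L W_L$ is $G_L$-invariant: substituting $\mathbf{y}=g^{-1}\mathbf{z}$ in $(\rho_L *_L W_L)(g^{-1}\mathbf{x}) = \int_{\Gamma_L}\rho_L(\mathbf{y})W_L(g^{-1}\mathbf{x}-\mathbf{y})\diff\mathbf{y}$ and using the invariance of $\rho_L$, the point-group invariance of $W_L$ (so that $W_L(g^{-1}\mathbf{x}-g^{-1}\mathbf{z}) = W_L(S_g^{-1}(\mathbf{x}-\mathbf{z})) = W_L(\mathbf{x}-\mathbf{z})$), and the $\lscr_L$-periodicity of the integrand to revert to $\Gamma_L$, gives $(\rho_L *_L W_L)(g^{-1}\mathbf{x}) = (\rho_L *_L W_L)(\mathbf{x})$; alternatively, from \eqref{eq: G_definition_bis} one has $\sqrt{-\Delta}(\rho_L *_L W_L) = 2\pi(\rho_L - \fint_{\Gamma_L}\rho_L)$ with a $G_L$-invariant right-hand side and concludes by uniqueness up to an additive constant.

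Summing the three invariances gives $g\cdot V_L^\mathrm{MF} = V_L^\mathrm{MF}$ for every $g\in G_L$, i.e. Assumption~\ref{hypo_2} holds for $V_L^\mathrm{MF}$, which is the statement of the lemma. No step is deep; the only places that require attention are the bookkeeping with the affine elements of $G_L$ (keeping track of how a nontrivial translation part $\mathbf{t}_g$ is swallowed by the $\lscr_L$-periodicity of $W_L$ and by the re-indexing of $\mathbf{R}$), and the fact that the symmetrization argument for $\rho_L$ is legitimate only because $G_L$ normalizes $\lscr_L$, so that $\mathcal{S}_{\mathrm{per},L}$ and the normalization $\trm_{\lscr_L}(\gamma)=N$ are genuinely $G_L$-stable. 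The uniqueness of the rHF minimizer is what turns ``$U_g\gamma_L U_g^*$ is again a minimizer'' into ``$U_g\gamma_L U_g^* = \gamma_L$'', and is the only nontrivial external input.
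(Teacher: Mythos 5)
Your proof is correct and follows essentially the same route as the paper: establish the $G_L$-invariance of the external terms $-W_L^{\mathbf{R}}+V_L^{\mathbf{R}}$ from the lattice symmetries and the radiality of $V^{\mathrm{pp}}$, deduce that $\mathcal{E}_L$ is $G_L$-invariant, and then invoke uniqueness of the rHF minimizer to transfer the invariance to $\rho_L$ and hence to $\rho_L *_L W_L$. The only cosmetic difference is that you obtain the point-group invariance of $W_L$ directly from the defining system~\eqref{eq: G_definition_bis} and its uniqueness, whereas the paper reads it off the Fourier expansion~\eqref{eq:fourier_expansion_W}.
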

	\begin{proof}		
		By~\cite[Theorem 25.2]{armstrong1988groups}, the point group $J$ (independent from $L$) of $G_L$ acts on $\lscr_L$. In particular, the Wigner-Seitz cell $\Gamma_L$ of $\lscr_L$ is invariant with respect to $J$. Then, for all $\mathbf{(u,v)}\in\lscr_L\times\lscr^*_L$ and for all $g\in J$, we have $e^{g\mathbf{v}\cdot u} = e^{i\mathbf{v}\cdot g^{-1}\mathbf{u}} = 1 $ which shows that $J$ also acts on $\lscr_L^*$. Consequently, we can use the Fourier expansion~\eqref{eq:fourier_expansion_W} to show that the interaction kernel $W_L$ then $W_L^\mathbf{R}$ are invariant under the action of $G_L$. Since $V^\mathrm{pp}$ is radial, $V_L^\mathbf{R}$ is also invariant under the action of $G_L$. We deduce that the rHF energy $\mathcal{E}_L$ is invariant under the action of $G_L$. By uniqueness, this is also the case for $\gamma_L$ and $\rho_L$, hence also for $\rho_L*_LW_L$. This concludes the proof of Lemma~\ref{lemma: hypo2}.
	\end{proof}
	
	As for the three-dimensional periodic Coulomb interaction kernel, we can describe $W_L$ as a \emph{Madelung potential} that is a $\lscr_L$-periodic superposition of the potential induced by a neutral charge distribution (see for instance~\cite[Section XI.3.B]{lieb1977thomasfermi}). We introduce the following function
	\begin{align*}
	f_L(\mathbf{x}) \coloneqq \frac{1}{\abs{\mathbf{x}}} - \fint_{\Gamma_L} \frac{\diff \mathbf{y}}{\abs{\mathbf{x-y}}} = \left[|\cdot|^{-1} \ast \left(\delta_0 - \frac{\mathds{1}_{\Gamma_L}}{\abs{\Gamma_L}}\right)\right](\mathbf{x})\pt
	\end{align*}
	The next proposition states that $W_L$ is, up to an additive constant, given by the Madelung potential associated with the function $f_L$. 
	\begin{prop}[$W_L$ is a Madelung potential]
		\label{prop: madelung}
		There exists $M' \in \R$ such that
		\begin{align*}
		W_L = \sum_{\mathbf{u} \in \lscr_L} f_L(\cdot - \mathbf{u}) + L^{-1}M' \, ,
		\end{align*}
		where the series converges absolutely in $\mathcal{C}^\infty(\R^2 \setminus \lscr_L)$. In addition, the function $W_L$ is continuous on $\R^2 \setminus \lscr_L$, the function $W_L - |\cdot|^{-1}$ is bounded on $\Gamma_L$ and there exists $a \in \R$ and $C>0$ such that
		\begin{align}
		\label{eq: G_properties}
		\lim\limits_{\abs{\mathbf{x}} \to 0}\left( W_L(\mathbf{x}) - \abs{\mathbf{x}}^{-1}\right) = L^{-1}a \et \forall \mathbf{x} \in \Gamma_L,\quad \abs{W_L(\mathbf{x})} \leq C\abs{\mathbf{x}}^{-1}\pt
		\end{align}
	\end{prop}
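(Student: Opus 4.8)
The plan is to realize $W_L$ as a Madelung (lattice) sum and identify it with the stated series via the uniqueness built into the defining system~\eqref{eq: G_definition_bis}. The starting observation is that, in $\R^2$, the function $\abs{\mathbf{x}}^{-1}$ is — up to the constant $2\pi$ — the fundamental solution of the half-Laplacian: $\sqrt{-\Delta}\,\abs{\cdot}^{-1} = 2\pi\,\delta_0$ as tempered distributions. Hence $\sqrt{-\Delta} f_L = 2\pi\bigl(\delta_0 - \abs{\Gamma_L}^{-1}\mathds{1}_{\Gamma_L}\bigr)$, since by definition $f_L = \abs{\cdot}^{-1}\ast\bigl(\delta_0 - \abs{\Gamma_L}^{-1}\mathds{1}_{\Gamma_L}\bigr)$.

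First I would prove that $\widetilde W_L \coloneqq \sum_{\mathbf{u}\in\lscr_L} f_L(\cdot - \mathbf{u})$ converges absolutely in $\mathcal{C}^\infty(\R^2\setminus\lscr_L)$. The key point is the decay of $f_L$: writing $f_L(\mathbf{x}) = -\fint_{\Gamma_L}\bigl(\abs{\mathbf{x}-\mathbf{y}}^{-1} - \abs{\mathbf{x}}^{-1}\bigr)\diff\mathbf{y}$ and Taylor expanding the integrand at $\mathbf{y}=0$, the zeroth-order term vanishes and the first-order term integrates to zero because the Wigner–Seitz cell $\Gamma_L$ is centrally symmetric. Thus $f_L(\mathbf{x}) = \grandO{\abs{\mathbf{x}}^{-3}}$ as $\abs{\mathbf{x}}\to\infty$, and differentiating under the integral gives $\partial^\alpha f_L(\mathbf{x}) = \grandO{\abs{\mathbf{x}}^{-3-\abs{\alpha}}}$; since $\sum_{\mathbf{u}\in\lscr_L\setminus\{0\}}\abs{\mathbf{u}}^{-3}<\infty$ in dimension two, the series and all its derivatives converge locally uniformly away from $\lscr_L$. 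In particular $\widetilde W_L$ is $\lscr_L$-periodic, smooth on $\R^2\setminus\lscr_L$, and — differentiating termwise and using that the translates of $\Gamma_L$ tile the plane — satisfies $\sqrt{-\Delta}\widetilde W_L = 2\pi\bigl(\sum_{\mathbf{u}\in\lscr_L}\delta_{\mathbf{u}} - \abs{\Gamma_L}^{-1}\bigr)$.

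Next I would invoke uniqueness: on $\lscr_L$-periodic tempered distributions the kernel of $\sqrt{-\Delta}$ consists of constants only (in Fourier series, $\abs{\mathbf{v}}\,\widehat u(\mathbf{v})=0$ forces $\widehat u(\mathbf{v})=0$ for $\mathbf{v}\neq0$), so $W_L$ and $\widetilde W_L$ differ by a constant $c_L$. The scaling relations $W_L = L^{-1}W_1(L^{-1}\cdot)$ (from uniqueness for~\eqref{eq: G_definition_bis} and the scaling of that system) and $f_L = L^{-1}f_1(L^{-1}\cdot)$, hence $\widetilde W_L = L^{-1}\widetilde W_1(L^{-1}\cdot)$, force $c_L = L^{-1}M'$ with $M' = c_1$. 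This gives the first displayed identity, and continuity of $W_L$ on $\R^2\setminus\lscr_L$ follows from the locally uniform convergence of the series of continuous functions $f_L(\cdot - \mathbf{u})$. For the behavior near a vertex, for $\mathbf{x}\in\Gamma_L$ I would split
\[
W_L(\mathbf{x}) - \abs{\mathbf{x}}^{-1} = \Bigl(f_L(\mathbf{x}) - \abs{\mathbf{x}}^{-1}\Bigr) + \sum_{\mathbf{u}\in\lscr_L\setminus\{0\}} f_L(\mathbf{x}-\mathbf{u}) + L^{-1}M',
\]
where $f_L(\mathbf{x}) - \abs{\mathbf{x}}^{-1} = -\fint_{\Gamma_L}\abs{\mathbf{x}-\mathbf{y}}^{-1}\diff\mathbf{y}$ is continuous on $\overline{\Gamma_L}$ (by $L^1_{\mathrm{loc}}$-integrability of $\abs{\cdot}^{-1}$ in the plane) and the remaining sum is continuous near $\mathbf{x}=0$ by the decay estimates above. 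Since $0$ is the only point of $\lscr_L$ in $\overline{\Gamma_L}$, the function $W_L - \abs{\cdot}^{-1}$ extends continuously to $\overline{\Gamma_L}$, whence the limit $\lim_{\abs{\mathbf{x}}\to0}\bigl(W_L(\mathbf{x})-\abs{\mathbf{x}}^{-1}\bigr)$ exists (equal to $L^{-1}a$ with $a$ its value at $L=1$, by scaling) and $\abs{W_L(\mathbf{x}) - \abs{\mathbf{x}}^{-1}}\le C_0$ on $\Gamma_L$; as $\abs{\mathbf{x}}^{-1}$ is bounded below on $\Gamma_L$, this yields $\abs{W_L(\mathbf{x})}\le C\abs{\mathbf{x}}^{-1}$.

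I expect the main technical obstacle to be the convergence of the Madelung series in $\mathcal{C}^\infty(\R^2\setminus\lscr_L)$, that is, making the multipole cancellation coming from the symmetry of $\Gamma_L$ rigorous and controlling all derivatives uniformly on compacts; the remaining points are routine consequences of uniqueness for the elliptic system and of the local integrability of $\abs{\cdot}^{-1}$.
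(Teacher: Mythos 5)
Your argument is correct, and it is the natural Madelung-sum route that the proposition's statement itself encodes; the paper defers the detailed argument to the author's thesis, so there is nothing to compare against directly, but the plan here — $f_L = O(\abs{\mathbf{x}}^{-3})$ from the vanishing monopole and dipole moments of $\delta_0 - \abs{\Gamma_L}^{-1}\mathds{1}_{\Gamma_L}$ (the latter by central symmetry of the Wigner–Seitz cell), summability in dimension two, identification with $W_L$ via uniqueness of periodic solutions of $\sqrt{-\Delta}u = 2\pi(\sum_\mathbf{u}\delta_\mathbf{u}-\abs{\Gamma_L}^{-1})$ modulo constants, and fixing the constant by the scaling $W_L = L^{-1}W_1(L^{-1}\cdot)$, $f_L = L^{-1}f_1(L^{-1}\cdot)$ — is exactly right and covers all four assertions of the proposition.

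One step is worth tightening. You justify $\sqrt{-\Delta}\widetilde W_L = 2\pi\bigl(\sum_\mathbf{u}\delta_\mathbf{u}-\abs{\Gamma_L}^{-1}\bigr)$ by "differentiating termwise," but $\sqrt{-\Delta}$ is nonlocal, so it cannot be applied term by term on a series converging only in $\mathcal{C}^\infty(\R^2\setminus\lscr_L)$. The cleanest fix is to pass to periodic Fourier coefficients: since $\widetilde W_L$ is $\lscr_L$-periodic and locally integrable, and $e^{-i\mathbf{v}\cdot\mathbf{x}}$ is $\lscr_L$-periodic for $\mathbf{v}\in\lscr_L^*$, one has
\begin{align*}
\widehat{\widetilde W_L}(\mathbf{v}) = \frac{1}{\abs{\Gamma_L}}\int_{\R^2} f_L(\mathbf{x})\,e^{-i\mathbf{v}\cdot\mathbf{x}}\diff\mathbf{x}
= \frac{2\pi}{\abs{\Gamma_L}\abs{\mathbf{v}}}\left(1 - \frac{1}{\abs{\Gamma_L}}\int_{\Gamma_L} e^{-i\mathbf{v}\cdot\mathbf{y}}\diff\mathbf{y}\right) = \frac{2\pi}{\abs{\Gamma_L}\abs{\mathbf{v}}}
\end{align*}
for $\mathbf{v}\in\lscr_L^*\setminus\{0\}$, which matches~\eqref{eq:fourier_expansion_W} and gives directly that $W_L - \widetilde W_L$ is constant. (Alternatively, the distributional pairing $\langle\widetilde W_L,\sqrt{-\Delta}\varphi\rangle = \sum_{\mathbf{u}}\langle\sqrt{-\Delta}f_L,\varphi(\cdot+\mathbf{u})\rangle$ can be justified term by term using the $\abs{\mathbf{x}}^{-3}$ decay of $f_L$ against a Schwartz $\varphi$.) This is a matter of exposition, not a gap in the argument.
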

	We do not show Proposition~\ref{prop: madelung} since the arguments are standard and we refer to~\cite[Proposition 3.38]{cazalisthesis} for the detailed proof (see also~\cite{lewin2022coulomb}).
	\begin{cor}
		\label{cor: inequality_potential}
		For all $r\in\intff{1}{\infty}$, there exists $C_r>0$ such that for all $L \geq 1$ and for all $u,v \in H^1_\mathrm{per}(\Gamma_L)$, we have
		\begin{gather}
		\label{eq: inequality_potential}
		\normeL{(uv) \ast_L W_L}_{\lper^r(\Gamma_L)} \leq C_r L^{1/r} \normeL{u}_{H^1_\mathrm{per}(\Gamma_L)}\normeL{v}_{H^1_\mathrm{per}(\Gamma_L)} \pt
		\end{gather}
	\end{cor}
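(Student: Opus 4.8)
The plan is to read the bound off Young's inequality on the torus, $\normeL{f \ast_L g}_{\lper^r(\Gamma_L)} \leq \normeL{f}_{\lper^a(\Gamma_L)}\normeL{g}_{\lper^s(\Gamma_L)}$ with $\tfrac1a + \tfrac1s = 1 + \tfrac1r$, once $W_L$ has been split into the part carrying its Coulomb singularity, which lives on a fixed $O(1)$ length scale, and a bounded, slowly varying remainder. Since the estimate is trivial with a worst-case constant when $L$ ranges over a compact set, I may assume $L$ large; in particular the balls $B(\mathbf{u},1)$, $\mathbf{u}\in\lscr_L$, are then pairwise disjoint, and I write $W_L = W_L^{\mathrm{s}} + W_L^{\mathrm{r}}$ with $W_L^{\mathrm{s}} \coloneqq \sum_{\mathbf{u}\in\lscr_L} W_L\,\mathds{1}_{B(\mathbf{u},1)}$. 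By the pointwise bound $\abs{W_L(\mathbf{x})}\leq C\abs{\mathbf{x}}^{-1}$ on $\Gamma_L$ from Proposition~\ref{prop: madelung}, one has $\abs{W_L^{\mathrm{s}}(\mathbf{x})}\leq C\abs{\mathbf{x}-\mathbf{u}}^{-1}$ on $B(\mathbf{u},1)$ and $\normeL{W_L^{\mathrm{r}}}_{\lper^\infty(\Gamma_L)}\leq C$, while $\normeL{W_L^{\mathrm{s}}}_{\lper^s(\Gamma_L)}\lesssim\normeL{\abs{\cdot}^{-1}\mathds{1}_{B(0,1)}}_{L^s(\R^2)}<\infty$ for every $s\in\intfo{1}{2}$ and $\normeL{W_L^{\mathrm{r}}}_{\lper^s(\Gamma_L)}\lesssim L^{2/s-1}$ for $s\in\intfo{1}{2}$, $\lesssim\sqrt{\log L}$ for $s=2$ (again from $\abs{W_L}\leq C\abs{\cdot}^{-1}$), all constants being independent of $L$.

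The second ingredient is that the Sobolev embedding $H^1_\mathrm{per}(\Gamma_L)\hookrightarrow\lper^q(\Gamma_L)$ holds for $q\in\intff{2}{4}$ \emph{with a constant independent of $L$}. This is where scale-invariance enters: decomposing $u = \overline{u} + (u-\overline{u})$, the zero-mean part obeys the scale-invariant Gagliardo--Nirenberg inequality $\normeL{u-\overline{u}}_{\lper^4(\Gamma_L)}^2\lesssim\normeL{\nabla u}_{\lper^2(\Gamma_L)}\normeL{u}_{\lper^2(\Gamma_L)}$ (rescale to $\Gamma$), whereas $\abs{\overline{u}}\,\abs{\Gamma_L}^{1/4} = \abs{\Gamma_L}^{-1/4}\absL{\int_{\Gamma_L}u}\leq\abs{\Gamma}^{-1/4}\normeL{u}_{\lper^2(\Gamma_L)}$ because $\abs{\Gamma_L} = L^2\abs{\Gamma}\geq\abs{\Gamma}$; interpolating the resulting $\lper^4$ bound with $\lper^2$ covers $q\in\intff{2}{4}$. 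In particular $\normeL{uv}_{\lper^a(\Gamma_L)}\leq\normeL{u}_{\lper^{2a}(\Gamma_L)}\normeL{v}_{\lper^{2a}(\Gamma_L)}\lesssim\normeL{u}_{H^1_\mathrm{per}(\Gamma_L)}\normeL{v}_{H^1_\mathrm{per}(\Gamma_L)}$ uniformly in $L$ for every $a\in\intff{1}{2}$.

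Fixing $r\in\intff{1}{\infty}$, I would apply Young's inequality to $(uv)\ast_L W_L^{\mathrm{s}}$ with an admissible pair $a\in\intff{1}{2}$, $s\in\intfo{1}{2}$ (for $r<\infty$ one may take $a=1,s=r$ if $r<2$ and $a,s$ slightly below $\tfrac{2r}{r+2},2$ if $r\geq2$; the endpoint $r=\infty$ is handled by the pointwise estimate $\absL{(uv)\ast_L W_L^{\mathrm{s}}(\mathbf{x})}\leq\normeL{uv}_{L^p(B(\mathbf{x},1))}\normeL{\abs{\cdot}^{-1}}_{L^{p'}(B(0,1))}$ with $p>2$, the first factor being controlled by the local Sobolev inequality on unit balls, hence by $\normeL{u}_{H^1_\mathrm{per}(\Gamma_L)}\normeL{v}_{H^1_\mathrm{per}(\Gamma_L)}$ since a unit ball meets only boundedly many cells). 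This gives $\normeL{(uv)\ast_L W_L^{\mathrm{s}}}_{\lper^r(\Gamma_L)}\lesssim\normeL{u}_{H^1_\mathrm{per}(\Gamma_L)}\normeL{v}_{H^1_\mathrm{per}(\Gamma_L)}$, uniformly in $L$. For $(uv)\ast_L W_L^{\mathrm{r}}$, Young's inequality with $s=r$ ($r<2$), $s=2$ ($2\leq r<\infty$) or $s=\infty$ ($r=\infty$), together with the corresponding bounds on $\normeL{W_L^{\mathrm{r}}}_{\lper^s(\Gamma_L)}$ and on $\normeL{uv}_{\lper^a(\Gamma_L)}$ for the matching $a\in\intff{1}{2}$, yields $\normeL{(uv)\ast_L W_L^{\mathrm{r}}}_{\lper^r(\Gamma_L)}\lesssim L^{1/r}\normeL{u}_{H^1_\mathrm{per}(\Gamma_L)}\normeL{v}_{H^1_\mathrm{per}(\Gamma_L)}$ — in fact $\grandO{L^\epsilon}$ for every $r>1$. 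Summing the two contributions proves the corollary.

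The real work — and essentially the only nontrivial point — is keeping every constant independent of $L$: this forces the systematic use of scale-invariant (critical) Sobolev and Gagliardo--Nirenberg inequalities, the naive embedding $H^1_\mathrm{per}(\Gamma_L)\hookrightarrow\lper^q(\Gamma_L)$ having an $L$-growing constant for $q$ large, together with the fact that the singularity of $W_L$ sits at a fixed scale and hence belongs to $\lper^s(\Gamma_L)$ uniformly for $s<2$ while its long-range part is of size $\grandO{L^{-1}}$ away from the nuclei. The factor $L^{1/r}$ is not sharp for $r>1$; it is needed only at $r=1$, where it comes entirely from the zero Fourier mode $L^{-1}M$ of $W_L$, i.e. from $(uv)\ast_L(L^{-1}M) = L^{-1}M\int_{\Gamma_L}uv$ measured in $\lper^1(\Gamma_L)$.
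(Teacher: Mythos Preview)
Your argument is correct, but the paper's route is considerably shorter. Instead of splitting $W_L$ spatially and running a case analysis on~$r$, the paper proves only the two endpoints and interpolates. For $r=1$ it uses Young's inequality with $a=s=1$ together with the exact scaling $\normeL{W_L}_{\lper^1(\Gamma_L)} = L\,\normeL{W_1}_{\lper^1(\Gamma)}$, which gives the factor $L = L^{1/1}$ in one line. For $r=\infty$ it uses a \emph{threshold} splitting $W_L = W_L\mathds{1}_{\abs{W_L}<A} + W_L\mathds{1}_{\abs{W_L}\geq A}$ (rather than your spatial one), bounds the first piece trivially by $A\normeL{uv}_{\lper^1}$ and the second via $\abs{W_L}\leq C\abs{\cdot}^{-1}$, Young, H\"older and the uniform Sobolev embedding, obtaining an $L$-independent constant. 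The general $r$ then follows from $\normeL{f}_{\lper^r}\leq\normeL{f}_{\lper^1}^{1/r}\normeL{f}_{\lper^\infty}^{1-1/r}$, which is exactly what produces the $L^{1/r}$ factor.

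Your approach has the merit of making explicit that the $L^{1/r}$ factor is far from sharp for $r>1$ (you obtain $\grandO{L^\epsilon}$ there), and your observation that the whole $L$ in the $r=1$ bound comes from the zero Fourier mode $L^{-1}M$ is a nice diagnostic. Two minor points: in your choice ``$a,s$ slightly below $\tfrac{2r}{r+2},2$'' for $r\geq 2$, decreasing $s$ below $2$ forces $a$ slightly \emph{above} $\tfrac{2r}{r+2}$ to keep $\tfrac1a+\tfrac1s=1+\tfrac1r$; and the uniform embedding $H^1_\mathrm{per}(\Gamma_L)\hookrightarrow\lper^q(\Gamma_L)$ actually holds for all $q\in\intfo{2}{\infty}$ (the Gagliardo--Nirenberg argument you give for $q=4$ works verbatim for any finite $q$), so your restriction to $q\leq 4$ is unnecessary, though harmless.
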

	\begin{proof}
		By Young's inequality and Hölder's inequality, we have
		\begin{align*}
		\normeL{(uv)\ast_LW_L}_{\lper^1(\Gamma_L)} \leq \normeL{W_L}_{\lper^1(\Gamma_L)} \normeL{u}_{\lper^{2}(\Gamma_L)}\normeL{v}_{\lper^{2}(\Gamma_L)} \pt
		\end{align*}
		Then, inequality \eqref{eq: inequality_potential} for $r=1$ results from the identity $\normeL{W_L}_{\lper^{1}(\Gamma_L)} = L\normeL{W_1}_{\lper^{1}(\Gamma)}$ (consequence of the relation $W_L(\mathbf{x}) = L^{-1}W_1(L^{-1}\mathbf{x})$). When $r=\infty$, the proof is similar. We write for $A>0$
		\begin{align*}
		\normeL{(uv)\ast_LW_L}_{\lper^\infty(\Gamma_L)} 
		\leq \normeL{(uv)*_{L} W_L\mathds{1}_{\abs{W_L} <A}}_{\lper^\infty(\Gamma_L)}+ \normeL{(uv)*_{L} W_L\mathds{1}_{\abs{W_L} \geq A}}_{\lper^\infty(\Gamma_L)} \pt
		\end{align*}
		Then, using the inequality $\abs{W_L(\mathbf{x})} \leq C\abs{\mathbf{x}}^{-1}$ (see the right side of \eqref{eq: G_properties}) and Young's inequality, we get, for any $p\in\intoo{1}{2}$ and $p'\in\intoo{2}{\infty}$ such that $p^{-1} + (p')^{-1}=1$,
		\begin{align*}
		\normeL{(uv)\ast_LW_L}_{\lper^\infty(\Gamma_L)} \leq A \normeL{uv}_{\lper^1(\Gamma_L)} + C \normeL{|\cdot|^{-1}}_{L^p(B(0,CA^{-1}))} \normeL{uv}_{\lper^{p'}(\Gamma_L)} \pt
		\end{align*}
		We conclude using Hölder's inequality and the Sobolev embeddings. Finally, we obtain~\eqref{eq: inequality_potential} for any $r\in\intff{1}{\infty}$ by interpolation.
	\end{proof}
	As a direct consequence of Proposition~\ref{prop: madelung} and Corollary~\ref{cor: inequality_potential}, we have
	\begin{cor}
		The mean-field potential $V_L^\mathrm{MF}$ belongs to $\lper^p(\Gamma_L)$ for any $p\in \intoo{1}{2}$, that is
		\[
		\boxed{V_L^\mathrm{MF} \text{ satisfies Assumption~\ref{hypo_01}.}}
		\]
	\end{cor}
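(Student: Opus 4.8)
The plan is to split $V_L^\mathrm{MF} = -W_L^\mathbf{R} + V_L^\mathbf{R} + \rho_L \ast_L W_L$ into its three contributions and show that each of them lies in $\lper^p(\Gamma_L)$ for every $p\in\intoo{1}{2}$; the conclusion then follows since $\lper^p(\Gamma_L)$ is a vector space.

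For the nuclear term $-W_L^\mathbf{R} = \sum_{\mathbf{r}\in\mathbf{R}} W_L(\cdot - L\mathbf{r})$, I would invoke Proposition~\ref{prop: madelung}: the kernel $W_L$ is continuous on $\R^2\setminus\lscr_L$ and satisfies $\absL{W_L(\mathbf{x})}\leq C\absL{\mathbf{x}}^{-1}$ on $\Gamma_L$. Since $\mathbf{x}\mapsto\absL{\mathbf{x}}^{-1}$ is locally $L^p$-integrable in dimension two exactly for $p<2$, this gives $W_L\in\lper^p(\Gamma_L)$ for all $p\in\intoo{1}{2}$; as $\mathbf{R}$ is finite, the finite sum of shifts $-W_L^\mathbf{R}$ is again in $\lper^p(\Gamma_L)$. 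For the correction term $V_L^\mathbf{R} = \sum_{\mathbf{r}\in\mathbf{R}}\sum_{\mathbf{u}\in\lscr} V^\mathrm{pp}(\cdot - L(\mathbf{u+r}))$, I would use that $V^\mathrm{pp}\in L^p(\R^2)$ is compactly supported and $L\geq 1$, so only a number of translates bounded uniformly in $L$ has support meeting any fixed fundamental domain; hence $V_L^\mathbf{R}$ is, locally, a finite sum of $L^p$ functions and, being $\lscr_L$-periodic, belongs to $\lper^p(\Gamma_L)$.

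For the Hartree term $\rho_L\ast_L W_L$ I would call on Corollary~\ref{cor: inequality_potential}. The minimizer $\gamma_L$ lies in $\mathcal{S}_{\mathrm{per},L}$, so it has finite kinetic energy per cell, $\trm_{\lscr_L}(-\Delta\gamma_L)<\infty$; the Hoffmann--Ostenhof inequality applied per unit cell then gives $\sqrt{\rho_L}\in H^1_\mathrm{per}(\Gamma_L)$. Applying Corollary~\ref{cor: inequality_potential} with $u=v=\sqrt{\rho_L}$ yields $\rho_L\ast_L W_L = (\sqrt{\rho_L}\,\sqrt{\rho_L})\ast_L W_L \in \lper^r(\Gamma_L)$ for every $r\in\intff{1}{\infty}$, in particular it is bounded and hence in $\lper^p(\Gamma_L)$. (Alternatively one can bypass the Hoffmann--Ostenhof step: $\rho_L\in\lper^1(\Gamma_L)$ with $\int_{\Gamma_L}\rho_L=N$, and since $W_L\in\lper^p(\Gamma_L)$ for $p<2$, the periodic Young inequality already gives $\rho_L\ast_L W_L\in\lper^p(\Gamma_L)$ directly.) Summing the three pieces proves that $V_L^\mathrm{MF}\in\lper^p(\Gamma_L)$ for all $p\in\intoo{1}{2}$.

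I expect no genuine difficulty here: the only point that deserves a word rather than being purely routine is the per-cell Hoffmann--Ostenhof (or the periodic Young) estimate controlling the Hartree potential, and this is standard. Everything else is bookkeeping of local singularities already tamed by Proposition~\ref{prop: madelung} and Corollary~\ref{cor: inequality_potential}; note in particular that it is the single Coulomb singularity $\absL{\mathbf{x}}^{-1}$ of $W_L$, not the smoother Hartree potential, that forces the restriction $p<2$, whereas $V^\mathrm{pp}$ contributes at worst its own $L^p$ integrability with the same $p$.
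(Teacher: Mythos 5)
Your proof is correct and matches the paper's intended argument: the paper gives no details but cites exactly Proposition~\ref{prop: madelung} (for the $\abs{\mathbf{x}}^{-1}$ singularity of $W_L$, hence of $W_L^\mathbf{R}$) and Corollary~\ref{cor: inequality_potential} (for the Hartree term), which are the two inputs you deploy, and your handling of $V_L^\mathbf{R}$ via compact support and $L\geq 1$ fills in a detail the paper leaves implicit. Your parenthetical alternative for the Hartree term, periodic Young with $\rho_L\in\lper^1(\Gamma_L)$ and $W_L\in\lper^p(\Gamma_L)$, is equally valid and slightly more elementary since it sidesteps the Hoffmann--Ostenhof step, but both routes are sound.
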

	
	\subsection{Reference model}
	
	In this section, we describe the mono-atomic Hartree model whose mean-field potential will give the reference potential satisfying Assumption~\ref{hypo_3} and appearing in Assumption~\ref{hypo_4}. We introduce the energy functional
	\begin{gather*}
	\mathcal{E}(u) \coloneqq \int_{\R^2} \abs{\nabla u(\mathbf{x})}^2\diff \mathbf{x} + \int_{\R^2}\left(-\frac{1}{\abs{\mathbf{x}}} + V^\mathrm{pp}(\mathbf{x}) \right) \abs{u(\mathbf{x})}^2 \diff \mathbf{x}  +  \frac{1}{2} \iint_{\R^2 \times \R^2} \frac{\abs{u(\mathbf{x})}^2 \abs{u(\mathbf{y})}^2}{\abs{\mathbf{x-y}}} \diff \mathbf{x} \diff \mathbf{y} \, ,
	\end{gather*} 
	where $V^\mathrm{pp}$ is radial, compactly supported and belongs to $L^p(\R^2)$ for some $p>1$. This functional is well-defined and continuous on $H^1(\R^2)$. For $\lambda \geq 0$, we consider the following minimization problem
	\begin{gather}
	\label{eq: min_problem_ow}
	I(\lambda) \coloneqq \inf \enstqbis{\mathcal{E}(u)}{u \in H^1(\R^2)\et \int_{\R^2} \abs{u}^2 = \lambda } \, ,
	\end{gather}
	which has been thoroughly studied in the literature, at least in its three-dimensional version~\cite{benguria1981thomasfermi, lieb1977thomasfermi, lieb1981thomasfermi, lions1981remarks}. In particular, there exists $\lambda_\mathrm{max} \geq 1$ such that if $\lambda \leq \lambda_\mathrm{max}$ then all the minimizing sequences for \eqref{eq: min_problem_ow} are precompact in $H^1(\R^2)$ and \eqref{eq: min_problem_ow} admits a unique minimizer, up to a phase factor. In addition, the following \emph{binding inequality} holds
	\begin{align}
	\label{eq: binding_inequality}
	\forall 0\leq \mu < \lambda \leq \lambda_\mathrm{max},\quad I(\lambda) < I(\mu) \pt
	\end{align}
	When $\lambda=1$, we denote by $v$ the minimizer of \eqref{eq: min_problem_ow} which is the ground state of the self-adjoint operator
	\begin{align*}
	H^\mathrm{MF} \coloneqq -\Delta - |\cdot|^{-1} + V^\mathrm{pp} + \abs{v}^2\ast|\cdot|^{-1} \, ,
	\end{align*}
	defined on the domain $\mathcal{D}(H^\mathrm{MF}) = \enstq{u \in H^1(\R^2)}{(-\Delta - |\cdot|^{-1} + V^\mathrm{pp})u \in L^2(\R^2)}$. In the following, we denote by 
	\begin{align*}
	V^\mathrm{MF} \coloneqq - |\cdot|^{-1} + V^\mathrm{pp} + \abs{v}^2\ast|\cdot|^{-1} \, ,
	\end{align*}
	the mean-field potential which will play the role of reference potential, according to Section~\ref{sec:reference-operator}. By Weyl's theorem, the essential spectrum of $H^\mathrm{MF}$ is given by the half-line $\intfo{0}{\infty}$. We choose $V^\mathrm{pp}$ such that $H^\mathrm{MF}$ admits at least one negative eigenvalue (such $V^\mathrm{pp}$ exists, see~\cite[Chapter 3 -- Appendix A]{cazalisthesis}) and we denote by $-\mu <0$ the lowest one. Our assumption that $-\mu<0$ is equivalent to $\lambda_\mathrm{max}>1$. By~\cite{goelden1977nondegeneracy}, it is non degenerate, the phase of $v$ can be chosen such that $v>0$ everywhere and, since $H^\mathrm{MF}$ is invariant under rotations, $v$ is radial.
	
	Recall that $V^\mathrm{pp}$ is compactly supported and belongs to $L^p(\R^2)$ for some $p>1$. Then, by following~\cite{cazalis} and using Proposition~\ref{prop: regularity}, one can show that:
	\begin{enumerate}[noitemsep, label=(\roman*)]
		\item We have $v \in H^{r}(\R^2)$ for $r=p$ if $p<2$ and any $r<2$ otherwise.
		\item There exists $C>0$ such that: for all $\mathbf{x} \in \R^2$,
		\begin{align*}
		\frac{1}{C} \frac{e^{-\sqrt{\mu} \abs{\mathbf{x}}}}{1+ \sqrt{\abs{\mathbf{x}}}} \leq v(\mathbf{x}) \leq C \frac{e^{-\sqrt{\mu} \abs{\mathbf{x}}}}{1+ \sqrt{\abs{\mathbf{x}}}}  \et \abs{\nabla v(\mathbf{x})} \leq C \frac{e^{-\sqrt{\mu} \abs{\mathbf{x}}}}{1+ \sqrt{\abs{\mathbf{x}}}} \pt
		\end{align*}
		\item We have $V^\mathrm{MF}(\mathbf{x}) \sim m_1/(4\abs{\mathbf{x}}^3)$ with $m_1 = \int_{\R^2} \abs{\mathbf{x}}^2\abs{v(\mathbf{x})}^2\diff \mathbf{x}>0$ when $\abs{\mathbf{x}} \to \infty$. In particular, we have $V^\mathrm{MF} \in L^p(\R^2)$.
		\item The energy functional $\mathcal{E}$ satisfies the following \emph{stability inequality}: there exists $C>0$ such that, for all $u\in H^1(\R^2)$ with $\norme{u}_{L^2(\R^2)}=1$, we have
		\begin{align}
		\label{eq: stability}
		\mathcal{E}(u) \geq \mathcal{E}(v) + C \min_{\theta \in \intfo{0}{2\pi}}\norme{e^{i\theta} u - v}^2_{H^1(\R^2)} \pt
		\end{align}
	\end{enumerate}
	\begin{rem}
		From this discussion, we see that
		\begin{align*}
		\boxed{V^\mathrm{MF} \text{ satisfies Assumption~\ref{hypo_3}}\pt}
		\end{align*}
	\end{rem}	
	
	\subsection{Convergence of the periodic model to the reference model}
	In this section, we show that the periodic rHF model \eqref{eq: rHF_model} is given, to leading order, by a periodic superposition of translated versions of the mono-atomic Hartree potential $V^\mathrm{MF}$, introduced in the previous section. In this direction, we use the concentration-compactness method~\cite{lions1984concentrationlocallyI, lions1984concentrationlocallyII}. Our arguments call on the binding inequality \eqref{eq: binding_inequality} and the stability inequality \eqref{eq: stability}.
	
	Let $\delta \in \intoo{0}{1/2}$ and $\chi \in \mathcal{C}^\infty_c(\R^2)$ be a localization function such that
	\begin{align}
	\label{eq: definition_chibis}
	0 \leq \chi \leq 1,\quad\chi \equiv 1 \quad \text{on} \quad B\left(0,\delta d_0/2\right) \et \supp \chi \subset  B\left(0,\delta d_0\right) \, ,
	\end{align}
	where $d_0$, defined in \eqref{eq: min_distance_lattice}, is the nearest neighbor distance of the lattice $\lscr^\mathbf{R}$.
	For $L\geq 1$ and $\mathbf{r} \in \lscr^\mathbf{R}$, we set $\chi_{L,\mathbf{r}}(\mathbf{x}) \coloneqq \chi(L^{-1}\mathbf{x} - \mathbf{r})$. Notice that the functions $\{\chi_{L,\mathbf{r}}\}_{\mathbf{r\in} \lscr^\mathbf{R}}$ have pairwise disjoint supports. We set $\rho_{L,\mathbf{r}} \coloneqq \abs{\chi_{L,\mathbf{r}}}^2 \rho_L$ for $\mathbf{r\in}\lscr^\mathbf{R}$. Because $\rho_L$ is $\lscr_L$-invariant by Lemma~\ref{lemma: hypo2}, we have
	\begin{align*}
	\forall \mathbf{u}\in\lscr,\quad \forall \mathbf{r}\in\lscr^\mathbf{R},\quad \rho_{L,\mathbf{u+r}} = \rho_{L,\mathbf{r}}(\cdot - L\mathbf{u}) \pt
	\end{align*}
	The following proposition states that the periodic rHF model \eqref{eq: rHF_model} is, in the vicinity of the singularities, well approached by the mono-atomic Hartree model \eqref{eq: min_problem_ow} with $\lambda=1$.
	\begin{prop}
		\label{prop: variational}
		We have
		\begin{align}
		\label{eq: variational1}
		\lim\limits_{L \to \infty} E_{L} = NI(1) \et \forall \mathbf{r\in R},\quad \lim\limits_{L \to \infty}\normeL{\sqrt{\rho_{L,\mathbf{r}}}-v(\cdot - L\mathbf{r})}_{H^1(\R^2)} = 0 \, ,
		\end{align}
		where $I(1)$ is defined in \eqref{eq: min_problem_ow} and $v$ is the associated unique Hartree minimizer. In addition, we have 
		\begin{align}
		\label{eq: variational2}
		\normeL{\rho_L \ast_L W_L - \sum_{\mathbf{r\in R}} \abs{v(\cdot-L\mathbf{r})}^2\ast |\cdot|^{-1}}_{\lper^\infty(\Gamma_L)} \limit{L\to\infty} 0 \pt
		\end{align}	
	\end{prop}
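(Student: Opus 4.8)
The plan is to run a concentration-compactness argument on the $\lscr_L$-periodic density $\rho_L$ restricted to a single cell, using the binding inequality~\eqref{eq: binding_inequality} to rule out dichotomy and escape to infinity, and then to upgrade the weak convergence to strong $H^1$ convergence via the stability inequality~\eqref{eq: stability}. First I would establish the energy lower bound $\liminf_{L\to\infty} E_L \geq NI(1)$. Writing $\gamma_L$ for the minimizer and $\rho_L$ for its density, one localizes $\rho_L$ around the $N$ vertices of $\lscr^\mathbf{R}$ inside $\Gamma$ using the cut-offs $\chi_{L,\mathbf{r}}$ together with an IMS-type localization formula for the kinetic energy $\trm_{\lscr_L}(-\Delta\gamma_L)$ (the localization error is $\grandO{L^{-2}}$ since $\norme{\nabla\chi_{L,\mathbf{r}}}_{L^\infty}=\grandO{L^{-1}}$). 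The self-interaction term $D_L(\rho_L,\rho_L)$ is bounded below by the sum of the self-interactions of the pieces $\rho_{L,\mathbf{r}}$, using $W_L\geq 0$ and Proposition~\ref{prop: madelung} to replace $W_L$ by $\abs{\cdot}^{-1}$ up to a vanishing additive constant $L^{-1}M'$ on each piece; the cross terms between different vertices are nonnegative and can be discarded for the lower bound. This produces $E_L \geq \sum_{\mathbf{r\in R}}\mathcal{E}(\sqrt{\rho_{L,\mathbf{r}}}(\cdot+L\mathbf{r})) + \petito{1} \geq \sum_{\mathbf{r\in R}} I(\lambda_{L,\mathbf{r}}) + \petito{1}$ where $\lambda_{L,\mathbf{r}} = \int\rho_{L,\mathbf{r}}$, and $\sum_\mathbf{r}\lambda_{L,\mathbf{r}} \to N$ since the charge lost between the cut-off supports is $\petito{1}$ (this uses that $\rho_L$ cannot concentrate mass away from the vertices, which itself follows from a preliminary rough energy bound). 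For the upper bound, I would use the trial state $\gamma_L^{\mathrm{trial}} = \sum_{\mathbf{r}\in\lscr^\mathbf{R}}\ket{v(\cdot-L\mathbf{r})}\bra{v(\cdot-L\mathbf{r})}$ (suitably orthonormalized, with an exponentially small correction as in Section~\ref{sec:orthomalization-procedure}), whose energy per cell converges to $NI(1)$ because all interaction and cross terms between distinct wells are exponentially small in $L$ by the decay of $v$ (Proposition~\ref{prop: exponential_decay_v}). Combining gives $\lim E_L = NI(1)$, and moreover $I(\lambda_{L,\mathbf{r}}) \to I(1)$ forces $\lambda_{L,\mathbf{r}}\to 1$ for every $\mathbf{r}$ by continuity and strict monotonicity of $I$.

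Next I would extract strong convergence. From the first-step analysis, each $\sqrt{\rho_{L,\mathbf{r}}}(\cdot + L\mathbf{r})$ is a minimizing sequence (up to $\petito{1}$ in energy and in normalization) for $I(1)$; since $\lambda\leq\lambda_{\max}$ implies precompactness of minimizing sequences in $H^1(\R^2)$ (as recalled before~\eqref{eq: binding_inequality}), and since the minimizer $v$ is unique up to a phase which we may fix by $v>0$, I get $\norme{\sqrt{\rho_{L,\mathbf{r}}}(\cdot+L\mathbf{r}) - v}_{H^1(\R^2)} \to 0$, which is exactly~\eqref{eq: variational1} after translating back. Here the stability inequality~\eqref{eq: stability} is the clean tool: it converts the energy gap $\mathcal{E}(\sqrt{\rho_{L,\mathbf{r}}}(\cdot+L\mathbf{r})) - \mathcal{E}(v) = \petito{1}$ directly into $\min_\theta\norme{e^{i\theta}\sqrt{\rho_{L,\mathbf{r}}}(\cdot+L\mathbf{r}) - v}_{H^1}^2 = \petito{1}$, and positivity of both functions removes the phase.

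Finally, for~\eqref{eq: variational2} I would write $\rho_L = \sum_{\mathbf{r}\in\lscr^\mathbf{R}}\rho_{L,\mathbf{r}} + \tilde\rho_L$ where $\tilde\rho_L$ is the leftover mass supported in the ``interstitial'' region $\{\dist(\cdot,\lscr_L^\mathbf{R}) \geq L\delta d_0/2\}$ with $\int_{\Gamma_L}\tilde\rho_L = \petito{1}$ and, using the $H^1$ convergence just obtained together with the exponential tail of $v$, also $\norme{\tilde\rho_L}_{L^1(\Gamma_L) + L^{\infty}(\Gamma_L)\text{-type bound}} = \petito{1}$ in a sense strong enough that $\tilde\rho_L *_L W_L = \petito{1}$ in $\lper^\infty(\Gamma_L)$ by Corollary~\ref{cor: inequality_potential}. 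For the main piece, $\left(\sum_\mathbf{r}\rho_{L,\mathbf{r}}\right)*_L W_L = \sum_{\mathbf{r}\in\mathbf{R}}\rho_{L,\mathbf{r}} * W_L$ (periodization absorbed), and I replace $W_L$ by $\abs{\cdot}^{-1}$ near each vertex via Proposition~\ref{prop: madelung} and $\rho_{L,\mathbf{r}}$ by $\abs{v(\cdot-L\mathbf{r})}^2$ via the $H^1$ convergence; the difference $(\rho_{L,\mathbf{r}} - \abs{v(\cdot-L\mathbf{r})}^2)*\abs{\cdot}^{-1}$ is controlled in $L^\infty$ by the $H^1$ norm of the difference of the square roots (again Corollary~\ref{cor: inequality_potential}, or the Hardy-Littlewood-Sobolev / Newton-potential bound $\norme{f*\abs{\cdot}^{-1}}_{L^\infty}\lesssim \norme{f}_{L^1}^{1/2}\norme{f}_{L^p}^{1/2}$), which is $\petito{1}$.

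The main obstacle I anticipate is the concentration-compactness bookkeeping in the periodic setting: ruling out that a positive fraction of the mass leaks into the interstitial region or spreads out, i.e. showing there is no ``dichotomy'' escaping the wells. The honest difficulty is that $\rho_L$ lives on a torus of growing size $\Gamma_L$, so one must localize carefully and use the binding inequality~\eqref{eq: binding_inequality} to see that splitting the charge $1$ at a vertex into two lumps, or sending charge off to distance $\gg 1$ from every vertex (where the attractive $-\abs{\cdot}^{-1}$ potential is negligible), is strictly energetically unfavorable — this is precisely where strict subadditivity $I(\lambda) < I(\mu) + I(\lambda-\mu)$ (a consequence of~\eqref{eq: binding_inequality}) enters. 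Making this quantitative enough to conclude $\lambda_{L,\mathbf{r}}\to 1$ rather than merely $\liminf \geq$ something, and simultaneously handling the long-range Coulomb tails in the periodic kernel, is the part requiring the most care.
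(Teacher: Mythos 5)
Your architecture matches the paper's: an upper bound via a trial state built from truncated translates of $v$, a lower bound via IMS localization and the non-negativity of $W_L$, identification of the per-well masses using the binding inequality~\eqref{eq: binding_inequality}, upgrade to strong $H^1$ convergence via the stability inequality~\eqref{eq: stability}, and finally the potential convergence via Proposition~\ref{prop: madelung} and Corollary~\ref{cor: inequality_potential} together with a Hoffmann-Ostenhof / uniform kinetic-energy bound. These are the paper's six steps in the same order, so the overall strategy is correct.

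The one genuine gap is in the mass-identification step, which you rightly flag as the delicate part. You assert that $\sum_{\mathbf{r}}\lambda_{L,\mathbf{r}}\to N$ ``since the charge lost between the cut-off supports is $\petito{1}$, which follows from a preliminary rough energy bound.'' As stated, this is circular: vanishing of the interstitial mass is a \emph{consequence} of $\lambda_{L,\mathbf{r}}\to 1$ (indeed, the paper derives $\norme{\rho_{L,\mathbf{0}}}_{\lper^1(\Gamma_L)}=\petito{1}$ from step~3, in the sixth step), not an input to it. Moreover, with only the lower bound $E_L\geq\sum_{\mathbf{r}}I(\lambda_{L,\mathbf{r}})+\petito{1}$ and the constraint $\sum_{\mathbf{r}}\lambda_{L,\mathbf{r}}\leq N$, ruling out that one well overloads ($\lambda_{L,\mathbf{r}}>1$) at the expense of another requires an additional argument — convexity of $\lambda\mapsto I(\lambda)$ plus Jensen, or a many-body subadditivity inequality — that you gesture at (``strict subadditivity'') but do not supply, and which is a non-trivial property of the Hartree functional. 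The paper bypasses all of this with an observation you have missed, and which is the real reason Assumption~\ref{hypo_1} appears in the hypotheses: because $\rho_L$ is $G_L$-invariant (Lemma~\ref{lemma: hypo2}), $\chi$ is radial, and $G$ acts transitively on $\lscr^\mathbf{R}$, the quantities $\lambda_{L,\mathbf{r}}=\norme{\rho_{L,\mathbf{r}}}_{L^1(\R^2)}$ are independent of $\mathbf{r}\in\mathbf{R}$. Since $\int_{\Gamma_L}\rho_L=N$ and there are $N$ disjoint wells per cell, this forces $\lambda_{L,\mathbf{r}}\leq 1$ for free, whence $I(\lambda_{L,\mathbf{r}})\geq I(1)$ by monotonicity, and the upper/lower squeeze together with the strict monotonicity from~\eqref{eq: binding_inequality} gives $\lambda_{L,\mathbf{r}}\to 1$ in one line. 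You should invoke this symmetry here rather than a generic concentration-compactness dichotomy argument; it is both what the paper does and what makes the proof short.
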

	We temporally admit the conclusions of Proposition~\ref{prop: variational}. This allows us to write the
	\begin{proof}[Proof of Theorem~\ref{theo: rHF}]
		We have that
		\[
		\boxed{V_L^\mathrm{MF} \text{ satisfies Assumption~\ref{hypo_04} and Assumption~\ref{hypo_4},}}
		\]
		with $V^\mathrm{MF}$ as reference potential. Indeed, the validity of Assumption~\ref{hypo_04} (resp. Assumption~\ref{hypo_4} with $V^\mathrm{MF}$ as reference potential) results from Proposition~\ref{prop: madelung} and \eqref{eq: variational2} (resp. the left side of \eqref{eq: G_properties}, \eqref{eq: variational2} and the fact that $V^\mathrm{pp}$ is compactly supported). Consequently, Theorem~\ref{theo: rHF} holds.
	\end{proof}	
	\begin{proof}[Proof of Corollary~\ref{cor:dirac_points_rHF}]
		We only have to show that the Fermi level $\epsilon_L$ is exactly equal to the energy level of the cones. This amounts to show that the two lowest bands of the dispersion relation only overlap at the vertices of the first Brillouin zone.
		
		By Theorem~\ref{theo: rHF}, we can apply Theorem~\ref{theo: feshbach-schur} and Theorem~\ref{theo: existence_dirac_cones}. The first one provides, for all $\kappa\in\Gamma^*$
		\begin{align*}
		\frac{1}{\abs{\theta_L}}\left(\mu_{+,L}\left(\frac{\kappa}{L}\right) - \mu_{-,L}\left(\frac{\mathbf{K}_\star+\kappa}{L}\right)\right) =
		2 \abs{1+e^{i\mathbf{\kappa\cdot u}_1}+e^{i\mathbf{\kappa\cdot u}_2}} + \petito{1} \, ,
		\end{align*}
		where we recall that the map $\kappa\in\Gamma^* \mapsto \abs{1+e^{i\mathbf{\kappa\cdot u}_1}+e^{i\mathbf{\kappa\cdot u}_2}}$ is equal to zero if and only if $\kappa\in\{\mathbf{K,K'}\}$ is a vertex of $\Gamma^*$. The second one gives
		\begin{align*}
		\frac{1}{\abs{\theta_L}}\left(\mu_{+,L}\left(\frac{\mathbf{K}_\star+\kappa}{L}\right) - \mu_{-,L}\left(\frac{\mathbf{K}_\star+\kappa}{L}\right)\right) = \sqrt{3}\abs{\kappa}(1+E(\kappa))(1+\petito{1}) \geq \abs{\kappa} \, ,
		\end{align*}
		for all $\mathbf{K}_\star\in\{\mathbf{K,K'}\}$, for all $\abs{\kappa}$ small enough and $L$ large enough. Hence, the two bands only overlap at the vertices of the first Brillouin zone. This concludes the proof of Corollary~\ref{cor:dirac_points_rHF}.		
	\end{proof}
	\begin{proof}[Proof of Proposition~\ref{prop: variational}]
		We divide the proof into six steps. First, we show that $E_L$ is bounded from above by $NI$ plus a small correction. For this purpose, we insert a trial state into the periodic rHF energy functional $\mathcal{E}_L$ defined in \eqref{eq: rHF_energy}. Afterwards, using localization methods, we show that $E_L$ is bounded from below by $\sum_{\mathbf{r\in R}} I\left(\norme{\rho_{L,\mathbf{r}}}_{L^1(\R^2)}\right)$ plus a small correction. In the third step, we show that $\norme{\rho_{L,\mathbf{r}}}_{L^1(\R^2)}$ converges to 1 when $L\to\infty$ using the binding inequality \eqref{eq: binding_inequality}. Then we prove that the kinetic energy $\trm_{\mathscr{L}_L}(-\Delta\gamma_L)$ is uniformly bounded with respect to $L\geq 1$. In the fifth step, we show the convergence \eqref{eq: variational1} thanks to the stability inequality \eqref{eq: stability} and the previous steps. We show the convergence of the potential~\eqref{eq: variational2} in the sixth and final step.
		
		\paragraph{First step.} We claim the upper bound
		\begin{align}
		\label{eq: upper_bound}
		E_L = \mathcal{E}_L(\gamma_L) \leq NI(1) + \grandO{L^{-1}}\pt
		\end{align}
		To show this upper bound, we construct an appropriate trial state for the minimization problem~\eqref{eq: rHF_model}.
		For $L\geq 1$ and $\mathbf{r}\in\lscr^\mathbf{R}$, we set
		\begin{align*}
		v_{L,\mathbf{r}} = \norme{\chi_{L,\mathbf{r}} v(\cdot - L\mathbf{r})}^{-1}_{L^2(\R^2)} \chi_{L,\mathbf{r}} v(\cdot - L\mathbf{r}) \pt
		\end{align*}
		Notice that these functions have disjoint supports and that, by construction, the family $\{v_{L,\mathbf{r}}\}_{\mathbf{r\in}\lscr^\mathbf{R}}$ forms an orthonormal system. In addition, in view of the exponential decay of $u$, we have
		\begin{align}
		\label{eq: approximation_v}
		\forall \mathbf{r\in}\lscr^\mathbf{R},\quad \norme{v_{L,\mathbf{r}}(\cdot+L\mathbf{r}) - v}_{H^1(\R^2)} = \grandO{L^{-\infty}} \pt
		\end{align}
		We consider the trial state
		\[
		\gamma_\mathrm{trial} = \sum_{\mathbf{r} \in \lscr^\mathbf{R}} \ket{v_{L,\mathbf{r}}} \bra{v_{L,\mathbf{r}}} \in \mathcal{S}_{\mathrm{per},L} \, ,
		\]
		which can be decomposed in fibers as
		\begin{align}
		\label{eq: 18_06_2022}
		\gamma_\mathrm{trial} \simeq \fint_{\Gamma_L^*}^\oplus \sum_{\mathbf{r\in R}} \ket{\mathcal{U}_\mathrm{BF}(v_{L,\mathbf{r}})(\mathbf{k},\cdot)}\bra{\mathcal{U}_\mathrm{BF}(v_{L,\mathbf{r}})(\mathbf{k},\cdot)} \diff\mathbf{k} \pt
		\end{align}
		From identity \eqref{eq: 18_06_2022} and the fact that the functions $v_{L,\mathbf{r}}$ have pairwise disjoint support, we can compute the one-body density $\rho_\mathrm{trial}$ of $\gamma_\mathrm{trial}$ and its kinetic energy. We have
		\begin{gather*}
		\rho_\mathrm{trial}(\mathbf{x}) = \sum_{\mathbf{r\in \lscr^R}} \absL{v_{L,\mathbf{r}}(\mathbf{x})}^2 \et
		\trm_{\mathscr{L}_L}(-\Delta \gamma_\mathrm{trial}) = \sum_{\mathbf{r\in R}} \norme{\nabla v_{L,\mathbf{r}}}_{L^2(\R^2)}^2 \pt
		\end{gather*}
		Then, its periodic rHF energy is given by
		\begin{align*}
		\mathcal{E}_L(\gamma_\mathrm{trial})
		=~& \sum_{\mathbf{r\in R}} \left(\norme{\nabla v_{L,\mathbf{r}}}_{L^2(\R^2)}^2  + \int_{\R^2}  \left(-W_L^{\mathbf{R}}(\mathbf{x})+V_L^{\mathbf{R}}(\mathbf{x})\right) \abs{v_{L,\mathbf{r}}(\mathbf{x})}^2\diff \mathbf{x}\right) \\
		&+ \frac{1}{2} \sum_{\mathbf{r,r'\in R}} \iint_{\R^2 \times \R^2} \abs{v_{L,\mathbf{r}}(\mathbf{x})}^2\abs{v_{L,\mathbf{r'}}(\mathbf{y})}^2W_L(\mathbf{x-y}) \diff \mathbf{x} \diff \mathbf{y} \pt
		\end{align*}
		Using that $\supp v_{L,\mathbf{r}} \subset \frac{1}{2}\Gamma_L + L\mathbf{r}$, the left side of \eqref{eq: G_properties} and the fact that $\supp V^\mathrm{pp}\subset B(0,R)$ for some fixed $R>0$, we have, for all $\mathbf{r\in R}$ and $L$ large enough,
		\begin{gather*}
		\int_{\R^2}  W_L(\mathbf{x}-L\mathbf{r}) \abs{v_{L,\mathbf{r}}(\mathbf{x})}^2 \diff \mathbf{x} = \int_{\R^2} \frac{\abs{v_{L,\mathbf{r}}(\mathbf{x})}^2}{\abs{\mathbf{x}-L\mathbf{r}}} \diff \mathbf{x} + \grandO{L^{-1}} \, ,\\
		\int_{\R^2}  V^\mathbf{R}_L(\mathbf{x}) \abs{v_{L,\mathbf{r}}(\mathbf{x})}^2 \diff \mathbf{x} = \int_{\R^2} V^\mathrm{pp}(\mathbf{x} - L\mathbf{r}) \abs{v_{L,\mathbf{r}}(\mathbf{x})}^2 \diff \mathbf{x} \pt
		\end{gather*}
		For $\mathbf{r\neq r'}$, we use that $\dist(\supp v_{L,\mathbf{r}},\lscr_L+L\mathbf{r'}) \geq L(1-\delta)d_0 $ and the left side of \eqref{eq: G_properties} to obtain
		\begin{align*}
		\int_{\R^2}  W_L(\mathbf{x}-L\mathbf{r'})\abs{v_{L,\mathbf{r}}(\mathbf{x})}^2 \diff \mathbf{x}
		= \sum_{\mathbf{u}\in\lscr} \int_{\R^2} \frac{\abs{v_{L,\mathbf{r}}(\mathbf{x})}^2}{\abs{\mathbf{x}-L(\mathbf{u+r'})}} \mathds{1}_{\mathbf{x}-L(\mathbf{u+r'}) \in\Gamma_L} \diff \mathbf{x} + \grandO{L^{-1}} = \grandO{L^{-1}} \pt
		\end{align*}
		Now, we turn our attention to the direct term. As $\supp v_{L,\mathbf{r}} \subset \frac{1}{2}\Gamma_L + L\mathbf{r}$ implies $\mathbf{x-y} \in \Gamma_L$ for all $(\mathbf{x,y})\in(\supp v_{L,\mathbf{r}})^2$, we have
		\begin{align*}
		\iint_{\R^2 \times \R^2} \abs{v_{L,\mathbf{r}}(\mathbf{x})}^2\abs{v_{L,\mathbf{r}}(\mathbf{y})}^2W_L(\mathbf{x-y}) \diff \mathbf{x} \diff \mathbf{y}
		= \iint_{\R^2 \times \R^2} \frac{\abs{v_{L,\mathbf{r}}(\mathbf{x})}^2\abs{v_{L,\mathbf{r'}}(\mathbf{y})}^2}{\abs{\mathbf{x-y}}} \diff \mathbf{x} \diff \mathbf{y} + \grandO{L^{-1}} \, ,
		\end{align*}
		by the left side of \eqref{eq: G_properties}. For $\mathbf{r\neq r'}$, we use the left side of \eqref{eq: G_properties} and the fact that (see \eqref{eq: definition_chibis})
		\begin{align*}
		\forall \mathbf{u}\in\lscr,\quad \dist(\supp v_{L,\mathbf{u+r}},\supp v_{L,\mathbf{r'}}) \geq L(1 - 2\delta) d_0 > 0\, ,
		\end{align*}
		to obtain
		\begin{align*}
		\iint_{\R^2 \times \R^2} &\abs{v_{L,\mathbf{r}}(\mathbf{x})}^2\abs{v_{L,\mathbf{r}}(\mathbf{y})}^2W_L(\mathbf{x-y}) \diff \mathbf{x} \diff \mathbf{y} = \grandO{L^{-1}} \pt
		\end{align*}
		Hence, we have shown
		\begin{align*}
		\mathcal{E}_L(\gamma_L) \leq \sum_{\mathbf{r\in R}} \mathcal{E}(v_{L,\mathbf{r}}(\cdot +L\mathbf{r})) + \grandO{L^{-1}} \pt
		\end{align*}
		Finally, from \eqref{eq: approximation_v} and the continuity of the functional $u \in H^1(\R^2) \mapsto \mathcal{E}(u)$, we have that $\mathcal{E}(v_{L,\mathbf{r}}(\cdot +L\mathbf{r})) = \mathcal{E}(v) + \grandO{L^{-\infty}}$ for all $\mathbf{r\in R}$. This concludes the proof of the upper bound \eqref{eq: upper_bound}.	
		\paragraph{Second step.} We claim the lower bound
		\begin{align}
		\label{eq: lower_bound}
		E_L  \geq \sum_{\mathbf{r\in R}}\mathcal{E}\left(\sqrt{\rho_{L,\mathbf{r}}(\cdot +L\mathbf{r})}\right) + \grandO{L^{-1}} \geq \sum_{\mathbf{r\in R}} I\left(\norme{\rho_{L,\mathbf{r}}}_{L^1(\R^2)}\right)  + \grandO{L^{-1}} \, ,
		\end{align}
		where $I(\lambda)$ is defined in \eqref{eq: min_problem_ow}. We consider the $\lscr_L$-periodic function $\chi_{L,\mathbf{0}}$ defined by
		\[
		\chi_{L,\mathbf{0}} = \sqrt{1 - \sum_{\mathbf{r\in}\lscr^\mathbf{R}} \abs{\chi_{L,\mathbf{r}}}^2} \pt
		\]
		We can always choose $\chi$ such that $\chi_{L,\mathbf{0}}$ is smooth. Notice that $\norme{\nabla \chi_{L,\mathbf{0}}}_{\lper^\infty(\Gamma_L)}$ and $\norme{\nabla \chi_{L,\mathbf{r}}}_{L^\infty(\R^2)}$ are $\grandO{L^{-1}}$. Finally, we set $\rho_{L,\mathbf{0}} = \abs{\chi_{L,\mathbf{0}}}^2 \rho_L$. We recall that we have also defined $\rho_{L,\mathbf{r}} = \abs{\chi_{L,\mathbf{r}}}^2 \rho_L$. This provides a periodic partition of unity
		\begin{align*}
		\abs{\chi_{L,\mathbf{0}}}^2 + \sum_{\mathbf{r\in R}} \sum_{\mathbf{u}\in\lscr} \abs{\chi_{L,\mathbf{u+r}}}^2 = 1 \, ,
		\end{align*}
		where the functions $\chi_{L,\mathbf{u+r}}$ for all $\mathbf{u}\in\lscr$ and $\mathbf{r\in R}$ have pairwise disjoint supports. We recall the periodic Hoffmann-Ostenhof inequality~\cite[Eq. (4.42)]{catto2001thermodynamic}
		\begin{align*}
		\int_{\Gamma_L} \absL{\nabla \sqrt{\rho_L}}^2 \leq \trm_{\mathscr{L}_L} (-\Delta \gamma_L) \pt
		\end{align*}
		By the periodic IMS formula, we can write
		\begin{multline*}
		\mathcal{E}_L(\gamma_L) 
		\geq \int_{\Gamma_L} \abs{\nabla \sqrt{\rho_L}}^2 + \int_{\Gamma_L} \left(-W_L^\mathbf{R}+V_L^\mathbf{R}\right)\rho_L  + \frac{1}{2} D_L(\rho_L,\rho_L)\\
		\geq \sum_{\mathbf{r\in R}} \sum_{\mathbf{u}\in\lscr} \left(\int_{\Gamma_L} \abs{\nabla \sqrt{\rho_{L,\mathbf{u+r}}}}^2  - \int_{\Gamma_L} \abs{\nabla \chi_{L,\mathbf{u+r}}}^2 \rho_L\right) + \int_{\Gamma_L} \abs{\nabla \sqrt{\rho_{L,\mathbf{0}}}}^2\\
		- \int_{\Gamma_L} \abs{\nabla \chi_{L,\mathbf{0}}}^2 \rho_L + \int_{\Gamma_L} \left(-W_L^\mathbf{R}+V_L^\mathbf{R}\right)\rho_L + \frac{1}{2} D_L(\rho_L,\rho_L) \pt
		\end{multline*}
		Using that $\supp V^\mathrm{pp} \subset B(0,R)$ for a fixed $R>0$ and the $\lscr_L$-periodicity of $W_L$ and $\rho_L$, we obtain
		\begin{align*}
		\mathcal{E}_L(\gamma_L) 
		\geq&~\sum_{\mathbf{r\in R}} \left(\int_{\R^2} \abs{\nabla \sqrt{\rho_{L,\mathbf{r}}}}^2 - \int_{\R^2} \abs{\nabla \chi_{L,\mathbf{r}}}^2 \rho_L + \int_{\R^2} \left(- W_L^\mathbf{R} + V^\mathrm{pp}(\cdot - L\mathbf{r})\right) \rho_{L,\mathbf{r}}\right) \\
		&+ \frac{1}{2} \sum_{\mathbf{r,r'\in R}} \iint_{\R^2 \times \R^2} \rho_{L,\mathbf{r}}(\mathbf{x}) \rho_{L,\mathbf{r'}}(\mathbf{y}) W_L(\mathbf{x-y}) \diff \mathbf{x} \diff \mathbf{y} \\
		&+\int_{\Gamma_L} \abs{\nabla \sqrt{\rho_{L,\mathbf{0}}}}^2 + D_L(\rho_L,\rho_{L,\mathbf{0}}) + \frac{1}{2}D_L(\rho_{L,\mathbf{0}},\rho_{L,\mathbf{0}}) - \int_{\Gamma_L} \abs{\nabla \chi_{L,\mathbf{0}}}^2 \rho_L - \int_{\Gamma_L} W_L^\mathbf{R} \rho_{L,\mathbf{0}}  \pt
		\end{align*}
		The terms where $\nabla \chi_{L,\mathbf{0}}$ or $\nabla \chi_{L,\mathbf{r}}$ appears are $\grandO{L^{-2}}$. Because of the non-negativity of $W_L$, we can bound from below by zero both the terms of the second line corresponding to $\mathbf{r\neq r'}$ and the three first terms of the third line. Also, using that $\dist(\supp \rho_{L,\mathbf{0}},\lscr_L^\mathbf{R}) \geq L\delta d_0/2$ and the right side of~\eqref{eq: G_properties}, we have
		\begin{align*}
		\int_{\Gamma_L} W_L^\mathbf{R} \rho_{L,\mathbf{0}}  = \sum_{\mathbf{r\in R}}\int_{\Gamma_L}W_L(\cdot - L\mathbf{r})\rho_{L,\mathbf{0}}  = \sum_{\mathbf{r\in R}} \int_{\Gamma_L} W_L  \rho_{L,\mathbf{0}}(\cdot +L\mathbf{r}) = \grandO{L^{-1}} \pt
		\end{align*}
		Reproducing the same arguments of the first step, we have, for all $\mathbf{r\in R}$,
		\begin{gather*}
		\int_{\R^2} W_L^\mathbf{R} \rho_{L,\mathbf{r}} = \int_{\R^2} \frac{\rho_{L,\mathbf{r}}(\mathbf{x})}{\abs{\mathbf{x-}L\mathbf{r}}} \diff \mathbf{x} + \grandO{L^{-1}} \et \\
		\iint_{\R^2 \times \R^2} \rho_{L,\mathbf{r}}(\mathbf{x}) \rho_{L,\mathbf{r}}(\mathbf{y}) W_L(\mathbf{x-y}) \diff \mathbf{x} \diff \mathbf{y} = \iint_{\R^2 \times \R^2} \frac{\rho_{L,\mathbf{r}}(\mathbf{x}) \rho_{L,\mathbf{r}}(\mathbf{y})}{\abs{\mathbf{x-y}}} \diff \mathbf{x} \diff \mathbf{y} + \grandO{L^{-1}} \pt
		\end{gather*}
		From all the previous estimates, we deduce that
		\begin{align*}
		\mathcal{E}_L(\gamma_L) \geq \sum_{\mathbf{r\in R}} \mathcal{E}\left(\sqrt{\rho_{L,\mathbf{r}}(\cdot +L\mathbf{r})}\right) + \grandO{L^{-1}} \geq \sum_{\mathbf{r\in R}} I\left(\norme{\rho_{L,\mathbf{r}}}_{L^1(\R^2)}\right)  + \grandO{L^{-1}} \pt
		\end{align*}	
		\paragraph{Third step.} We claim that
		\begin{align}
		\label{eq: convergence_weight}
		\forall \mathbf{r\in R},\quad \lim\limits_{L\to\infty} \norme{\rho_{L,\mathbf{r}}}_{L^1(\R^2)} = 1 \pt
		\end{align}	
		Since $\rho_L$ is $G_L$-invariant (see the proof of Lemma~\ref{lemma: hypo2}), $\chi$ is radial and  $G_L$ acts transitively on $\lscr^\mathbf{R}_L$, we obtain that $\norme{\rho_{L,\mathbf{r}}}_{L^1(\R^2)}$ is independent of $\mathbf{r\in R}$. Consequently, since $\norme{\rho_L}_{\lper^1(\Gamma_L)}=N$, we must have $\norme{\rho_{L,\mathbf{r}}}_{L^1(\R^2)} \leq 1$ for all $\mathbf{r\in R}$. 	
		We combine the upper bound \eqref{eq: upper_bound} and the lower bound \eqref{eq: lower_bound} to get
		\begin{align*}
		N I(1) \geq N I\left(\norme{\rho_{L,\mathbf{r}}}_{L^1(\R^2)}\right) + \grandO{L^{-1}} \, ,
		\end{align*}
		for any $\mathbf{r\in R}$. Assume there exists a subsequence $L_n\to\infty$ such that $\norme{\rho_{L_n,\mathbf{r}}}_{L^1(\R^2)} \to \lambda <1$ as $n\to\infty$. Then, since $\lambda \mapsto I(\lambda)$ is continuous on $\intff{0}{1}$, we obtain $I(1)\geq I(\lambda)$. This contradicts the binding inequality \eqref{eq: binding_inequality} and we must have \eqref{eq: convergence_weight}. In particular, $\sqrt{\rho_{L,\mathbf{r}}}$ is a minimizing sequence for $I(1)$.
		\paragraph{Fourth step.} We claim the uniform bound
		\begin{align}
		\label{eq: finite_kinetic_energy}
		\sup_{L\geq 1} \trm_{\mathscr{L}_L} (-\Delta \gamma_L) < \infty \pt
		\end{align}
		By Proposition~\ref{prop: madelung}, there exists $C>0$ such that for all $L\geq 1$ we have $\abs{W_L^\mathbf{R}(\mathbf{x})} \leq C\abs{\mathbf{x-r}}^{-1}$ in a neighborhood of any $\mathbf{r\in R}$. Then, by Proposition~\ref{lemma: kato_type_inequality}, Remark~\ref{rem: kato_type_inequality_consequence} and the periodic Hoffmann-Ostenhof inequality, we have
		\begin{align*}
		\absL{\int_{\Gamma_L}  W_L^\mathbf{R} \rho_L} \leq \frac{1}{4} \norme{\nabla \sqrt{\rho_L}}^2_{\lper^2(\Gamma_L)}+ C'\norme{\rho_L}_{\lper^1(\Gamma_L)} \leq \frac{1}{4} \trm_{\mathscr{L}_L} (-\Delta \gamma_L) + C'N\, ,
		\end{align*}
		for some constant $C'>0$ independent of $L$. We have a similar inequality when $W_L^\mathbf{R}$ is replaced by $V_L^\mathbf{R}$ since $\norme{V_L^\mathbf{R}}_{\lper^p(\Gamma_L)}$ does not depend on $L$. In addition, because $W_L\geq 0$, the direct energy term $D_L(\rho_L,\rho_L)$ is non-negative. Thus, we have 
		\[
		E_L = \mathcal{E}_L(\gamma_L) 
		= \trm_{\mathscr{L}_L}(-\Delta\gamma_L) + \int_{\Gamma_L} \left(- W_L^\mathbf{R}+V_L^\mathbf{R}\right)\rho_L + \frac{1}{2} D_L(\rho_L,\rho_L)\geq \frac{1}{2} \trm_{\mathscr{L}_L} (-\Delta\gamma_L) - C'N \pt
		\]
		Moreover, by the first step, we also have $\sup_{L\geq 1} E_L < \infty$. The uniform bound \eqref{eq: finite_kinetic_energy} follows.
		\paragraph{Fifth step.}
		We prove the convergences \eqref{eq: variational1}. For $\mathbf{r\in R}$, we can write 
		\begin{multline*}
		\mathcal{E}\left(\sqrt{\rho_{L,\mathbf{r}}(\cdot+L\mathbf{r})}\right) = \norme{\rho_{L,\mathbf{r}}}_{L^1(\R^2)}\left[ \mathcal{E}\left(\sqrt{\frac{\rho_{L,\mathbf{r}}(\cdot+L\mathbf{r})}{\norme{\rho_{L,\mathbf{r}}}_{L^1(\R^2)}}}\right)  \right. \\
		\left. +  \frac{\norme{\rho_{L,\mathbf{r}}}_{L^1(\R^2)} - 1}{2} \iint_{\R^2 \times \R^2} \frac{\rho_{L,\mathbf{r}}(\mathbf{x}) \rho_{L,\mathbf{r}}(\mathbf{y})}{\abs{\mathbf{x-y}}} \diff \mathbf{x} \diff \mathbf{y}\right] \pt
		\end{multline*}
		By the Hardy-Littlewood-Sobolev inequality and the Sobolev embedding $L^{8/3}(\R^2) \subset H^1(\R^2)$, we have
		\begin{align*}
		\iint_{\R^2 \times \R^2} \frac{\rho_{L,\mathbf{r}}(\mathbf{x}) \rho_{L,\mathbf{r}}(\mathbf{y})}{\abs{\mathbf{x-y}}} \diff \mathbf{x} \diff \mathbf{y}
		\lesssim  \norme{\sqrt{\rho_{L,\mathbf{r}}}}^4_{L^{8/3}(\R^2)} \leq  \norme{\sqrt{\rho_{L,\mathbf{r}}}}^4_{H^1(\R^2)} \pt
		\end{align*}
		Using that $\supp \rho_{L,\mathbf{r}} \subset \frac{1}{2}\Gamma_L + L\mathbf{r}$ and periodic Hoffmann-Ostenhof's inequality, we deduce
		\begin{align*}
		\iint_{\R^2 \times \R^2} \frac{\rho_{L,\mathbf{r}}(\mathbf{x}) \rho_{L,\mathbf{r}}(\mathbf{y})}{\abs{\mathbf{x-y}}} \diff \mathbf{x} \diff \mathbf{y}
		\lesssim \norme{\sqrt{\rho_L}}_{H^1_\mathrm{per}(\Gamma_L)}^4 \lesssim \absL{\trm_{\mathscr{L}_L} (-\Delta\gamma_L)}^2 \pt
		\end{align*}
		Recall that $\norme{\rho_{L,\mathbf{r}}}_{L^1(\R^2)} = 1 + \petito{1}$ from the third step. Using in addition the uniform estimate~\eqref{eq: finite_kinetic_energy} from the fourth step, we obtain
		\begin{align*}
		\mathcal{E}\left(\sqrt{\rho_{L,\mathbf{r}}(\cdot+L\mathbf{r})}\right) =  \norme{\rho_{L,\mathbf{r}}}_{L^1(\R^2)} \mathcal{E}\left(\sqrt{\frac{\rho_{L,\mathbf{r}}(\cdot+L\mathbf{r})}{\norme{\rho_{L,\mathbf{r}}}_{L^1(\R^2)}}}\right) + \petito{1} \, ,
		\end{align*}
		which, together with the stability inequality \eqref{eq: stability}, leads to
		\begin{multline*}
		E_L \geq \sum_{\mathbf{r\in R}} \norme{\rho_{L,\mathbf{r}}}_{L^1(\R^2)}  \mathcal{E}\left(\sqrt{\frac{\rho_{L,\mathbf{r}}(\cdot+L\mathbf{r})}{\norme{\rho_{L,\mathbf{r}}}_{L^1(\R^2)}}}\right) + \petito{1} \\
		\geq NI(1) + C \sum_{\mathbf{r\in R}} \min_{\theta \in \intfo{0}{2\pi}} \normeL{e^{i\theta}\sqrt{\rho_{L,\mathbf{r}}(\cdot+L\mathbf{r})} - \sqrt{\norme{\rho_{L,\mathbf{r}}}_{L^1(\R^2)}}v}^2_{H^1(\R^2)} + \petito{1} \\
		\geq NI(1) + C \sum_{\mathbf{r\in R}} \min_{\theta \in \intfo{0}{2\pi}} \normeL{e^{i\theta}\sqrt{\rho_{L,\mathbf{r}}(\cdot+L\mathbf{r})} - v}^2_{H^1(\R^2)} + \petito{1} \pt
		\end{multline*}
		In the last inequality, to extract the $\petito{1}$ from the norm, we have used, as above, the inequality $\norme{\sqrt{\rho_{L,\mathbf{r}}}}_{H^1(\R^2)} \leq \trm_{\mathscr{L}_L}(-\Delta\gamma_L)$ and the estimate \eqref{eq: finite_kinetic_energy} from the fourth step. Finally, since $\rho_{L,\mathbf{r}}$ and $v$ are positive functions, the functional
		\[
		\theta\in\intfo{0}{2\pi}\mapsto\normeL{e^{i\theta}\sqrt{\rho_{L,\mathbf{r}}(\cdot+L\mathbf{r})} - v}^2_{H^1(\R^2)} \, ,
		\]
		is minimal for $\theta = 0$. Now, recalling the upper bound \eqref{eq: upper_bound}, we write
		\begin{align*}
		NI(1) + \grandO{L^{-1}} \geq E_L \geq NI(1) + \sum_{\mathbf{r\in R}}\normeL{\sqrt{\rho_{L,\mathbf{r}}(\cdot+L\mathbf{r})} - v}^2_{H^1(\R^2)} + \petito{1} \, ,
		\end{align*}
		which concludes the proof of \eqref{eq: variational1}.	
		\paragraph{Sixth step.} We show estimate \eqref{eq: variational2}. Using the periodicity of $W_L$, we can write
		\begin{align*}
		\rho_L \ast_L W_L 
		= \rho_{L,\mathbf{0}} \ast_L W_L + \sum_{\mathbf{r\in R}} \left(\sum_{\mathbf{u}\in\lscr} \rho_{L,\mathbf{u+r}}\right) \ast_L W_L 
		= \rho_{L,\mathbf{0}} \ast_L W_L + \sum_{\mathbf{r\in R}} \rho_{L,\mathbf{r}} \ast W_L \pt
		\end{align*}
		Let $\mathbf{r\in R}$ and $\mathbf{x}\in\Gamma_L$. Using the left side of \eqref{eq: G_properties} and the fact that $\supp \rho_{L,\mathbf{r}} \subset \frac{1}{2}\Gamma_L+L\mathbf{r}$, we have
		\begin{align*}
		(\rho_{L,\mathbf{r}} \ast (W_L- |\cdot|^{-1}))(\mathbf{x}) 
		= \sum_{\mathbf{u}\in\lscr}\int_{\R^2} \frac{\rho_{L,\mathbf{r}}(\mathbf{y})}{\abs{\mathbf{x-y}-L\mathbf{u}}} \mathds{1}_{\mathbf{x-y}-L\mathbf{u}\in\Gamma_L}\diff \mathbf{y} + \grandO{L^{-1}} = \grandO{L^{-1}} \, ,
		\end{align*}
		where the $O$ is uniform in $\mathbf{x}\in\Gamma_L$. We recall~\cite[Lemma 3]{cazalis} which is the non-periodic version of Corollary~\ref{cor: inequality_potential}: for all $r\in\intof{2}{\infty}$ and for all $(u,w) \in H^1(\R^2)$, we have
		\begin{align*}
		\normeL{(uw)\ast |\cdot|^{-1}}_{L^r(\R^2)} \lesssim \norme{u}_{H^1(\R^2)} \norme{w}_{H^1(\R^2)} \pt
		\end{align*}
		Using this inequality, we have
		\begin{align*}
		\normeL{\left(\rho_{L,\mathbf{r}} - \abs{v(\cdot - L\mathbf{r})}^2\right) \ast |\cdot|^{-1}}_{L^\infty(\R^2)}
		\lesssim\left( \sup_{L\geq 1} \normeL{\sqrt{\rho_{L,\mathbf{r}}}}_{H^1(\R^2)} + \normeL{v}_{H^1(\R^2)}\right) \normeL{\sqrt{\rho_{L,\mathbf{r}}} - v(\cdot-L\mathbf{r})}_{H^1(\R^2)} \pt
		\end{align*}
		In the fifth step, we have shown that $\sup_{L\geq 1} \normeL{\sqrt{\rho_{L,\mathbf{r}}}}_{H^1(\R^2)} < \infty$. Thus, by \eqref{eq: variational1}, we obtain
		\begin{align*}
		\rho_L \ast_L W_L  = \rho_{L,\mathbf{0}} \ast_L W_L  + \sum_{\mathbf{r\in R}} \absL{v(\cdot - L\mathbf{r})}^2\ast|\cdot|^{-1} + \petito{1} \, ,
		\end{align*}
		where the $o$ makes sense in $L^\infty(\Gamma_L)$. It remains to show that $\norme{\rho_{L,\mathbf{0}} \ast_L W_L}_{\lper^\infty(\Gamma_L)} = \petito{1}$ to conclude the proof of Proposition~\ref{prop: variational}. By the third step and the normalization $\norme{\rho_L}_{\lper^1(\Gamma_L)}=N$, we see that
		\begin{align}
		\label{eq: blop}
		\norme{\rho_{L,\mathbf{0}}}_{\lper^1(\Gamma_L)} = \petito{1} \pt
		\end{align}
		In addition, from the identity $\nabla \sqrt{\rho_{L,\mathbf{0}}} = \chi_{L,\mathbf{0}} \nabla \sqrt{\rho_L} + \sqrt{\rho_L} \nabla \chi_{L,\mathbf{0}}$, the fact that $\normeL{\nabla \chi_{L,\mathbf{0}}}_{\lper^\infty(\Gamma_L)} = \grandO{L^{-1}}$, the periodic Hoffmann-Ostenhof inequality and the uniform bound \eqref{eq: finite_kinetic_energy}, we have 
		\begin{align}
		\label{eq: blap}
		\sup_{L\geq 1}\normeL{\sqrt{\rho_{L,\mathbf{0}}}}_{H^1_\mathrm{per}(\Gamma_L)} <\infty \pt
		\end{align}
		From estimates \eqref{eq: blop} and \eqref{eq: blap}, the Sobolev embedding $\lper^{2q}(\Gamma_L) \subset H^1_\mathrm{per}(\Gamma_L)$ (where the continuity constant does not depend on $L$, see~\cite[Theorem 2.28]{aubin1998somenonlinear}) and an interpolation argument, we obtain
		\begin{align}
		\label{eq: blom}
		\forall q \in \intfo{1}{\infty},\quad \normeL{\rho_{L,\mathbf{0}}}_{\lper^q(\Gamma_L)} = \petito{1} \pt
		\end{align}
		From the proof of Corollary~\ref{cor: inequality_potential}, we see that for any $q\in\intoo{2}{\infty}$ we have
		\begin{align*}
		\normeL{\rho_{L,\mathbf{0}}\ast_LW_L}_{\lper^\infty(\Gamma_L)} \lesssim \normeL{\rho_{L,\mathbf{0}}}_{\lper^1(\Gamma_L)}+\normeL{\rho_{L,\mathbf{0}}}_{\lper^q(\Gamma_L)} \, ,
		\end{align*}	
		which is a $\petito{1}$ by \eqref{eq: blom}. This concludes the proof of Proposition~\ref{prop: variational}.	
	\end{proof}
	
	\appendix
	
	\section{The weak constrast regime}\label{sec:the-weak-constrast-regime}
	
	In this appendix, we consider the honeycomb lattice $\lscr^H_L = L\lscr^H$ (introduced in Section~\ref{sec:the-triangular-and-the-hexagonal-lattices}) and the \emph{weak contrast regime} which corresponds to the limit $L\to 0$. We do not provide any proof and instead give appropriate references.
	
	\paragraph{Dirac points in the weak contrast regime}
	
	Let $p\in\intof{1}{\infty}$. For all $L\in\intoo{-1}{1}$, we consider a real-valued $\lscr_L$-invariant potential $V_L\in\lper^p$ and we denote by $H_L = -\Delta + LV_L$ the Schrödinger operator associated with the potential $L V_L$. We assume that there exists $\textbf{x}_0 \in \R^2$ such that $\tilde{V}_L = V_L(\cdot - \textbf{x}_0)$ satisfies the following symmetry conditions:
	\begin{enumerate}[noitemsep, label=(\roman*)]
		\item $\tilde{V}_L$ is even, i.e. $\tilde{V}_L(-\mathbf{x}) = \tilde{V}_L(\mathbf{x})$;
		\item $\tilde{V}_L$ is invariant under rotation by $2\pi/3$, i.e. $\tilde{V}_L(M_\mathcal{R}^*\mathbf{x}) = \tilde{V}_L(\mathbf{x})$ where $M_\mathcal{R}$ is the rotation by $2\pi/3$ matrix.
	\end{enumerate}
	For fixed $L$, the potential $V_L$ shares the symmetries of honeycomb lattice potentials (see~\cite[Definition 2.1]{fefferman2012honeycomb}) except that we allow for more singular potentials. The following theorem, proved in~\cite[Chapter 4]{cazalisthesis}, extends some conclusions of~\cite[Theorem 5.1]{fefferman2012honeycomb}.	
	\begin{theo}
		\label{rfy_th_non_isolated}
		Assume that the map $L\mapsto V_L \in \lper^p$ is continuous at 0 and that $c_{1,1}\coloneqq\widehat{V}_0(\mathbf{v}_1+\mathbf{v}_2) \neq 0$. Then there exists $L_0 >0$ such that for all $L\in\intoo{-L_0}{L_0}\setminus\{0\}$ the periodic Schrödinger operator $H_L$ admits Dirac points between the first and second bands if $Lc_{1,1}>0$ and between the second and third bands if $Lc_{1,1}<0$.
	\end{theo}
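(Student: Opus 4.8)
The plan is to follow the strategy of Fefferman and Weinstein~\cite{fefferman2012honeycomb}, recast in the Feshbach--Schur language of Section~\ref{sec:two-dimensional-lattices-at-dissociation-with-coulomb-singularities} and combined with Proposition~\ref{prop: regularity}, Proposition~\ref{lemma: kato_type_inequality} and the perturbation theory of Appendix~\ref{sec:perturbation-theory-for-singular-potentials} in order to absorb the facts that $V_L$ is only locally $L^p$ and depends on $L$. (After an elementary rescaling one may assume $V_L$ is periodic with respect to the fixed triangular lattice $\lscr$, so that $H_L=-\Delta+LV_L$ is, for $L$ small, a weak perturbation of the free Laplacian.) Fix a vertex $\mathbf{K}_\star\in\{\mathbf{K},\mathbf{K}'\}$. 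On $L^2_{\mathbf{K}_\star}(\Gamma)$ the operator $-\Delta$ has lowest eigenvalue $\mu_\star=\abs{\mathbf{K}_\star}^2=(4\pi/3)^2$, of multiplicity three, with eigenspace spanned by the plane waves $\phi_j(\mathbf{x})=\abs{\Gamma}^{-1/2}e^{i\mathbf{q}_j\cdot\mathbf{x}}$, where $\mathbf{q}_1,\mathbf{q}_2,\mathbf{q}_3$ are the three minimal-length vectors of $\mathbf{K}_\star+\lscr^*$; they satisfy $\mathbf{q}_1+\mathbf{q}_2+\mathbf{q}_3=0$, are cyclically permuted by the rotation $\mathcal{R}$ by $2\pi/3$, and one has a $\phi_j$ in each of the three $\mathcal{R}$-isotypic components of $L^2_{\mathbf{K}_\star}(\Gamma)$. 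The rest of $\sigma(-\Delta)$ on $L^2_{\mathbf{K}_\star}(\Gamma)$ is separated from $\mu_\star$ by an $\grandO{1}$ gap.

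First I would carry out the Feshbach--Schur reduction exactly as in Section~\ref{sec:strategy-of-proof}, with $P_L(\mathbf{k})$ the rank-$3$ orthogonal projection onto the span of the pseudo-periodic plane waves $\phi_{j,\mathbf{k}}$ of quasi-momentum $\mathbf{k}$, smooth in $\mathbf{k}$ near $\mathbf{K}_\star$. Since the $\grandO{1}$ gap dominates the coupling $L$, the energy inequality~\eqref{eq:energy_inequality_strategy} holds with an $\grandO{1}$ constant on an energy window of width $\asymp L$ around $\mu_\star$; crucially $C_L(\mathbf{k})=P_L^\perp(\mathbf{k})(LV_L)P_L(\mathbf{k})$ carries a factor $L$, so the residual $C_L^*(B_L-\lambda)^{-1}C_L$ is $\grandO{L^2}=\petito{L}$ uniformly in $\mathbf{k}$, using Proposition~\ref{prop: regularity} and a Kato-type bound uniform in $L$. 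Hence the three lowest bands of $H_L(\mathbf{k})$ agree, up to $\petito{L}$, with the eigenvalues of the $3\times3$ matrix
\[
A_L(\mathbf{k})=\mu_\star I+2\,\diag\!\left(\mathbf{q}_j\cdot(\mathbf{k}-\mathbf{K}_\star)\right)_{j=1,2,3}+L\,M_{V_L}+\grandO{L^2+\abs{\mathbf{k}-\mathbf{K}_\star}^2},
\]
where $M_{V_L}=\left(\pdtsc{\phi_i}{V_L\phi_j}\right)_{ij}\to M_{V_0}$ as $L\to 0$ by the assumed continuity $V_L\to V_0$ in $\lper^p$.

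Next I would exploit the symmetries. The operator $H_L(\mathbf{K}_\star)$ commutes with $\mathcal{R}$ and with the antilinear symmetry $\mathcal{PC}$ (parity composed with complex conjugation) — this is precisely what hypotheses (i)--(ii) on $\tilde V_L$ provide — and, after a suitable choice of phases of the $\phi_j$ (equivalently, centering at $\mathbf{x}_0$), these constraints force $M_{V_0}$ to be a real circulant matrix whose off-diagonal entry is a positive multiple of $c_{1,1}=\widehat V_0(\mathbf{v}_1+\mathbf{v}_2)$, using that the three differences $\mathbf{q}_i-\mathbf{q}_j$ form a single $\mathcal{R}$-orbit tied to $\mathbf{v}_1+\mathbf{v}_2$. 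A real circulant $3\times3$ matrix with off-diagonal entry $b$ has a simple eigenvalue and a double eigenvalue, the difference (simple minus double) having the sign of $b$, and the eigenspace of the double eigenvalue is precisely $L^2_{\mathbf{K}_\star,\tau}\oplus L^2_{\mathbf{K}_\star,\bar\tau}$, $\tau=e^{2i\pi/3}$ (one vector in each). Therefore, for $L\ne 0$ small, $H_L(\mathbf{K}_\star)$ has a two-fold eigenvalue $E_L=\mu_\star+\grandO{L}$ lying strictly below the simple one when $Lc_{1,1}>0$ — hence between bands $1$ and $2$ — and strictly above it when $Lc_{1,1}<0$ — hence between bands $2$ and $3$ — which isolates the crossing from the remaining spectrum. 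It then remains to read off from $A_L(\mathbf{k})$ that the degenerate block disperses linearly near $\mathbf{K}_\star$ with Fermi velocity $\lambda_\sharp=\lambda_\sharp^{(0)}+\grandO{L}$, with $\lambda_\sharp^{(0)}\ne 0$ since it is a universal constant times $\abs{\mathbf{q}_j}=4\pi/3$; the symmetry-based criterion for Dirac points~\cite{fefferman2012honeycomb, berkolaiko2018symmetry} then upgrades the crossing to a genuine conical singularity of $\mu_{\pm,L}$ at $\mathbf{K}_\star$.

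The main obstacle will be making the perturbative analysis uniform in the regime where the coupling $L$ is much smaller than the $\grandO{1}$ spectral gap: one must show that the reduction to $A_L(\mathbf{k})$ carries an error that is genuinely $\petito{L}$ and not merely $\grandO{L}$, which forces keeping track of each factor of $L$ in $C_L(\mathbf{k})$ and using the continuity $V_L\to V_0$ in $\lper^p$ (not mere boundedness) when passing from $M_{V_L}$ to $M_{V_0}$ and to $\lambda_\sharp$. The $L^p$-singularity and the $L$-dependence of $V_L$ are otherwise handled throughout by the Kato-type estimates of Proposition~\ref{lemma: kato_type_inequality} and the singular perturbation theory of Appendix~\ref{sec:perturbation-theory-for-singular-potentials}, exactly as in the proof of Theorem~\ref{theo: feshbach-schur}.
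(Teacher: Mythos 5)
Your proposal is correct and follows essentially the route the paper indicates: a weak-contrast Fefferman--Weinstein analysis, reduced by a finite-dimensional scheme (Feshbach--Schur here, Lyapunov--Schmidt in the cited references, which amount to the same thing) to the $3\times 3$ first-order matrix on the triply degenerate lowest eigenspace of $-\Delta$ at $\mathbf{K}_\star$, whose circulant form with off-diagonal entry $c_{1,1}$ dictates the $1+2$ splitting and hence whether the protected degenerate pair sits below or above the simple eigenvalue; and the two modifications you stress (Kato-type bounds for the $\lper^p$ singularity, and $\lper^p$-continuity of $L\mapsto V_L$ to pass $M_{V_L}\to M_{V_0}$ and control the $\petito{L}$ residual) are exactly the two departures from~\cite{fefferman2012honeycomb} that the paper singles out. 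The paper itself defers the full proof to Chapter~4 of~\cite{cazalisthesis}, so your sketch is a legitimate reconstruction; one small wording slip is that the individual plane waves $\phi_j$ do not lie in the $\mathcal{R}$-isotypic components of $L^2_{\mathbf{K}_\star}(\Gamma)$ --- it is their symmetric combinations $\tfrac{1}{\sqrt 3}\sum_j \bar\sigma^{\,j}\phi_j$, $\sigma\in\{1,\tau,\bar\tau\}$, that do.
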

	The proof of Theorem~\ref{rfy_th_non_isolated} follows the arguments of~\cite{fefferman2012honeycomb} (see also~\cite{grushin2009multiparameter,berkolaiko2018symmetry}). There, the authors consider the operator $H_\epsilon=-\Delta +\epsilon V$ where $V$ is a honeycomb lattice potential and show that $H_\epsilon$ admits Dirac point for all $\epsilon\in\R$ except in a countable and closed set. In the weak contrast regime, that is for $\abs{\epsilon} \ll 1$, their argument uses the Lyapunov-Schmidt reduction. In Theorem~\ref{rfy_th_non_isolated}, there are two main differences in the assumptions compared to~\cite[Theorem 5.1]{fefferman2012honeycomb}: first the potential may be more singular and second it may depend on the amplitude $L$. 
	
	\paragraph{The rHF model in the weak contrast regime}
	
	We consider the periodic rHF model for graphene, introduced in Section~\ref{sec:the-rhf-model-for-periodic-systems-at-dissociation}. For simplicity, we assume there is no pseudo-potential, that is $V^\mathrm{pp}=0$. Therefore, the periodic potential generated by the lattice $\lscr_L^H$ reads
	\begin{align*}
	-W_L^H =  -W_L(\cdot - L\mathbf{a}) - W_L(\cdot - L\mathbf{b})\pt
	\end{align*}
	The periodic rHF model on $\lscr^H_L$ consists in solving the minimization problem
	\begin{align}
	\label{rfy_rHF}
	E_L = \inf \enstq{\mathcal{E}_L(\gamma)}{\gamma \in \mathcal{S}_{\mathrm{per},L} \et \trm_{\mathscr{L}_L} (\gamma) = 2/q} \, ,
	\end{align}
	where $q\in\N^*$ denotes the number of spin states. Using the dilatation by $L$ isometry, defined by $d_L\varphi(\mathbf{x}) = L\varphi(L\mathbf{x})$, one can reformulate the minimization problem \eqref{rfy_rHF} on the fixed lattice $\lscr^H$ (we drop the subscript $L$ from the notations when $L=1$):
	\begin{align}
	\label{rfy_new_rHF}
	E_L = L^{-2} F_L \ou F_L \coloneqq \inf\enstq{\mathcal{F}_L(\gamma)}{\gamma \in \mathcal{S}_\mathrm{per} \et \trm_{\mathscr{L}}(\gamma)=2/q}  \, ,
	\end{align}
	where the energy functional $\mathcal{F}_L$ is defined for all $\gamma\in\mathcal{S}_\mathrm{per,1}$ by
	\begin{align*}
	\mathcal{F}_L(\gamma) \coloneqq \trm_{\mathscr{L}} (-\Delta \gamma) + L \left(- \int_{\Gamma} W^H \rho_\gamma + \frac{q}{2}D(\rho_\gamma,\rho_\gamma)\right) \pt
	\end{align*}
	Following the proofs of~\cite[Theorem 1]{cances2008newapproach} and~\cite[Theorem 2.1]{catto2001thermodynamic}, one can show that \eqref{rfy_new_rHF} is well-posed and admits a unique minimizer $\gamma_L\in\mathcal{S}_\mathrm{per}$. If we denote by $\rho_L(\mathbf{x}) = \gamma_L(\mathbf{x,x})$ its one-body density then $\gamma_L$ is the unique solution of the mean-field equation
	\begin{align*}
	\gamma_L = \mathds{1}_{\intof{-\infty}{\epsilon_L}} \left(H_L^\mathrm{MF}\right) \, ,
	\end{align*}
	where $\epsilon_L\in\R$ is the Fermi level and where the mean-field hamiltonian is given by
	\begin{align*}
	\boxed{H_L^\mathrm{MF} = -\Delta + L V_L^\mathrm{MF} \quad \text{with}\quad V_L^\mathrm{MF} \coloneqq -W^H + q \rho_L *_{\Gamma} W  \pt}
	\end{align*}
	In the next theorem, proved in~\cite[Chapter 4]{cazalisthesis}, we state that $H_L^\mathrm{MF}$ satisfies the assumptions of Theorem~\ref{rfy_th_non_isolated} and that, in addition, the Fermi level \emph{does not} coincide with the cones energy.
	\begin{theo}
		\label{rfy_prop_continuity}
		The map $L \mapsto \rho_L*_\Gamma W$ is continuous from $\R_+$ to $L^\infty_\mathrm{per}$. In addition, $V_0^\mathrm{MF}$ satisfies $\widehat{V}^\mathrm{MF}_0(\mathbf{v}_1+\mathbf{v}_2) > 0$. Consequently, for all $L>0$ small enough, $H_L^\mathrm{MF}$ admits Dirac points at the vertices of $\Gamma^*$ between the first and second bands.
		
		If we assume that $q=2$ then, for all $L$ small enough, we have $\epsilon_L < \lambda_L$ where $\lambda_L$ denotes the energy level of these Dirac points.
	\end{theo}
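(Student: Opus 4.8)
The statement has three parts: the continuity of $L\mapsto\rho_L\ast_\Gamma W$, an explicit Fourier sign at $L=0$, and --- for $q=2$ --- a comparison between the Fermi level and the crossing energy. For the continuity I would run a weak-compactness / $\Gamma$-convergence argument for the rescaled rHF problem $F_L=\inf\{\mathcal{F}_L(\gamma):\gamma\in\mathcal{S}_\mathrm{per},~\trm_\lscr(\gamma)=2/q\}$ on the fixed torus $\R^2/\lscr$. First I would establish a uniform kinetic bound $\sup_{L\in[0,L_0]}\trm_\lscr(-\Delta\gamma_L)<\infty$: since $W^H$ is bounded by $C\abs{\mathbf{x}-\mathbf{r}}^{-1}$ near each honeycomb site (Proposition~\ref{prop: madelung}), Proposition~\ref{lemma: kato_type_inequality} together with Remark~\ref{rem: kato_type_inequality_consequence} and the periodic Hoffmann--Ostenhof inequality absorb the attraction $-L\int_\Gamma W^H\rho_L$ into a fraction of the kinetic energy --- exactly as in the fourth step of the proof of Proposition~\ref{prop: variational} --- while the repulsion $D(\rho_L,\rho_L)\geq0$ is discarded. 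Then, for $L_n\to L_0\geq0$, I would extract a subsequence with $\gamma_{L_n}$ converging to some admissible $\widetilde\gamma$ (no mass escapes, the domain being compact) and $\sqrt{\rho_{L_n}}\rightharpoonup\sqrt{\widetilde\rho}$ in $H^1_\mathrm{per}(\Gamma)$; joint continuity of $(L,\gamma)\mapsto\mathcal{F}_L(\gamma)$ in $L$ at fixed admissible $\gamma$, the comparison $F_{L_n}\leq\mathcal{F}_{L_n}(\gamma_0^\star)\to\mathcal{F}_{L_0}(\gamma_0^\star)=F_{L_0}$ for a minimizer $\gamma_0^\star$ of $\mathcal{F}_{L_0}$, lower semicontinuity of $\gamma\mapsto\mathcal{F}_{L_0}(\gamma)$, and $\mathcal{F}_{L_n}(\gamma_{L_n})=F_{L_n}$ then force $\widetilde\gamma$ to minimize $\mathcal{F}_{L_0}$ and the energies to converge, which upgrades $\sqrt{\rho_{L_n}}\to\sqrt{\widetilde\rho}$ to strong convergence in $H^1_\mathrm{per}(\Gamma)$. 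For $L_0>0$, uniqueness of the rHF minimizer gives $\widetilde\gamma=\gamma_{L_0}$; for $L_0=0$, the operator $\widetilde\gamma$ need not be unique, but every minimizer of $\gamma\mapsto\trm_\lscr(-\Delta\gamma)$ under $\trm_\lscr(\gamma)=2/q$ is a function of $-\Delta$, hence has the constant density $\widetilde\rho\equiv(2/q)\abs\Gamma^{-1}$, so $\rho_0$ is unambiguous. In all cases the limit is unique, so the whole family $\rho_L$ converges; writing $\rho_L-\rho_{L_0}=(\sqrt{\rho_L}-\sqrt{\rho_{L_0}})(\sqrt{\rho_L}+\sqrt{\rho_{L_0}})$ and invoking the convolution estimate of Corollary~\ref{cor: inequality_potential} at $L=1$, $\normeL{(uv)\ast_\Gamma W}_{\lper^\infty(\Gamma)}\lesssim\norme{u}_{H^1_\mathrm{per}(\Gamma)}\norme{v}_{H^1_\mathrm{per}(\Gamma)}$, yields $\rho_L\ast_\Gamma W\to\rho_{L_0}\ast_\Gamma W$ in $\lper^\infty(\Gamma)$.

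For the Fourier sign, note that $\rho_0\equiv(2/q)\abs\Gamma^{-1}$ makes $\rho_0\ast_\Gamma W$ the constant $(2/q)\fint_\Gamma W$, which has no nonzero Fourier mode; hence $\widehat{V}_0^\mathrm{MF}(\mathbf{v}_1+\mathbf{v}_2)=-\widehat{W^H}(\mathbf{v}_1+\mathbf{v}_2)$. From $W^H=W(\cdot-\mathbf{a})+W(\cdot-\mathbf{b})$ with $\mathbf{b}=-\mathbf{a}$ one gets $\widehat{W^H}(\mathbf{v})=2\cos(\mathbf{v}\cdot\mathbf{a})\,\widehat W(\mathbf{v})$, and the expansion~\eqref{eq:fourier_expansion_W} shows $\widehat W(\mathbf{v})>0$ for $\mathbf{v}\neq0$ (a positive multiple of $\abs{\mathbf{v}}^{-1}$). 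With the conventions of Section~\ref{sec:the-triangular-and-the-hexagonal-lattices}, $\mathbf{v}_1+\mathbf{v}_2$ and $\mathbf{a}$ are both parallel to $(1,0)$ with $\mathbf{a}\cdot(\mathbf{v}_1+\mathbf{v}_2)=2\pi/3$, so $\cos(\mathbf{a}\cdot(\mathbf{v}_1+\mathbf{v}_2))=-\tfrac12$ and $\widehat{W^H}(\mathbf{v}_1+\mathbf{v}_2)=-\widehat W(\mathbf{v}_1+\mathbf{v}_2)<0$; thus $c_{1,1}\coloneqq\widehat{V}_0^\mathrm{MF}(\mathbf{v}_1+\mathbf{v}_2)>0$. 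To invoke Theorem~\ref{rfy_th_non_isolated} with $V_L=V_L^\mathrm{MF}$, the remaining hypotheses are checked as follows: $V_L^\mathrm{MF}\in\lper^p$ for some $p\in\intoo{1}{2}$ (the $\abs{\cdot}^{-1}$ singularities of $W^H$ lie there and $q\rho_L\ast_\Gamma W$ is bounded); $L\mapsto V_L^\mathrm{MF}$ is continuous at $0$ in $\lper^p$ (immediate from the first part, $W^H$ being $L$-independent); and, after translation by a hexagon center $\mathbf{x}_c$, $V_L^\mathrm{MF}$ is even and invariant under rotation by $2\pi/3$, because $W^H$ carries the full \textbf{p6m} symmetry of $\lscr^H$ and, by uniqueness of the rHF minimizer and $V^\mathrm{pp}=0$, so do $\rho_L$ and $\rho_L\ast_\Gamma W$, exactly as in Lemma~\ref{lemma: hypo2}. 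Since $Lc_{1,1}>0$ for $L>0$, Theorem~\ref{rfy_th_non_isolated} gives Dirac points between the first and second bands of $H_L^\mathrm{MF}$ for all small $L>0$.

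For the last assertion, take $q=2$, so $\trm_\lscr(\gamma_L)=1$ and the Fermi level satisfies $\mathcal{N}_L(\epsilon_L)=1$, where $\mathcal{N}_L(E)\coloneqq\trm_\lscr(\mathds{1}_{\intof{-\infty}{E}}(H_L^\mathrm{MF}))$ is the (continuous) integrated density of states. At $L=0$, $H_0^\mathrm{MF}=-\Delta$ and $\mathcal{N}_0(E)=\tfrac{\abs\Gamma}{4\pi}E$ for $E\geq0$; with $\abs\Gamma=\sqrt3/2$ this gives $\epsilon_0=8\pi/\sqrt3$, whereas the threefold crossing at $\mathbf{K}$ sits at $\lambda_0=\abs{\mathbf{K}}^2=16\pi^2/9$, and $\mathcal{N}_0(\lambda_0)=\tfrac{2\sqrt3\,\pi}{9}>1$; in particular $\epsilon_0<\lambda_0$. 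To pass to small $L>0$, observe that $LV_L^\mathrm{MF}$ is, uniformly for $L\in[0,L_0]$, a relatively $(-\Delta)$-form-bounded perturbation with $\lper^p$-norm of order $\grandO{L}$ (using $\sup_{L\in[0,L_0]}\norme{V_L^\mathrm{MF}}_{\lper^p}<\infty$, a byproduct of the uniform kinetic bound); standard perturbation theory then yields norm-resolvent convergence $H_L^\mathrm{MF}(\mathbf{k})\to-\Delta$, uniformly in the quasi-momentum, hence uniform convergence of the band functions, locally uniform convergence of $\mathcal{N}_L$, and $\lambda_L=\mu_{1,L}(\mathbf{K})=\mu_{2,L}(\mathbf{K})\to\lambda_0$. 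Therefore $\mathcal{N}_L(\lambda_L)\to\mathcal{N}_0(\lambda_0)>1=\mathcal{N}_L(\epsilon_L)$, and since $\mathcal{N}_L$ is nondecreasing this forces $\epsilon_L<\lambda_L$ for all small $L>0$.

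The genuinely analytic step is the first: the uniform kinetic bound and the passage to the limit in the quadratic Coulomb term along $L_n\to L_0$, together with the endpoint $L_0=0$, where the minimizing density matrix loses operator-level uniqueness while the one-body density does not, being constant for plane-wave states. The other ingredients --- the Fourier sign of $\widehat{W^H}(\mathbf{v}_1+\mathbf{v}_2)$, the explicit inequality $\epsilon_0<\lambda_0$, and the perturbative convergence of the band functions as $L\to0$ --- are short once the continuity is available.
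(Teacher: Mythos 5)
The paper does not contain a proof of Theorem~\ref{rfy_prop_continuity}; Appendix~\ref{sec:the-weak-constrast-regime} states explicitly that no proof is given and defers to~\cite[Chapter~4]{cazalisthesis}. So a direct comparison with ``the paper's own proof'' is not possible here, and your proposal must be judged on its own merits.

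The skeleton of your argument is sound: uniform kinetic bound via Proposition~\ref{lemma: kato_type_inequality}, Remark~\ref{rem: kato_type_inequality_consequence} and periodic Hoffmann--Ostenhof; compactness on the fixed torus; identification of the limit via upper/lower bounds; the Fourier computation $\widehat{W^H}(\mathbf{v}_1+\mathbf{v}_2)=2\cos(\mathbf{a}\cdot(\mathbf{v}_1+\mathbf{v}_2))\,\widehat W(\mathbf{v}_1+\mathbf{v}_2)$ with $\mathbf{a}\cdot(\mathbf{v}_1+\mathbf{v}_2)=2\pi/3$, giving $c_{1,1}=\widehat W(\mathbf{v}_1+\mathbf{v}_2)>0$; the symmetry check after translation by a hexagon center (and the observation that this translation does not change the Fourier coefficient, since $\mathbf{x}_c\cdot(\mathbf{v}_1+\mathbf{v}_2)\in2\pi\Z$); and the explicit inequality $\mathcal{N}_0(\lambda_0)=\frac{2\sqrt3\pi}{9}>1$ combined with a perturbative argument on the integrated density of states. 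All of this hangs together.

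Two places deserve a sharper formulation. First, your claim that the energy convergence ``upgrades $\sqrt{\rho_{L_n}}\to\sqrt{\widetilde\rho}$ to strong convergence in $H^1_\mathrm{per}(\Gamma)$'' does not follow from what precedes: the periodic Hoffmann--Ostenhof inequality is one-sided, so $\trm_\lscr(-\Delta\gamma_{L_n})\to\trm_\lscr(-\Delta\widetilde\gamma)$ does \emph{not} force $\normeL{\nabla\sqrt{\rho_{L_n}}}_{L^2}\to\normeL{\nabla\sqrt{\widetilde\rho}}_{L^2}$. Fortunately you do not need it: weak $H^1_\mathrm{per}$ convergence of $\sqrt{\rho_{L_n}}$ plus the Rellich embedding gives $\rho_{L_n}\to\widetilde\rho$ strongly in $\lper^q(\Gamma)$ for every $q<\infty$, and since $W\in\lper^{q'}(\Gamma)$ for every $q'<2$ (by the right side of~\eqref{eq: G_properties} at $L=1$), Young's inequality on the torus yields $\normeL{(\rho_{L_n}-\widetilde\rho)\ast_\Gamma W}_{\lper^\infty}\lesssim\normeL{\rho_{L_n}-\widetilde\rho}_{\lper^q}\to0$ for some $q>2$. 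This is a cleaner route than invoking Corollary~\ref{cor: inequality_potential}, which is stated in terms of $H^1$ norms of the factors and therefore genuinely requires the strong convergence you have not established. Second, the statement that at $L=0$ ``every minimizer \ldots is a function of $-\Delta$'' should be phrased more carefully: the possible non-uniqueness lives only on the set of $\mathbf k$ where some $|\mathbf v+\mathbf k|^2$ equals the Fermi level, which has zero measure in $\Gamma^*$; in fact the minimizer is a.e.\ uniquely $\mathds{1}_{\leq\epsilon_0}(-\Delta(\mathbf k))$, and its density is the constant $\frac{2/q}{|\Gamma|}$. With these two clarifications, the proposal is a credible proof of the theorem.
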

	This result is illustrated by the numerical simulations by DFTK software, presented in Figure~\ref{fig: phase_transition}. For $L=0$, the first two Bloch bands of the free Laplacian $-\Delta$ coincide along the quasi-momenta section $(\mathrm{K,M})$, represented in Figure~\ref{fig:triangular_lattice_BZ}. Then, when $L$ is non zero but small enough (for instance, in Figure~\ref{fig:L=0.5} where $L=1/2$), the bands still overlap which causes the Fermi level to be smaller than the cones energy.

	\section{Perturbation theory for singular potentials}\label{sec:perturbation-theory-for-singular-potentials}
	
	Let $p\in\intof{1}{\infty}$. For all $L\geq 1$, we consider potentials $V$ and $V_L\in L^p(\R^2)+L^\infty(\R^2)$. We denote by $H=-\Delta+V$ and $H_L = -\Delta + V_L$ the associated self-adjoint Schrödinger operators, given by the Friedrichs extension if $p\in\intoo{1}{2}$.
	\begin{prop}[Singular perturbation theory]
		\label{prop: perturbation_theory}
		Assume that
		\begin{align}
		\label{eq:perturbation_theory_assumption}
		\lim\limits_{L\to\infty} \norme{V-V_L}_{L^p(\R^2)+L^\infty(\R^2)} = 0 \, ,
		\end{align}
		and that the discrete spectrum of $H$ is non empty. Let $\lambda \in \sigma_\mathrm{d}(H)$. Then, for all $\epsilon>0$ small enough there exists $\mathscr{C} \subset \C$ a contour enclosing $\lambda$ such that $\dist\left(\mathscr{C}, \sigma_\mathrm{d}(H_L)\right) \geq \epsilon$ for all $L$ large enough. In this case, we denote by
		\begin{align*}
		P \coloneqq \frac{-1}{2\pi i} \oint_\mathscr{C} \frac{\diff z}{H - z} \et P_{L} \coloneqq \frac{-1}{2\pi i} \oint_\mathscr{C} \frac{\diff z}{H_L - z} \, ,
		\end{align*}
		the spectral projections of $H$ and $H_L$ associated with the interval of real numbers enclosed by $\mathscr{C}$. Then, for all $L$ large enough, the ranks of $P_L$ and $P$ are equal and there exists $C>0$ such that	
		\begin{align*}
		\normeL{(-\Delta+1)^{\min (\frac{p}{2}, 1)} (P - P_L)} \leq C\norme{V - V_L}_{L^p(\R^2)+L^\infty(\R^2)} \pt
		\end{align*}	
	\end{prop}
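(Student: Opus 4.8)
The plan is to reduce the whole statement to one weighted resolvent--difference estimate, and then to run the classical spectral--stability argument not for $H$ and $H_L$ directly but for the \emph{bounded} operators $(H+\nu_0)^{-1}$ and $(H_L+\nu_0)^{-1}$, transporting the conclusion back through $z\mapsto(z+\nu_0)^{-1}$. Write $\alpha=\min(p/2,1)$, the exponent appearing in the statement. Since $\norme{V-V_L}_{L^p(\R^2)+L^\infty(\R^2)}\to0$, the family $\{V_L\}$ is bounded in $L^p(\R^2)+L^\infty(\R^2)$ for $L$ large, so Proposition~\ref{prop: regularity}, applied simultaneously to $H$ and to each $H_L$, produces a single $\nu_0>0$ (independent of $L$ for $L$ large) for which $H+\nu_0$ and $H_L+\nu_0$ are invertible and $\normeL{(-\Delta+\nu_0)^{\alpha}(H_\bullet+\nu_0)^{-1}(-\Delta+\nu_0)^{1-\alpha}}\le2$; since $(-\Delta+1)^{\alpha}$ and $(-\Delta+\nu_0)^{\alpha}$ differ by a bounded invertible factor, this yields uniform-in-$L$ bounds on $(-\Delta+1)^{\alpha}(H_\bullet+\nu_0)^{-1}(-\Delta+1)^{1-\alpha}$ and a fortiori on $(-\Delta+1)^{\alpha}(H_\bullet+\nu_0)^{-1}$.

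The key estimate is as follows. From the second resolvent identity $(H+\nu_0)^{-1}-(H_L+\nu_0)^{-1}=(H+\nu_0)^{-1}(V_L-V)(H_L+\nu_0)^{-1}$, split $V_L-V=W_1+W_2$ with $W_1\in L^p(\R^2)$, $W_2\in L^\infty(\R^2)$ and $\norme{W_1}_{L^p(\R^2)}+\norme{W_2}_{L^\infty(\R^2)}\le2\norme{V-V_L}_{L^p(\R^2)+L^\infty(\R^2)}$. The $W_2$-part is controlled directly by the uniform boundedness of $(-\Delta+1)^{\alpha}(H+\nu_0)^{-1}$ and of $(H_L+\nu_0)^{-1}$. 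For the $W_1$-part I would write $(-\Delta+1)^{\alpha}(H+\nu_0)^{-1}W_1(H_L+\nu_0)^{-1}$ as the product of the three operators $(-\Delta+1)^{\alpha}(H+\nu_0)^{-1}(-\Delta+1)^{1-\alpha}$, then $(-\Delta+1)^{\alpha-1}W_1(-\Delta+1)^{-\alpha}$, then $(-\Delta+1)^{\alpha}(H_L+\nu_0)^{-1}$; the outer two are uniformly bounded by the previous paragraph and the middle one, by Lemma~\ref{lemma: regularity_aux} with $d=2$ and $\nu=1$ (the admissibility $p>1$ and $\alpha\in[1-p/2,p/2]\cap[0,1]$ holds for $\alpha=\min(p/2,1)$), is bounded by $C\norme{W_1}_{L^p(\R^2)}$. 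Summing, $\normeL{(-\Delta+1)^{\alpha}\bigl[(H+\nu_0)^{-1}-(H_L+\nu_0)^{-1}\bigr]}\le C\norme{V-V_L}_{L^p(\R^2)+L^\infty(\R^2)}$; in particular $(H_L+\nu_0)^{-1}\to(H+\nu_0)^{-1}$ in operator norm.

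Now $\lambda<0$ gives $\lambda+\nu_0>0$, so $(\lambda+\nu_0)^{-1}$ is an isolated eigenvalue of finite multiplicity of the bounded self-adjoint operator $(H+\nu_0)^{-1}$; for $\epsilon$ small I would take a circle $\gamma$ around $(\lambda+\nu_0)^{-1}$ in the resolvent set of $(H+\nu_0)^{-1}$, and the norm convergence then forces $\dist\bigl(\gamma,\sigma((H_L+\nu_0)^{-1})\bigr)$ to stay bounded below for $L$ large, with the corresponding Riesz projections converging in norm. Transporting $\gamma$ through $w\mapsto w^{-1}-\nu_0$ produces the contour $\mathscr{C}$ enclosing $\lambda$ with $\dist(\mathscr{C},\sigma_\mathrm{d}(H_L))\ge\epsilon$ for $L$ large; a change of variables in the Riesz integrals identifies the projections $P$, $P_L$ of the statement with the Riesz projections of $(H+\nu_0)^{-1}$, $(H_L+\nu_0)^{-1}$ over $\gamma$, so $P$ and $P_L$ have the same rank once $\norme{P-P_L}<1$. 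For the final bound one writes $P-P_L=\frac{-1}{2\pi i}\oint_{\mathscr{C}}(H-z)^{-1}(V_L-V)(H_L-z)^{-1}\diff z$; on the fixed compact set $\mathscr{C}$, which lies at positive distance from $\sigma(H)$ and (uniformly in $L$ large) from $\sigma(H_L)$, the identity $(H-z)^{-1}=(H+\nu_0)^{-1}\bigl(1+(z+\nu_0)(H-z)^{-1}\bigr)$ together with the uniform boundedness of $(H-z)^{-1}$ and $(H_L-z)^{-1}$ on $\mathscr{C}$ upgrade the $\nu_0$-bounds above to $z$-dependent but uniform ones, so the computation of the key estimate applies verbatim with $\nu_0$ replaced by $-z$; integrating over $\mathscr{C}$ gives $\normeL{(-\Delta+1)^{\alpha}(P-P_L)}\le C\norme{V-V_L}_{L^p(\R^2)+L^\infty(\R^2)}$, and since $(-\Delta+1)^{-\alpha}$ is a contraction this also yields $\norme{P-P_L}\to0$, which retroactively validates the rank claim.

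I expect the main obstacle to be exactly the weighted resolvent--difference estimate of the second paragraph (re-used in the fourth): when $p<2$ the perturbation $V_L-V$ is only infinitesimally form bounded, not operator bounded, so $(H+\nu_0)^{-1}(V_L-V)(H_L+\nu_0)^{-1}$ cannot be estimated naively — the factorisation through powers of $(-\Delta+1)$ combined with Proposition~\ref{prop: regularity} and Lemma~\ref{lemma: regularity_aux} is what makes it work, and the precise exponent $\alpha=\min(p/2,1)$ is dictated by the admissibility range in Lemma~\ref{lemma: regularity_aux}. The only other point to watch is that every constant stays uniform in $L$, which boils down to $\sup_L\norme{V_L}_{L^p(\R^2)+L^\infty(\R^2)}<\infty$, itself immediate from \eqref{eq:perturbation_theory_assumption}.
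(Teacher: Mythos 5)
Your strategy is the standard perturbation-theory route the paper invokes, and you correctly identify the two technical inputs that make it work for merely form-bounded perturbations: Proposition~\ref{prop: regularity} to get the uniform-in-$L$ bound on $(-\Delta+1)^{\alpha}(H_L+\nu_0)^{-1}(-\Delta+1)^{1-\alpha}$ (and hence on $(-\Delta+1)^{\alpha}(H_L+\nu_0)^{-1}$), and Lemma~\ref{lemma: regularity_aux} to control $(-\Delta+1)^{\alpha-1}W_1(-\Delta+1)^{-\alpha}$ by $\norme{W_1}_{L^p}$; the three-factor splitting is exactly what makes the estimate survive when $p<2$. Passing through the bounded operators $(H+\nu_0)^{-1}$ to locate $\mathscr{C}$, get the uniform separation, and compare ranks via $\norme{P-P_L}<1$ is fine (you should state explicitly that $\nu_0$ is also taken larger than $-\inf\sigma(H)$ so that $\lambda+\nu_0>0$, but that is a one-liner since Proposition~\ref{prop: regularity} allows arbitrarily large $\nu$).

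The one step that would not survive being written out in full is the claim that "the computation of the key estimate applies verbatim with $\nu_0$ replaced by $-z$." Proposition~\ref{prop: regularity} produces the conjugated bound $\normeL{(-\Delta+\nu)^{\alpha}(H+\nu)^{-1}(-\Delta+\nu)^{1-\alpha}}\le2$ only for $\nu$ large, because it comes from a Neumann series; you cannot substitute $z\in\mathscr{C}$. A single application of $(H-z)^{-1}=(H+\nu_0)^{-1}\bigl(1+(z+\nu_0)(H-z)^{-1}\bigr)$ does give $(-\Delta+1)^{\alpha}(H-z)^{-1}$ uniformly bounded on $\mathscr{C}$, but the right-conjugated outer factor $(-\Delta+1)^{\alpha}(H-z)^{-1}(-\Delta+1)^{1-\alpha}$ is left hanging, since the identity produces $(H-z)^{-1}(-\Delta+1)^{1-\alpha}$ which is not manifestly bounded. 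Two easy repairs. Either note that $1-\alpha\le\alpha$ (because $\alpha=\min(p/2,1)\ge1/2$ for $p>1$), so Proposition~\ref{prop: regularity} also shows $(-\Delta+1)^{1-\alpha}$ is $H$-bounded; hence $(-\Delta+1)^{1-\alpha}(H-\bar z)^{-1}$ is uniformly bounded on $\mathscr{C}$ and its adjoint supplies the missing factor. Or, more cleanly, bypass re-deriving the three-factor bound on $\mathscr{C}$ via the algebraic identity
\begin{align*}
(H-z)^{-1}-(H_L-z)^{-1}=\bigl(1+(z+\nu_0)(H-z)^{-1}\bigr)\bigl(R_0-R_{L,0}\bigr)\bigl(1+(z+\nu_0)(H_L-z)^{-1}\bigr),
\end{align*}
with $R_0=(H+\nu_0)^{-1}$, $R_{L,0}=(H_L+\nu_0)^{-1}$: applying $(-\Delta+1)^{\alpha}$ on the left and distributing, the $z$-dependence enters only through $(-\Delta+1)^{\alpha}(H-z)^{-1}$, $(H_L-z)^{-1}$ and $\normeL{R_0-R_{L,0}}\le\normeL{(-\Delta+1)^{\alpha}(R_0-R_{L,0})}$, all of which are already under control, and your key estimate at $\nu_0$ transfers directly to the contour integral. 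With either fix the argument closes.
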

	In the case where $\lambda$ is non degenerate, we can be more precise.
	\begin{cor}[Non degenerate case]
		\label{cor:perturbation_theory}
		Assume \eqref{eq:perturbation_theory_assumption} and that the discrete spectrum of $H$ is non empty. Let $\lambda\in\sigma_\mathrm{d}(H)$ be non degenerate. Then, for $L$ large enough, the contour $\mathscr{C}$, given by Proposition~\ref{prop: perturbation_theory}, encloses only one discrete eigenvalue $\lambda_L$ of $H_L$ and we have
		\begin{align*}
		\lim\limits_{L\to\infty} \lambda_L = \lambda \pt
		\end{align*}
		In addition, if we denote by $v$ (resp. $v_L$) a normalized eigenfunction associated with $\lambda$ (resp. $\lambda_L$) then there exists a constant $C>0$ such that 
		\begin{align*}
		\min_{\theta\in\intff{0}{2\pi}}\norme{e^{i\theta}v - v_L}_{H^{\min(p,2)}(\R^2)} \leq C \norme{V-V_L}_{L^p(\R^2)+L^\infty(\R^2)} \pt
		\end{align*}
	\end{cor}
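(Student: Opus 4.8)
The plan is to deduce everything from Proposition~\ref{prop: perturbation_theory}, which is assumed available. Since $\lambda$ is non-degenerate, $\operatorname{rank}P = 1$, and Proposition~\ref{prop: perturbation_theory} gives $\operatorname{rank}P_L = 1$ for $L$ large enough; hence the contour $\mathscr{C}$ encloses exactly one eigenvalue $\lambda_L$ of $H_L$, which is automatically simple. For the convergence $\lambda_L\to\lambda$, I would first note that the radius of $\mathscr{C}$ can be taken arbitrarily small: the statement of Proposition~\ref{prop: perturbation_theory} allows us to shrink $\epsilon$, and for each small $r>0$ we may choose $\mathscr{C}$ to be a circle of radius $r$ around $\lambda$ once $L$ is large enough (depending on $r$) so that $\operatorname{dist}(\mathscr{C},\sigma_{\mathrm d}(H_L))\ge\epsilon>0$; since $\lambda_L$ lies inside this circle, $|\lambda_L-\lambda|<r$, which gives $\lambda_L\to\lambda$. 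Alternatively one can extract this from the operator convergence $P_L\to P$ by writing $\lambda_L=\operatorname{Tr}(H_LP_L)$ and $\lambda=\operatorname{Tr}(HP)$ and estimating the difference, but the contour argument is cleaner.

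For the eigenfunction estimate, the standard move is to compare $v_L$ with $Pv_L$. Set $w_L \coloneqq P v_L / \norme{P v_L}_{L^2}$, which is a normalized eigenfunction of $H$ associated with $\lambda$ (well-defined once $P v_L\neq 0$, which holds for $L$ large since $\norme{P-P_L}\to 0$ and $v_L\in\operatorname{ran}P_L$). Because the eigenspace of $\lambda$ is one-dimensional, $w_L = e^{i\theta_L} v$ for some phase $\theta_L$, so it suffices to bound $\norme{v_L - w_L}_{H^{\min(p,2)}}$. Writing $v_L - w_L = (\id - P)v_L + \big(P v_L - w_L\big)$ and using $v_L = P_L v_L$, the first term is $(P_L - P)v_L$, whose $H^{\min(p,2)}$-norm is $\le \normeL{(-\Delta+1)^{\min(p/2,1)}(P-P_L)}$ by Proposition~\ref{prop: perturbation_theory}, hence $\lesssim \norme{V-V_L}_{L^p+L^\infty}$. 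The second term is a scalar multiple of $P v_L$ of the form $(1 - \norme{Pv_L}_{L^2}^{-1}) P v_L$; since $\big|\,\norme{P v_L}_{L^2} - 1\,\big| = \big|\,\norme{P v_L}_{L^2} - \norme{P_L v_L}_{L^2}\,\big| \le \norme{(P - P_L)v_L}_{L^2} \lesssim \norme{V-V_L}_{L^p+L^\infty}$, and $P v_L$ is bounded in $H^{\min(p,2)}$ (apply the estimate of Proposition~\ref{prop: regularity} to the fixed operator $H$, using that $P$ maps into $\mathcal{D}(H)$), this term also has $H^{\min(p,2)}$-norm $\lesssim \norme{V-V_L}_{L^p+L^\infty}$. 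Combining the two bounds and taking the infimum over $\theta$ (which is at most the value at $\theta = -\theta_L$) gives the claim.

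The main obstacle, such as it is, is bookkeeping rather than anything deep: one must make sure the constants are uniform in $L$, which requires invoking that $\norme{V-V_L}_{L^p+L^\infty}\to 0$ forces $\norme{V_L}_{L^p+L^\infty}$ to stay bounded, so that the regularity constant in Proposition~\ref{prop: regularity} applied to $H_L$ is uniform, and that $\norme{P v_L}_{L^2}$ stays bounded away from $0$ so that division by it is harmless. Once these uniformities are recorded, the estimate is a two-line triangle inequality built on Proposition~\ref{prop: perturbation_theory} and Proposition~\ref{prop: regularity}. I would also remark that the $H^{\min(p,2)}$ control (rather than mere $L^2$) is exactly what Proposition~\ref{prop: perturbation_theory} delivers through the weight $(-\Delta+1)^{\min(p/2,1)}$, so no extra elliptic regularity argument is needed beyond what is already in the excerpt.
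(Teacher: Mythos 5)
Your proof is correct. Note that the paper itself does not prove this corollary: it explicitly defers to standard perturbation theory (Kato, Reed--Simon) and to Appendix~B of the author's thesis, so there is no in-paper argument to compare against. The route you take is the expected one: rank stability of the Riesz projector gives uniqueness and simplicity of $\lambda_L$; shrinking the contour gives $\lambda_L\to\lambda$; and for the eigenfunction bound you project the perturbed eigenvector $v_L$ onto the unperturbed eigenspace, normalize, and control the two error terms $(P_L-P)v_L$ and $\bigl(1-\normeL{Pv_L}^{-1}\bigr)Pv_L$ using, respectively, the weighted operator-norm estimate of Proposition~\ref{prop: perturbation_theory} (which is precisely an $H^{\min(p,2)}$ bound once one recalls $\norme{\cdot}_{H^{\min(p,2)}}\simeq\norme{(-\Delta+1)^{\min(p/2,1)}\cdot}_{L^2}$) and Proposition~\ref{prop: regularity} applied to the fixed operator $H$. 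The bookkeeping you flag (uniform boundedness of $\norme{V_L}_{L^p+L^\infty}$ and of $\norme{Pv_L}_{L^2}^{-1}$ for $L$ large) is indeed all that remains, and you record it correctly. The only cosmetic remark is that one could equally well normalize $P_Lv$ instead of $Pv_L$; both variants of the standard argument close in the same way.
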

	We do not write the proofs of Proposition~\ref{prop: perturbation_theory} and Corollary~\ref{cor:perturbation_theory} since they follow standard perturbation theory arguments~\cite{kato1995perturbation, reed1978methodsIV} and we refer the reader to~\cite[Chapter 3 -- Appendix B]{cazalisthesis}.
	

\begin{thebibliography}{10}
			
			\bibitem{albanese1988localised}
			{\sc C.~Albanese}, {\em Localised solutions of {H}artree equations for
				narrow-band crystals}, Comm. Math. Phys., 120 (1988), pp.~97--103.
			
			\bibitem{allaire2005homogenization}
			{\sc G.~Allaire and A.~Piatnitski}, {\em Homogenization of the
				{S}chr\"{o}dinger equation and effective mass theorems}, Comm. Math. Phys.,
			258 (2005), pp.~1--22.
			
			\bibitem{arbunich2018rigorous}
			{\sc J.~Arbunich and C.~Sparber}, {\em Rigorous derivation of nonlinear {D}irac
				equations for wave propagation in honeycomb structures}, J. Math. Phys., 59
			(2018), pp.~011509, 18.
			
			\bibitem{armstrong1988groups}
			{\sc M.~A. Armstrong}, {\em Groups and symmetry}, Undergraduate Texts in
			Mathematics, Springer-Verlag, New York, 1988.
			
			\bibitem{ashcrfot1976solid}
			{\sc N.~Ashcroft and N.~D. Mermin}, {\em Solid state physics}, Saunders College
			Publishing, 1976.
			
			\bibitem{aubin1998somenonlinear}
			{\sc T.~Aubin}, {\em Some nonlinear problems in {R}iemannian geometry},
			Springer Monographs in Mathematics, Springer-Verlag, Berlin, 1998.
			
			\bibitem{bach1998quantum}
			{\sc V.~Bach, J.~Fr\"{o}hlich, and I.~M. Sigal}, {\em Quantum electrodynamics
				of confined nonrelativistic particles}, Adv. Math., 137 (1998), pp.~299--395.
			
			\bibitem{bach1998renormalization}
			\leavevmode\vrule height 2pt depth -1.6pt width 23pt, {\em Renormalization
				group analysis of spectral problems in quantum field theory}, Adv. Math., 137
			(1998), pp.~205--298.
			
			\bibitem{benguria1981thomasfermi}
			{\sc R.~Benguria, H.~Br\'{e}zis, and E.~H. Lieb}, {\em The {T}homas-{F}ermi-von
				{W}eizs\"{a}cker theory of atoms and molecules}, Comm. Math. Phys., 79
			(1981), pp.~167--180.
			
			\bibitem{berkolaiko2018symmetry}
			{\sc G.~Berkolaiko and A.~Comech}, {\em Symmetry and {D}irac points in graphene
				spectrum}, J. Spectr. Theory, 8 (2018), pp.~1099--1147.
			
			\bibitem{birman99periodic}
			{\sc M.~S. Birman and T.~A. Suslina}, {\em A periodic magnetic {H}amiltonian
				with a variable metric. {T}he problem of absolute continuity}, Algebra i
			Analiz, 11 (1999), pp.~1--40.
			
			\bibitem{cances2008newapproach}
			{\sc E.~Canc\`es, A.~Deleurence, and M.~Lewin}, {\em A new approach to the
				modeling of local defects in crystals: the reduced {H}artree-{F}ock case},
			Comm. Math. Phys., 281 (2008), pp.~129--177.
			
			\bibitem{neta2009electronic}
			{\sc A.~H. Castro~Neto, F.~Guinea, N.~M.~R. Peres, K.~S. Novoselov, and A.~K.
				Geim}, {\em The electronic properties of graphene}, Rev. Mod. Phys., 81
			(2009), pp.~109--162.
			
			\bibitem{catto2001thermodynamic}
			{\sc I.~Catto, C.~Le~Bris, and P.-L. Lions}, {\em On the thermodynamic limit
				for {H}artree-{F}ock type models}, Ann. Inst. H. Poincar\'{e} Anal. Non
			Lin\'{e}aire, 18 (2001), pp.~687--760.
			
			\bibitem{catto2002thermodynamic}
			\leavevmode\vrule height 2pt depth -1.6pt width 23pt, {\em On some periodic
				{H}artree-type models for crystals}, Annales de l'Institut Henri
			Poincar\'{e}. Analyse Non Lin\'{e}aire, 19 (2002), pp.~143--190.
			
			\bibitem{cazalis}
			{\sc J.~Cazalis}, {\em The diatomic {H}artree model at dissociation},
			Nonlinearity, 35 (2022).
			
			\bibitem{cazalisthesis}
			\leavevmode\vrule height 2pt depth -1.6pt width 23pt, {\em {Nonlinear quantum
					systems at dissociation: the example of graphene}}, theses, {PSL
				Universit{\'e} Paris Dauphine}, July 2022, \newblock URL: \url{https://tel.archives-ouvertes.fr/tel-03726340}.
			
			\bibitem{chelikowsky2011electrons}
			{\sc J.~R. Chelikowsky}, {\em Electrons in semiconductors: empirical and ab
				initio pseudopotential theories}, in Comprehensive Semiconductor Science and
			Technology, P.~Bhattacharya, R.~Fornari, and H.~Kamimura, eds., Elsevier,
			Amsterdam, 2011, pp.~1--41.
			
			\bibitem{cooper2012experimental}
			{\sc D.~R. Cooper, B.~D'Anjou, N.~Ghattamaneni, B.~Harack, M.~Hilke, A.~Horth,
				N.~Majlis, M.~Massicotte, L.~Vandsburger, E.~Whiteway, and V.~Yu}, {\em
				Experimental review of graphene}, {ISRN} Condensed Matter Physics, 2012
			(2012), pp.~1--56.
			
			\bibitem{cycon1987schrodinger}
			{\sc H.~L. Cycon, R.~G. Froese, W.~Kirsch, and B.~Simon}, {\em Schr\"{o}dinger
				operators with application to quantum mechanics and global geometry}, Texts
			and Monographs in Physics, Springer-Verlag, Berlin, study~ed., 1987.
			
			\bibitem{daumer1993periodique}
			{\sc F.~Daumer}, {\em \'{E}quation de {S}chr\"{o}dinger avec champ
				\'{e}lectrique p\'{e}riodique et champ magn\'{e}tique constant dans
				l'approximation du tight binding}, Comm. Partial Differential Equations, 18
			(1993), pp.~1021--1041.
			
			\bibitem{daumer1994hartree}
			\leavevmode\vrule height 2pt depth -1.6pt width 23pt, {\em \'{E}quations de
				{H}artree-{F}ock dans l'approximation du tight-binding}, Helv. Phys. Acta, 67
			(1994), pp.~237--256.
			
			\bibitem{daumer1996schrodinger}
			\leavevmode\vrule height 2pt depth -1.6pt width 23pt, {\em \'{E}quations de
				{S}chr\"{o}dinger avec potentiels singuliers et \`a longue port\'{e}e dans
				l'approximation de liaison forte}, Ann. Inst. H. Poincar\'{e} Phys.
			Th\'{e}or., 64 (1996), pp.~1--31.
			
			\bibitem{fefferman2017honeycomb}
			{\sc C.~L. Fefferman, J.~P. Lee-Thorp, and M.~I. Weinstein}, {\em Honeycomb
				{S}chr\"{o}dinger operators in the strong binding regime}, Comm. Pure Appl.
			Math., 71 (2018), pp.~1178--1270.
			
			\bibitem{fefferman2012honeycomb}
			{\sc C.~L. Fefferman and M.~I. Weinstein}, {\em Honeycomb lattice potentials
				and {D}irac points}, J. Amer. Math. Soc., 25 (2012), pp.~1169--1220.
			
			\bibitem{fefferman2013wavepackets}
			\leavevmode\vrule height 2pt depth -1.6pt width 23pt, {\em Wave packets in
				honeycomb structures and two-dimensional {D}irac equations}, Comm. Math.
			Phys., 326 (2014), pp.~251--286.
			
			\bibitem{feshbach1958unified}
			{\sc H.~Feshbach}, {\em Unified theory of nuclear reactions}, Ann. Physics, 5
			(1958), pp.~357--390.
			
			\bibitem{ghimenti2008properties}
			{\sc M.~Ghimenti and M.~Lewin}, {\em Properties of periodic
				{H}artree{\textendash}{F}ock minimizers}, Calculus of Variations and Partial
			Differential Equations, 35 (2008), pp.~39--56.
			
			\bibitem{giuliani2010anomalous}
			{\sc A.~Giuliani, V.~Mastropietro, and M.~Porta}, {\em Anomalous behavior in an
				effective model of graphene with {C}oulomb interactions}, Ann. Henri
			Poincar\'{e}, 11 (2010), pp.~1409--1452.
			
			\bibitem{goelden1977nondegeneracy}
			{\sc H.-W. Goelden}, {\em On non-degeneracy of the ground state of
				{S}chr\"{o}dinger operators}, Math. Z., 155 (1977), pp.~239--247.
			
			\bibitem{goringe1997tightbinding}
			{\sc C.~M. Goringe, D.~R. Bowler, and E.~Hern{\'{a}}ndez}, {\em Tight-binding
				modelling of materials}, Reports on Progress in Physics, 60 (1997),
			pp.~1447--1512.
			
			\bibitem{grafakos2014classical}
			{\sc L.~Grafakos}, {\em Classical {F}ourier analysis}, vol.~249 of Graduate
			Texts in Mathematics, Springer, New York, third~ed., 2014.
			
			\bibitem{grafakos2014modern}
			\leavevmode\vrule height 2pt depth -1.6pt width 23pt, {\em Modern {F}ourier
				analysis}, vol.~250 of Graduate Texts in Mathematics, Springer, New York,
			third~ed., 2014.
			
			\bibitem{grushin2009multiparameter}
			{\sc V.~V. Grushin}, {\em Application of the multiparameter theory of
				perturbations of {F}redholm operators to {B}loch functions}, Mat. Zametki, 86
			(2009), pp.~819--828.
			
			\bibitem{gustafson2020mathematical}
			{\sc S.~J. Gustafson and I.~M. Sigal}, {\em Mathematical concepts of quantum
				mechanics}, Universitext, Springer, Cham, third~ed., 5 2020.
			\newblock Third edition.
			
			\bibitem{hainzl2009thermodynamicII}
			{\sc C.~Hainzl, M.~Lewin, and J.~P. Solovej}, {\em The thermodynamic limit of
				quantum {C}oulomb systems. {II}. {A}pplications}, Adv. Math., 221 (2009),
			pp.~488--546.
			
			\bibitem{hainzl2012groundstate}
			{\sc C.~Hainzl, M.~Lewin, and C.~Sparber}, {\em Ground state properties of
				graphene in {H}artree-{F}ock theory}, J. Math. Phys., 53 (2012), pp.~095220,
			27.
			
			\bibitem{hartree1928wave}
			{\sc D.~R. Hartree}, {\em The wave mechanics of an atom with a non-{C}oulomb
				central field. part i. theory and methods}, 24 (1928), pp.~89--110.
			
			\bibitem{helffer1984multipleI}
			{\sc B.~Helffer and J.~Sj\"{o}strand}, {\em Multiple wells in the semiclassical
				limit. {I}}, Comm. Partial Differential Equations, 9 (1984), pp.~337--408.
			
			\bibitem{helffer1985multipleIII}
			\leavevmode\vrule height 2pt depth -1.6pt width 23pt, {\em Multiple wells in
				the semiclassical limit. {III}. {I}nteraction through nonresonant wells},
			Math. Nachr., 124 (1985), pp.~263--313.
			
			\bibitem{helffer1985puitsII}
			\leavevmode\vrule height 2pt depth -1.6pt width 23pt, {\em Puits multiples en
				limite semi-classique. {II}. {I}nteraction mol\'{e}culaire. {S}ym\'{e}tries.
				{P}erturbation}, Ann. Inst. H. Poincar\'{e} Phys. Th\'{e}or., 42 (1985),
			pp.~127--212.
			
			\bibitem{helffer1985puitsIV}
			\leavevmode\vrule height 2pt depth -1.6pt width 23pt, {\em Puits multiples en
				m\'{e}canique semi-classique. {IV}. \'{E}tude du complexe de {W}itten}, Comm.
			Partial Differential Equations, 10 (1985), pp.~245--340.
			
			\bibitem{helffer1986puitsV}
			\leavevmode\vrule height 2pt depth -1.6pt width 23pt, {\em Puits multiples en
				m\'{e}canique semi-classique. {V}. \'{E}tude des minipuits}, in Current
			topics in partial differential equations, Kinokuniya, Tokyo, 1986,
			pp.~133--186.
			
			\bibitem{helffer1987puitsVI}
			\leavevmode\vrule height 2pt depth -1.6pt width 23pt, {\em Puits multiples en
				m\'{e}canique semi-classique. {VI}. {C}as des puits sous-vari\'{e}t\'{e}s},
			Ann. Inst. H. Poincar\'{e} Phys. Th\'{e}or., 46 (1987), pp.~353--372.
			
			\bibitem{herbst2021dftk}
			{\sc M.~F. Herbst, A.~Levitt, and E.~Cancès}, {\em {DFTK}: A {J}ulian approach
				for simulating electrons in solids}, Proceedings of the JuliaCon Conferences,
			3 (2021), p.~69.
			
			\bibitem{hoffmann1985pointwise}
			{\sc M.~Hoffmann-Ostenhof, T.~Hoffmann-Ostenhof, and J.~Swetina}, {\em
				Pointwise bounds on the asymptotics of spherically averaged {$L^2$}-solutions
				of one-body {S}chr\"{o}dinger equations}, Ann. Inst. H. Poincar\'{e} Phys.
			Th\'{e}or., 42 (1985), pp.~341--361.
			
			\bibitem{hoffmann1980comparison}
			{\sc T.~Hoffmann-Ostenhof}, {\em A comparison theorem for differential
				inequalities with applications in quantum mechanics}, J. Phys. A, 13 (1980),
			pp.~417--424.
			
			\bibitem{jaffard1990proprietes}
			{\sc S.~Jaffard}, {\em Propri\'{e}t\'{e}s des matrices ``bien localis\'{e}es''
				pr\`es de leur diagonale et quelques applications}, Ann. Inst. H.
			Poincar\'{e} Anal. Non Lin\'{e}aire, 7 (1990), pp.~461--476.
			
			\bibitem{kato1995perturbation}
			{\sc T.~Kato}, {\em Perturbation theory for linear operators}, Classics in
			Mathematics, Springer-Verlag, Berlin, 1995.
			\newblock Reprint of the 1980 edition.
			
			\bibitem{kittel2004introduction}
			{\sc C.~Kittel}, {\em Introduction to Solid State Physics}, Wiley, Nov. 2004.
			
			\bibitem{knauf1989coulombic}
			{\sc A.~Knauf}, {\em Coulombic periodic potentials: the quantum case}, Annals
			of Physics, 191 (1989), pp.~205--240.
			
			\bibitem{krantz2001function}
			{\sc S.~G. Krantz}, {\em Function theory of several complex variables}, AMS
			Chelsea Publishing, Providence, RI, 2001.
			\newblock Reprint of the 1992 edition.
			
			\bibitem{kuchment2016overview}
			{\sc P.~Kuchment}, {\em An overview of periodic elliptic operators}, Bull.
			Amer. Math. Soc. (N.S.), 53 (2016), pp.~343--414.
			
			\bibitem{lee2016diraccones}
			{\sc M.~Lee}, {\em Dirac cones for point scatterers on a honeycomb lattice},
			SIAM J. Math. Anal., 48 (2016), pp.~1459--1488.
			
			\bibitem{leeThorp2018elliptic}
			{\sc J.~P. Lee-Thorp, M.~I. Weinstein, and Y.~Zhu}, {\em Elliptic operators
				with honeycomb symmetry: {D}irac points, edge states and applications to
				photonic graphene}, Arch. Ration. Mech. Anal., 232 (2019), pp.~1--63.
			
			\bibitem{lewin2022coulomb}
			{\sc M.~Lewin}, {\em Coulomb and {R}iesz gases: The known and the unknown},
			(2022).
			
			\bibitem{lieb1981thomasfermi}
			{\sc E.~H. Lieb}, {\em Thomas-{F}ermi and related theories of atoms and
				molecules}, Rev. Modern Phys., 53 (1981), pp.~603--641.
			
			\bibitem{lieb1977hartreefock}
			{\sc E.~H. Lieb and B.~Simon}, {\em The {H}artree-{F}ock theory for {C}oulomb
				systems}, Comm. Math. Phys., 53 (1977), pp.~185--194.
			
			\bibitem{lieb1977thomasfermi}
			\leavevmode\vrule height 2pt depth -1.6pt width 23pt, {\em The {T}homas-{F}ermi
				theory of atoms, molecules and solids}, Advances in Mathematics, 23 (1977),
			pp.~22--116.
			
			\bibitem{lions1981remarks}
			{\sc P.-L. Lions}, {\em Some remarks on {H}artree equation}, Nonlinear Anal., 5
			(1981), pp.~1245--1256.
			
			\bibitem{lions1984concentrationlocallyI}
			\leavevmode\vrule height 2pt depth -1.6pt width 23pt, {\em The
				concentration-compactness principle in the calculus of variations. {T}he
				locally compact case. {I}}, Ann. Inst. H. Poincar\'{e} Anal. Non
			Lin\'{e}aire, 1 (1984), pp.~109--145.
			
			\bibitem{lions1984concentrationlocallyII}
			\leavevmode\vrule height 2pt depth -1.6pt width 23pt, {\em The
				concentration-compactness principle in the calculus of variations. {T}he
				locally compact case. {II}}, Ann. Inst. H. Poincar\'{e} Anal. Non
			Lin\'{e}aire, 1 (1984), pp.~223--283.
			
			\bibitem{lions1987solutions}
			\leavevmode\vrule height 2pt depth -1.6pt width 23pt, {\em Solutions of
				{H}artree-{F}ock equations for {C}oulomb systems}, Comm. Math. Phys., 109
			(1987), pp.~33--97.
			
			\bibitem{martin1982transformation}
			{\sc G.~E. Martin}, {\em Transformation geometry}, Undergraduate Texts in
			Mathematics, Springer-Verlag, New York-Berlin, 1982.
			\newblock An introduction to symmetry.
			
			\bibitem{mohamed1991estimations}
			{\sc A.~Mohamed}, {\em Estimations semi-classiques pour l'op\'{e}rateur de
				{S}chr\"{o}dinger \`a potentiel de type coulombien et avec champ
				magn\'{e}tique}, Asymptotic Anal., 4 (1991), pp.~235--255.
			
			\bibitem{olgiati2020hartree}
			{\sc A.~Olgiati and N.~Rougerie}, {\em The {H}artree functional in a double
				well}, J. Spectr. Theory, 11 (2021), pp.~1727--1778.
			
			\bibitem{olgiati2021bosons}
			{\sc A.~Olgiati, N.~Rougerie, and D.~Spehner}, {\em Bosons in a double well:
				two-mode approximation and fluctuations}, arXiv preprint arXiv:2101.08690,
			(2021).
			
			\bibitem{olver1997asymptotics}
			{\sc F.~W.~J. Olver}, {\em Asymptotics and Special Functions}, A. K. Peters,
			Wellesley, MA, 1997.
			\newblock Reprint, with corrections, of original Academic Press edition, 1974.
			
			\bibitem{outassourt1984}
			{\sc A.~Outassourt}, {\em Comportement semi-classique pour l'op\'{e}rateur de
				{S}chr\"{o}dinger \`a potentiel p\'{e}riodique}, C. R. Acad. Sci. Paris
			S\'{e}r. I Math., 299 (1984), pp.~491--494.
			
			\bibitem{reed1975methodsII}
			{\sc M.~Reed and B.~Simon}, {\em Methods of modern mathematical physics. {II}.
				{F}ourier analysis, self-adjointness}, Academic Press [Harcourt Brace
			Jovanovich, Publishers], New York-London, 1975.
			
			\bibitem{reed1978methodsIV}
			\leavevmode\vrule height 2pt depth -1.6pt width 23pt, {\em Methods of modern
				mathematical physics. {IV}. {A}nalysis of operators}, Academic Press
			[Harcourt Brace Jovanovich, Publishers], New York-London, 1978.
			
			\bibitem{rougerie2018interacting}
			{\sc N.~Rougerie and D.~Spehner}, {\em Interacting bosons in a double-well
				potential: localization regime}, Comm. Math. Phys., 361 (2018), pp.~737--786.
			
			\bibitem{schmugden2012unbounded}
			{\sc K.~Schm\"{u}dgen}, {\em Unbounded self-adjoint operators on {H}ilbert
				space}, vol.~265 of Graduate Texts in Mathematics, Springer, Dordrecht, 2012.
			
			\bibitem{schur1917uber}
			{\sc J.~Schur}, {\em \"{U}ber {P}otenzreihen, die im {I}nnern des
				{E}inheitskreises beschr\"{a}nkt sind}, J. Reine Angew. Math., 147 (1917),
			pp.~205--232.
			
			\bibitem{simon1984semiclassicalIII}
			{\sc B.~Simon}, {\em Semiclassical analysis of low lying eigenvalues. {III}.
				{W}idth of the ground state band in strongly coupled solids}, Ann. Physics,
			158 (1984), pp.~415--420.
			
			\bibitem{simon2005trace}
			\leavevmode\vrule height 2pt depth -1.6pt width 23pt, {\em Trace ideals and
				their applications}, vol.~120 of Mathematical Surveys and Monographs,
			American Mathematical Society, Providence, RI, second~ed., 2005.
			
			\bibitem{solovej1991proof}
			{\sc J.~P. Solovej}, {\em Proof of the ionization conjecture in a reduced
				{H}artree-{F}ock model}, Invent. Math., 104 (1991), pp.~291--311.
			
			\bibitem{teufel2003adiabatic}
			{\sc S.~Teufel}, {\em Adiabatic perturbation theory in quantum dynamics},
			vol.~1821 of Lecture Notes in Mathematics, Springer-Verlag, Berlin, 2003.
			
			\bibitem{wallace1947bandtheory}
			{\sc P.~R. Wallace}, {\em The band theory of graphite}, Physical Review, 71
			(1947), pp.~622--634.
			
			\bibitem{wang2015rare}
			{\sc J.~Wang, S.~Deng, Z.~Liu, and Z.~Liu}, {\em The rare two-dimensional
				materials with {D}irac cones}, National Science Review, 2 (2015), pp.~22--39.
			
			\bibitem{wehling2014dirac}
			{\sc T.~O. Wehling, A.~M. Black-Schaffer, and A.~V. Balatsky}, {\em Dirac
				materials}, Advances in Physics, 63 (2014), pp.~1--76.
			
	\end{thebibliography}

	\small{}

\end{document}